%% file: main.tex
\documentclass[a4paper,UKenglish, numberwithinsect, cleveref, autoref, thm-restate, nolineno]{socg-lipics-v2021}
\hideLIPIcs  %uncomment to remove references to LIPIcs series (logo, DOI, ...), e.g. when preparing a pre-final version to be uploaded to arXiv or another public repository

\usepackage{mathproblem,xspace}
\usepackage{graphicx}

\usepackage{complexity}
\usepackage{amsmath,amssymb,amsthm}
\usepackage{hyperref,cleveref}
\usepackage{todonotes}
\usepackage{comment}

\usepackage{enumitem}
\setlist[enumerate,1]{label=\arabic*.,ref=\arabic*}
\setlist[enumerate,2]{label=\alph*.,ref=\arabic*.\alph*}
\setlist[enumerate,3]{label=\roman*.,ref=\arabic*.\alph*.\roman*}

\usepackage{tikz}
\usetikzlibrary{positioning,shapes,patterns,patterns.meta,calc,decorations.pathreplacing,
  decorations.pathmorphing}
\tikzstyle{vertex}=[draw, shape=circle, minimum size=2pt,inner sep=1.4pt, fill=black]
\tikzstyle{terminal}=[draw, shape=rectangle, minimum size=2pt,inner sep=1.6pt, fill=black]

\input{NewCommands}

\makeatletter
\g@addto@macro\bfseries{\boldmath}
\makeatother

\title{Parameterized Complexity of Power Network Design: Coordinating Cable Placement is Hard} %TODO Please add

\titlerunning{Parameterized Complexity of Power Network Design} %TODO optional, please use if title is longer than one line

\author{Thekla Hamm}{Eindhoven University of Technology, Eindhoven, the Netherlands}{t.l.s.hamm@tue.nl}{
https://orcid.org/0000-0002-4595-9982}{}

\author{Bart M. P. Jansen}{Eindhoven University of Technology, Eindhoven, the Netherlands}{b.m.p.jansen@tue.nl}{
https://orcid.org/0000-0001-8204-1268}{}

\author{Faezeh Motiei}{Eindhoven University of Technology, Eindhoven, the Netherlands}{f.motiei@tue.nl}{
https://orcid.org/0009-0006-8805-9307}{}

\authorrunning{T.~Hamm, B.M.P.~Jansen and F.~Motiei} %TODO mandatory. First: Use abbreviated first/middle names. Second (only in severe cases): Use first author plus 'et al.'

\Copyright{Thekla Hamm and Bart M.\ P.\ Jansen and Faezeh Motiei} %TODO mandatory, please use full first names. LIPIcs license is "CC-BY";  http://creativecommons.org/licenses/by/3.0/

\ccsdesc[500]{Theory of computation~Graph algorithms analysis}
\ccsdesc[500]{Theory of computation~Fixed parameter tractability}

\keywords{Steiner Tree, Network Design, Parameterized Complexity, ETH-tight Algorithm} %TODO mandatory; please add comma-separated list of keywords

\category{} %optional, e.g. invited paper

\relatedversion{} %optional, e.g. full version hosted on arXiv, HAL, or other respository/website
\nolinenumbers %uncomment to disable line numbering

\EventEditors{}
\EventNoEds{2}
\EventLongTitle{42nd Conference on Very Important Topics (CVIT 2016)}
\EventShortTitle{CVIT 2016}
\EventAcronym{CVIT}
\EventYear{2025}
\EventDate{December 24--27, 2016}
\EventLocation{Little Whinging, United Kingdom}
\EventLogo{}
\SeriesVolume{42}
\ArticleNo{23}
\begin{document}

\maketitle

% For updates to abstract: point out cost-sharing for multiple trees with individual capacity constraints, and optionally depth bounds.

\begin{abstract}
We study several generalizations of the \textsc{Steiner Tree} problem that are motivated by the design of power networks. While \textsc{Steiner Tree} asks for a single minimum-cost tree that connects a given set of terminal vertices, a power network typically consists of \emph{multiple trees}. Each tree connects to a subset of the terminals, to avoid electrical overloads. The cost of installing a power network is therefore determined by two factors: the total length of the cables in the network and the cost of digging underground trenches into which the cables are placed. Since the digging costs can be substantial, to minimize the total cost of the network it might be necessary to place multiple cables into the same trench. These characteristics lead to variations of \textsc{Steiner Tree} in which the goal is to compute a minimum-cost set of Steiner trees, all with a common root, that together connect a given terminal set while balancing the power demand of the terminals in each tree. Two important variations arise depending on whether the network is intended for low-voltage or high-voltage power. In the low-voltage setting, there is substantial power loss across the cables which effectively means that the maximum depth of any tree in the solution has to be bounded. No such depth bound applies to the high-voltage setting.

We investigate the parameterized complexity of several power network design problems, using the number of terminals as the parameter. While this parameterization of the standard \textsc{Steiner Tree} problem is fixed-parameter tractable, many of our variants are W[1]-hard. For low-voltage networks (bounded-depth trees), we present an \XP-algorithm for planar inputs, which exploits a nontrivial bound on the treewidth of solution subgraphs. We provide an intricate reduction from \textsc{Grid Tiling} to establish that the resulting algorithm is tight under the Exponential Time Hypothesis. The \XP-algorithm extends to the high-voltage setting and to general graphs, albeit at a cost in the running time. For high-voltage networks, we prove that the problem remains W[1]-hard on planar graphs. Finally, we explore a variation of the cost model for sharing digging costs in which both problems become fixed-parameter tractable.
\end{abstract}

\clearpage

\section{Introduction}
\input{Introduction}

\section{Preliminaries}
\input{Preliminaries}

\section{Algorithms}
\input{Algorithms_Section}

\section{Hardness results}
\input{Hardness_Results}

\section{Conclusion}
\input{Conclusion}

% %%
% %% Bibliography
% %%

\bibliography{low-voltage}

\end{document}

%% file: NewCommands.tex
\newcommand{\LV}{\textsc{Low-voltage Network Design}\xspace}
\newcommand{\HV}{\textsc{High-voltage Network Design}\xspace}
\newcommand{\BLV}{\textsc{Low-voltage Network Design}\xspace}
\newcommand{\PreLV}{\textsc{Prefix-Sharing} \LV}
\newcommand{\BLVshort}{\textsc{LVND}\xspace}
\newcommand{\LVshort}{\textsc{LVND}\xspace}
\newcommand{\HVshort}{\textsc{HVND}\xspace}
\newcommand{\ULV}{\HV}
\newcommand{\ULVshort}{\textsc{HVND}\xspace}
\newcommand{\PreLVshort}{\textsc{P}\LVshort}
\newcommand{\GT}{\textsc{Grid Tiling}\xspace}
\newcommand{\GTRegular}{\textsc{Regular Grid Tiling}\xspace}
\newcommand{\dig}{\ensuremath{\operatorname{dig}}}
\newcommand{\Oh}{\ensuremath{\mathcal{O}}}
\newcommand{\connector}{\ensuremath{\mathcal{C}}}
\newcommand{\hconnector}{\ensuremath{\connector_R}}
\newcommand{\vconnector}{\ensuremath{\connector_C}}
\newcommand{\regpar}{\ensuremath{\delta}}

\newcommand{\PLVcost}[3]{\ensuremath{\text{cost}_{#3}^{\text{PLV}}}(#1,#2)}
\newcommand{\PLVcostWord}{PLV-cost\xspace}

\newcommand{\CAbs}[1]{\ensuremath{\mathcal{A}(#1)}}
\newcommand{\PartialCAbs}[2]{\ensuremath{\mathcal{A}_{#1}(#2)}}

\newcommand{\pfx}[3]{\ensuremath{\text{pfx}(#1,#2,#3)}}
\newcommand{\pfxset}[3]{\ensuremath{\text{pfx}(#1,#2,#3)}}

\input{reduction_commands}

%% file: reduction_commands.tex
\newcommand{\rcons}{\ensuremath{r_{\text{cons}}}}
\newcommand{\rselect}{\ensuremath{r_{\text{select}}}}
\newcommand{\blocklength}{\ensuremath{k+1}}
\newcommand{\hdigcost}{\ensuremath{nk^2(\blocklength)}}
\newcommand{\vertlength}{\ensuremath{2k^2n(\blocklength)+ 1}}
    \newcommand{\termLength}{\ensuremath{X}}
    \newcommand{\connectLength}{\ensuremath{L}}
    \newcommand{\rootLength}{\ensuremath{L'}}
    \newcommand{\srootLength}{\ensuremath{L''}}
    \newcommand{\rowNum}{\ensuremath{N}}
    \newcommand{\colNum}{\ensuremath{M}}
    \newcommand{\RstartIndex}{\ensuremath{\triangleright}}
    \newcommand{\RendIndex}{\ensuremath{\triangleleft}}
    \newcommand{\CstartIndex}{\ensuremath{\triangleright}}
    \newcommand{\CendIndex}{\ensuremath{\triangleleft}}

    \newcommand{\hedgesetCell}[3][]{\ensuremath{E^{#1,#2}_H(#3)}}
    \newcommand{\vedgesetCell}[3][]{\ensuremath{E^{#1,#2}_V(#3)}}
    \newcommand{\hedgeset}[1][]{\ensuremath{E_H(#1)}}
    \newcommand{\vedgeset}[1][]{\ensuremath{E_V(#1)}}

%% file: Introduction.tex
\label{sec:Introduction}
\textsc{Steiner Tree} is a classic NP-complete problem~\cite{GareyJ77,Karp1972} in network design with numerous applications~\cite{Cho01,DIMACS,Ivana21,PACE}. Given a graph~$G$ and a subset of its vertices called \emph{terminals}, the goal is to compute a minimum-weight subtree of~$G$ that contains all terminals. The study of \textsc{Steiner Tree} has led to a wide range of algorithmic techniques in parameterized complexity~\cite{BodlaenderCKN15, CyganNPPRW22, DreyfusW71, FominKLPS19, Fucs07, GroenlandNK24, JansenS24, LokshtanovN10, MarxPP18, Nederlof13, PilipczukPSL18}. In this work, we investigate the parameterized complexity of several generalizations of \textsc{Steiner Tree} concerning the design of power networks. They arose from a cooperation with an industrial partner that develops underground infrastructure in the form of low-voltage electrical power networks.

\subparagraph*{Background and Motivation}
A low-voltage power network distributes electricity from a transformer station to an urban neighborhood. A transformer station has a fixed number (typically less than a dozen) of power outputs that each feed one aluminum or copper power cable. Such a cable supplies power to multiple households and can fork off into a tree layout to do so. The maximum amount of power that can be drawn from any single output on a transformer station is limited, which is why the households have to be balanced over the power cables based on their demand. Due to power losses during transport, the \emph{depth} (maximum length of a root-to-leaf path) of the tree-structure formed by a power cable determines its maximum load. Hence, the design of a low-voltage power network involves deciding which households are connected to a common cable, as well as finding a suitable low-depth tree structure for each cable. A high-voltage network has a similar structure, but does not require a depth bound on the trees since losses on high-voltage connections are negligible.

The \emph{cost} of installing a network depends on two factors. Naturally, there is a cost for the material of the cables that scales linearly with their total length. A less obvious factor, which can be even more important in practice, is the cost of \emph{installing} the network, i.e., digging trenches in the ground in which the cables are placed. The cost of digging a unit-length trench depends heavily on the terrain: it is cheap when the surface is covered by grass or small pavement tiles that can temporarily be set aside, but it is expensive when the surface is covered by an asphalt road that has to be sliced open and repaired. When multiple cables are placed in the \emph{same} trench, the costs for digging the trench can be shared among all the cables placed inside it. Hence, it may be advantageous to route cables over slightly longer trajectories if this avoids digging through asphalt or allows cables to share a trench.

As the energy transition necessitates the development of better power networks that support a society that increasingly relies on electricity to operate cars, heat pumps, and cooking appliances~\cite{BertoldiLL16energy,europeanparliament2025grid}, designing power networks has important applications.

\subparagraph*{Problem Statement} We model the task of designing a power network as a graph problem. Each edge~$e$ of the input graph~$G$ represents a trench that can be dug to place cables underground. Each edge has a length~$\ell(e) \in \mathbb{N}$ which determines the material costs for placing a cable in the corresponding trench, and a separate cost~$\dig(e) \in \mathbb{N}$ for digging the trench. The transformer station is represented by a vertex~$r$ in~$G$, which serves as the root of the tree-structured cables. The power consumers to be connected are represented by \emph{terminal} vertices~$T$, which are dead ends in the network and have degree one in~$G$. Since power consumption differs per consumer, each terminal vertex~$u \in T$ has an associated power \emph{demand}~$d(u) \in \mathbb{N}$. Finally, the input specifies the number of power outputs~$t$ on the transformer station, the depth constraint~$\lambda$ of each tree-structured cable (in the low-voltage setting), and the maximum power demand~$\alpha$ that can be supplied by a single cable.

To turn the corresponding algorithmic task into a decision problem, there is a threshold value~$\beta$ for the total cost. An input of the problem then asks whether there is a feasible network of cost at most~$\beta$. In the cost function, we pay the digging cost $\dig(e)$ once for every edge~$e$ in the \emph{union} of the solution trees, and for each occurrence of an edge~$e$ in a solution tree, we additionally pay~$\ell(e)$ in cable costs. This leads to the following formalization, in which we employ the number of trees~$t$ and the number of terminals~$|T|$ as the parameters since they are typically much smaller than the total size of the network of potential trenches.

\begin{center}
    \begin{mathproblem}{\BLV (\BLVshort)}[\(t + |T|\)]
    \textit{\textbf{Instance:}} An undirected graph \(G\) with \emph{edge lengths} \(\ell \colon E(G) \to \mathbb{N}_0\) and \emph{digging costs} \(\dig \colon E(G) \to \mathbb{N}_0\), a \emph{root vertex} \(r \in V(G)\), a set of degree-\(1\) \emph{terminal} vertices \(T \subseteq V(G)\) with \emph{demands} \(d \colon T \to \mathbb{N}\), and integers \(\alpha\), \(\beta\), \(\lambda\), and \(t\), encoded in unary.
    
    \vspace{0.3cm}
    
    \textit{\textbf{Question:}} Does \(G\) contain \(t\) tree subgraphs \(H_1,\ldots,H_t\), each rooted at \(r\), such that:
        \begin{itemize}
            \item \(T\subseteq \bigcup_{i=1}^tV(H_i)\) [\emph{covering constraint}: each terminal is covered],
            \item each root-to-leaf path~$P$ in each tree~$H_i$ satisfies~$\sum_{e \in E(P)} \ell(e) \leq \lambda$ [\emph{depth} constraint],
            \item \(\sum\limits_{u\in V(H_i) \cap T}d(u) \leq \alpha\) for each \(1 \leq i \leq t\), [\emph{demand} constraint: the total demand of terminals in each tree~$H_i$ is at most \(\alpha\)], and
            \item \(\text{cost}(\{H_1, \dotsc, H_t\}) := \left (\sum\limits_{i=1}^t \ell(H_i) \right)+\dig(\bigcup_{i=1}^t H_i) \leq \beta\)? [\emph{cost constraint}]
    \end{itemize}
    \end{mathproblem}
\end{center}

The high-voltage variation of the problem arises from the above by setting~$\lambda = \infty$, so that the depth constraint plays no role. We refer to it as \HV (\HVshort). The assumption that integers are encoded in unary means we focus our analysis on how the size of the graph and the number of terminals affect the complexity of solving the problem. With input integers encoded in binary, \BLVshort remains NP-complete for~$t=1$ and~$|T|=1$ since it generalizes \textsc{Shortest Weight-Constrained Path}~\cite[ND~30]{GareyJ77}.

\subparagraph*{Our Contribution} We initiate the parameterized-complexity analysis of power network design, focusing on the parameterization by $t + |T|$. While \textsc{Steiner Tree}~\cite{DreyfusW71} is fixed-parameter tractable (\FPT) for the parameter~$|T|$, which easily extends to generalizations such as \textsc{(Prize-Collecting) Steiner Forest}, our problems turns out to be significantly harder. We prove that \BLVshort and \HVshort are W[1]-hard parameterized by~$t + |T|$, even when restricted to \emph{planar} graphs. Planar inputs are particularly relevant since the graph models potential locations for digging trenches on the surface of the earth. As in previous W[1]-hardness proofs for (planar) network design problems~\cite{BentertNRZ21,Chitnis23,ChitnisTW24}, our reductions start from \textsc{Grid Tiling}\footnote{\GT receives as input a \(k \times k\) matrix, each cell of which contains a set of pairs of integers.
The task is to decide whether a pair can be selected from each cell such that the first entries of selected pairs are equal in cells of the same row, while the second entries are equal for cells of the same column.} (\cite{Marx07a}, cf.~\cite[Sec.~14.4.1]{ParAlg}) and produce a grid-like graph in which the solution has to route a \emph{horizontal part} and a \emph{vertical part}, which have to interact in a prescribed way to meet the cost requirement. 

For \BLVshort (\Cref{thm:bddepth-ETHlowerbound}), we only utilize two trees ($t=2)$. By a suitable setting of the demand values for the terminals, we can ensure that the first solution tree~$H_1$ must contain a prescribed subset~$T_1$ of terminals, which are attached to the top row of the grid, whereas the other tree~$H_2$ contains the remainder, connected to the rightmost column. By a careful choice of depth constraint, we then ensure that the main part of tree~$H_1$ consists of horizontal paths through the grid, and the tree~$H_2$ consists of paths that are vertical apart from using some horizontal edges. To obtain a solution whose cost meets the desired threshold, each horizontal edge used by~$H_2$ must lie on a horizontal path in~$H_1$ so that some digging cost is shared among~$H_1$ and~$H_2$. By setting the cost of such edges based on feasible combinations of values in the input of \GT, we can ensure the correctness of the reduction. It shows that coordinating the cable placement for the two trees~$H_1, H_2$ is hard.

Our hardness reduction for \HVshort (\Cref{thm:lowerbound:unbounded}) is significantly more involved. Since solutions to the high-voltage problem are not bounded in depth, we have to insert additional terminals inside the grid to enforce the solution to consist of separate horizontal and vertical parts. Embedding these terminals in a planar way while still allowing a solution sufficient degrees of freedom to model \textsc{Grid Tiling} is difficult. We overcome this obstacle by encoding the vertical and horizontal choices using \emph{two} trees each, leading to~$t=4$.

By known lower bounds~\cite[Theorem 14.28]{ParAlg} on the running time for solving \GT, our first reduction implies that under the Exponential Time Hypothesis (ETH) \BLVshort on planar input graphs cannot be solved in time~$f(|T|,t)\cdot (|V(G)| \cdot \lambda^{f'(t)})^{o(|T|)}$ for any computable functions \(f\) and \(f'\). We give an algorithm for planar inputs whose running time matches this bound, thereby showing that the \emph{square-root phenomenon}~\cite{Marx13} does not manifest itself for this problem. The algorithm works by enumerating a bounded-size set of choices for a certain \emph{abstract structure}~$\CAbs{\mathcal{S}}$ of the desired solution~$\mathcal{S} = \{H_1, \ldots, H_t\}$ and encoding the search for the best solution matching that structure as a \textsc{Valued Binary Constraint Satisfaction Problem} (CSP).  
We give an exchange argument to prove that if there is a solution, there is one whose abstract structure has~$\Oh(|T|^2)$ vertices. When the input graph~$G$ is planar, then the abstract structure (which is a minor of~$G$) must be as well, and therefore its treewidth is~$\Oh(|T|)$. The treewidth of the abstract structure governs the treewidth of the primal graph of the CSP, which yields the claimed running time using known CSP algorithms. A similar approach also yields an \XP-algorithm if the input graph~$G$ is not planar, and for the high-voltage problem, albeit at a cost in the running time. 

Our lower bounds show that even when the parameterization of number of terminals and solution trees is bounded, power network design is intractable due to the difficulties of coordinating cable placement. These difficulties effectively arise when two solution trees leave the root node in different directions, but later converge to place some of their cables in a common trench. As our last contribution, we present an alternative model for the cost computation in which digging costs can only be shared along \emph{prefixes} of the solution trees. We develop an \FPT-algorithm for the resulting \PreLV problem parameterized by~$|T| + t$ by adapting Dreyfus-Wagner~\cite{DreyfusW71}, thereby shedding further light on which aspect of the cost sharing model leads to intractability.

\subparagraph*{Related Work} In \textsc{Strongly Connected Steiner Subgraph}, the goal is to find a minimum-weight subgraph of a directed graph that is \emph{strongly connected} and contains all terminals. This problem, and several variations, are known to be W[1]-hard parameterized by the number of terminals~\cite{FeldmannM23,Suchy16}. However, the direction of the edges plays a crucial role in this hardness. Common variations of \textsc{Steiner Tree} on \emph{undirected} graphs tend to be \FPT\ parameterized by~$|T|$. We are not aware of earlier work on the parameterized complexity of network design problems in which costs can be shared. But, in the operations research community, there has been research~\cite{Alvarez-Miranda17} on the \textsc{Minimum-Cost Shared Arborescence} problem, which also incorporates this cost-sharing aspect. It lacks other features of \LV though: most importantly, the depth constraint.  
When it comes to network design problems involving costs and depth-constraints, there has been some work on the \textsc{Shallow-Light Steiner Tree} problem, which aims to compute a low-diameter Steiner tree of minimum cost~\cite{ChimaniS15,HajiaghayiKS09,KortsarzP97}. This problem formulation, therefore, incorporates the depth constraint, but not the cost-sharing feature.

\subparagraph*{Organization} After presenting basic preliminaries in \cref{sec:Preliminaries}, we provide our algorithmic results in \cref{sec:Algorithms}. The complementary hardness proofs are given in \cref{sec:Hardness}. We conclude in \cref{sec:Conclusion}.

%% file: Preliminaries.tex
 \label{sec:Preliminaries}
	For \(i \in \mathbb{N}\), we denote by \([i]\) the set \(\{1, \dotsc, i\}\). For sets \(A,B\) we denote by \(2^A\) the powerset of \(A\), and by \(A \times B\) the set \(\{(a,b) \mid a \in A \land b \in B\}\). Given a total function \(f\colon A \to B\) and a subset \(A' \subseteq A\), we define \(f(A')=\{f(a)\mid a\in A'\}\). 
We assume familiarity with graph theory~\cite{Diestel} and parameterized complexity theory~\cite{ParAlg} and use standard terminology and notation from both. Unless stated otherwise, we consider undirected simple graphs and denote an edge between vertices \(u\) and \(v\) by \(uv = vu\). The \emph{interior} of a path \(P\) is the subgraph induced by its non-endpoint vertices. 
For a rooted tree \((T,r)\) with edge weights \(w \colon E(T) \to \mathbb{N}\), its \emph{\(w\)-depth} is \(\max_{t \in V(T)} \sum_{e \in E(P_{r,t})} w(e)\), where~$P_{r,t}$ is the unique path from~$r$ to~$t$ in~$T$. When $w$ is clear from context, we simply refer to this quantity as depth. 
For brevity, for \(w \colon E(G) \to \mathbb{N}\) and \(H \subseteq G\), we denote \(w(H) = \sum_{e \in E(H)} w(e)\). For a set of subgraphs \(S=\{H_1,\ldots,H_t\}\) and a subgraph \(G'\), the cost of \(S\) in \(G'\) is defined as \(\text{cost}_{G'}(H_1,\ldots,H_t)=\text{cost}(H_1\cap G',\ldots,H_t\cap G')\).
We denote the treewidth of a graph \(G\) by \(\operatorname{tw}(G)\).
	
	For our algorithms, we use two results from the literature as subroutines; one for certain path computations, and one for solving a certain \emph{constraint satisfaction problem (CSP)}.
	
    \begin{theorem}[{\cite{JOKSCH1966191}}]
    \label{thm:cheapestshortpath}
    Given a directed graph \(G\), two functions \(f,g\colon E(G)\to \mathbb{N}\), two vertices \(s,t\in V(G)\), and a threshold \(\tau \in \mathbb{N}\), there is an algorithm that computes a path~$P$ from \(s\) to \(t\) that satisfies \(\sum_{e\in E(P)}f(e) \leq \tau\) for which \(\sum_{e\in E(P)}g(e)\) is minimized, if one exists, in time \(\mathcal{O}(\tau\cdot|V(G)|\cdot |E(G)|)\). If no such path exists, the algorithm reports failure.
\end{theorem}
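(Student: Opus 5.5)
The natural approach is a dynamic program over a layered product graph that tracks the accumulated $f$-value alongside the vertex reached. Build a directed auxiliary graph $G'$ with vertex set $V(G) \times \{0,1,\dots,\tau\}$. For every arc $(u,v) \in E(G)$ and every $c$ with $c + f(u,v) \le \tau$, add an arc from $(u,c)$ to $(v, c+f(u,v))$ of weight $g(u,v)$. A walk from $(s,0)$ to some $(t,c)$ in $G'$ corresponds exactly to an $s$--$t$ walk $W$ in $G$ with $\sum_{e\in E(W)} f(e) = c \le \tau$, and its $G'$-weight equals $\sum_{e \in E(W)} g(e)$. We then compute the minimum-weight walk from $(s,0)$ to the set $\{(t,c) : 0 \le c \le \tau\}$, return the cheapest one, and report failure if none of these targets is reachable.

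For the running time, note that $G'$ has $(\tau+1)|V(G)|$ vertices and at most $(\tau+1)|E(G)|$ arcs. If every $f(e) \ge 1$, then $G'$ is acyclic, since the second coordinate strictly increases along each arc. A single relaxation pass in topological order (increasing $c$) then takes $\mathcal{O}(\tau|E(G)|)$ time, well within the claimed bound. In general $f$ may be $0$ on some arcs, and then $G'$ contains cycles inside a layer. Since $g \ge 0$, Bellman--Ford-style relaxation restricted to each layer still works. Running Bellman--Ford on the whole of $G'$ layer by layer costs $\mathcal{O}(|V(G)|\cdot|E(G)|)$ per layer, hence $\mathcal{O}(\tau|V(G)||E(G)|)$ in total. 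This matches the stated bound, which suggests this is the intended analysis. Dijkstra would give a better bound, but the cruder one suffices.

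The main obstacle is that the statement asks for a \emph{path}, while the dynamic program produces walks. We resolve this by shortcutting. Given an optimal walk $W$, repeatedly remove a closed subwalk between two occurrences of the same vertex. Because $f, g \ge 0$, this never increases the $f$-sum, so feasibility is preserved, and never increases the $g$-sum. The result is a path $P$ with $g(P) \le g(W)$. Since every path is also a walk, the minimum over walks is at most the minimum over paths, so $P$ is optimal among feasible paths. Conversely, if a feasible path exists, the corresponding walk is found in $G'$, so failure is reported only when no feasible path exists. Shortcutting takes linear time in the length of $W$, which is at most the number of vertices of $G'$, so it does not affect the overall bound.
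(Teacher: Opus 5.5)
Your proposal is correct, and it matches the intended argument: the paper gives no proof of its own and simply cites Joksch, whose method is this same dynamic program over pairs (vertex, accumulated $f$-budget). Your proposal also handles two points the citation leaves implicit. The first is zero-valued $f$, which matters because the paper applies the theorem with $f=\ell\colon E(G)\to\mathbb{N}_0$ on a bidirected copy of $G$; the second is the walk-to-path shortcutting.
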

	
	A \emph{valued binary CSP} is a tuple \((\mathcal{V}, \mathcal{D}, \mathcal{C})\) where \(\mathcal{V}\) and \(\mathcal{D}\) are two finite sets called the set of \emph{variables} and the \emph{domain} respectively.
	The third component \(\mathcal{C}\) is a finite set of \emph{value constraints} which are tuples \((x,y,c)\) with \(x,y \in \mathcal{V}\) and \(c \colon \mathcal{D} \times \mathcal{D} \to \mathbb{N}\).
	Given an \emph{assignment} \(\phi \colon \mathcal{V} \to \mathcal{D}\), its \emph{value} is \(\sum_{(x,y,c) \in \mathcal{C}} c(\phi(x),\phi(y))\).
	The \emph{primal graph} \(G_P(\mathcal{V}, \mathcal{D}, \mathcal{C})\) of a valued binary CSP \((\mathcal{V}, \mathcal{D}, \mathcal{C})\) is the graph \((\mathcal{V}, \{xy \mid (x,y,c) \in \mathcal{C} \text{ for some } c \colon \mathcal{D} \times \mathcal{D} \to \mathbb{N} \}) \).
	\begin{theorem}[folklore, e.g.\ \cite{FreuderKtree}]
		\label{thm:vbcsp}
		Given a valued binary CSP \((\mathcal{V}, \mathcal{D}, \mathcal{C})\), an assignment of minimum value can be computed in time \(|\mathcal{V}|\cdot|\mathcal{D}|^{\mathcal{O}(\operatorname{tw}(G_P(\mathcal{V}, \mathcal{D}, \mathcal{C}))}\).
	\end{theorem}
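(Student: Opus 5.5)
The plan is the standard dynamic program over a tree decomposition of the primal graph. Write $G_P = G_P(\mathcal{V}, \mathcal{D}, \mathcal{C})$ and $k = \operatorname{tw}(G_P)$. We may assume $|\mathcal{D}| \geq 2$: if $|\mathcal{D}| = 1$ there is exactly one assignment, and its value can be evaluated directly in time linear in the input.

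\emph{Preprocessing.} Several constraints may share the same pair $\{x,y\}$. For each such pair we replace them by a single constraint whose cost function is the pointwise sum (after swapping the arguments where necessary). This does not change the value of any assignment and takes time linear in the input. A graph of treewidth $k$ has at most $k|\mathcal{V}|$ edges, so afterwards $|\mathcal{C}| \leq k|\mathcal{V}|$.

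\emph{Tree decomposition.} We compute a tree decomposition of $G_P$ of width $\mathcal{O}(k)$ in time $2^{\mathcal{O}(k)}|\mathcal{V}|$, using for instance Korhonen's $2$-approximation or an earlier constant-factor approximation. We then convert it into a nice tree decomposition with $\mathcal{O}(k|\mathcal{V}|)$ nodes, consisting of leaf, introduce-vertex, forget and join nodes. Since $|\mathcal{D}| \geq 2$, the factor $2^{\mathcal{O}(k)}$ is absorbed into $|\mathcal{D}|^{\mathcal{O}(k)}$. Each constraint $(x,y,c)$ corresponds to an edge $xy$ of $G_P$, so some bag contains both $x$ and $y$. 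We charge the constraint to exactly one node, namely the topmost node whose bag contains both endpoints. Equivalently, we add introduce-edge nodes just below the forget node of whichever of $x,y$ is forgotten first.

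\emph{Dynamic program.} For each node $s$ with bag $B_s$ and each $\psi \colon B_s \to \mathcal{D}$, let $M_s(\psi)$ be the minimum, over all extensions of $\psi$ to the variables of the subtree of $s$, of the total cost of the constraints charged within that subtree. The recurrences are:
\begin{itemize}
\item Leaf nodes have $M = 0$.
\item An introduce node copies the table of its child, adding the cost of any constraint charged at this node evaluated under $\psi$.
\item A forget node of $v$ takes the minimum over the $|\mathcal{D}|$ values of $v$.
\item A join node adds the two child tables entrywise.
\end{itemize}
Join nodes do not double count, because each constraint is charged exactly once. The answer is the minimum entry at the root, and an optimal assignment is recovered by standard backtracking.

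\emph{Correctness and running time.} Correctness follows by induction on the decomposition, using the separation property. A variable forgotten below $s$ appears in no bag outside the subtree of $s$, so its value can be fixed independently of the rest once the bag assignment is fixed. Each table has $|\mathcal{D}|^{\mathcal{O}(k)}$ entries, and each entry is computed in time $\mathcal{O}(|\mathcal{D}|)$, giving $|\mathcal{V}|\cdot|\mathcal{D}|^{\mathcal{O}(k)}$ in total.

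The only delicate point is the charging of constraints so that join nodes do not count a constraint twice, and this is handled by the single-node assignment above. Everything else is routine.
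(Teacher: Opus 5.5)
Your proposal is correct and is the standard tree-decomposition dynamic program behind this folklore result. The paper gives no proof of its own and only cites Freuder's $k$-tree argument, which is the same idea phrased as variable elimination along a width-$k$ ordering. Your choice of a $2^{\mathcal{O}(k)}\cdot|\mathcal{V}|$-time constant-factor treewidth approximation, rather than an $|\mathcal{V}|^{\mathcal{O}(k)}$-time embedding, is what keeps the decomposition step within the claimed $|\mathcal{V}|\cdot|\mathcal{D}|^{\mathcal{O}(k)}$ bound. The same charging scheme also handles the unary constraints $(u,u,c_u)$ that the paper uses later; the only change is that your bound $|\mathcal{C}|\le k|\mathcal{V}|$ becomes $|\mathcal{C}|\le (k+1)|\mathcal{V}|$.
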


    The following lemma can be used to show that for a given partition \(\mathcal{P}\) of terminals, one can define a demand function and a value \(\alpha\) such that \(\mathcal{P}\) is the unique partition into sets of total demand at most~$\alpha$. This will be used later, in Section \ref{subsec:W1HardULV}, to prove that both \BLV and \ULV are W[1]-hard.
    
    \input{Terminal_Partitioning_Lemma}

%% file: Terminal_Partitioning_Lemma.tex
\begin{lemma}        \label{lemma:UniqueTerminalPartitioning}
        There exists a polynomial-time algorithm that, given a set of items \(U\) and a partition \(\mathcal{P}\) of \(U\) into \(t\) sets \(\mathcal{P}_1,\ldots,\mathcal{P}_t\), computes a function \(f\colon U\to \left[|U|^{t+2}\right]\) such that \(\mathcal{P}\) is the unique partition of \(U\) into \(t\) sets that satisfies \(\sum\limits_{x\in \mathcal{P}_i}f(x)\leq \frac{1}{t} \sum\limits_{x\in U}f(x)\) for each \(i \in [t]\).
    \end{lemma}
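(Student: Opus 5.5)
The plan is to choose weights so that every part of~$\mathcal{P}$ has exactly the same total~$S$. Then $\frac1t\sum_{x\in U}f(x)=S$. For any partition $\mathcal{Q}_1,\ldots,\mathcal{Q}_t$ satisfying the constraint, the $t$ part sums add up to $tS$ and each is at most~$S$, so every $\mathcal{Q}_j$ must sum to \emph{exactly}~$S$. It then suffices to show that the only subsets of total exactly~$S$ are the sets~$\mathcal{P}_i$.

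Let $n=|U|$. I assume every $\mathcal{P}_i$ is nonempty, and I treat the finitely many instances with $n\le 2$ by brute force or a trivial direct choice. Use base~$n$ and target $S=n^{t+2}$. In each $\mathcal{P}_i$ fix a \emph{leader} $\ell_i$ and define the weights as follows.
\begin{itemize}
\item Every non-leader $x\in\mathcal{P}_i$ gets $f(x)=n^{i}$.
\item The leader gets $f(\ell_i)=n^{t+2}-(|\mathcal{P}_i|-1)\,n^{i}$.
\end{itemize}
Each part then sums to~$S$. All values are positive integers of size at most $n^{t+2}$, and $f$ is clearly computable in polynomial time.

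Now let $\mathcal{Q}_j$ be any set with $\sum_{x\in\mathcal{Q}_j}f(x)=S$. Since $|\mathcal{P}_i|-1\le n-1$ and $i\le t$, every leader weight is at least $n^{t+2}-(n-1)n^{t}>n^{t+2}/2$ once $n\ge 3$. Hence $\mathcal{Q}_j$ contains at most one leader. It also contains at least one, because all non-leaders together weigh at most $(n-1)n^{t}<S$. So each of the $t$ sets contains exactly one of the $t$ leaders.

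Suppose $\mathcal{Q}_j\ni\ell_i$. Its non-leaders must then sum to exactly $(|\mathcal{P}_i|-1)n^{i}$. The non-leaders have weights $n^{i'}$ with $i'\in[t]$, and at most $n-1$ of them are present. So their sum, written in base~$n$, has no carries, and its digit at position~$i'$ equals the number of chosen non-leaders from $\mathcal{P}_{i'}$. Comparing digits, $\mathcal{Q}_j$ contains exactly $|\mathcal{P}_i|-1$ non-leaders of $\mathcal{P}_i$, which are all of them, and none from other parts. Thus $\mathcal{Q}_j=\mathcal{P}_i$, and $\mathcal{P}$ is the unique such partition.

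The main obstacle is the digit-uniqueness step. The weights must be large enough that leaders cannot pair up and that non-leader digits cannot carry, yet small enough to stay inside $[|U|^{t+2}]$. The choice of base~$n$ with exponents in $\{1,\ldots,t\}$ and target $n^{t+2}$ is what balances these two requirements.
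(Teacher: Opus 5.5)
Your proof is correct and uses the paper's construction essentially verbatim: one heavy leader per part, non-leaders weighted by a power of $|U|$ indexed by their part, every part summing to $|U|^{t+2}$, the averaging argument forcing each part to weigh exactly $|U|^{t+2}$, and the bound that rules out two leaders in one part. The only difference is the final step: your no-carry base-$|U|$ digit comparison shows directly that every subset of weight exactly $|U|^{t+2}$ is some $\mathcal{P}_i$, whereas the paper uses exponents $|U|^{i-1}$ and identifies the parts by induction on $i$, relying on earlier parts having already absorbed their non-leaders. Your special-casing of $|U|\le 2$ is unnecessary, since $n^{t+2}-(n-1)n^t>n^{t+2}/2$ holds for every $n\ge 1$ and the digit argument works for all $n\ge 2$.
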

    \begin{proof}
        For each \(i\in[t]\), let \(\epsilon_i=|U|^{i-1}\) and \(\Omega_i=|U|^{t+2}-(|\mathcal{P}_i|-1)\cdot \epsilon_i\). Set \(f\) as \(\Omega_i\) for one arbitrarily chosen item of \(\mathcal{P}_i\) and set \(f\) as \(\epsilon_i\) for the rest of \(\mathcal{P}_i\). Note that~$f(\mathcal{P}_i) = |U|^{t+2}$ for each~$i \in [t]$, so that~$\frac{1}{t} \sum_{x \in U}f(x) = |U|^{t+2}$. We now prove that this choice of~$f$ has the claimed properties.
        
        Let \(\mathcal{P}'\) be a partition of \(U\) into \(t\) sets \(\mathcal{P}'_1,\ldots,\mathcal{P}'_t\) such that \(\sum_{x\in \mathcal{P}'_i}f(x)\leq \frac{1}{t} \sum_{x\in U}f(x)\) for each \(i \in [t]\), which can only happen if~$f(\mathcal{P}'_i) = |U|^{t+2}$ for each~$i \in [t]$. If two items with \(f\)-values \(\Omega_i\) and \(\Omega_j\) are in the same part \(\mathcal{P}'_z\) of \(\mathcal{P}'\), then 
        \begin{align*}
            \sum\limits_{x\in \mathcal{P}'_z}f(x) \geq \Omega_i + \Omega_j &= |U|^{t+2}-(|\mathcal{P}_i|-1)\cdot \epsilon_i+|U|^{t+2}-(|\mathcal{P}_j|-1)\cdot \epsilon_j\\
            &\geq 2|U|^{t+2}-|U|^{t}-|U|^{t-1} > |U|^{t+2}.
        \end{align*}
        Thus, no part contains two such items, and w.l.o.g., we can assume that the item with \(f\)-value \(\Omega_i\) is in \(\mathcal{P}'_i\). Now, we prove the rest of the items in \(\mathcal{P}'_i\) each have \(f\)-value \(\epsilon_i\) by induction on \(i\). 

        For~$i=1$, we know that~$\mathcal{P}'_1$ contains the item of value~$\Omega_1$ and satisfies~$f(\mathcal{P}'_1) = |U|^{t+2}$. Hence the other items in~$\mathcal{P}'_1$ have total value~$|U|^{t+2} - \Omega_1 = (|\mathcal{P}_1| - 1) \cdot \epsilon_1 < |U| \cdot |U|^{0} = |U|$. Since~$\epsilon_i > |U|$ for all~$i > 1$, the only other items in~$\mathcal{P}_1$ have value less than~$\epsilon_2$, and are therefore the~$|\mathcal{P}_1|-1$ items of value~$\epsilon_1$. Hence~$\mathcal{P}'_1 = \mathcal{P}_1$.

        Now, assume that each part \(\mathcal{P}'_{i'}\in\{\mathcal{P}'_1,\ldots,\mathcal{P}'_{i-1}\}\) consists of one item with \(f\)-value \(\Omega_{i'}\) and \(|\mathcal{P}'_i|-1\) items with \(f\)-value \(\epsilon_{i'}\). We show that \(\mathcal{P}'_i\) consists of an item with \(f\)-value \(\Omega_i\) and \(|\mathcal{P}_i|-1\) items with \(f\)-value \(\epsilon_i\), so that~$\mathcal{P}'_i = \mathcal{P}_i$. We proved that an item in \(\mathcal{P}'_i\) has \(f\)-value \(\Omega_i\). The total \(f\)-value of the rest of the items in \(\mathcal{P}'_i\) must be at most \((|\mathcal{P}_i|-1)\cdot \epsilon_i\), which is strictly less than~$\epsilon_j$ for~$j > i$. Hence the rest of the items in \(\mathcal{P}_i\) must have \(f\)-value exactly equal to \(\epsilon_i\). Therefore, each part \(\mathcal{P}'_i\) must consist of an item with \(f\)-value \(\Omega_i\) and \(|\mathcal{P}_i|-1\) items with \(f\)-value \(\epsilon_i\), and so, \(\mathcal{P}=\mathcal{P}'\). Hence, this partition is unique and \(f\) is the desired function of this lemma. 
    \end{proof}

%% file: Algorithms_Section.tex
\label{sec:Algorithms}
\subsection{Algorithms for \BLVshort and \ULVshort}
In this section, we present our algorithmic results for \BLVshort and \ULVshort.
These are based on first analyzing the structure of solutions and then exploiting this structure to be able to compute their cost using \Cref{thm:cheapestshortpath} and \Cref{thm:vbcsp}.
In particular, we show that a minimum-cost solution $\{H_1, \ldots, H_t\}$ can be decomposed into pairwise disjoint paths between the vertices of degree three and higher in $H := \bigcup _{i=1}^t H_i$, with the guarantee that the digging cost of each path is minimum among all paths whose sum of $\ell$-values is the same. (If a path does not have minimum cost, then it can be replaced by a cheaper one to obtain a cheaper overall solution.)
Such paths can be computed with \Cref{thm:cheapestshortpath} as soon as we know their endpoints and length-thresholds.
As these are interdependent among each other, we formulate this problem as a valued binary CSP with a variable for each endpoint of such a path.
Hence, the structure of this CSP can be captured by a graph with a vertex for each endpoint of such a path and an edge for each such path.
This is the graph we investigate in \Cref{subsec:SolutionStructure}; ultimately arriving at a bound on the number of its vertices.
This will actually allow us to solve a CSP with the rough structure described above using \Cref{thm:vbcsp}.
We carry out the details of setting up the CSP appropriately and show that this indeed yields algorithms for \BLVshort and \ULVshort with the claimed running times in \Cref{subsec:XPAlgorithmBLV}.

\subsubsection{Solution structure}
\label{subsec:SolutionStructure}
We will now analyze the structure of a solution viewed as a set of minimum-cost bounded-length paths between certain endpoints.
In particular, we want to bound the number of such endpoints that we need to consider.
For this, we introduce the \emph{abstract structure} of a solution whose vertices correspond to such endpoints and whose edges correspond to minimum-cost bounded-length paths between them.

\begin{definition} \label{def:abstract:structure}
    For a solution \(\mathcal{S}=\{H_1,\ldots,H_t\}\) of \BLVshort or \HVshort, the \emph{abstract structure} \(\CAbs{\mathcal{S}}\) is a graph \(H\) with a color from \(\{c_S \mid \emptyset \neq S \subseteq 2^{[t]}\}\) associated to each of its edges, constructed as follows.
\begin{itemize}
    \item Initially, let \(H=\bigcup_{i=1}^t H_i\) and color each \(e \in E(H)\) by \(c_S\), where \(S = \{i \in [t] \mid e \in E(H_i)\}\).
    \item While a vertex \(v \neq r\) has degree \(2\) in \(H\) and its incident edges have the same color \(c\),  we remove \(v\) and connect the neighbors of \(v\) with an edge of color \(c\).
\end{itemize}
 The subgraph of \(\CAbs{\mathcal{S}}\) corresponding to \(H_i\), denoted by \(\PartialCAbs{i}{\mathcal{S}}\), is the subgraph of \(\CAbs{\mathcal{S}}\) consisting of all edges with a color \(c_S\) for which \(i \in S\).
\end{definition}
Considering the color of the edges of \(\CAbs{\{H_1,\ldots,H_t\}}\), we can find \(\PartialCAbs{i}{\mathcal{S}}\) in polynomial time for each~$i \in [t]$. Clearly, \(\PartialCAbs{i}{\mathcal{S}}\) is obtained from \(H_i\) by dissolving some degree-\(2\) vertices.
 
 The following lemma guarantees the existence of a solution for which the number of vertices in its abstract structure is bounded quadratically in \(|T|\). This bound enables us to enumerate all possible such structures and, accordingly, to search for a solution.

\begin{lemma}  
\label{lemma:NumberOfVerticesAbstractStructure}
    For every yes-instance \((G,\ell,\dig,r,T,d,\alpha,\beta,\lambda,t)\) to \BLVshort and \ULVshort, there exists a solution \(\mathcal{S}=\{H_1,\ldots,H_t\}\) such that \(|V(\CAbs{\mathcal{S}})|\leq 1 + 2|T|+8|T|^2\).
\end{lemma}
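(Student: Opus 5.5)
We take, among all solutions, one that minimizes the cost and, subject to that, minimizes $\sum_{i=1}^t |E(H_i)|$. We then bound $|V(\CAbs{\mathcal{S}})|$ by classifying its vertices. Each vertex of $\CAbs{\mathcal{S}}$ is either $r$, a terminal, a vertex of degree at least $3$ in $H=\bigcup_i H_i$, or a degree-$2$ vertex where the color changes. Degree-$1$ vertices are excluded: each $H_i$ may be assumed to have only terminals (and possibly $r$) as leaves, since pruning a non-terminal leaf never increases cost, depth or demand, and it strictly decreases $\sum_i |E(H_i)|$.

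\textbf{Vertices of degree at least $3$.} The leaves of $H$ are terminals, so $H$ has at most $|T|+1$ leaves counting $r$. However, $H$ need not be a tree, since distinct $H_i$ may create cycles. We instead bound per tree. A vertex of degree at least $3$ in $H$ is either branching (degree at least $3$) in some $H_i$, or a vertex where edges of different trees meet, so that its color pattern changes. Each $H_i$ has at most $|T|$ leaves, hence at most $|T|$ branching vertices, giving at most $t|T|$ in total. Since every $H_i$ contains a terminal after discarding empty trees, we may assume $t\le |T|$, so this count is at most $|T|^2$.

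\textbf{Color-change vertices.} These are the remaining "merge/split" vertices, where $\{i : e\in H_i\}$ differs on two incident edges without any $H_i$ branching. At such a vertex $v$, some tree $H_j$ passes through $v$ (or ends at $v$, but then $v$ would be a leaf, which we excluded) while another tree $H_{j'}$ enters or leaves. Take a pair of trees $H_i,H_j$ and consider the maximal shared segments of their intersection pattern along paths. The key exchange argument is as follows. Suppose $H_i$ and $H_j$ meet, separate and later meet again, giving two color-change points $u,w$ that are both on root-to-terminal paths of both trees. Then we can reroute $H_j$'s $u$–$w$ subpath onto $H_i$'s, or vice versa, choosing the one not longer in $\ell$. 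This keeps the depth bound, since the $\ell$-length of root-to-leaf paths does not increase. It also does not increase the dig cost (edges already dug) or the cable cost, so by the tie-breaking minimality it cannot strictly help. The argument needs care, and I expect this step to be the \textbf{main obstacle}. Rerouting can make $H_j$ non-simple or destroy other shared segments, and the depth condition requires comparing $\ell$-lengths of the two subpaths in the correct direction. If $H_j$'s subpath is shorter, we instead move $H_i$ onto it, paying dig only on edges already dug by $H_j$.

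From this I would conclude the following. For each pair $\{i,j\}$ and each terminal-path of $H_i$, the overlap of that path with $H_j$ has $O(1)$ contiguous segments relative to each root-to-terminal path of $H_j$. This yields $O(|T|^2)$ endpoints per pair in the worst case, which would be too many. So I would charge more carefully: charge each color-change vertex to a pair (terminal $a$, terminal $b$) whose root paths in their respective trees start or stop sharing there. Then show via the exchange that each ordered pair of terminals creates at most $4$ such vertices (two segment boundaries, each counted on both sides). This gives at most $8|T|^2$, and together with $r$, the $|T|$ terminals and the branching vertices of total number at most $|T|$ (refined via the exchange so that branchings are charged per terminal), we reach $1+2|T|+8|T|^2$.
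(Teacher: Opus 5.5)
\textbf{The extremal choice is where the argument breaks.} Minimizing cost and then $\sum_i |E(H_i)|$ gives no contradiction in your exchange when the two $u$--$w$ segments have equal $\ell$-length and equal edge counts, and the digging costs involved are $0$. In that case rerouting changes neither criterion. Your remark that the reroute ``cannot strictly help'' is exactly the problem: a contradiction needs the reroute to be strictly better.

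This case does occur. Take $t=2$ and terminals $a,b$ with $d(a)=d(b)=\alpha=1$. Build a chain of diamonds: $z_0=r$, and for $0\le k<m$ the vertex $z_k$ is adjacent to $p_k,q_k$, both of which are adjacent to $z_{k+1}$. Attach $a,b$ as pendants at $z_m$. Set all lengths to $1$, all digging costs to $0$, and $\lambda$ large. The solution in which $H_1$ always uses $p_k$ and $H_2$ always uses $q_k$ minimizes both of your criteria. Yet every $z_k$ with $k\ge 1$ has degree $4$ in $H_1\cup H_2$, so $|V(\CAbs{\mathcal{S}})|=m+3$ while $|T|=2$.

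The paper instead takes, among cost-minimal solutions, one whose decreasingly sorted sequence $\mathcal{T}(\mathcal{S})$ of edge multiplicities $x_e(\mathcal{S})$ is lexicographically maximum. Moving $H_j$ onto the side of the cycle that contains the most heavily used edge strictly increases $\mathcal{T}(\mathcal{S})$. This makes every equal-length detour contradictory.

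\textbf{The second gap is the object on which you perform the exchange.} On root-to-terminal paths the exchange is not available. If $H_j$'s $u$--$w$ segment passes through a vertex where $H_j$ branches, replacing the segment detaches the subtree hanging there from $r$. The paper therefore works with \emph{plain paths} of each $H_i$: paths between special vertices (the root and the vertices of degree $\neq 2$ in $H_i$) whose interior vertices are not special. It proves that the union of two plain paths of distinct trees has at most two vertices of degree more than $2$, since three would create a cycle whose two sides can be exchanged as above.

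This also supplies the counting you are missing. Let $A_i$ be the non-root vertices of degree $\neq 2$ in $H_i$, and use that each terminal lies in exactly one tree (your edge-minimality does guarantee this). Then $\sum_i |A_i|\le 2|T|$.
\begin{itemize}
\item This bounds the vertices of degree $\neq 2$ in some tree by $2|T|$. Your count $t|T|\le|T|^2$ does not reach this, and your later ``refinement'' to $|T|$ is unsupported.
\item It also bounds the total number of plain paths by $2|T|$. Combined with the two-vertex bound per pair, this gives at most $2\cdot(2|T|)^2=8|T|^2$ convergence/divergence vertices.
\end{itemize}
Your charging of these vertices to pairs of terminals, with at most $4$ per pair, is asserted rather than proved, and the constant is reverse-engineered rather than derived.
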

\begin{proof}
    To prove this lemma, we define an ordering on solutions to the instance, and then prove that a maximal element in this ordering has an abstract structure whose number of vertices is suitably small.
    For an arbitrary solution \(\mathcal{S}=\{H_1,\ldots,H_t\}\), for each edge \(e \in E(G)\), we set \(x_e(\mathcal{S})\) to be the number of trees in \(\mathcal{S}\) that contain \(e\). Then, we define \emph{the coloring sequence} of \(\mathcal{S}\) as a tuple \(\mathcal{T}(\mathcal{S}) := (y_1, \dotsc, y_{|E(G)|})\)  where \(y_1,\ldots,y_{|E(G)|}\) is the decreasingly sorted sequence of values \(x_e(\mathcal{S})\) for \(e \in E(G)\).
    Let \(\mathcal{S}=\{H_1,\ldots,H_t\}\) be a solution where the value of \(\text{cost}(\mathcal{S})\) is minimized and which, among all solutions that minimize \(\text{cost}(\mathcal{S})\), has a maximum tuple \(\mathcal{T}(\mathcal{S})\) with respect to the lexicographic ordering. 
    
    To bound the number of vertices in the abstract structure~$\CAbs{\mathcal{S}}$, we exploit the conditions under which \cref{def:abstract:structure} removes a degree-2 vertex. These conditions can be expressed using the following concept. For a tree~$H_i$, we define its \emph{special vertices} to be the root~$r$ of the instance together with the vertices of~$H_i$ whose degree in~$H_i$ is unequal to~2. We define a \emph{plain path} of~$H_i$ as a path~$P$ in~$H_i$ whose endpoints are special but whose internal vertices are not. Since degree-2 non-root vertices of~$\bigcup_i H_i$ are dissolved when both incident edges have the same color, while the edge colors encode which trees contain the edge, it follows that the vertices of~$\bigcup_i H_i$ that remain in~$\CAbs{\mathcal{S}}$ are (1)~the root vertex~$r$, (2)~each non-root vertex that has degree unequal to~$2$ in at least one tree~$H_i$, and (3)~vertices~$v$ that lie on plain paths~$P_i, P_j$ of two distinct trees~$H_i, H_j$, such that~$v$ has degree more than~$2$ in the union~$H_i \cup H_j$, i.e., places where plain paths of distinct trees converge or diverge (see \Cref{fig:AbstractStructureTypeOfVertices}). The root vertex \(r\) is the only vertex that is of type \(1\). To bound the number of vertices in the abstract structure, we bound the number of vertices of each of the two other types.
    
     First, we bound the the number of vertices of type \(2\). Let \(A_i\) be the set of non-root vertices of \(H_i\) whose degree is not \(2\) and \(f_i\) be the number of non-root leaves of \(H_i\). By cost-minimality, we can assume that the non-root leaves of each tree are terminals. Additionally, since each terminal has degree-\(1\) in the input graph $G$ by definition of \BLVshort, it also follows from cost-minimality that each terminal appears in exactly one tree. So, we have \(\sum_{i=1}^t f_i=|T|\). Recalling that the number of vertices of degree unequal to $2$ in a tree is at most twice the number of its leaves minus $1$, we can provide an upper bound for \(|A_i|\) as follows. If root \(r\) is a leaf of \(H_i\), then the number of leaves is \(f_i+1\) and so,  \(|A_i\cup \{v\}| \leq 2(f_i+1)-1=2f_i+1\). Otherwise, if the root \(r\) is not a leaf, we have \(|A_i| \leq 2f_i-1\leq 2f_i\). Since in both cases we have \(|A_i|\leq 2f_i\), we can conclude that the total number of vertices of type \(2\) is at most \(\sum\limits_{i=1}^{t}|A_i| \leq \sum\limits_{i=1}^t 2f_i\leq 2|T|\).

    Now, it remains to analyze vertices of type~(3). The following claim will be instrumental to do so.

\begin{figure}[!t]
        \centering
        \begin{subfigure}[b]{0.3\textwidth}
            \centering
            \input{figs/Abstract_Structure_instance}
            \subcaption{}
            \label{fig:Abs_Instance}
        \end{subfigure}
        \hfill
        \begin{subfigure}[b]{0.3\textwidth}
            \centering
            \input{figs/Abstract_Structure_trees}
            \subcaption{}
            \label{fig:Abs_T1}
        \end{subfigure}
        \hfill
        \begin{subfigure}[b]{0.3\textwidth}
            \centering
            \input{figs/Abstact_Structure_Type_Of_Vertices}
            \subcaption{}
            \label{fig:Abs_Type_Vertices}
        \end{subfigure}
        \caption{(a) An instance of \BLVshort/\ULVshort. Each edge has length \(1\) and digging cost \(100\). Squares represent terminals. (b) A solution \(\mathcal{S}\). (c) \(\CAbs{\mathcal{S}}\), with vertices of the same type in the same color. 
         Type 2 vertices are shown in orange and type 3 vertices are shown in purple. The vertex \(v_{11}\) is not in any solution trees and so not in \(\CAbs{\mathcal{S}}\). The vertices \(v_5\) and \(v_9\) are in the solution trees but not in the abstract structure as they have degree \(2\) in the union of the trees. One can see that the vertices \(v_4\) and \(v_6\) are shared between the path from \(r\) to \(v_7\) in (c) and the path from \(v_2\) to \(v_8\) in (d) causing a degree more than \(2\) for them in the abstract structure. The path from \(r\) to \(v_3\) in (b) and the path from \(r\) to \(v_7\) in (c) also intersect in \(v_1\) which makes \(v_1\) to be of degree more than \(2\) in the abstract structure. So, all of these vertices remain in the abstract structure though they have degree \(2\) in the mentioned trees.}
        \label{fig:AbstractStructureTypeOfVertices}
    \end{figure}

    \begin{claim} \label{claim:union:2paths}
        For plain paths~$P_i, P_j$ of two distinct trees~$H_i,H_j$, the graph~$P_i \cup P_j$ has at most two vertices of degree more than~$2$.
    \end{claim}
    \begin{claimproof}
        Since both~$P_i$ and~$P_j$ are paths, a vertex of~$P_i \cup P_j$ of degree more than~$2$ is a place where the two paths either converge or diverge. Hence if there are three or more such vertices, the graph~$P_i \cup P_j$ has a cycle. A minimal such cycle can be decomposed into two parts where one part is a subpath of \(P_i\) and the other is a subpath of \(P_j\). We call these sub-paths \(P'_i\) and \(P'_j\), respectively. Using \(P'_i\) and \(P'_j\), we now derive a contradiction by building a minimum-cost solution \(S'\) whose coloring sequence is larger than that of \(\mathcal{S}\).

        Since all internal vertices of~$P'_i$ and~$P'_j$ are plain, they have degree~$2$ and cannot be a terminal or the root. Hence we can replace~$P'_i$ in~$H_i$ by~$P'_j$ to obtain a tree~$H'_i$ containing the root that covers the same terminals as~$H_i$. Similarly, we can replace~$P'_j$ in~$H_j$ by~$P'_i$ to obtain a tree~$H'_j$. If~$\ell(P'_i) \neq \ell(P'_j)$, then replacing the longer by the shorter segment yields a solution whose cost is strictly smaller: the digging costs do not increase (as all edges involved in the replacement were dug on account of the other tree anyway), the total depth does not increase, and the total length of the edges in the replaced tree strictly decreases. Since we started from a cost-minimal solution, we must therefore have~$\ell(P'_i) = \ell(P'_j)$. We now derive a contradiction by defining a minimum-cost solution that is lexicographically larger.  
         
        Let \(x'_i\) be the maximum value of \(x_e(\mathcal{S})\) over all \(e\in E(P'_i)\) and \(x'_j\) be the maximum value of \(x_e(\mathcal{S})\) over all \(e \in E(P'_j)\). Without loss of generality, assume that \(x'_i \geq x'_j\). Let~$\mathcal{S}'$ be the solution obtained from~$\mathcal{S}$ by replacing~$H_j$ with~$H'_j$. We argue the coloring sequence of~$\mathcal{S}'$ is larger than that of~$\mathcal{S}$ in the lexicographical ordering.  The edge of~$P'_i \cup P'_j$ that was used most in solution~$\mathcal{S}$ (which appears on the most relevant position in the sorted sequence) appears at least one additional time in solution~$\mathcal{S}'$ on account of replacing~$H_j$ (which does not contain that edge) by~$H'_j$ (which does). This contradiction proves the claim.
    \end{claimproof}

     \cref{claim:union:2paths} leads to an upper bound of~$8|T|^2$ on the number of vertices of~$\CAbs{\mathcal{S}}$ of type~(3) by bounding the total number of plain paths in the solution, as follows. Note that the plain paths of a tree~$H_i$ are in 1-to-1 correspondence with the edges of the tree~$H'_i$ that results by dissolving all non-special vertices of~$H_i$. Since any tree has strictly fewer edges than vertices, and noting that each tree has at most \(|A_i|+1\) special vertices, \(r\) together with all non-degree~\(2\) vertices, there are \(|A_i|\) edges in \(H'_i\). Thus, the total number of edges over all trees~$H'_1, \ldots, H'_t$  (and therefore the total number of plain paths) is at most \(\sum\limits_{i\in[t]}|A_i|\leq 2|T|\). \cref{claim:union:2paths} shows that each pair of plain paths adds at most \(2\) vertices of type~(3) to~$\CAbs{\mathcal{S}}$. Hence, the total number of vertices in~$\CAbs{\mathcal{S}}$ is~$1 + 2|T| + 2\cdot (2|T|)^2 = 1 + 2|T|+8|T|^2$.
\end{proof}

\subsubsection{\XP-algorithms for Power Network Design}
\label{subsec:XPAlgorithmBLV}
Using bounds on the abstract structure we present \XP-algorithms for power network design.

\begin{theorem} 
\label{thm:XP-algo-tw}
The \BLV problem can be solved in time \(f(t,|T|)\cdot(|V(G)|\cdot\lambda^t)^{\mathcal{O}(|T|^2)}\) for any arbitrary input instance \((G,\ell,\dig,r,T,d,\alpha,\beta,\lambda,t)\).
\end{theorem}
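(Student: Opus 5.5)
We follow the strategy outlined at the start of this section: guess the abstract structure of a suitable solution, then fill in its edges with cheapest length-bounded paths by solving a valued binary CSP.

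\textbf{Step 1: guess the abstract structure.} By \Cref{lemma:NumberOfVerticesAbstractStructure}, if the instance is a yes-instance, it has a solution $\mathcal{S}$ (which we may take to be cost-minimal) whose abstract structure $\CAbs{\mathcal{S}}$ has at most $N:=1+2|T|+8|T|^2$ vertices. We enumerate every candidate pattern, consisting of:
\begin{itemize}
\item an abstract graph $A$ on at most $N$ vertices, one of them marked as $r$ and $|T|$ of them identified with the terminals;
\item a colouring of $E(A)$ by colours $c_S$ with $\emptyset\neq S\subseteq[t]$;
\item an assignment of each terminal to the tree that will contain it.
\end{itemize}
The number of patterns is bounded by a function $f(t,|T|)$. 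We discard every pattern in which some $A_i$ (the edges whose colour contains $i$) is not a tree containing $r$, or in which some tree violates the demand bound $\alpha$.

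\textbf{Step 2: encode one pattern as a valued binary CSP.} The variables are the non-terminal, non-root vertices of $A$ (root and terminals are fixed).

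The domain $\mathcal{D}$ is $V(G)\times\{0,\ldots,\lambda\}^t$. A value $(v,\delta_1,\ldots,\delta_t)$ states that the abstract vertex is mapped to $v$, and that its $\ell$-distance from $r$ inside tree $H_i$ is $\delta_i$ (ignored when the vertex is not in $A_i$). Thus $|\mathcal{D}|\le |V(G)|\cdot(\lambda+1)^t$.

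For each abstract edge $xy$ with colour $c_S$, we add a constraint $(x,y,c)$. Its value $c((v,\delta),(w,\delta'))$ is defined as follows:
\begin{itemize}
\item It is $\infty$ (replaced by a large enough integer, say $\beta+1$ times the number of constraints) unless, for all $i\in S$, the differences $\delta'_i-\delta_i$ are equal to a common value $L\ge 0$ (with orientation toward the root given by $A_i$).
\item Otherwise, it is $|S|\cdot L+\dig(P)$, where $P$ is a $v$--$w$ path of $\ell$-length exactly $L$ and minimum digging cost.
\end{itemize}
We compute $P$ with \Cref{thm:cheapestshortpath}, applied with a layered variant (or with $f=\ell$ and threshold $L$, noting that a shorter path is only better). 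We also add unary-style constraints forcing every depth $\delta_i\le\lambda$, which enforces the depth constraint. Injectivity of the vertex mapping can be left unenforced: a non-injective assignment still yields subgraphs whose union has cost at most the CSP value. After cleanup (shortcutting cycles), each tree remains connected with depth at most $\lambda$, since distances only decrease.

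We accept if some pattern yields CSP value at most $\beta$.

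\textbf{Step 3: correctness and running time.} For completeness, the optimal $\mathcal{S}$ from the lemma induces an assignment whose value is at most its cost. The edges of $\CAbs{\mathcal{S}}$ correspond to internally disjoint paths with disjoint digging contributions, so summing per-edge costs counts each dug edge once and each cable occurrence correctly. For soundness, any assignment with finite value yields trees $H_i$ satisfying the covering, demand and depth constraints with cost at most the value, since digging costs of overlapping paths are overcounted, never undercounted.

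By \Cref{thm:vbcsp}, with treewidth at most $N=\mathcal{O}(|T|^2)$, each CSP is solved in time $N\cdot(|V(G)|(\lambda+1)^t)^{\mathcal{O}(|T|^2)}$. Precomputing all constraint tables costs polynomially many calls to \Cref{thm:cheapestshortpath}. Together with the $f(t,|T|)$ patterns, this gives the claimed running time.

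The main obstacle is the completeness direction. We must justify that consecutive abstract vertices in a common tree have consistent depth differences, and that replacing each abstract edge's path by a cheapest path of equal length never breaks other constraints. The latter holds because the CSP cost upper-bounds the true cost and the real solution's paths are among the candidates. A second delicate point is accounting for digging cost exactly once per edge; this is resolved by the fact that distinct abstract edges of $\CAbs{\mathcal{S}}$ are edge-disjoint in $G$.
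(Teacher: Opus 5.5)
Your proposal is correct and essentially reproduces the paper's proof. You guess the $\mathcal{O}(|T|^2)$-vertex abstract structure from \Cref{lemma:NumberOfVerticesAbstractStructure}, then solve a valued binary CSP over $V(G)\times\{0,\dots,\lambda\}^t$ whose edge costs come from \Cref{thm:cheapestshortpath} and whose treewidth is at most the number of variables; requiring a common oriented depth difference $L$ and charging $|S|\cdot L$ works as well as the paper's minimum $\lambda'$.

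The soundness cleanup needs tightening. Break cycles by taking a shortest-path tree from $r$, since arbitrary cycle-breaking can increase distances. Also prune leaves that are not terminals designated for that tree (or, as the paper does via $\mathcal{C}'$, forbid non-terminal abstract vertices from being mapped to terminals): otherwise a stray terminal can violate the demand bound. Finally, ``fixing'' a terminal should fix only its vertex coordinate, not its depth, because a fixed depth can make the charge $|S|\cdot L$ on its edge exceed the true cost.
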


\begin{proof}
    First, we provide an algorithm that solves this problem. We then discuss its correctness and running time analysis. Given an input instance \((G,\ell,\dig,r,T,d,\alpha,\beta,\lambda,t)\), we proceed in the following steps.
\begin{enumerate}
    \item For each graph \(H\) with at most \(1+2|T|+8|T|^2\) vertices whose edges are colored with \(2^t-1\) colors, each color corresponding to a non-empty subset of \([t]\), we fix an arbitrary orientation of the edges of~$H$ and do the following. (We consider~$H$ as a guess for the abstract structure of a solution, up to isomorphism.)
    \label{algostep:absstructure}
    \begin{enumerate}
        \item Let \(H'_i\) be the subgraph of \(H\) induced by all edges with a color \(c_S\) where \(i \in S\). Let \(V_i\) be the set of vertices of \(H'_i\). Check if each subgraph \(H'_i\) for \(i \in [t]\) is a tree; if not, continue with the next guess for~$H$.
        \label{algostep:abscheck}
        \item For each injective assignment \(\varphi \colon T \cup \{r\} \to V(H)\) that satisfies the following conditions: 
        \begin{itemize}
            \item the demand constraint holds for all trees, i.e., the total demand of the preimages of \(\varphi\) of the terminals in each tree \(H_i\) is at most \(\alpha\), and
            \item vertex \(\varphi(r)\) is contained in each tree~$H'_i$,
        \end{itemize}
         we do the following:
        \label{algostep:VertexAssignment}
        \begin{enumerate}
        \item Root tree~$H'_i$ at~$\varphi(r)$ and direct its edges away from the root.\label{algostep:absdirection}
        \item Define a valued binary CSP-instance \((\mathcal{V},\mathcal{D},\mathcal{C})\) as follows.
        \label{algostep:CSP}
        \begin{itemize}
            \item \(\mathcal{V} = V(H)\), so each vertex of~$H$ has an associated variable.
            \item \(\mathcal{D} = \{(v,x_{1},\ldots,x_{t}) \mid v \in V(G) \land x_1, \dotsc, x_t \in \{0\} \cup [\lambda]\}\).
        \label{algostep:CSPVariables}
            \item \(\mathcal{C} = \mathcal{C}' \cup \mathcal{C}''\), so the set of constraints consists of two parts that are defined below. (The first part,~$\mathcal{C}'$, effectively acts on individual variables and forces those vertices of~$H$ that correspond to a terminal or the root under~$\varphi$, to be mapped to the corresponding terminal or root. The second part,~$\mathcal{C}''$, tracks the cost of realizing an edge of the abstract structure by a bounded-depth path in~$G$.)
            \item \(\mathcal{C'} = \{(u,u,c_{u})\mid u \in \mathcal{V}\}\) where \[c_u((v,x_{1},\ldots,x_{t}),(v',y_{1},\ldots,y_{t})) = \begin{cases}
                \beta + 1 & \mbox{ if } v \neq v' \lor \exists i \in [t] \ x_i \neq y_i \\
                &\lor (v \in T \cup \{r\} \land u \neq \varphi(v))\\
                &\lor (u \in \varphi(T \cup \{r\}) \land \varphi^{-1}(u) \neq v)\\
                0 & \mbox{otherwise.}
            \end{cases}\]
            Note that a feasible solution to the problem has cost at most~$\beta$, so that the value~$\beta+1$ in the first case effectively forbids such behavior from happening.
        \label{algostep:domainconstraints}
            \item \(\mathcal{C}'' = \{(u,v,c_{uv}) \mid uv \in E(H)\}\)
            where \(c_{uv}((u',x_1,\ldots,x_t),(v',y_1,\ldots,y_t))\) is defined as follows for each \(((u',x_1,\ldots,x_t),(v',y_1,\ldots,y_t)) \in \mathcal{D} \times \mathcal{D}\). Let~$S \subseteq [t]$ be the set encoded by the color of edge~$uv$. For all \(i\in [t]\) for which \(uv \in E(H'_i)\), let \(z_i\) be \(1\) if the direction of edge \(uv\) is from \(u\) to \(v\), and \(z_i\) be \(-1\) otherwise. Let \(\lambda'\) be the minimum value of \(z_i\cdot (y_i-x_i)\) taken over all \(i\in [t]\) for which \(uv \in E(H'_i)\).
            Set \(c_{uv}((u',x_1,\ldots,x_t),(v',y_1,\ldots,y_t))\) to the minimum value of \(|S|\cdot \ell(P')+\dig(P')\) over all paths \(P'\) between \(u'\) and \(v'\) in \(G\) with \(\ell(P') \leq \lambda' \). This can be computed using \Cref{thm:cheapestshortpath} considering the input graph as a copy \(G'\) of \(G\) where all edges are duplicated in two different directions, defining \(f(e')=\ell(e)\) and \(g(e')=|S|\cdot \ell(e)+ \dig(e)\) for all \(e' \in E(G')\) where \(e'\) is a copy of \(e \in E(G)\), and clearly considering \(s=u'\), \(t=v'\) and \(\lambda'\) as the threshold. If no such path exists, set the cost to \(\beta + 1\). 
            \label{algostep:CSPConstraints}
            \end{itemize}
            \item Solve CSP-instance \((\mathcal{V},\mathcal{D},\mathcal{C})\) using \Cref{thm:vbcsp}.\label{algostep:CSPsolve}
            \item Check if the computed cost is at most \(\beta\) and return `yes' if so.
            \label{algostep:yesSolution}
        \end{enumerate}
    \end{enumerate}
    \item If the algorithm reaches this step, then return `no'.
    \label{algostep:noSolution}
\end{enumerate}

\subparagraph*{Correctness} We show that the given algorithm outputs `yes' if and only if its input \((G,\ell,\dig,r,T,d,\alpha,\beta,\lambda,t)\) is a `yes'-instance to \BLVshort. To prove that the algorithm outputs `yes' if there exists some solution to the given input instance, we show that the properties of at least one solution lie in all of our branching steps leading up to solving the CSP-instance, and so, we show that at most the cost of that solution should be computed during the CSP subroutine. On the other hand, for the other direction, we show that if the algorithm returns `yes', we can construct a solution from the implied solution returned by the CSP subroutine. The validity of the properties required to be a solution for the \BLVshort-instance solution can be proved based on the properties considered in the branching steps. We prove these directions, respectively, through the following claims.

\begin{claim}
    If there exists a solution for the \BLVshort-instance \((G,\ell,\dig,r,T,d,\alpha,\beta,\lambda,t)\), the algorithm returns `yes'.
\end{claim}
\begin{proof}
    We discuss the algorithm step by step to show that, considering a proper solution, each step of the algorithm is executed successfully, resulting in a return of `yes'.

    By \Cref{lemma:NumberOfVerticesAbstractStructure}, there exists at least one solution \(\mathcal{S} = \{H_1, \dotsc, H_t\}\) such that \(\CAbs{\mathcal{S}}\) has at most \(1 + 2|T|+8|T|^2\) vertices. This means that in at least one of the cases of step~\ref{algostep:absstructure}, we are considering \(H\simeq\CAbs{\mathcal{S}}\) with this isomorphism witnessed by \(\iota \colon V(H) \to V(\CAbs{\mathcal{S}})\). As the algorithm tries all ways to assign colors from the range~$[2^t-1]$ to the edges of~$H$, in some iteration it achieves the same coloring as the abstract structure~$\CAbs{\mathcal{S}}$. This means that each~$H'_i$ as defined in the algorithm is equal to \(\PartialCAbs{i}{\mathcal{S}}\), which is obtained by dissolving degree-2 vertices of the tree~$H_i$ and therefore a tree. Therefore, the algorithm proceeds to step~\ref{algostep:abscheck}. 
    
    Next we argue that, in some branch of step~\ref{algostep:VertexAssignment}, we consider~$\varphi$ with \(\varphi(v) = \iota^{-1}(v)\) for each \(v \in T \cup \{r\}\).
    Without loss of generality, we can assume that \(\mathcal{S}\) is a solution that has the minimum cost among all solutions for which \(\CAbs{\mathcal{S}}\) has at most \(1+2|T|+8|T|^2\) vertices. This minimality assumption guarantees that all terminals are leaves of trees, each terminal is covered by exactly one tree, and each leaf is either a terminal or the root. Thus, the assignment of terminals implied by \(\varphi\) satisfies the demand constraint, and \(r\) is assigned to a vertex contained in each \(H'_i\).
    Hence, indeed \(\varphi = \iota^{-1}\) is considered in step~\ref{algostep:VertexAssignment}.
    
    Then, as we discussed, we know that \(H'_i\) in \(H\) is isomorphic to \(\PartialCAbs{i}{\mathcal{S}}\). Hence directions are defined in the same way on both subgraphs, noting that both \(\varphi(r)\) and \(r\) take the role of the root, respectively. We define a solution for CSP as follows. For each variable in~$\mathcal{V}$, corresponding to some vertex \(v\) in \(H\), we set the vertex value of the corresponding tuple to \(\iota(v)\) and set \(x_i\) to be the depth of this vertex in \(H_i\) if~$H_i$ contains it; if not, we set~$x_i$ to~$0$. Then, the cost of each constraint in~$\mathcal{C}'$, of the form \((u,u,c_u)\) for \(u\in \mathcal{V}\), is~\(0\) since the vertices are assigned correctly. Moreover, each constraint in~$\mathcal{C}''$, of the form \((u,v,c_{uv})\), has a cost \(c_{uv}=|S|\cdot \ell(e)+\dig(e)\) by definition. This is precisely the cost of the path in \(\mathcal{S}\) corresponding to \(uv\) in \(H\): since the path appears in exactly the trees of \(\mathcal{S}\) indexed by \(S\), its length is counted \(|S|\) times, while its digging cost, being shared among all of them, is counted once. Thus, the cost of this CSP solution is equal to the cost of \(\mathcal{S}\). Since we know that the CSP has a solution of cost at most~$\beta$, \cref{thm:vbcsp} returns a solution of cost at most \(\beta\), and thus the algorithm returns `yes'.
\end{proof}

\begin{claim}
    If the algorithm returns `yes' for the input instance \((G,\ell,\dig,r,T,d,\alpha,\beta,\lambda,t)\), \BLVshort has a solution for this input instance.
\end{claim}
\begin{proof}
    Assume that \(H\) is the guess for \(\CAbs{\mathcal{S}}\) that ends in a `yes' output. Since the algorithm returns `yes', in some iteration the CSP-subroutine in step~\ref{algostep:CSP} finds an assignment of the variables (which by construction correspond to vertices of \(H\)) whose cost is at most \(\beta\).
    All further references to steps in this claim's proof refer to the iteration in which step~\ref{algostep:CSP} returns `yes'.
    
    We construct a subgraph \(H''_i\) of \(G\) for all \(i \in [t]\) as follows. In \(H'_i\) from step~\ref{algostep:abscheck} replace the vertices with the vertices of \(G\) in the first entry of their corresponding variable's CSP-assignment, and replace each edge of \(H'_i\) with the path in \(G\) between the vertices assigned to its endpoints in this way, which can be found by the algorithm of \Cref{thm:cheapestshortpath}.
    By the definition of the cost of an assignment for the CSP and the fact that no constraint realizes a value of \(\beta + 1\), \(\text{cost}(\{H''_1, \dotsc, H''_t\}) \leq \beta\).
    Since we do not have control over finding pairwise disjoint paths for each pair, there is a possibility that \(H''_i\) is not a tree for some \(i \in [t]\).
    We will resolve this later and first show that apart from this \(\{H''_1, \dotsc, H''_t\}\) satisfies the covering, demand, and depth constraints of being a \BLVshort-solution.
    
    Because constraints \(\mathcal{C}'\) are included, the assignment ensures that \(T \cup \{r\}\) appears in \(V(H)\) according to \(\varphi\) from step~\ref{algostep:abscheck}.
    Hence, by the conditions on \(\varphi\) from step~\ref{algostep:abscheck}, the covering and demand constraints hold for all \(H''_i\) where \(i \in [t]\).

    For the depth constraint, consider an arbitrary \(i \in [t]\) and \(z \in V(H''_i) \cap T\).
    By the construction of \(\mathcal{C}''\) in step~\ref{algostep:absdirection}, and since the cost of the CSP-assignment is at most \(\beta\), there is some value \(x^v_i \in \{0\} \cup [\lambda]\) for each vertex \(v\) on the \(\varphi(r)\)-\(\varphi(z)\) path \(P\) in \(H_i\) such that the path we include into \(H''_i\) for the edge between two consecutive vertices \(u,v\) on \(P\) has length at most \(x^v_i - x^u_i\).
    This means that overall \(\ell(P) \leq \sum_{uv \in E(P)} x^v_i - x^u_i\) which is a telescoping sum equal to \(x^{\varphi(z)}_i - x^{\varphi(u)}_i\), which is at most~$\lambda - 0\) since the domain~$\mathcal{D}$ only allows integers from~$\{0\} \cup [\lambda]$ on the integer-valued coordinates.
    So apart from being trees, the subgraphs \(H''_1, \dotsc, H''_t\) satisfy all the conditions of being a solution for \BLVshort.

    For each \(i \in [t]\), we get rid of cycles in \(H''_i\) and construct \(\hat{H}_i \subseteq H''_i\), which is a tree in the following way. Initialize \(\hat{H}_i\) as a copy of \(H''_i\). As long as \(\hat{H}_i\) contains some cycle \(C\) proceed as follows. Let \(v\) be a vertex in \(C\) that has the maximum \(\ell\)-distance to \(r\). Let \(u\) and \(u'\) be neighbors of \(v\) in \(C\). If the removal of the edge \(vu\) increases the distance \(\ell\) of a vertex from \(r\), the edge \(vu\) must be on the unique shortest path from \(r\) to \(v\). In this case, since the shortest path \(p\) from \(r\) to \(v\) must be unique, the removal of \(u'v\) does not change the distance of any vertices from \(r\); otherwise, it belongs to a shortest path from \(r\) to \(v\) that is not \(p\), contradicting that \(p\) is unique. Therefore, we can remove one of the edges of \(C\) that is incident on \(v\) without changing the distance of the vertices from \(r\).
    
    This procedure maintains in \(\hat{H}_i\) as in \(H''_i\) the same connectivity and hence connects the same terminals, which is considered over all \(i \in [t]\) ensures the covering and demand constraints.
    Moreover, by the arguments above, the depth constraint is also maintained.
    Finally, since each \(\hat{H}_i\) is a subgraph of \(H''_i\), the cost constraint is also trivially maintained.
    \end{proof}

Putting these two directions together, we can conclude that \BLVshort has a solution for the input instance \((G,\ell,\dig,r,T,d,\alpha,\beta,\lambda,t)\) if and only if the algorithm returns `yes'.

\subparagraph*{Running Time Analysis} Step~\ref{algostep:absstructure} goes through all graphs with at most \(1+2|T|+8|T|^2\) vertices with edges of \(2^t-1\) colors. Considering all graphs with at most \(1+2|T|+8|T|^2\) vertices, \((1+2|T|+8|T|^2)^2\) edges, and \(2^t-1\) colors on its edges, the number of cases is at most \(2^{\mathcal{O}(t\cdot|T|^4)}\). Then, steps \ref{algostep:abscheck} and \ref{algostep:absdirection} can be processed in linear time using a simple graph traversal algorithm such as BFS on the edges belonging to each tree, and so each takes \(\mathcal{O}(|T|^2)\) and in total \(\mathcal{O}(t \cdot |T|^2)\). The number of ways to assign terminals and the root vertex to the vertices of \(H\) is at most \(\mathcal{O}(|T|^{|T|^2})\), which indicates an upper bound on the cases in step~\ref{algostep:VertexAssignment}. The demand constraint can be checked trivially in linear time \(\mathcal{O}(|T|)\). Finally, we discuss the time needed to solve the defined CSP-instance in step~\ref{algostep:CSP}. By \Cref{thm:cheapestshortpath}, we can compute the cost function for two values of variables in a constraint in time \(\mathcal{O}(\lambda \cdot |V(G)|\cdot |E(G)|)\) since \(\lambda' \leq \lambda\) and the number of vertices and edges of \(G\) and \(G'\) are asymptotically equal. To solve the described CSP, we use \Cref{thm:vbcsp}. So, the CSP-instance can be solved in time \(|\mathcal{V}|\cdot|\mathcal{D}|^{\mathcal{O}(\operatorname{tw}(G_P(\mathcal{V}, \mathcal{D}, \mathcal{C}))}\). Since there are at most \(8|T|^2+2|T|+1\) variables and so, at most \(8|T|^2+2|T|+1\) vertices in the primal graph, the treewidth is at most \(\mathcal{O}(|T|^2)\), this results in a running time of \((8|T|^2+2|T|+1)(|V(G)|\cdot \lambda^t)^{\mathcal{O}(|T|^2)}\) for solving the CPS-instance. Each of the remaining steps can be followed in constant time. Putting all of this together results in a running time of \(f(t,|T|)\cdot(|V(G)|\cdot\lambda^t)^{\mathcal{O}(|T|^2)}\).
\end{proof}

The main bottleneck of the algorithm presented in the proof of \Cref{thm:XP-algo-tw} lies in solving the CSP-instance. A straightforward alternative is to replace the call to the CSP algorithm from \Cref{thm:vbcsp} with an exhaustive branching approach: iterating over all possible assignments of domain values \(\mathcal{D}\) to variables \(\mathcal{V}\) and checking in polynomial time which of these satisfy all constraints to find a valid solution for the given input instance. In this approach, the running time remains \(f(t,|T|)\cdot(|V(G)|\cdot \lambda^t)^{\mathcal{O}(|T|^2)}\), but the algorithm requires only polynomial space and avoids the usage of the powerful CSP algorithm.
However, it is important to note that this asymptotic running-time equivalence arises from the analysis, not from the actual behavior of the CSP-based algorithm. In the CSP approach, the running time depends on the treewidth of the primal graph of the CSP-instance, which is conservatively replaced with a crude upper bound in the analysis. The following theorem refines this by considering the case where the input graph is planar, allowing for a tighter bound on the treewidth. In \Cref{sec:W1HardBLV}, we further show that in this planar setting, the algorithm achieves the best possible running time under standard complexity assumptions.

\begin{theorem}
\label{thm:XP-algo-planar}
\BLV can be solved in time \(f(t,|T|)\cdot(|V(G)|\cdot\lambda^t)^{\mathcal{O}(|T|)}\) for any input instance \((G,\ell,\dig,r,T,d,\alpha,\beta,\lambda,t)\) where \(G\) is planar.
\end{theorem}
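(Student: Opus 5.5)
We run the algorithm from the proof of \Cref{thm:XP-algo-tw} essentially unchanged, with one restriction: in step~\ref{algostep:absstructure} we only enumerate guesses~$H$ that are planar, and the CSP solver of \Cref{thm:vbcsp} is then analysed with a sharper treewidth bound. Correctness rests on the observation that the abstract structure $\CAbs{\mathcal{S}}$ of any solution is a minor of~$G$. Indeed, $\bigcup_i H_i$ is a subgraph of~$G$, and $\CAbs{\mathcal{S}}$ is obtained from it only by dissolving degree-$2$ vertices, i.e., by contracting edges. Dissolving a vertex could create a parallel edge, and a simple graph would not keep that edge; in that case the underlying simple graph is still a minor, and planarity concerns only that simple graph. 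Since~$G$ is planar, every guess that matters is planar, so discarding non-planar guesses (checkable in time polynomial in $|V(H)|$) preserves both directions of the correctness argument verbatim. The only other requirement is that the CSP instance built in step~\ref{algostep:CSP} is well defined when $H$ is planar, which holds automatically.

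The key step is the treewidth bound. The primal graph $G_P(\mathcal{V},\mathcal{D},\mathcal{C})$ has vertex set $V(H)$. The constraints in $\mathcal{C}'$ are unary (self-loops $(u,u,c_u)$) and do not affect treewidth, while those in $\mathcal{C}''$ are exactly the edges of~$H$. Hence $\operatorname{tw}(G_P)\le \operatorname{tw}(H)$. By \Cref{lemma:NumberOfVerticesAbstractStructure}, $|V(H)|\le 1+2|T|+8|T|^2 = \Oh(|T|^2)$. Since $H$ is planar, the planar separator theorem (equivalently, $\operatorname{tw}(H)=\Oh(\sqrt{|V(H)|})$ for planar graphs) yields $\operatorname{tw}(H)=\Oh(\sqrt{|T|^2})=\Oh(|T|)$. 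This is where the square root of the quadratic vertex bound of \Cref{lemma:NumberOfVerticesAbstractStructure} exactly cancels the exponent~$2$. We would also note that a tree decomposition of this width can be computed in time depending only on~$|T|$, or we simply use \Cref{thm:vbcsp}, which is stated in terms of treewidth.

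The running-time accounting then mirrors the previous proof. The number of guesses for~$H$, its coloring and $\varphi$ is bounded by a function of $t$ and $|T|$. Computing all constraint tables takes $|\mathcal{D}|^2\cdot\Oh(\lambda|V(G)||E(G)|)$ per edge of~$H$, where $|\mathcal{D}|\le |V(G)|(\lambda+1)^t$. Solving the CSP takes $|\mathcal{V}|\cdot|\mathcal{D}|^{\Oh(|T|)} = f(t,|T|)\cdot(|V(G)|\cdot\lambda^t)^{\Oh(|T|)}$. The main obstacle I expect is justifying the minor claim rigorously. We must ensure the guessed isomorphism type coincides with a genuine minor of~$G$, and we must handle the subtlety that dissolving degree-$2$ vertices of different colors is not performed, so type~(3) vertices remain; this is harmless, since it only means fewer contractions. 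The multi-edge issue also needs care. If $\CAbs{\mathcal{S}}$ can have parallel edges of different colors between the same pair of vertices, the enumeration must allow multigraphs, and the treewidth bound applies to the simple underlying graph. Since parallel edges do not change the primal graph, the treewidth bound is unaffected.
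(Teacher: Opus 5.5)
Your proposal is correct and takes the same approach as the paper. You restrict step~1 to planar guesses, which is justified because the abstract structure arises from a subgraph of $G$ by dissolving degree-$2$ vertices. You bound the treewidth of the primal graph by that of $H$, and combine $\operatorname{tw}=\Oh(\sqrt{n})$ for planar graphs with the $\Oh(|T|^2)$ vertex bound to get treewidth $\Oh(|T|)$ in \Cref{thm:vbcsp}. Your discussion of possible parallel edges in the abstract structure goes beyond what the paper spells out, but it does not change the argument.
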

\begin{proof}
    The algorithm is a small variation of that of \Cref{thm:XP-algo-tw}. Since the abstract structure~$\CAbs{\mathcal{S}}$ of a solution~$\mathcal{S} = \{H_1, \ldots, H_t\}$ is obtained by dissolving degree-2 vertices of a subgraph of~$G$, for a planar input graph~$G$ it follows that~$\CAbs{\mathcal{S}}$ is also planar. Hence  \Cref{lemma:NumberOfVerticesAbstractStructure} implies that there exists a solution~\(\mathcal{S}\) where \(\CAbs{\mathcal{S}}\) is a planar graph on \(\mathcal{O}(|T|^2)\) vertices. As an $n$-vertex planar graph has treewidth~$\Oh(\sqrt{n})$ (see, e.g.,~\cite[Corollary 7.24]{ParAlg}), the treewidth of such \(\CAbs{\mathcal{S}}\) is \(\mathcal{O}(\sqrt{|V(\CAbs{\mathcal{S}})|})=\mathcal{O}(|T|)\). 
    
    This means that in step~\ref{algostep:absstructure} of the algorithm it suffices to look only through branches where \(H\) is planar (which can be tested in linear time~\cite{HopcroftT74}) and therefore has treewidth \(\Oh(|T|)\). Furthermore, in these branches, since the treewidth of the primal graph is equal to the treewidth of \(H\), we can present a tighter upper bound on the running time of step~\ref{algostep:CSP}, solving CSP. The CSP subroutine invoked in this step takes time \(|\mathcal{D}|^{\mathcal{O}\operatorname{tw}(G_P(\mathcal{V}, \mathcal{D}, \mathcal{C}))}\). Thus, each call to the subroutine can be handled in time \((|V(G)|\cdot \lambda^t)^{\mathcal{O}(|T|)}\). The remaining overhead is the same as in \Cref{thm:XP-algo-tw}. Hence the overall running time for planar instances is \(f(t,|T|)\cdot(|V(G)|\cdot\lambda^t)^{\mathcal{O}(|T|)}\) for some computable function~$f$.
\end{proof}

Finally, we address the version of the problem without the depth constraint. The next theorem presents an algorithm for solving this variant.

\begin{theorem} \label{thm:XP:unbounded}
The \ULV problem can be solved in time \(f(t,|T|)\cdot|V(G)|^{\mathcal{O}(|T|^2)}\) for any arbitrary input instance \((G,\ell,\dig,r,T,d,\alpha,\beta,t)\).
\end{theorem}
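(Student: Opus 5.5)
The plan is to reuse the algorithm from the proof of \Cref{thm:XP-algo-tw} and simply drop the depth bookkeeping, since \HVshort has no depth constraint. \Cref{lemma:NumberOfVerticesAbstractStructure} is stated for both \BLVshort and \HVshort. Hence some solution $\mathcal{S}$ exists whose abstract structure $\CAbs{\mathcal{S}}$ has at most $1+2|T|+8|T|^2$ vertices. Step~\ref{algostep:absstructure} therefore still only needs to enumerate $2^{\Oh(t|T|^4)}$ colored graphs $H$, and step~\ref{algostep:VertexAssignment} only needs $|T|^{\Oh(|T|^2)}$ assignments $\varphi$. The tree and demand checks stay unchanged.

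The change is in the CSP of step~\ref{algostep:CSP}. The domain becomes $\mathcal{D}=V(G)$, with no depth coordinates $x_1,\dots,x_t$. The constraints $\mathcal{C}'$ still pin vertices in $\varphi(T\cup\{r\})$ to their terminal or root. For each edge $uv\in E(H)$ with color $c_S$, the constraint $c_{uv}(u',v')$ is the minimum of $|S|\cdot\ell(P')+\dig(P')$ over all $u'$--$v'$ paths $P'$ in $G$. This is a shortest path with respect to the nonnegative weight $|S|\ell+\dig$, so Dijkstra computes it in polynomial time. Equivalently, one can apply \Cref{thm:cheapestshortpath} with the trivial threshold $\tau=\sum_e\ell(e)$, which is polynomial because integers are unary. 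Since $|\mathcal{D}|=|V(G)|$ and the primal graph has $\Oh(|T|^2)$ vertices, \Cref{thm:vbcsp} runs in time $|V(G)|^{\Oh(|T|^2)}$. Multiplying by the enumeration overhead gives $f(t,|T|)\cdot|V(G)|^{\Oh(|T|^2)}$.

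Correctness follows the two claims in the proof of \Cref{thm:XP-algo-tw}. For completeness, the solution $\mathcal{S}$ from \Cref{lemma:NumberOfVerticesAbstractStructure} gives a CSP assignment that maps each vertex of $H$ to the corresponding vertex of $G$. Each edge of $H$ corresponds to a path in $\bigcup_i H_i$ that lies in exactly the trees indexed by $S$, so the assignment has value at most $\text{cost}(\mathcal{S})\le\beta$. For soundness, I replace each edge of $H'_i$ by the minimizing path to obtain subgraphs $H''_i$. Covering and demand follow from $\varphi$ and $\mathcal{C}'$. The total cost is at most the CSP value: the digging cost of an edge of $G$ that lies on several chosen paths is paid once in the true cost but possibly several times in the CSP, and the material costs match or are overcounted. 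Thus the true cost is at most $\beta$.

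The remaining obstacle is that $H''_i$ need not be a tree. Here this is easier than in the bounded-depth case. Each $H''_i$ is connected and contains $r$ and its terminals, so I take any spanning tree $\hat H_i$ of $H''_i$. Removing edges never increases $\ell(\hat H_i)$ or $\dig(\bigcup_i\hat H_i)$, and no depth constraint has to be maintained. Hence $\{\hat H_1,\dots,\hat H_t\}$ is a solution of cost at most $\beta$.
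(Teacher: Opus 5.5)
Your proposal is correct and matches the paper's proof: it reuses the bounded-depth algorithm, shrinks the CSP domain to $V(G)$, and takes as constraint costs the shortest-path values under the weights $|S|\cdot\ell(e)+\dig(e)$, with both correctness directions carried over once the depth bookkeeping is dropped. Your spanning-tree argument for making each $H''_i$ a tree is the depth-free version of the paper's cycle-removal step, and your remark on overcounted costs spells out what the paper leaves implicit.
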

\begin{proof}
    Consider the algorithm presented in the proof of \Cref{thm:XP-algo-tw}, where step~\ref{algostep:absdirection} is removed and step~\ref{algostep:CSP} is replaced by the following step.
    \begin{enumerate}[label=1.b.\roman{enumiii},leftmargin=*]
          \setcounter{enumiii}{2} 
          \item Define a valued binary CSP-instance \((\mathcal{V},\mathcal{D},\mathcal{C})\) as follows.
        \begin{itemize}
            \item \(\mathcal{V} = V(H)\). 
            \item \(\mathcal{D} = V(G)\), so an element of the domain no longer encodes a tuple of depth values.
            \item \(\mathcal{C} = \mathcal{C}' \cup \mathcal{C}''\), with the two parts of the constraint set as defined below.
    
            \item \(\mathcal{C'} = \{(u,u,c_{u})\mid u \in \mathcal{V}\}\) where \[c_u(v,v') = \begin{cases}
                \beta + 1 & \mbox{ if } v \neq v' \\
                &\lor (v \in T \cup \{r\} \land u \neq \varphi(v))\\
                &\lor (u \in \varphi(T \cup \{r\}) \land \varphi^{-1}(u) \neq v)\\
                0 & \mbox{otherwise.}
            \end{cases}\]
        
            \item \(\mathcal{C}'' = \{(u,v,c_{uv}) \mid uv \in E(H)\}\)
            where \(c_{uv}(u',v')\) is defined as follows for each \((u',v') \in \mathcal{D} \times \mathcal{D}\). Let~$S \subseteq [t]$ be the set encoded by the color of edge~$uv$. 
            Set \(c_{uv}(u',v')\) to the minimum value of \(|S|\cdot \ell(P')+\dig(P')\) over all paths \(P'\) between \(u'\) and \(v'\). This can computed using Dijkstra's algorithm for finding a shortest path between \(u'\) and \(v'\) in a weighting of graph \(G\) where for each edge \(e \in E(G)\), its weight is defined as \(|S|\cdot \ell(e)+\dig(e)\). If no such path exists, set the cost to \(\beta + 1\). 
            \end{itemize}
    \end{enumerate}
    
    Correctness of the algorithm follows analogously to the proof of \Cref{thm:XP-algo-tw}, with the key difference that the depth and its associated arguments are no longer relevant. 

    The running time analysis is also similar to \cref{thm:XP-algo-tw}. However, the running time of the CSP solving subroutine, which is \(|\mathcal{D}|^{\mathcal{O}(\operatorname{tw}(G_P(\mathcal{V}, \mathcal{D}, \mathcal{C}))}\), reduces due to the smaller domain~$\mathcal{D}$ of size~$|V(G)|$. Solving one instance of the newly defined CSP therefore takes time \(|V(G)|^{\mathcal{O}(|T|^2)}\). Thus, the total running time of this algorithm is \(f(t,|T|)\cdot |V(G)|^{\mathcal{O}(|T|^2)}\).
\end{proof}

\subsection{FPT-Algorithms for \PreLV}
\label{sec:PLVAlgorithm}
\input{PLV_algorithm}

%% file: figs/Abstract_Structure_instance.tex
\begin{tikzpicture}[scale=0.75]
    \tikzstyle{vertex}=[circle, fill=black, inner sep=1.5pt, scale=0.85]
    \tikzset{terminal/.style={fill=gray!30, draw=gray!100,  thick,minimum size=0pt, inner sep=2pt}}

    % bordering 
    % \draw[gray,draw opacity=.25,dashed] (-3,1) rectangle (3,-6.5);
    
    % vertices
    \node[vertex,label=above:\(r\)] (r) at (0,0) {};
    \node[vertex,label=left:\(v_1\)] (v1) at (-1.5,-1.5) {};
    \node[vertex,label=right:\(v_2\)] (v2) at (1.5,-1.5) {};
    \node[terminal,label=left:\(v_3\)] (v3) at (-1.5,-3) {};
    \node[vertex,label=left:\(v_4\)] (v4) at (0,-3) {};
    \node[vertex,label=right:\(v_5\)] (v5) at (1.5,-3) {};
    \node[vertex,label=left:\(v_6\)] (v6) at (0,-4.5) {};
    \node[terminal,label=left:\(v_7\)] (v7) at (-0.5,-6) {};
    \node[terminal,label=right:\(v_8\)] (v8) at (0.5,-6) {};
    \node[terminal,label=right:\(v_{10}\)] (v10) at (1.5,-4.5) {};
    \node[vertex,label=left:\(v_{11}\)] (v11) at (0,-1.5) {};
    \node[vertex,label=left:\(v_9\)] (v9) at (-0.75,-0.75) {};

    % edges
    \draw (v9) -- (v1);
    \draw (r) -- (v4);
    \draw (v1) -- (v3);
    \draw (r) -- (v9);
    \draw (v1) -- (v4);
    \draw (v6) -- (v7);
    \draw (r) -- (v2);
    \draw (v2) -- (v4);
    \draw (v2) -- (v5);
    \draw (v6) -- (v4);
    \draw (v6) -- (v8);
    \draw (v5) -- (v10);
\end{tikzpicture}

%% file: figs/Abstract_Structure_trees.tex
\begin{tikzpicture}[scale=0.75]
    \tikzstyle{vertex}=[circle, fill=black, inner sep=1.5pt, scale=0.85]
    \tikzset{terminal/.style={fill=gray!30, draw=gray!100,  thick,minimum size=0pt, inner sep=2pt}}

    % bordering 
    % \draw[gray,draw opacity=.25,dashed] (-3,1) rectangle (3,-6.5);
    
    % vertices
    \node[vertex,label=above:\(r\)] (r) at (0,0) {};
    \node[vertex,label=left:\(v_1\)] (v1) at (-1.5,-1.5) {};
    \node[vertex,label=right:\(v_2\)] (v2) at (1.5,-1.5) {};
    \node[terminal,label=left:\(v_3\)] (v3) at (-1.5,-3) {};
    \node[vertex,label=left:\(v_4\)] (v4) at (0,-3) {};
    \node[vertex,label=right:\(v_5\)] (v5) at (1.5,-3) {};
    \node[vertex,label=left:\(v_6\)] (v6) at (0,-4.5) {};
    \node[terminal,label=left:\(v_7\)] (v7) at (-0.5,-6) {};
    \node[terminal,label=right:\(v_8\)] (v8) at (0.5,-6) {};
    \node[terminal,label=right:\(v_{10}\)] (v10) at (1.5,-4.5) {};
    \node[vertex,label=left:\(v_9\)] (v9) at (-0.75,-0.75) {};

    % edges T1
    \draw[red, transform canvas={xshift=-0.7pt,yshift=0.7},very thick] (r) -- (v9);
    \draw[red, transform canvas={xshift=-0.7pt,yshift=0.7},very thick] (v9) -- (v1);
    \draw[red,very thick] (v1) -- (v3);

    % edges T2
    \draw[blue,transform canvas={xshift=0.7pt,yshift=-0.7pt},very thick] (r) -- (v9);
    \draw[blue,transform canvas={xshift=0.7pt,yshift=-0.7pt},very thick] (v9) -- (v1);
    \draw[blue,very thick] (v1) -- (v4);
    \draw[blue,transform canvas={xshift=-0.8pt},very thick] (v6) -- (v4);
    \draw[blue,very thick] (v6) -- (v7);

    % edges T3
    \draw[green!60!black,very thick] (r) -- (v2);
    \draw[green!60!black,very thick] (v2) -- (v4);
    \draw[green!60!black,very thick] (v2) -- (v5);
    \draw[green!60!black,transform canvas={xshift=0.8pt},very thick] (v6) -- (v4);
    \draw[green!60!black,very thick] (v6) -- (v8);
    \draw[green!60!black,very thick] (v5) -- (v10);
\end{tikzpicture}

%% file: figs/Abstact_Structure_Type_Of_Vertices.tex
\begin{tikzpicture}[scale=0.75]
    \tikzstyle{vertex}=[circle, fill=black, inner sep=1.5pt, scale=0.85]
    \tikzset{terminal/.style={fill=gray!30, draw=gray!100,  thick,minimum size=0pt, inner sep=2pt}}
    
    % vertices
    \node[vertex,label=above:\(r\)] (r) at (0,0) {};
    \node[vertex,label=left:\(v_1\),fill=violet] (v1) at (-1.5,-1.5) {};
    \node[vertex,label=right:\(v_2\), fill=orange] (v2) at (1.5,-1.5) {};
    \node[terminal,label=left:\(v_3\),fill=orange,draw=orange] (v3) at (-1.5,-3) {};
    \node[vertex,label=left:\(v_4\),fill=violet] (v4) at (0,-3) {};
    % \node[vertex,label=right:\(v_5\)] (v5) at (1.5,-3) {};
    \node[vertex,label=left:\(v_6\),fill=violet] (v6) at (0,-4.5) {};
    \node[terminal,label=left:\(v_7\),fill=orange,draw=orange] (v7) at (-0.5,-6) {};
    \node[terminal,label=right:\(v_8\),fill=orange,draw=orange] (v8) at (0.5,-6) {};
    \node[terminal,label=right:\(v_{10}\),fill=orange,draw=orange] (v10) at (1.5,-4.5) {};

    % edges
    \draw[red, transform canvas={xshift=-0.6pt,yshift=0.6pt},very thick] (r) -- (v1);
    \draw[blue, transform canvas={xshift=0.6pt,yshift=-0.6pt},very thick] (r) -- (v1);
    \draw[green!60!black,very thick] (r) -- (v2);
    \draw[red,very thick] (v1) -- (v3);
    \draw[blue,very thick] (v1) -- (v4);
    \draw[green!60!black,very thick] (v2) -- (v4);
    \draw[green!60!black,very thick] (v2) -- (v10);
    \draw[green!60!black,transform canvas={xshift=0.8pt},very thick] (v6) -- (v4);
    \draw[blue,transform canvas={xshift=-0.8pt},very thick] (v6) -- (v4);
    \draw[blue,very thick] (v6) -- (v7);
    \draw[green!60!black,very thick] (v6) -- (v8);
\end{tikzpicture}

%% file: PLV_algorithm.tex
\input{PLVDefinitions}
Contrasting the W[1]-hardness of \BLV and \ULV, which we prove in Section~\ref{sec:Hardness}, the \PreLV problem is in FPT. \Cref{thm:FPT-algo-tw} discusses this algorithm. In fact we can employ the Dreyfus-Wagner dynamic-programming approach that solves the \textsc{Steiner Tree} problem to solve \PreLVshort. For this algorithm, instead of guessing the abstract structure and using CSP for the realization of this abstract structure, which are the main bottlenecks of the previous algorithms, we branch on and fix the partition of terminals and use a dynamic-programming approach to check the existence of a desired solution.

\begin{theorem}       
\label{thm:FPT-algo-tw}
\PreLV can be solved in time \(2^{|T| \cdot \Oh(\log t)} \cdot |V(G)|^2\cdot \lambda\) for any arbitrary input instance \((G,\ell,\dig,r,T,d,\alpha,\beta,\lambda,t)\).
\end{theorem}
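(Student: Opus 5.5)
The plan is to reduce \PreLVshort to a Dreyfus--Wagner-style dynamic program over subsets of terminals, after first guessing which tree covers each terminal. The key structural fact to establish is that prefix-sharing makes the union of the solution trees behave like a single Steiner tree with edge-dependent multiplicities. Digging costs are shared only along common prefixes from the root, so trees that share a trench must have travelled together from~$r$ up to that trench. Hence we may view a solution as one rooted ``super-tree''~$U$. Each edge~$e$ of~$U$ is labelled by the set~$S(e)\subseteq[t]$ of trees using it. Its contribution to the \PLVcostWord is $\dig(e)+|S(e)|\cdot\ell(e)$, and $S(e)$ is exactly the set of indices of trees owning a terminal below~$e$. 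I would first prove this normal form formally from the definitions in \PreLVshort. Any sharing that is not of prefix type gives no cost benefit, so the trees can be untangled without increasing the cost. In particular, once two trees diverge they may as well stay disjoint in the cost accounting.

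The first algorithmic step is to branch over all functions $\pi\colon T\to[t]$, giving $t^{|T|}=2^{|T|\cdot\Oh(\log t)}$ branches. We discard those violating the demand constraint $\sum_{u\in\pi^{-1}(i)}d(u)\le\alpha$. For $X\subseteq T$ write $w(X)=|\pi(X)|$ for the number of trees that have a terminal in~$X$. We then compute a table $D[v,X,\delta]$ for $v\in V(G)$, $X\subseteq T$ and $\delta\in\{0\}\cup[\lambda]$. Its value is the minimum cost of a subtree rooted at~$v$ that contains~$X$ and has all $v$-to-terminal distances at most~$\delta$. The cost of such a subtree is $\sum_e\bigl(\dig(e)+w(X_e)\ell(e)\bigr)$, where $X_e$ is the set of terminals below~$e$. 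The base cases are $D[u,\{u\},\delta]=0$ for each terminal~$u$, and $D[v,\emptyset,\delta]=0$. The recurrences are of two kinds. A \emph{merge} step sets $D[v,X,\delta]=\min D[v,X_1,\delta]+D[v,X\setminus X_1,\delta]$ over nonempty proper $X_1\subset X$. An \emph{edge} step sets $D[v,X,\delta]=\min_{vu\in E(G)} D[u,X,\delta-\ell(vu)]+\dig(vu)+w(X)\ell(vu)$. The answer is yes if and only if $D[r,T,\lambda]\le\beta$ in some branch.

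Merging at~$v$ correctly adds costs because the two parts are disjoint below~$v$. Depth stays valid since both parts are bounded by~$\delta$ measured from~$v$. Along an edge the multiplicity is exactly $w(X)$, by the normal form. For soundness, I would unfold an optimal DP derivation into walks. Repeated vertices or edges can only be pruned, which does not increase cost, depth or multiplicity. One then reads off $H_i$ as the union of root-to-terminal paths of $\pi^{-1}(i)$.

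For the running time, the merge steps cost $3^{|T|}\cdot|V(G)|\cdot\lambda$ in total. For fixed $(X,\delta)$ the edge steps form a shortest-path computation. Edges of length zero stay within the same $\delta$-layer, so each layer needs one Dijkstra run over zero-length edges, seeded from the values of lower layers. This gives $\Oh(|V(G)|^2)$ per pair $(X,\delta)$ with array-based Dijkstra. Over all branches the total is $2^{|T|\cdot\Oh(\log t)}\cdot|V(G)|^2\cdot\lambda$.

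I expect the main obstacle to be the normal-form lemma, that is, showing that the exact prefix-sharing cost of an optimal solution equals the super-tree cost computed by the DP. The subtle part is when two trees are merged in the DP but are not both ``forced'' through the same prefix. I would first attempt an exchange argument on a cost-minimal solution. Whenever an edge's cost is shared by trees whose prefixes differ, I would reroute one tree along the other's prefix. This should be done without increasing the per-tree depth, choosing the rerouting by the smaller distance from~$r$.
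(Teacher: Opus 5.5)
The algorithm is the paper's. You branch over the $t^{|T|}$ assignments $\pi$, discard those that violate the demands, and run a Dreyfus--Wagner table over $(X,v,\delta)$ whose edge step charges $\dig(vu)+|\pi(X)|\cdot\ell(vu)$. The paper only fuses your merge and edge steps into one recurrence that peels off the terminals whose path starts with a given edge $vv'$. Your running-time count is fine. Since $\ell$ maps into $\mathbb{N}$ for \PreLVshort, there are no zero-length edges, so the layer-wise Dijkstra is unnecessary.

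What is off is the step you single out as the main obstacle: the ``normal-form lemma'' and the exchange argument you plan for it. Under the \PLVcostWord, digging cost is never shared between different prefixes. It is charged once per distinct prefix, and length is charged once per part per distinct prefix. So if $U$ is the prefix trie of the paths $f(u)$, with one node per distinct prefix, the identity $\text{cost}=\sum_{\text{trie edges } e}\bigl(\dig(e)+|\pi(X_e)|\,\ell(e)\bigr)$ holds by definition, and there is nothing to reroute.

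If instead $U$, and your table $D$, are meant as subtrees of $G$ (as your definition of $D$ says), the claim is false. In the instance of \Cref{fig:PLVNonTreeInstanceSolution}, no optimal solution has its paths forming a subtree of $G$. So the value your recurrence computes is strictly below the cheapest subtree, and no ``untangling'' argument can succeed.

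Define $D[v,X,\delta]$ as the paper does: the minimum \PLVcostWord over assignments of $v$--$u$ paths of length at most $\delta$ to the terminals $u\in X$. Completeness is then just the decomposition of an optimal solution's trie by first edges.

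The real work is soundness. A DP derivation yields walks that may revisit vertices or edges, and sibling branches may duplicate a prefix. You must turn these into paths without increasing the \PLVcostWord; this is the paper's \Cref{claim:PLVStructure}. The pruning must be done consistently for all walks sharing a prefix. For example, work on the trie and re-hang the subtrie below a later occurrence of a vertex onto its earlier occurrence. Pruning walk by walk fails: shortcutting a cycle $C$ in $W_1=ACB$ but not in $W_2=ACB'$, where $B$ and $B'$ begin with the same edge $e$, splits one prefix of $e$ into two and can increase the cost.
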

\begin{proof}
We find the solution to \((G,\ell,\dig,r,T,d,\alpha,\beta,\lambda,t)\) using a dynamic-programming approach. As a preparatory step, we first branch out and fix the partition \(\mathcal{P}\) of \(T\) into sets \(T_1,\ldots,T_t\) such that the demand constraint is satisfied, i.e.\ \(\sum\limits_{u \in T_i}d(u)\leq \alpha\) for all \(i \in [t]\). Then in the next step, the goal is to check the existence of a function \(f\colon T \to 2^{E(G)}\) that maps each terminal \(u\in T\) to a path in \(G\) from \(r\) to \(u\) that has \(\ell\)-length at most \(\lambda\) and \(\PLVcost{\mathcal{P}}{f}{r} \leq \beta\). 
This existence is checked through a dynamic-programming approach. For any subset \(T'\subseteq T\), we denote \(\{\,T_i\cap T' \mid T_i \in \mathcal{P}\}\setminus\{\emptyset\}\) by \(\mathcal{P}\doublecap T'\). For each subset \(T'\subseteq T\), \(v\in V(G)\), and each integer value \(\lambda'\) such that \(0 \leq \lambda' \leq \lambda \), let \(\text{d}_\mathcal{P}[T',v,\lambda']\) be defined as the minimum value of \(\PLVcost{\mathcal{P}\doublecap T'}{f'}{v}\) among all possible functions \(f'\colon T'\to 2^{E(G)}\) that map each terminal \(u\in T'\) to a path in \(G\) from \(v\) to \(u\) with \(\ell\)-depth at most \(\lambda'\). 
If no such set exists, then \(\text{d}_\mathcal{P}[T',v,\lambda']=+\infty\). To avoid several case distinctions, we also define~\(\text{d}_\mathcal{P}[T',v,\lambda']\) as~$+\infty$ for every~$\lambda' < 0$. Before discussing how we compute \(\text{d}_\mathcal{P}[T',v,\lambda']\), we present a claim that provides a structural tool. This tool helps us to combine the solutions to subproblems to construct a solution for \(\text{d}_\mathcal{P}[T',v,\lambda']\).

\begin{claim}
\label{claim:PLVStructure}
    Let \(\lambda'\) be a positive integer. For a given vertex \(v \in V(G)\) and any arbitrary function \(f\colon T' \to 2^{E(G)}\) that maps each terminal \(u\in T'\) to a simple walk in \(G\) from \(v\) to \(u\) such that \(\ell(f(u))\leq \lambda'\), there exists  a function \(f'\colon T'  \to  2^{E(G)}\) that maps each terminal \(u\in T'\) to a path in \(G\) from \(v\) to \(u\) such that \(\ell(f(u))\leq \lambda'\) and \(\PLVcost{\mathcal{P}}{f'}{v}\leq \PLVcost{\mathcal{P}}{f}{v}\).
\end{claim}
\begin{proof}
    Let \(f'=f\), initially. While there exists a terminal \(u\in T'\) such that \(f'(u)\) contains a cycle \(C\), do the following. For all \(u' \in T'\) (including \(u\)), if \(f'(u')\) contains \(C\), remove \(C\) from  \(f'(u')\). Clearly, the remaining part of the walk is still a walk from \(v\) to \(u'\). 
    
    Now, we verify that the conditions of the claim are maintained after each such step. Let \(f'_{\text{prev}}\) be the function \(f'\) before one such step and \(f'_{\text{new}}\) be the function \(f'\) after that step. The depth constraint holds since for each \(u'\in T'\), we only removed some edges from \(f'_{\text{prev}}(u')\) to reach \(f'_{\text{new}}(u')\) and so \(\ell(f'_{\text{new}}(u')) \leq \ell(f'_{\text{prev}}(u'))\). To show that  \(\PLVcost{\mathcal{P}}{f'_{\text{new}}}{v}\leq \PLVcost{\mathcal{P}}{f'_{\text{prev}}}{v}\), we show that for each \(e \in E(G)\) and \(T''\in \mathcal{P}\doublecap T'\), we have \(|\pfx{f'_{\text{new}}(T'')}{v}{e}|\leq |\pfx{f'_{\text{prev}}(T'')}{v}{e}|\) and \(\left|\bigcup\limits_{T''\in \mathcal{P}\doublecap T'}\pfx{f'_{\text{new}}(T'')}{v}{e}\right| \leq \left|\bigcup\limits_{T''\in \mathcal{P}\doublecap T'}\pfx{f'_{\text{prev}}(T'')}{v}{e}\right|\). 
    
    For the latter part, we present an injective function \(g\) from \(\bigcup\limits_{T''\in \mathcal{P}\doublecap T'}\pfx{f'_{\text{new}}(T'')}{v}{e}\) to \(\bigcup\limits_{T''\in \mathcal{P}\doublecap T'}\pfx{f'_{\text{prev}}(T'')}{v}{e}\). For each sequence \(S \in \bigcup\limits_{T''\in \mathcal{P}}\pfx{f'_{\text{new}}(T'')}{v}{e}\), if \(S\) is also in \(\bigcup\limits_{T''\in \mathcal{P}\doublecap T'}\pfx{f'_{\text{prev}}(T'')}{v}{e}\), then \(g(S)=S\). Otherwise, consider a terminal \(u\in T'\) such that \(\pfx{f'_{\text{new}}(u)}{v}{e}=S\). Let \(g(S)=f'_{\text{prev}}(u)\). Since \(S \notin \bigcup\limits_{T''\in \mathcal{P}\doublecap T'}\pfx{f'_{\text{prev}}(T'')}{v}{e}\), we have \((\pfx{f'_{\text{prev}}(u)}{v}{e}=S' \neq S\). Thus, \(f'_{\text{prev}}(u)\) is a path that contains the cycle \(C\) and this cycle also belongs to \(\pfx{f'_{\text{prev}}(u)}{v}{e}\). On the other hand, since no sequence in \(\bigcup\limits_{T''\in \mathcal{P}\doublecap T'}\pfx{f'_{\text{prev}}(T'')}{v}{e}\) contains \(C\), for all sequences that \(g(S)=S\), we know that \(S\) does not contain \(C\) and so, \(g(S)\) is distinct from the value of \(g\) for sequences not in \(\bigcup\limits_{T''\in \mathcal{P}\doublecap T'}\pfx{f'_{\text{prev}}(T'')}{v}{e}\). Therefore, to show that \(g\) is injective, it is sufficient to show that \(g(S)\) is distinct for all sequences like \(S\) that are not in \(\bigcup\limits_{T''\in \mathcal{P}\doublecap T'}\pfx{f'_{\text{prev}}(T'')}{v}{e}\). This is true since by the construction, for each such sequence \(S\), the sequence \(S\) should be \(g(S)\) where \(C\) is removed from it. Since \(g(S)\) is a walk, removing \(C\) from \(g(S)\) is unique and cannot result in two different sequences. Therefore, \(g\) is injective and so, \(\left|\bigcup\limits_{T''\in \mathcal{P}\doublecap T'}\pfx{f'_{\text{new}}(T'')}{v}{e}\right| \leq \left|\bigcup\limits_{T''\in \mathcal{P}\doublecap T'}\pfx{f'_{\text{prev}}(T'')}{v}{e}\right|\). 
    
    We can define an injective function \(g_{T''}\) from \(\pfx{f'_{\text{new}}(T'')}{v}{e}\) to \(\pfx{f'_{\text{prev}}(T'')}{v}{e}\) for each \(T'' \in \mathcal{P}\doublecap T'\) and show that it is injective in the same way. This yields that \(|\pfx{f'_{\text{new}}(T'')}{v}{e}|\leq |\pfx{f'_{\text{prev}}(T'')}{v}{e}|\) for each \(T' \in \mathcal{P}\). Therefore, we have \(\PLVcost{\mathcal{P}}{f'_{\text{new}}}{v}\leq \PLVcost{\mathcal{P}}{f'_{\text{prev}}}{v}\). 
    
    Since in each step the total number of edges in the walks is decreased, after at most \(|E(G)|\cdot |T|\) steps, we get rid of all cycles and so repeated vertices in all walks and obtain a function that maps the terminals to paths and satisfies the conditions of the claim.
\end{proof}

We compute \(\text{d}_\mathcal{P}[T',v,\lambda']\) as follows.

\begin{claim}
\label{claim:dpCorrectness}
    For a fixed partition \(\mathcal{P}\) of \(T\) into sets \(T_1,\ldots,T_t\), the values of \(\text{d}_\mathcal{P}[T',v,\lambda']\) satisfy the following recurrence.
    \[
\text{d}_{\mathcal{P}}[T',v,\lambda'] =
\begin{cases}
  0, & \mbox{ if } T'\subseteq\{v\},\\[6pt]
  \displaystyle
  \min_{\substack{T''\subset T' \\ v'\in N_G(v)}}
  
    \begin{aligned}[t]
     \Bigl(&  \text{d}_{\mathcal{P}}[T'\setminus T'',v,\lambda']+ \text{d}_{\mathcal{P}}[T'',v',\lambda'-\ell(vv')]
      \\[3pt]
        \qquad
      &
    +\, \ell(vv')\cdot\bigl|\mathcal{P}\doublecap T''\bigr|
    + \dig(vv')
      
    \Bigr),
    \end{aligned}
  & \text{otherwise}.
\end{cases}
\]
\end{claim}
\begin{proof}
    We discuss this claim for each part of the recurrence separately.
    \begin{itemize}
        \item \textbf{\(T'=\emptyset\).} Since \(T'=0\), the domain of \(f'\) is empty and so, the \PLVcostWord is \(0\).
        \item \textbf{\(T'=\{v\}\).} In this case, if \(f': T' \to 2^{E(G)}\) is the corresponding function to a solution of minimum \PLVcostWord, \(f'(v)\) is the single vertex \(v\) which results in a \PLVcostWord of \(0\) which is best-possible.
        
        \item \textbf{Otherwise.} We show the correctness of this case by showing it in two directions: The left-hand side is greater than or equal to the right-hand side, and vice versa.
          
          \textbf{Left \(\leq\) Right:} Consider an arbitrary choice of \(T''\subset T'\) and \(v' \in N_G(v)\). Let \(g\colon T''\to 2^{E(G)}\) be a function such that \(\text{d}_{\mathcal{P}}[T'',v',\lambda'-\ell(vv')]=\PLVcost{\mathcal{P}\doublecap T''}{g}{v'}\) and \(g':T'\setminus T'' \to 2^{E(G)}\) be a function such that \(\text{d}_{\mathcal{P}}[T'\setminus T'',v,\lambda']=\PLVcost{\mathcal{P}\doublecap (T'\setminus T'')}{g'}{v}\). Let \(g''\colon T' \to 2^{E(G)}\) be defined as follows for each~$u \in T'$. 
          \begin{itemize}
              \item If \(u \in T''\) and the next vertex after \(v'\) in \(g(u)\) is not \(v\), let \(g''(u)=g(u)\cup \{vv'\}\).
              \item If \(u \in T''\) and the next vertex after \(v'\) in \(g(u)\) is \(v\), let \(g''(u)\) be the subpath of \(g(u)\) from \(v\) to \(u\).
              \item If \(u \in T'\setminus T''\), let \(g''(u)=g'(u)\).
          \end{itemize} 
          For the third case, clearly \(g''(u)\) is a path in \(G\) from \(v\) to \(u\) of length at most \(\lambda'\) as \(g'\) is a solution to \(\text{d}_{\mathcal{P}}[T'\setminus T'',v,\lambda']\). This is also true for the second case as we used a subpath of \(g(u)\). Furthermore, for the first case, \(g(u)\) is a path from \(v'\) to \(u\) of length at most \(\lambda'-\ell(vv')\) since \(g\) is a solution to \(\text{d}_{\mathcal{P}}[T'',v',\lambda'-\ell(vv')]\). Since \(v'\) is an endpoint of this path and \(v' \in N_G(v)\), by adding edge \(vv'\), we are extending this path to obtain a  walk from \(v\) to \(u\) of length at most \(\lambda'\) in \(G\). Note that this walk is simple as \(g(u)\) is a path and does not contain any repeated edges and does not contain the edge \(vv'\) as otherwise \(v\) is the neighbor of \(v'\) in \(g(u)\). Therefore, \(g''\) is a function that maps each terminal \(u\) of \(T'\) to a simple walk of length at most \(\lambda'\) in \(G\) that is from \(v\) to \(u\). 
          We can rewrite \(\PLVcost{\mathcal{P}\doublecap T'}{g''}{v}\) as
          \begin{align*}
    &\sum\limits_{\substack{S\in\mathcal{P}\doublecap T'\\e \in E(G)}} \left(\ell(e)\cdot |\pfx{g''(S)}{v}{e}|\right)
        &+\sum\limits_{e \in E(G)} \left(\dig(e) \cdot \left|\bigcup\limits_{S\in \mathcal{P}\doublecap T'}\pfx{g''(S)}{v}{e}\right|\right).
              \end{align*}
        This yields that \(\PLVcost{\mathcal{P}\doublecap T'}{g''}{v}\) is at most
        \begin{align} \label{prefix:cost} &\sum\limits_{\substack{S\in\mathcal{P}\doublecap T'\\e \in E(G)}} \ell(e)\cdot (|\pfx{g''(S\cap T'')}{v}{e}|+|\pfx{g''(S\cap (T'\setminus T''))}{v}{e}|)\\
        &+\sum\limits_{e \in E(G)} \left(\dig(e) \cdot \left(\left|\bigcup\limits_{S\in \mathcal{P}\doublecap T''}\pfx{g''(S)}{v}{e}\right|+\left|\bigcup\limits_{S\in \mathcal{P}\doublecap (T'\setminus T'')}\pfx{g''(S)}{v}{e})\right|\right)\right). \nonumber
        \end{align}
          Recalling that for each \(S\in\mathcal{P}\doublecap T'\) and each edge \(e \in E(G)\), if \(S \cap T'' \neq \emptyset\), the simple walks in \(\pfx{g''(S\cap T'')}{v}{e}\) consist of either the single edge \(vv'\) or a path in \(\pfx{g(S\cap T'')}{v'}{e}\) extended by \(vv'\), we can rewrite \(|\pfx{g''(S\cap T'')}{v}{e}|\) as \(|\pfx{g(S\cap T'')}{v'}{e}|\) if \(e \neq vv'\) and as \(|\pfx{g(S\cap T'')}{v'}{e}|+1\) if \(e=vv'\). Clearly, if \(S \cap T''=\emptyset\), then \(\pfx{g''(S\cap T'')}{v}{e}=\pfx{g(S\cap T'')}{v'}{e}=\emptyset\). Moreover,  for each \(S\in\mathcal{P}\doublecap T'\) and each edge \(e \in E(G)\), we have \(\pfx{g''(S\cap (T'\setminus T''))}{v}{e}=\pfx{g'(S\cap (T'\setminus T''))}{v}{e}\) since \(g\) and \(g'\) have the same value for all terminals in \(T'\setminus T''\). On the other hand, for each edge \(e \in E(G)\), the walks in \(\bigcup\limits_{S\in \mathcal{P}\doublecap T''}\pfx{g''(S)}{v}{e}\) are consisting of either the single edge \(vv'\) or a path in \(\bigcup\limits_{S\in \mathcal{P}\doublecap T''}\pfx{g(S)}{v'}{e}\) extended by \(vv'\) and hence 
          \[
          \left|\bigcup\limits_{S\in \mathcal{P}\doublecap T''}\pfx{g''(S)}{v}{e}\right|=
          \begin{cases}
  \left|\bigcup\limits_{S\in \mathcal{P}\doublecap T''}\pfx{g(S)}{v'}{e}\right|, & e \neq vv',\\[6pt]
  \left|\bigcup\limits_{S\in \mathcal{P}\doublecap T''}\pfx{g(S)}{v'}{e}\right|+1, & \text{otherwise}.\\[6pt]
  \end{cases}
  \] 
  Also, for each edge \(e \in E(G)\), we have \[\bigcup\limits_{S\in \mathcal{P}\doublecap (T'\setminus T'')}\pfx{g''(S)}{v}{e})=\bigcup\limits_{S\in \mathcal{P}\doublecap (T'\setminus T'')}\pfx{g'(S)}{v}{e}).\]
  Note that we need to pay once more for the digging cost of \(vv'\) as it is added to the beginning of some paths. Putting all the above together, we can conclude that \eqref{prefix:cost} is equal to
          \begin{align*}
              &\begin{aligned}[t]
    & \sum\limits_{\substack{S\in\mathcal{P}\doublecap T''\\e \in E(G)}} \left(\ell(e)\cdot \left|\pfx{g(S)}{v'}{e}\right|\right)+\sum\limits_{\substack{S\in\mathcal{P}\doublecap (T'\setminus T'')\\e \in E(G)}} \left(\ell(e)\cdot\left|\pfx{g'(S)}{v}{e}\right|\right)\\
    + & \ \ell(vv')\cdot|\mathcal{P}\doublecap T''|\\
        + & \ \sum\limits_{e \in E(G)} \left(\dig(e) \cdot \left(\left|\bigcup\limits_{S\in \mathcal{P}\doublecap T''}\pfx{g(S)}{v'}{e}\right|+\left|\bigcup\limits_{S\in \mathcal{P}\doublecap (T'\setminus T'')}\pfx{g'(S)}{v}{e})\right|\right)\right)\\
        + & \ \dig(vv')
              \end{aligned}\\
              = & \ \PLVcost{\mathcal{P}\doublecap T''}{g}{v'}+\PLVcost{\mathcal{P}\doublecap(T'\setminus T'')}{g'}{v}+\ell(vv')\cdot|\mathcal{P}\doublecap T''|+\dig(vv')\\
              = & \ \text{d}_{\mathcal{P}}[T'',v',\lambda'-\ell(vv')]+\text{d}_{\mathcal{P}}[T'\setminus T'',v,\lambda']+\ell(vv')\cdot|\mathcal{P}\doublecap T''|+\dig(vv').
          \end{align*}
             By \Cref{claim:PLVStructure}, we can find a function \(g'''\) that maps paths to the terminals and is a candidate for a solution to \(\text{d}_{\mathcal{P}}[T',v,\lambda']\) and hence we have \(\text{d}_{\mathcal{P}}[T',v,\lambda']\leq \PLVcost{\mathcal{P}\doublecap T'}{g'''}{v}\leq \PLVcost{\mathcal{P}\doublecap T'}{g''}{v}\). Thus, we have
            \begin{align*}
            \text{d}_{\mathcal{P}}[T',v,\lambda']&\leq \PLVcost{\mathcal{P}\doublecap T'}{g''}{v}\\&\leq \text{d}_{\mathcal{P}}[T'',v',\lambda'-\ell(vv')]+\text{d}_{\mathcal{P}}[T'\setminus T'',v,\lambda']+\ell(vv')\cdot|\mathcal{P}\doublecap T''|+\dig(vv'), \end{align*}
            and since \(T'' \subset T'\) and \(v' \in N_G(v)\) are chosen arbitrarily, we can conclude the first inequality.
            
            \textbf{Left \(\geq\) Right:} Let \(g\colon T' \to 2^{E(G)}\) be a function such that \(\text{d}_{\mathcal{P}}[T',v,\lambda']=\PLVcost{\mathcal{P}\doublecap T'}{g}{v}\). Consider an arbitrary terminal \(u \in T'\) and let \(v'\) be the neighbor of \(v\) in \(g(u)\). We define \(T_{v'}\) as the set of all terminals \(u' \in T'\) such that \(vv'\) appears as the first edge in \(g(u')\). We define \(g'\colon T_{v'} \to 2^{E(G)}\) and \(g''\colon (T' \setminus T_{v'}) \to 2^{E(G)}\) as follows. For each \(u' \in T_{v'}\), let \(g'(u')\) be the path \(g(u')\) where the edge \(vv'\) is removed; Thus, it is a path from \(v'\) to \(u'\) with length at most \(\lambda'-\ell(vv')\). Additionally, for all \(u'' \in T' \setminus T_{v'}\), let \(g''(u')=g(u'')\) which is a path from \(v\) to \(u'\) of length at most \(\lambda'\). 
            Since the paths of \(g(T_{v'})\) start with \(vv'\) and the other paths of \(g\) start with other edges, the prefixes arising from the ones in \(g(T_{v'})\) and the ones not in \(g(T_{v'})\) for all edges are distinct. Thus, we can rewrite \(\text{d}_{\mathcal{P}}[T',v,\lambda']=\PLVcost{\mathcal{P}\doublecap T'}{g}{v}\) as 
            \begin{align*}
            &\sum\limits_{\substack{T''\in\mathcal{P}\doublecap T'\\e \in E(G)}} \left(\ell(e)\cdot \left|\pfx{g(T'')}{v}{e}\right|\right)
        +\sum\limits_{e \in E(G)} \left(\dig(e) \cdot \left|\bigcup\limits_{T''\in \mathcal{P}}\pfx{g(T'')}{v}{e}\right|\right)\\
        =&
        \begin{aligned}[t]
        &\sum\limits_{\substack{T''\in\mathcal{P}\doublecap T_{v'}\\e \in E(G)}} \left(\ell(e)\cdot \left|\pfx{g(T'')}{v}{e}\right|\right)\\
        +&\sum\limits_{\substack{T''\in\mathcal{P}\doublecap (T'\setminus T_{v'})\\e \in E(G)}} \left(\ell(e)\cdot \left|\pfx{g(T'')}{v}{e}\right|\right)\\
        +&\sum\limits_{e \in E(G)} \left(\dig(e) \cdot \left|\bigcup\limits_{T''\in\mathcal{P}\doublecap T_{v'}}\pfx{g(T'')}{v}{e}\right|\right)\\
        +&\sum\limits_{e \in E(G)} \left(\dig(e) \cdot\left|\bigcup\limits_{T''\in\mathcal{P}\doublecap (T'\setminus T_{v'})}\pfx{g(T'')}{v}{e}\right|\right).
        \end{aligned}
            \end{align*}
        If \(e=vv'\), then \(\pfx{g(T_{v'}\cap T'')}{v}{e}\) consists of \(\pfx{g'(T_{v'}\cap T'')}{v'}{e}\) where each is extended by \(vv'\) and the single edge \(vv'\) and otherwise, \(\pfx{g(T_{v'}\cap T'')}{v}{e}\) is equal to \(\pfx{g''(T_{v'}\cap T'')}{v}{e}\) where each is extended by \(vv'\). Therefore, we have
        \begin{align*}
            &\sum\limits_{\substack{T''\in\mathcal{P}\doublecap T_{v'}\\e \in E(G)}} \left(\ell(e)\cdot \left|\pfx{g(T'')}{v}{e}\right|\right)
        +\sum\limits_{\substack{T''\in\mathcal{P}\doublecap (T'\setminus T_{v'})\\e \in E(G)}} \left(\ell(e)\cdot \left|\pfx{g(T'')}{v}{e}\right|\right)\\
        =&
        \begin{aligned}[t]&\sum\limits_{\substack{T''\in\mathcal{P}\doublecap T_{v'}\\e \in E(G)}} \left(\ell(e)\cdot \left|\pfx{g'(T'')}{v'}{e}\right|\right)+\ell(vv')\cdot |\mathcal{P}\doublecap T_{v'}|\\
        +&\sum\limits_{\substack{T''\in\mathcal{P}\doublecap (T'\setminus T_{v'})\\e \in E(G)}} \left(\ell(e)\cdot \left|\pfx{g''(T'')}{v}{e}\right|\right),
        \end{aligned}
        \end{align*}
        and
        \begin{align*}
            &\begin{aligned}[t]
            &\sum\limits_{e \in E(G)} \left(\dig(e) \cdot \left|\bigcup\limits_{T''\in\mathcal{P}\doublecap T_{v'}}\pfx{g(T'')}{v}{e}\right|\right)\\
        +&\sum\limits_{e \in E(G)} \left(\dig(e) \cdot\left|\bigcup\limits_{T''\in\mathcal{P}\doublecap (T'\setminus T_{v'})}\pfx{g(T'')}{v}{e}\right|\right)
        \end{aligned}\\
        =&
        \begin{aligned}[t]&\sum\limits_{e \in E(G)} \left(\dig(e) \cdot \left|\bigcup\limits_{T''\in\mathcal{P}\doublecap T_{v'}}\pfx{g'(T'')}{v'}{e}\right|\right)+\dig(vv')\\
        +&\sum\limits_{e \in E(G)} \left(\dig(e) \cdot\left|\bigcup\limits_{T''\in\mathcal{P}\doublecap (T'\setminus T_{v'})}\pfx{g''(T'')}{v}{e}\right|\right).
        \end{aligned}
        \end{align*}
        This implies that \(\PLVcost{\mathcal{P}\doublecap T'}{g}{v}\) can be written as
        \[
            \PLVcost{\mathcal{P
            }\doublecap T_{v'}}{g'}{v'}+\PLVcost{\mathcal{P}\doublecap(T'\setminus T_{v'})}{g''}{v}+\ell(vv')\cdot |\mathcal{P}\doublecap T_{v'}|+\dig(vv').
            \]

        Clearly, \(g'\) is a candidate for a solution to \( \text{d}_{\mathcal{P}}[T_{v'},v',\lambda'-\ell(vv')]\) and \(g''\) is a candidate for a solution to \(\text{d}_{\mathcal{P}}[T'\setminus T_{v'},v,\lambda']\). Therefore, we have
        \begin{align*}
            &\PLVcost{\mathcal{P
            }\doublecap T_{v'}}{g'}{v'}+\PLVcost{\mathcal{P}\doublecap(T'\setminus T_{v'})}{g''}{v}+\ell(vv')\cdot |\mathcal{P}\doublecap T_{v'}|+\dig(vv')\\
            &\geq \text{d}_{\mathcal{P}}[T_{v'},v',\lambda'-\ell(vv')]+\text{d}_{\mathcal{P}}[T'\setminus T_{v'},v,\lambda']+\ell(vv')\cdot |\mathcal{P}\doublecap T_{v'}|+\dig(vv').
        \end{align*}
        Considering that \(T_{v'}\) is a possible value of \(T''\), we can conclude that
        \begin{align*}
            \text{d}_{\mathcal{P}}[T',v,\lambda']&=\PLVcost{\mathcal{P}\doublecap T'}{g}{v}\\
            &\geq \text{d}_{\mathcal{P}}[T_{v'},v',\lambda'-\ell(vv')]+\text{d}_{\mathcal{P}}[T'\setminus T_{v'},v,\lambda']+\ell(vv')\cdot |\mathcal{P}\doublecap T_{v'}|+\dig(vv')\\
            &\geq \min_{\substack{T''\subset T' \\ v'\in N_G(v)}}
    \begin{aligned}[t]
     \Bigl(&  \text{d}_{\mathcal{P}}[T'\setminus T'',v,\lambda']+ \text{d}_{\mathcal{P}}[T'',v',\lambda'-\ell(vv')]
      \\[3pt]
        \qquad
      &
    +\, \ell(vv')\cdot\bigl|\mathcal{P}\doublecap T''\bigr|
    + \dig(vv') 
    \Bigr).
    \end{aligned}
        \end{align*}
        This implies the other direction and hence the claim.\qedhere
    \end{itemize}
\end{proof}

The algorithm can be described as follows.

\begin{enumerate}
    \item For each partition \(\mathcal{P}\) of \(T\) into \(t\) sets \(T_1,\ldots,T_t\), \label{PLValg:PartitionBranch}
    \begin{enumerate}
        \item If for some \(i\in [t]\), \(\sum\limits_{u \in T_i}d(u) > \alpha\), skip this branch.\label{PLValg:DemandCheck}
        \item Compute \(\text{d}_\mathcal{P}[T,r,\lambda]\) using \Cref{claim:dpCorrectness}. If \(\text{d}_\mathcal{P}[T,r,\lambda]\leq \beta\), then return `yes'.\label{PLValg:DPCheck}
    \end{enumerate}
    \item If none of the cases resulted in a `yes'-output, output `no'.\label{PLValg:ReturnNo}
\end{enumerate}

We show the correctness of this algorithm and afterward, we its running time analysis.

\paragraph*{Correctness.} Let \(\mathcal{P}'\) and \(f'\) be a solution for  \((G,\ell,\dig,r,T,d,\alpha,\beta,\lambda,t)\) that has the minimum \PLVcostWord. Clearly, if \(\PLVcost{\mathcal{P}'}{f'}{r}\leq \beta\), the answer to this input instance is `yes' and otherwise, the answer is `no'. 

First, assume that the answer is `yes'. Since we got through all possible partitions in line~\ref{PLValg:PartitionBranch}, at some point the algorithm takes \(\mathcal{P}=\mathcal{P'}\). Then, since \(\mathcal{P}'\) captures the demand constraint it passes the check of line~\ref{PLValg:DemandCheck}. Since this is a solution with minimum \PLVcostWord, the value of \(\text{d}_\mathcal{P'}[T,r,\lambda]\) should be \(\PLVcost{\mathcal{P}'}{f'}{r}\) by definition. Therefore, line~\ref{PLValg:DPCheck} returns `yes'.

Conversely, if the answer is `no', we know that for all partitions \(\mathcal{P}\) and all functions \(f\), we have \(\PLVcost{\mathcal{P}}{f}{r}> \beta\). Therefore, for each choice of \(\mathcal{P}\), the value of \(\text{d}_\mathcal{P'}[T,r,\lambda]\) is more than \(\beta\) by the definition of \(\text{d}_\mathcal{P'}[T,r,\lambda]\). Therefore, the algorithm does not return in line~\ref{PLValg:DPCheck} and proceeds with the rest, reaching line~\ref{PLValg:ReturnNo} which returns `no'.

\paragraph*{Running Time Analysis.} There are \(t^|T|\) possible options for \(\mathcal{P}\) in step~\ref{PLValg:PartitionBranch}. Then, checking the step~\ref{PLValg:DemandCheck} of the algorithm can be done in \(\mathcal{O}(|T|)\). The computation of \(\text{d}_{\mathcal{P}}[T',v,\lambda']\) in step~\ref{PLValg:DPCheck} takes \(|T|\cdot |V(G)|\cdot \lambda \cdot 2^{|T|}\cdot |V(G)|\cdot |T|^2\). Step~\ref{PLValg:ReturnNo} can be done in constant time. Therefore, the total running time is \(2^{|T| \cdot \Oh(\log t)} \cdot |V(G)|^2\cdot \lambda\).
\end{proof}

%% file: PLVDefinitions.tex
In an attempt to understand how sharing digging costs causes intractability, we define a variation of the power network design problem whose cost function is easier to optimize. While the cost of a solution still arises as a digging term plus the sum of the cable lengths, we change the definition of the digging term. In the definition of \cref{sec:Introduction}, whenever multiple trees contain the same edge the digging costs of the corresponding trench only have to be paid once. The variation we introduce effectively means that when multiple trees contain the same edge, the digging costs can only be shared when the path from the root to this edge is the same in both trees. This is formalized through a technical definition of the different \emph{prefixes} leading to an edge. The digging costs are then paid once for each different prefix leading to that trench. We show that under this cost model, the problem is FPT parameterized by~$|T|+t$ via dynamic programming.

We now formally define and analyze the resulting \PreLV problem. For the definition we need the concept of a \emph{walk} in a graph, which is an alternating sequence of vertices and edges~$(v_0, e_1, \ldots, v_k)$ such that edge~$e_i$ is incident on~$v_{i-1}$ and~$v_i$ for all~$i \in [k]$. A walk is \emph{simple} (sometimes called \emph{trail} in the literature) when no edge occurs more than once. Note that a path is a simple walk with no repeated vertex. If the direction of the walk is irrelevant, we can also consider a simple walk as a set of edges.

\begin{definition}
    Let \(G\) be a graph and let~$W=(v_0, e_1, \ldots, v_k)$ be a simple walk in~$G$ starting at~$r$. For each edge \(e_i \in E(W)\) we define the prefix of~\(e_i\) in~\(W\), denoted by \(\pfx{W}{r}{e_i}\), as the subwalk of \(W\) from \(r\) to \(v_i\); in other words, \(\pfx{W}{r}{e_i}=(v_0,e_1,\ldots,e_i,v_i)\).
\end{definition}

This definition can be used to define the \emph{prefix set} of an edge in a given set of simple walks.

\begin{definition}
    Let \(G\) be a graph and let \(\mathcal{F}=\{W_1,\ldots,W_x\}\) be a set of simple walks in~$G$ starting at~$r \in V(G)$. For each \(e \in E(G)\), we define the prefix set of \(e\) in \(\mathcal{F}\), denoted by \(\pfxset{\mathcal{F}}{r}{e}\), as the set of all prefixes for \(e\) among the walks in \(\mathcal{F}\) that contain \(e\). More formally: \[\pfxset{\mathcal{F}}{r}{e} := \{\pfx{P_i}{r}{e}\mid i \in [x] \land e \in E(P_i)\}.\] If there is no walk in \(\mathcal{F}\) that contains \(e\), then \(\pfxset{\mathcal{F}}{r}{e} =\emptyset\).
\end{definition}

The main idea behind our definition of the prefix-sharing version of the problem is as follows. We want to modify the cost function so that if~$H_1, \ldots, H_t$ is a set of trees that forms a solution to a power network design problem, the number of times we pay the cost of digging a trench at a particular edge~$e \in E(G)$ is equal to the number of different routes by which a tree~$H_i$ can arrive at edge~$e$. Intuitively, this captures the idea that if the power network is installed starting at the root vertex, then digging costs are shared among different trees that place a cable in the same trench, as long as the work crews arrive at this edge by the same path from the root. This can be formalized as saying that the number of times the digging cost~$\dig(e)$ has to be paid is equal to the size of the prefix set~$\pfxset{{F}}{r}{e}$, when taking~$\mathcal{F}$ to be the following family of paths: for each tree~$H_i$ that contains edge~$e$, the set~$\mathcal{F}$ contains the path in~$H_i$ connecting~$e$ to the root~$r$.

When incorporating this concept into the definition of a power network design problem described in \cref{sec:Introduction}, a surprising behavior arises. To minimize the resulting cost function it might be beneficial for a single power cable~$H_i$ to be placed in a network of trenches that does not form a tree, but admits cycles among its trenches: the (seemingly superfluous) additional edges that form cycles can lead to a reduction in the number of different prefixes formed in combination with other power cables~$H_j$ (see~\Cref{fig:PLVNonTreeInstanceSolution}). In our formulation of the problem, we therefore allow cycles in the connected subgraph~$H_i$ that represents the trenches into which a single power cable is placed. To avoid short-circuits, the route by which electricity is transferred from the root to a terminal is still a simple path, but different terminals fed by the same power cable can use different paths through the (possibly cyclic) network of trenches that is dug. Using this intuition and the definitions above, the \PreLV problem can now be defined as follows.

\begin{figure}
    \centering
    \begin{subfigure}[t]{0.9\textwidth}
            \centering
            \input{figs/PLVNonTreeInstanceBoundedDepth}
            \subcaption{An instance of \PreLVshort. The lengths of edges \(v_1v_3\), \(v_2v_3\) are \(10\), and the lengths of \(rv_1\),\(rv_2\) are \(20\). The digging cost of edges \(v_4v_5\) and \(v_4v_6\) is \(1000\). The remaining lengths and digging costs are \(1\). The depth restriction~$\lambda$ is set to \(33\). The demand values~$d$ are chosen via \cref{lemma:UniqueTerminalPartitioning} such that the only possible partition \(\mathcal{P}\) of terminals is \(\{t_1,t_3\},\{t_5,t_7\},\{t_2,t_4,t_6,t_{8}\}\).}
            \label{fig:PLVNonTreeInstance}
        \end{subfigure}
        \hfill
        \begin{subfigure}[t]{0.45\textwidth}
            \centering
            \input{figs/PLVTreeSolution1}
            \subcaption{An acyclic solution for (a). In this solution, we pay for the digging cost of each edge other than \(v_1v_3\) once, which gives a total digging cost of \(2013\). We pay for the length of \(rv_1\) twice, for \(rv_2\), \(v_2v_3\), \(v_3v_4\) three times (once for each color), for \(v_4v_5\) and \(v_4v_6\) twice, and for the edges incident on terminals once. So, the total length cost is \(2 \times 20 +3 \times (20+10+1) +2 \times 2+8=145\). Thus, the total cost is \(2013+145=2158\).}
            \label{fig:PLVTreeSolution1}
        \end{subfigure}
        \hfill
    \begin{subfigure}[t]{0.45\textwidth}
            \centering
            \input{figs/PLVNonTreeSolution}
            \subcaption{A cyclic solution to (a) of smaller cost. Note that the red subgraph contains the cycle \((r,v_1,v_3,v_2)\). In this solution, we pay the digging cost of all edges except \(v_3v_4\) once, and those of \(v_3v_4\) twice. So, the total digging cost of this solution is \(2014\). We pay for the length of the edges around the cycle \(rv_1v_3v_2\) twice, for \(v_3v_4\) four times consisting of 2 reds and once for blue and green, for \(v_4v_5\) and \(v_4v_6\) twice, and for the edges incident on terminals once. So, the total length is \(2\times 60+4\times 1+2\times 2+8\times 1=136\) which yields a total cost of \(2150\).}
            \label{fig:PLVNonTreeSolution}
        \end{subfigure}
    \caption{Illustration of the \PreLVshort problem and the possibility of cyclic solution structures. (a) An instance of \PreLVshort. (b),(c) an acyclic solution, and optimal and cyclic solution respectively. In these solutions, edges of the same color form solution paths to terminals belonging to the same part of \(\mathcal{P}\). The edge styles distinguish the different root-terminal paths. One can argue that for the instance in (a), all optimal solutions are such that the union of the paths to~$t_4$ and~$t_6$ contain a cycle, as follows. 
    Due to the depth restriction, the paths for \(t_1,t_2,t_7,t_8\) are unique. If the subgraph that covers \(t_2,t_4,t_6,t_8\) is acyclic, then at least one of \(v_1v_3\) and \(v_2v_3\) is excluded from the corresponding paths and at least one of them must be used for connectivity to \(t_4\) and \(t_6\). So, exactly one of these edges belongs to the paths to \(t_4\) and \(t_6\). If the paths to \(t_3,t_4,t_5,t_6\) do not pass through the same edge among \(\{v_1v_3, v_2v_3\}\), the solution will have a digging cost larger than \(3000\). Otherwise, all must pass through the same edge which results in a suboptimal solution similar to (b).}
    \label{fig:PLVNonTreeInstanceSolution}
\end{figure}

\begin{center}
    \begin{mathproblem}{\PreLV (\PreLVshort)}[\(t + |T|\)]
        \textit{\textbf{Instance:}} An undirected graph \(G\) with \emph{edge lengths} \(\ell \colon E(G) \to \mathbb{N}\) and \emph{digging costs} \(\dig \colon E(G) \to \mathbb{N}\), a \emph{root vertex} \(r \in V(G)\), a set of degree-\(1\) \emph{terminal} vertices \(T \subseteq V(G)\) with \emph{demands} \(d \colon T \to \mathbb{N}\), and integers \(\alpha\), \(\beta\), \(\lambda\), and \(t\), encoded in unary.
        
        \vspace{0.3cm}
        
        \textit{\textbf{Question:}} Does there exist a partition \(\mathcal{P}\) of \(T\) into \(t\) sets \(T_1,\ldots,T_t\) and a function \(f \colon T \to 2^{E(G)}\) that maps each terminal \(u\in T\) to a path in \(G\) from \(r\) to \(u\) such that:
        \begin{itemize}
            \item for all \(u\in T\), the path \(P = f(u)\) satisfies~$\sum_{e \in E(P)} \ell(e) \leq \lambda$ [\emph{depth} constraint],
            \item \(\sum\limits_{u\in T_i}d(u) \leq \alpha\) for all \(i \in [t]\) [\emph{demand} constraint], and
            \item \(\PLVcost{\mathcal{P}}{f}{r} \leq \beta \)?[\emph{cost} constraint]
        \end{itemize}
        The value \(\PLVcost{\mathcal{P}}{f}{r}\), referred to as \PLVcostWord, is defined as follows:
        \par
        \(
        \PLVcost{\mathcal{P}}{f}{r}:=   \sum\limits_{\substack{T'\in\mathcal{P}\\e \in E(G)}} \left(\ell(e)\cdot \left|\pfx{f(T')}{r}{e}\right|\right)
        +\sum\limits_{e \in E(G)} \left(\dig(e) \cdot \left|\bigcup\limits_{T'\in \mathcal{P}}\pfx{f(T')}{r}{e}\right|\right).
        \)
    \end{mathproblem}
\end{center}

%% file: figs/PLVNonTreeInstanceBoundedDepth.tex
\begin{tikzpicture}[scale=0.6]
    \tikzstyle{vertex}=[circle, fill=black, inner sep=1.5pt, scale=2]
    \tikzset{terminal/.style={fill=gray!30, draw=gray!100,  thick,minimum size=0pt, inner sep=2pt}}

    % bordering 
    % \draw[gray,draw opacity=.25,dashed] (-3,1) rectangle (3,-6.5);
    
    % vertices
    \node[vertex,label=above:\(r\)] (r) at (0,0) {};
    \node[vertex,label=below:\(v_1\)] (v1) at (-1.5,-1.5) {};
    \node[vertex,label=below:\(v_2\)] (v2) at (1.5,-1.5) {};
    \node[vertex,label=left:\(v_3\)] (v3) at (0,-3) {};
    \node[vertex,label=left:\(v_4\)] (v4) at (0,-4.5) {};
    \node[vertex,label=left:\(v_5\)] (v5) at (-1,-6) {};
    \node[vertex,label=right:\(v_6\)] (v6) at (1,-6) {};
    \node[terminal,label=left:\(t_1\)] (t1) at (-2.5,-0.5) {};
    \node[terminal,label=left:\(t_2\)] (t2) at (-2.5,-1.5) {};
    % \node[terminal,label=left:\(t_3\)] (t3) at (-2.5,-2.5) {};
    
    \node[terminal,label=below:\(t_3\)] (t4) at (-2,-7) {};
    \node[terminal,label=below:\(t_4\)] (t5) at (-1.25,-7) {};
    % \node[terminal,label=below:\(t_6\)] (t6) at (-0.5,-7) {};
    
    % \node[terminal,label=below:\(t_7\)] (t7) at (0.5,-7) {};
    \node[terminal,label=below:\(t_5\)] (t9) at (1.25,-7) {};
    \node[terminal,label=below:\(t_6\)] (t8) at (2,-7) {};
    
    \node[terminal,label=right:\(t_{8}\)] (t10) at (2.5,-0.5) {};
    \node[terminal,label=right:\(t_{7}\)] (t11) at (2.5,-1.5) {};
    % \node[terminal,label=right:\(t_{12}\)] (t12) at (2.5,-2.5) {};

    % edges
    \draw[thick] (r) -- (v1);
    \draw[thick] (r) -- (v2);
    \draw[thick] (v1) -- (v3);
    \draw[thick] (v2) -- (v3);
    \draw[thick] (v3) -- (v4);
    \draw[thick] (v4) -- (v5);
    \draw[thick] (v4) -- (v6);
    
    \draw[thick] (v1) -- (t1);
    \draw[thick] (v1) -- (t2);
    % \draw[thick] (v1) -- (t3);

    \draw[thick] (v5) -- (t4);
    \draw[thick] (v5) -- (t5);
    % \draw[thick] (v5) -- (t6);
    
    % \draw[thick] (v6) -- (t7);
    \draw[thick] (v6) -- (t8);
    \draw[thick] (v6) -- (t9);
    
    \draw[thick] (v2) -- (t10);
    \draw[thick] (v2) -- (t11);
    % \draw[thick] (v2) -- (t12);
\end{tikzpicture}

%% file: figs/PLVTreeSolution1.tex
\begin{tikzpicture}[scale=0.6]
    \tikzstyle{vertex}=[circle, fill=black, inner sep=1.5pt, scale=2]
    \tikzset{terminal/.style={fill=gray!30, draw=gray!100,  thick,minimum size=0pt, inner sep=2pt}}
    \tikzset{
      curly/.style={
        decorate,
        decoration={snake, amplitude=0.2mm, segment length=2mm}
      }
    }

    % bordering 
    % \draw[gray,draw opacity=.25,dashed] (-3,1) rectangle (3,-6.5);
    
    % vertices
    \node[vertex,label=above:\(r\)] (r) at (0,0) {};
    \node[vertex,label=below:\(v_1\)] (v1) at (-1.5,-1.5) {};
    \node[vertex,label=below:\(v_2\)] (v2) at (1.5,-1.5) {};
    \node[vertex,label=left:\(v_3\)] (v3) at (0,-3) {};
    \node[vertex,label=left:\(v_4\)] (v4) at (0,-4.5) {};
    \node[vertex,label=left:\(v_5\)] (v5) at (-1,-6) {};
    \node[vertex,label=right:\(v_6\)] (v6) at (1,-6) {};
    \node[terminal,label=left:\(t_1\)] (t1) at (-2.5,-0.5) {};
    \node[terminal,label=left:\(t_2\)] (t2) at (-2.5,-1.5) {};
    % \node[terminal,label=left:\(t_3\)] (t3) at (-2.5,-2.5) {};
    
    \node[terminal,label=below:\(t_3\)] (t3) at (-2,-7) {};
    \node[terminal,label=below:\(t_4\)] (t4) at (-1.25,-7) {};
    % \node[terminal,label=below:\(t_6\)] (t6) at (-0.5,-7) {};
    
    % \node[terminal,label=below:\(t_7\)] (t7) at (0.5,-7) {};
    \node[terminal,label=below:\(t_5\)] (t6) at (1.25,-7) {};
    \node[terminal,label=below:\(t_6\)] (t5) at (2,-7) {};
    
    \node[terminal,label=right:\(t_{8}\)] (t7) at (2.5,-0.5) {};
    \node[terminal,label=right:\(t_{7}\)] (t8) at (2.5,-1.5) {};
    % \node[terminal,label=right:\(t_{12}\)] (t12) at (2.5,-2.5) {};

    % edges t1
    \draw[green!60!black, thick, transform canvas={xshift=-1.5pt,yshift=1.5pt}, thick,dashed] (r) -- (v1);
    \draw[green!60!black, thick, thick,dashed] (v1) -- (t1);

    % edges t2
    \draw[red,transform canvas={xshift=0pt,yshift=0pt}, thick,dotted] (r) -- (v1);
    \draw[red, thick,dotted] (v1) -- (t2);

    % edges for t3
    \draw[green!60!black,transform canvas={xshift=-1.25pt,yshift=-1.25pt}, bend right=12, thick] (r) to (v2);
    \draw[green!60!black, transform canvas={xshift=-2pt,yshift=2pt}, thick] (v2) to (v3);
    \draw[green!60!black,transform canvas={xshift=-3pt}, thick] (v3) -- (v4);
    \draw[green!60!black,transform canvas={xshift=-1.3pt,yshift=1.3}, thick] (v4) -- (v5);
    \draw[green!60!black, thick] (v5) -- (t3);

    % edges for t4
    \draw[red,transform canvas={xshift=1.5pt,yshift=1.5pt}, thick] (r) to (v2);
    \draw[red, transform canvas={xshift=-1pt,yshift=1pt}, thick] (v2) to (v3);
    \draw[red,transform canvas={xshift=-1pt}, thick] (v3) -- (v4);
    \draw[red,transform canvas={xshift=0pt,yshift=0}, thick] (v4) -- (v5);
    \draw[red, thick] (v5) -- (t4);

    % edges for t5
    \draw[red, transform canvas={xshift=-1.25pt,yshift=-1.25pt}, thick,curly] (r) -- (v2);
    \draw[red,transform canvas={xshift=1.3pt,yshift=-1.3pt}, thick,curly] (v2) -- (v3);
     \draw[red,transform canvas={xshift=3}, thick,curly] (v3) -- (v4);
    \draw[red,transform canvas={xshift=1.3pt,yshift=1.3}, thick,curly] (v4) -- (v6);
    \draw[red, thick,curly] (v6) -- (t5);

    % edges for t6
    \draw[cyan,transform canvas={xshift=0.25pt,yshift=0.25pt},thick] (r) -- (v2);
    \draw[cyan,transform canvas={xshift=0pt,yshift=0pt}, thick] (v2) -- (v3);
    \draw[cyan,transform canvas={xshift=1}, thick] (v3) -- (v4);
    \draw[cyan,transform canvas={xshift=0pt,yshift=0pt}, thick] (v4) -- (v6);
    \draw[cyan, thick] (v6) -- (t6);

    % edges t8
    \draw[cyan, transform canvas={xshift=3pt,yshift=3pt}, thick,dashed] (r) -- (v2);
    \draw[cyan, thick,dashed] (v2) -- (t8);
    
    % edges t7
    \draw[red, transform canvas={xshift=3pt,yshift=3pt}, bend left=12, thick,dashed] (r) to (v2);
    \draw[red, thick,dashed] (v2) -- (t7);
\end{tikzpicture}

%% file: figs/PLVNonTreeSolution.tex
\begin{tikzpicture}[scale=0.6]
    \tikzstyle{vertex}=[circle, fill=black, inner sep=1.5pt, scale=2]
    \tikzset{terminal/.style={fill=gray!30, draw=gray!100,  thick,minimum size=0pt, inner sep=2pt}}
    \tikzset{
      curly/.style={
        decorate,
        decoration={snake, amplitude=0.2mm, segment length=2mm}
      }
    }

    % bordering 
    % \draw[gray,draw opacity=.25,dashed] (-3,1) rectangle (3,-6.5);
    
    % vertices
    \node[vertex,label=above:\(r\)] (r) at (0,0) {};
    \node[vertex,label=below:\(v_1\)] (v1) at (-1.5,-1.5) {};
    \node[vertex,label=below:\(v_2\)] (v2) at (1.5,-1.5) {};
    \node[vertex,label=left:\(v_3\)] (v3) at (0,-3) {};
    \node[vertex,label=left:\(v_4\)] (v4) at (0,-4.5) {};
    \node[vertex,label=left:\(v_5\)] (v5) at (-1,-6) {};
    \node[vertex,label=right:\(v_6\)] (v6) at (1,-6) {};
    \node[terminal,label=left:\(t_1\)] (t1) at (-2.5,-0.5) {};
    \node[terminal,label=left:\(t_2\)] (t2) at (-2.5,-1.5) {};
    % \node[terminal,label=left:\(t_3\)] (t3) at (-2.5,-2.5) {};
    
    \node[terminal,label=below:\(t_3\)] (t3) at (-2,-7) {};
    \node[terminal,label=below:\(t_4\)] (t4) at (-1.25,-7) {};
    % \node[terminal,label=below:\(t_6\)] (t6) at (-0.5,-7) {};
    
    % \node[terminal,label=below:\(t_7\)] (t7) at (0.5,-7) {};
    \node[terminal,label=below:\(t_5\)] (t6) at (1.25,-7) {};
    \node[terminal,label=below:\(t_6\)] (t5) at (2,-7) {};
    
    \node[terminal,label=right:\(t_{8}\)] (t7) at (2.5,-0.5) {};
    \node[terminal,label=right:\(t_{7}\)] (t8) at (2.5,-1.5) {};
    % \node[terminal,label=right:\(t_{12}\)] (t12) at (2.5,-2.5) {};

    % edges t1
    \draw[green!60!black, thick, transform canvas={xshift=-3pt,yshift=3pt}, thick,dashed] (r) -- (v1);
    \draw[green!60!black, thick, thick,dashed] (v1) -- (t1);

    % edges t2
    \draw[red,transform canvas={xshift=0pt,yshift=0pt}, thick,dotted] (r) -- (v1);
    \draw[red, thick,dotted] (v1) -- (t2);

    % edges for t3
    \draw[green!60!black,transform canvas={xshift=-1.5pt,yshift=1.5pt}, thick] (r) -- (v1);
    \draw[green!60!black, transform canvas={xshift=-1.3pt,yshift=-1.3pt}, thick] (v1) -- (v3);
    \draw[green!60!black,transform canvas={xshift=-3pt}, thick] (v3) -- (v4);
    \draw[green!60!black,transform canvas={xshift=-1.3pt,yshift=1.3}, thick] (v4) -- (v5);
    \draw[green!60!black, thick] (v5) -- (t3);

    % edges for t4
    \draw[red,transform canvas={xshift=1.5pt,yshift=-1.5pt}, thick] (r) -- (v1);
    \draw[red, transform canvas={xshift=0pt,yshift=0pt}, thick] (v1) -- (v3);
    \draw[red,transform canvas={xshift=-1pt}, thick] (v3) -- (v4);
    \draw[red,transform canvas={xshift=0pt,yshift=0}, thick] (v4) -- (v5);
    \draw[red, thick] (v5) -- (t4);
    
    % edges for t5
    \draw[red, transform canvas={xshift=-1.5pt,yshift=-1.5pt}, thick,curly] (r) -- (v2);
    \draw[red,transform canvas={xshift=1.3pt,yshift=-1.3pt}, thick,curly] (v2) -- (v3);
    \draw[red,transform canvas={xshift=3}, thick,curly] (v3) -- (v4);
    \draw[red,transform canvas={xshift=1.3pt,yshift=1.3}, thick,curly] (v4) -- (v6);
    \draw[red, thick,curly] (v6) -- (t5);

    % edges for t6
    \draw[cyan,transform canvas={xshift=1.5pt,yshift=1.5pt},thick] (r) -- (v2);
    \draw[cyan,transform canvas={xshift=0pt,yshift=0pt}, thick] (v2) -- (v3);
    \draw[cyan,transform canvas={xshift=1}, thick] (v3) -- (v4);
    \draw[cyan,transform canvas={xshift=0pt,yshift=0pt}, thick] (v4) -- (v6);
    \draw[cyan, thick] (v6) -- (t6);

    % edges t8
    \draw[cyan, transform canvas={xshift=0pt,yshift=0pt}, thick,dashed] (r) -- (v2);
    \draw[cyan, thick,dashed] (v2) -- (t8);
    
    % edges t7
    \draw[red, transform canvas={xshift=3pt,yshift=3pt}, thick,dashed] (r) -- (v2);
    \draw[red, thick,dashed] (v2) -- (t7);
\end{tikzpicture}

%% file: Hardness_Results.tex
\label{sec:Hardness}
For our hardness results, we provide reductions from \GT, which is defined as follows.
\begin{center}
\begin{mathproblem}{\GT}[\(k\)]
        \textit{\textbf{Instance:}} A \(k \times k\) matrix where each cell \((i,j) \in [k]^2\) contains a set of pairs \(S_{i,j} \subseteq [n] \times [n]\).

        \vspace{0.3cm}
        
        \textit{\textbf{Question:}} Can we choose a pair \(s_{i,j} \in S_{i,j}\) for each cell such that
        \begin{itemize}
            \item if \(s_{i,j}=(a,b)\) and \(s_{i,j'}=(a',b')\), then \(a=a'\); and
            \item if \(s_{i,j}=(a,b)\) and \(s_{i',j}=(a',b')\), then \(b=b'\)?
        \end{itemize}
    \end{mathproblem}
\end{center}
\GT cannot be solved in time \(f(k)\cdot n^{o(k)}\) under the ETH and is \W[1]-hard \cite{ParAlg}.

\subsection{\W[1]-hardness of \BLV on planar graphs}
\label{sec:W1HardBLV}
\input{BLV_reduction}

\subsection{\W[1]-hardness of \ULV}
\input{ULV_reduction}

%% file: BLV_reduction.tex
In our first reduction, we show that the algorithm from \Cref{thm:XP-algo-planar} is ETH-tight.

\begin{theorem}
    \label{thm:bddepth-ETHlowerbound}
    The \BLV problem is W[1]-hard for~$t=2$ and under the ETH admits no \(f(|T|,t)\cdot (|V(G)| \cdot \lambda)^{f'(t) \cdot o(|T|)}\)-time algorithm for any computable functions \(f\) and \(f'\), even when restricted to planar input graphs.
\end{theorem}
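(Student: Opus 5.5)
We reduce from \GT: given a $k\times k$ instance with sets $S_{i,j}\subseteq[n]\times[n]$, we construct in polynomial time a planar instance of \BLVshort with $t=2$, $|T|=\Oh(k)$ and $|V(G)|\cdot\lambda$ polynomial in $n$ and $k$. Since \GT admits no $f(k)\cdot n^{o(k)}$ algorithm under the ETH, an $f(|T|,t)\cdot(|V(G)|\cdot\lambda)^{f'(t)\cdot o(|T|)}$ algorithm with $t=2$ would give $f(k)\cdot n^{\Oh(1)\cdot f'(2)\cdot o(k)}=f(k)n^{o(k)}$, a contradiction; W[1]-hardness follows as well.

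\textbf{Construction.} We build a grid-like planar graph with $k$ horizontal \emph{row gadgets} and $k$ vertical \emph{column gadgets}. Row gadget $i$ consists of $n$ parallel horizontal lanes, one per candidate value $a\in[n]$. Column gadget $j$ consists of $n$ candidate vertical lanes, one per value $b\in[n]$. Both lanes of a row gadget and lanes of a column gadget run through every cell $(i,j)$, whose local gadget is drawn planarly. Concretely, inside cell $(i,j)$, vertical lane $b$ may take a short horizontal detour along horizontal lane $a$ only if $(a,b)\in S_{i,j}$.

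We add terminals $T_1$ (one per row, attached on the right or via the top row, as the intro describes) and terminals $T_2$ (one per column, attached to the rightmost column or the bottom). We use \cref{lemma:UniqueTerminalPartitioning} with $t=2$ to choose demands $d$ and $\alpha$ so that $\{T_1,T_2\}$ is the unique feasible partition. Hence $H_1$ must cover exactly $T_1$ and $H_2$ exactly $T_2$.

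Lengths and the depth bound $\lambda$ are tuned so that $H_1$ can reach its row terminal only along a single horizontal lane of that row. Taking a lane means choosing a value $a_i$, and switching lanes mid-row must exceed $\lambda$. Similarly, each root-to-terminal path of $H_2$ must follow one vertical lane $b_j$, apart from the permitted horizontal detours. Horizontal gadget edges get a large digging cost $D$; every other weight is a small polynomial.

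The budget $\beta$ is set so that a solution is affordable only if every horizontal detour edge used by $H_2$ is already dug by $H_1$. So in each of the $k^2$ cells, the detour taken by column $j$'s path, which is forced to exist by the length and depth accounting, must lie on the lane $a_i$ chosen by row $i$. That detour is present only for pairs $(a_i,b_j)\in S_{i,j}$.

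\textbf{Correctness.} For the forward direction, given a grid tiling, we route $H_1$ along the lanes $a_i$ and $H_2$ along the lanes $b_j$, detouring in cell $(i,j)$ over the edge for $(a_i,b_j)$. We then verify the depth, demand and cost totals. For the backward direction, we argue in three steps:
\begin{enumerate}
\item the depth bound forces the lane structure described above;
\item the cost bound forces the detours to be shared with $H_1$;
\item reading off $a_i$ and $b_j$ therefore yields a valid tiling.
\end{enumerate}
The parameter bounds hold: $|T|=\Oh(k)$, $|V(G)|=\mathrm{poly}(n,k)$ and $\lambda=\mathrm{poly}(n,k)$, because all integers are polynomial and encoded in unary. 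Planarity holds by drawing each cell gadget so that horizontal lanes are crossed by vertical lanes only at shared vertices, i.e.\ crossings become vertices.

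\textbf{Main obstacle.} The delicate step is the backward direction, namely showing that cheating is never profitable. $H_2$ might zig-zag between vertical lanes, $H_1$ might switch rows or lanes, or the trees might share vertical edges instead of horizontal ones. I would first try weight scales separated by polynomial factors: a large unit $M$ on vertical lengths, which pins $\lambda$ and so fixes vertical structure; an intermediate weight on horizontal lengths; and the dig cost $D$ on detour edges. Each cheat should then provably either violate $\lambda$ or pay an extra $D>$ (all smaller terms combined), and the proof becomes a counting argument on how many unshared dug units a solution can afford.
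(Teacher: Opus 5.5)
Your strategy matches the paper's: reduction from \GT with $t=2$, the partition fixed by \cref{lemma:UniqueTerminalPartitioning}, depth pinning $H_1$ to rows and $H_2$ to column routes, and a horizontal digging cost above the budget slack. The gap is that the sketch omits the device that anchors each tree to its own side of the grid, and that device is load-bearing.

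You never say where the single root $r$ sits. The depth constraint is only an upper bound, so it can never make $H_2$ prefer the long traversal across all row bands. If $r$ can reach both entry sides, $H_2$ can enter where $H_1$ enters, run along the lane $H_1$ has already dug in the row band next to the column terminals, and finish with only the vertical edges of that one band. Under your scales (vertical lengths large, horizontal ones intermediate), this route is far shorter than the honest one, adds no digging, and never visits the cells of the other bands. So a solution of cost at most $\beta$ would exist for every \GT instance. A shortcut of the same kind is available to $H_1$ in the band next to $H_2$'s entry side.

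The paper prevents this as follows. It adds hubs $\rcons,\rselect$ joined to $r$ by edges of length $X^*$. Pendant terminals $p(\rcons)\in T_1$ and $p(\rselect)\in T_2$ lie at distance exactly $\lambda$ from $r$, which forces $r\rcons\in E(H_1)$ and $r\rselect\in E(H_2)$. Together with $\beta<3X^*$, neither tree can use the other's hub (\cref{claim:treeroots}). Grid-entry edges of length $X$ keep paths from leaving and re-entering the grid. The pendant edges behind the preterminals are calibrated so that the single $\lambda$ leaves $H_2$ no slack and leaves $H_1$ less than one vertical edge of slack.

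The second load-bearing step is also only asserted: that column $j$'s route is \emph{forced} to take a horizontal step in every cell. If lane $b$ can pass straight through cell $(i,j)$, the straight route is shorter, so no length or depth accounting will make it deviate. A detour that returns to its own column needs a second horizontal edge that $H_1$ has not dug. The paper realizes value $b$ of column $j$ as a strip $C_{j,b}$ of $k+1$ grid columns, entered at its first column and exited at its last. Hence a minimum-length route uses exactly $k$ rightward unit-length steps. Blocking edges of length $k+1$, exceeding the horizontal slack $k$, confine the route to one strip. They also force its $i$-th step into band $R_i$ on a row $a$ with $(a,b)\in S_{i,j}$ (\cref{claim:planarhardness-selectionpaths}). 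Only then does the digging bound place that step on $H_1$'s row $a_i$.
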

\begin{proof}
    From a \GT instance \(\left(S_{i,j}\right)_{(i,j) \in [k]^2}\), construct an \BLVshort-instance as follows.

    \paragraph*{\(G\), \(r\) and \(T\)}
    Let \(G^0\) be a grid graph with \(N := k \cdot n\) rows and \(M := k \cdot (k+1) \cdot n\) columns.
    For notational convenience, for \(i \in [N]\) and \(j \in [M]\), we denote the vertex in the \(i\)-th row and \(j\)-th column of \(G^0\) by \(v[i,j]\). For \(i \in [N]\) and \(j \in [M-1]\), we denote the edge \(v[i,j]v[i,j+1]\) by \(e[i,j]\).

    We define \(G\) by adding the following vertices to \(G^0\): 
    \begin{itemize}
        \item A vertex \(\rcons\) which we call \emph{row consistency root} and make adjacent to all \(v[i,1]\) with \(i \in [N]\) and to a new terminal \(p(\rcons)\) as pendant.
        \item A vertex \(\rselect\) which we call \emph{selection root} and make adjacent to all \(v[1,(j-1)(k+1) + 1]\) with \(j \in [k\cdot n]\) and to a new terminal \(p(\rselect)\) as pendant.
        \item The root vertex \(r\) of the \BLVshort-instance which we make adjacent to \(\rcons\) and \(\rselect\).
        \item For each \(i \in [k]\),
        \begin{itemize}
            \item a vertex \(t_i\) called a \emph{row consistency preterminal} (for value of row \(i\)) and connect it to \(v[i',M]\) for all \(i' \in \{(i-1) \cdot n + 1, \dotsc, i\cdot n\}\) and to a new terminal \(p(t_i)\) as pendant.
            \item a vertex \(t'_i\) called a \emph{selection preterminal} (for value of column \(i\)) and make adjacent to \(v[N,(i-1)n(k+1) + i'(k+1)]\) for all \(i' \in [n]\) and to a new terminal \(p(t'_i)\) as pendant.
        \end{itemize}
    \end{itemize}
    Up to the definition of lengths,  digging costs and demands this concludes the description of all parts of the input pertaining to \(G\). 
    By construction \(G\) is planar; an embedding is indicated in \Cref{fig:blvreduction-G}.

    \begin{figure}[!ht]
        \centering
        \input{figs/planarhardnessconstr.tex}
        \caption{Illustration of the graph \(G\) in the proof of \Cref{thm:bddepth-ETHlowerbound}.
        Terminals are shown as squares; only selected vertices of \(G^0\) and their indices are indicated explicitly.
        The hatched rectangle represents \(G^0\), with \(v[1,1]\) at the bottom left and \(v[N,M]\) at the top right corner.
        A more detailed view of a part of the construction using an explicit example is given in \Cref{fig:planarhardness-zoomin}.}
        \label{fig:blvreduction-G}
    \end{figure}

    \subparagraph*{\(t\), \(d\) and \(\alpha\).}
    We fix \(t=2\), i.e.\ a hypothetical solution for the \BLVshort-instance under construction consists of two trees \(H_1\) and \(H_2\). Consider the following partition \(\mathcal{P}\) of terminals into two parts: 1) \(p(\rcons)\) along with all row consistency terminals, and 2) \(p(\rselect)\) along with all selection preterminals; more formally, \(\mathcal{P}=\{\{p(\rcons),p(t_1), \dotsc, p(t_k)\},\{p(\rselect),p(t'_1), \dotsc, p(t'_k)\}\}\).
    By \Cref{lemma:UniqueTerminalPartitioning}, there exists a function \(f: T \to [|U|^4]\) such that \(\mathcal{P}\) is the unique partition of terminals into at most two parts for which the summation of \(f\)-values in each part is at most \(\sum\limits_{u \in T}\frac{f(u)}{2}\). Since the demand values are integers, by setting the demand function \(d=f\) and \(\alpha = \lfloor \sum\limits_{u \in T}\frac{d(u)}{2} \rfloor\), the aim is to find two trees \(H_1,H_2\) where \(H_1\) contains the first part and \(H_2\) contains the second part of the terminals.

    \paragraph*{Reduction overview}
    Before fixing the parts of the \BLVshort-instance that still need to be specified, we provide some intuition about what we will try to achieve, along with some useful notation for specific parts of the construction.\\
    The lengths of the edges, in particular between \(r\) and \(\rcons\) and \(r\) and \(\rselect\), and \(\beta\) will be chosen in such a way that each terminal-root-path in \(H_1\) passes through \(\rcons\) and each terminal-root-path in \(H_2\) passes through \(\rselect\).\\
    \(\lambda\) will be set in such a way that these paths in \(H_2\) will have to be of the minimum possible length.
    This will constrain each \(t'_j\)-\(\rselect\)-path to be contained in exactly one of the \(n\) subgrids (see \Cref{fig:blvreduction-G}) of the form
    \[C_{j,j''}:= G^0[\{v[i,j'] \mid i \in [N] \land \exists z \in [k+1] \ j'= (j-1)(k+1)n + (j''-1)(k + 1) + z\}]\]
    for some \(j'' \in [n]\) and contain precisely \(k\) horizontal edges.
    This allows us to encode a choice of the second entry for the \(j\)-th column of the \GT instance through \(j''\) such that the \(\rselect\)-\(t'_j\)-path in \(H_2\) is contained in \(C_{j,j''}\).
    Similarly, we will be able to assume that each \(t_i\)-\(\rcons\)-path traverses exactly one of the \(n\) rows in the subgrid \(R_i := G^0[\{v_{i',j} \mid i' \in \{(i-1)n + 1, \dotsc, i \cdot n, j \in [M]\}]\).
    This allows us to encode a choice of the first entry of the pairs selected for the \(i\)-th row of the \GT instance through the index of the row whose vertices are on the \(\rcons\)-\(t_i\)-path in \(H_1\).\\
    We need to ensure that the implied encoding of a pair \((a,b)\) for each cell \((i,j) \in [k]^2\) satisfies that \((a,b) \in S_{i,j}\).
    For this we set the digging costs in a way that each \(t_i\)-\(\rcons\)-path in \(H_1\) shares a (by the assumed structure of paths in \(H_1\) necessarily horizontal) edge with each \(t'_j\)-\(\rselect\)-path in \(H_2\) and we set the lengths of a horizontal edge prohibitively large for the depth requirements of \(H_2\) whenever it is between the \(j'\) and \((j'+ 1)\)-th column in \(C_{j,j''}\), on the \(i'\)-th row in \(R_i\), and \((i',j'') \notin S_{i,j}\) -- thereby disallowing pairs that are not in the appropriate set \(S_{i,j}\) -- or \(j' \neq i\) -- thereby ensuring that we check the compatibility of the selected second entry with the selected row for each \(i \in [k]\).
    We now continue with the formal construction of the \BLVshort-instance to implement the described intended behavior.

    \subparagraph*{\(\ell\), \(\dig\), \(\lambda\) and \(\beta\)}
    To specify the edge lengths, we distinguish some horizontal edges.
    In particular, we will set the lengths of some horizontal edges to be higher (\(\blocklength\)) than that of others (\(1\)) and refer to the longer edges (see \cref{fig:BLVStructureInstance}) as \emph{blocking}:
    \begin{itemize}
        \item For \((j,j'') \in ([k] \times [n]) \setminus \{(k,n)\}\), we call horizontal edges that connect \(C_{j,j''}\) to \(C_{j,j''+1}\) if \(j''<n\) and to \(C_{j+1,1}\) otherwise \emph{column inconsistency blocking}.
        I.e.\ the set of column inconsistency blocking edges is given by \[\{e \in e[i,(j-1)n(k+1) + j''(k+1)] \mid i \in [N], j \in [k], j'' \in [n]\}.\]
        \item For \(j,j' \in [k]\), and \(j'' \in [n]\), we call edges between vertices in the \(j'\)-th and \((j'+1)\)-th column in \(C_{j,j''}\) \emph{invalid selection blocking} if they are not in \(R_{j'}\) or they are in the \(i'\)-th row of \(R_{j'}\) for some \(i' \in [n]\) for which \((i',j'') \notin S_{j',j}\).
        I.e.\ the set of invalid selection blocking edges is given by
    \begin{align*}\{e = e^{C_{j,j''}}[i^*,j'] \mid & j,j'\in [k], j'' \in [n], i^* \in [N] \land \\
    & \exists i \in [k]\exists i'\in [n] \exists j^* \in [M] \ e = e^{R_i}[i',j^*] \land ((i',j'') \notin S_{i,j} \lor j' \neq i)\}.\end{align*}
    \end{itemize}
    Horizontal edges of \(G^0\) that are not blocking, i.e.\ edges between vertices in the \(j'\)-th and \((j'+1)\)-th column in \(C_{j,j''}\) for some \(j,j' \in [k]\), and \(j'' \in [n]\) for which \(\exists i'\in [n] \exists j^* \in [M] \ e = e^{R_{j'}}[i',j^*] \land (i',j'') \in S_{j',j}\), receive length \(1\). There are at most~$kn$ such edges per row.
    \begin{figure}[!ht]
        \centering
        \begin{subfigure}[b]{0.95\textwidth}
            \centering
            \input{figs/BLV_Instance}
            \subcaption{The \BLVshort-instance corresponding to a \GT instance with \(n = 3\), \(k = 2\), \(S_{1,1}=\{(2,1),(3,2)\}\), \(S_{1,2}=\{(2,3),(3,1)\}\), \(S_{2,1}=\{(1,1)\}\), and \(S_{2,2}=\{(1,3),(2,2)\}\). Equal colors and edge styles indicate equal lengths. Dotted and dashed edges represent column inconsistency and invalid selections and are too long to appear in an \(\rselect\)-\(t'_1\)-path in \(H_2\). The purple edges are the only length-\(1\) edges. These are the only horizontal edges that can appear in such a path.}
            \label{fig:BLVStructureInstance}
        \end{subfigure}
        \hfill
        \begin{subfigure}[b]{0.95\textwidth}
            \centering
            \input{figs/BLVSolutionInstance}
            \subcaption{A solution for the instance in (a). Red and blue denote the two trees; purple edges are shared and indicate the selected cell pairs \(s_{1,1}=(2,1)\), \(s_{1,2}=(2,3)\), \(s_{2,1}=(1,1)\), and \(s_{2,2}=(1,3)\).}
            \label{fig:BLVSolutionInstance}
        \end{subfigure}
        \hfill
        \begin{subfigure}[b]{0.95\textwidth}
            \centering
            \input{figs/ULV_Solution_In_BLV_Structure}
            \subcaption{Illustrating why the construction fails for \ULVshort when \(\lambda=+\infty\). The red and blue trees share purple edges and achieve lower cost than the solution in~\cref{fig:BLVSolutionInstance}.}
            \label{fig:ULVinBLVStructure}
        \end{subfigure}
        \caption{}
    \end{figure}
    See \Cref{fig:planarhardness-zoomin} for a specific example of blocking edges and an indication of their intended role.
    \begin{figure}[!ht]
        \centering
        \input{figs/planarhardnesspartexample}
        \caption{Example of \(G[\{\rselect,t'_1\} \cup V(C_{1,1}) \cup V(C_{1,2}) \cup V(C_{1,3})]\) for \(n = 3\), \(k = 2\), \(S_{1,1} = \{(1,1),(2,3)\}\), \(S_{2,1} = \{(1,1),(2,1)\}\), i.e.\ the part of \(G\) encoding a selection of pairs in the first column of the \GT instance.
        Equal colors and drawing styles of edges indicate equal lengths.
        Column inconsistency and invalid selection blocking edges are dotted and dashed, respectively.
        Their lengths will be too high to be included in an \(\rselect\)-\(t'_1\)-path in \(H_2\).
        The bold purple edges are the only horizontal edges that can be included in such a path.
        We can see that in this example this leaves us with two possible \(\rselect\)-\(t'_1\)-paths; one corresponding to selecting \((1,1) \in S_{1,1}\) and \((1,1) \in S_{2,1}\) and the other corresponding to  \((1,1) \in S_{1,1}\) and \((2,1) \in S_{2,1}\).
        \label{fig:planarhardness-zoomin}}
    \end{figure}

    The edges with precisely one endpoint in \(V(G^0)\) receive length \(X := \sum_{e \in E(G^0)} 2\ell(e) + \dig(e) + 1\) each. (While we have not specified all lengths and the digging costs of edges in \(E(G^0)\) yet, they will not depend on \(X\).)
    The edges between \(r\) and \(\rcons\) and between \(r\) and \(\rselect\) receive length \(X^*\) each, where \(X^*\) denotes the sum of the length and digging costs of all edges other than \(r\rselect\), \(r\rcons\), and edges incident to terminals. 
    The edges \(\rcons p(\rcons)\) and \(\rselect p(\rselect)\) receive length \(\lambda - X^*\), with~$\lambda$ as defined below. For \(i \in [k]\), the edge~\(t'_ip(t'_i)\) receives length~$2$ and the edge~\(t_ip(t_i)\) receives length~\(\lambda - X^* -2X - (M-1)(\blocklength)\). 
    The length of all vertical edges is set to \(\vertlength\).
    
    The digging cost of horizontal edges is \(\hdigcost\), and the digging cost of all other edges is~\(0\).
    Finally, we set \(\lambda = X^* + 2X + 2+(N - 1)(\vertlength) + k\). 
    This will constrain each root-terminal path in \(H_2\) to be of minimum length.
    The total cost threshold \(\beta\) is the sum of \(k\)-times this minimum length minus \(X^*\) (one for each path to a selection terminal), \(k\)-times \(\lambda\) minus \(X^*\) (one for each path to a row consistency terminal), two times \(\lambda\) to connect \(p(\rcons)\) via $\rcons$ and \(p(\rselect)\) via $\rselect$, and lastly the digging cost necessarily incurred by \(k\) horizontal rows of \(G^0\).
    An alternative way of expressing \(\beta\) is as the sum of:
    \begin{itemize}
        \item \(2\lambda\) for the cumulative length of \(r\rselect\),\(\rselect p(\rselect)\), \(r\rcons\) and \(\rcons p(\rcons)\).
        \item \(k(\lambda - X^*)\) for the paths to each row consistency terminal without their shared first edge.
        \item \(2kX + 2k\) corresponding to the cumulative length of the edges between each selection preterminal and its terminal pendant, one length-$X$ edge between each selection preterminal and \(G^0\), an one length-$X$ edge between \(\rselect\) and \(G^0\) for each selection preterminal.
        \item \(k \cdot ((N-1) (\vertlength) + k)\) corresponding to the minimum length of \(k\) edge-disjoint paths that each traverse all rows of \(G^0\) while each using exactly \(k\) horizontal edges of length one. (Note that summing this item with the previous gives~$k(\lambda-X^*)$.)
        \item \(k \cdot (M-1)(\hdigcost)\) corresponding to the digging cost of one horizontal edge per pair of adjacent columns in \(G^0\) per consistency preterminal.
    \end{itemize} 
    This completes the \BLVshort-instance.

    We will show that \(\left(S_{i,j}\right)_{(i,j) \in [k]^2}\) and \(\mathcal{I} := (G,\ell,\dig,r,T,d,\alpha,\beta,\lambda,t)\) are equivalent.
    
    \begin{claim}
    \label{claim:planarhardness-forward}
    If \(\left(S_{i,j}\right)_{(i,j) \in [k]^2}\) is a yes-instance then \(\mathcal{I}\) is a yes-instance.
    \end{claim}
    \begin{claimproof}
    Fix a solution \(((a_i,b_j))_{(i,j) \in [k]^2}\) for \(\left(S_{i,j}\right)_{(i,j) \in [k]^2}\).

    We will show that \(H_1\) and \(H_2\) defined as follows are a solution for \(\mathcal{I}\).
        Let \(H_1\) be the tree induced on the endpoints of edges \(r\rcons\), \(\rcons p(\rcons)\), for each \(i \in [k]\) the edges in the \(a_i\)-th row in \(R_i\), and \(t_i p(t_i)\).
        Let \(H_2\) be the tree induced on the endpoints of edges \(r\rselect\), \(\rselect p(\rselect)\), for each \(j \in [k]\) the edges on the unique path in \(C_{j,b_j}\) from a neighbor of~$\rselect$ to a neighbor of~$t'_j$ whose only horizontal edges are those that go from column \(i\) to \(i+1\) in \(C_{j,b_j}\) and are on the \(a_i\)-th row of \(R_i\), and \(t'_j p(t'_j)\).
    
        It is straightforward to verify that \(H_1\) and \(H_2\) are indeed trees that cover all of \(T\).
        Moreover, by construction each of \(H_1\) and \(H_2\) contains terminals with demand precisely \(\alpha\).
        
        When considering the depth of \(H_1\), the root-distance of \(p(\rcons)\) is precisely \(\lambda\), and for each \(i \in [k]\), the \(r\)-\(p(t_i)\)-path has length at most \(\lambda - (M-1)(\blocklength) +(M-1-k)(\blocklength)+k\), because the \(a_i\)-th row of \(R_i\) contains at least \(k\) edges which are of length \(1\), namely those that are between the \(i\) and \((i+1)\)-th column of \(C_{j,b_j}\) for each \(j \in [k]\).
        This is smaller than \(\lambda\).

        When considering the depth of \(H_2\), the root-distance of \(p(\rselect)\) is precisely \(\lambda\), and for each \(j \in [k]\) the \(r\)-\(p(t'_j)\)-path has length \(2 + 2X + X^* + (N-1)(\vertlength) +k\), because the selection of the horizontal edges on this path is such that they are each of length \(1\).
        This is what we chose \(\lambda\) as.

        Finally, one may verify that the cost of \(H_1\) and \(H_2\) does not exceed \(\beta\) by using the previous length bounds of the paths which only intersect in \(r\rcons\), \(\rcons p(\rcons)\), \(r\rselect\) and \(\rselect p(\rselect)\) and the fact that all horizontal edges are contained in \(H_1\) and we can hence bound the overall digging costs by \(k \cdot (N-1) (\hdigcost)\).
    \end{claimproof}
    
    For the backward direction, assume that \(\{H_1,H_2\}\) is a solution for \(\mathcal{I}\).
    To derive a solution for \GT, it is useful to analyze the behavior of \(H_1\) and \(H_2\).
    As announced, by our construction, paths to terminals all use \(\rcons\) and \(\rselect\) respectively.
    
    \begin{claim}
    \label{claim:treeroots}
        Every path from \(r\) to a terminal in \(H_1\) contains \(r\rcons\), and every path from \(r\) to a terminal in \(H_2\) contains \(r\rselect\).
    \end{claim}
    \begin{claimproof}
    Assume for contradiction that this is not the case.
    Then, because \(\{r\rcons,r\rselect\}\) separates \(r\) from all terminals, this means that \(r\rcons \in E(H_2)\) or \(r\rselect \in E(H_1)\). Note that both edges have length~$X^*$.
    
    By the choice of \(\lambda\) and \(\ell\), \(r\rcons \in E(H_1)\) because \(\lambda\) is equal to the length of the unique shortest path between \(r\) and \(p(\rcons)\) and this path contains \(r\rcons\).
    Similarly, by the choice of \(\lambda\) and \(\ell\), \(r\rselect \in E(H_2)\) because \(\lambda\) is equal to the length of the unique shortest path between \(r\) and \(p(\rselect)\) and this path contains \(r\rselect\). Hence from the edges incident to~$r$, at least three edges of length~$X^*$ contribute to the cost of solution~$\{H_1, H_2\}$. But the definition of \(X^*\) implies \(3X^* > \beta\), hence the cost of solution \(\{H_1,H_2\}\) exceeds~$\beta$; a contradiction.
    \end{claimproof}
    By construction, for each \(j \in [k]\), the path from \(\rselect\) to \(p(t'_j)\) in \(H_2\) contains \(t'_j\).
    The implied paths from \(\rselect\) to \(t'_j\) in \(H_2\) turn out to have a very specific structure.
         \begin{claim}
    \label{claim:planarhardness-selectionpaths}
        For each \(j \in [k]\) the interior of the path in \(H_2\) from \(\rselect\) to \(t'_j\) is contained in \(C_{j,j''}\) for some \(j'' \in [n]\) and consists of exactly \(N-1\) vertical and exactly \(k\) horizontal edges, of which the latter have length \(1\).
    \end{claim}
    \begin{claimproof}
        Consider an arbitrary \(j \in [k]\).
        By \cref{claim:treeroots}, the path from~$r$ to~$p(t'_j)$ in~$H_2$ contains the edge~$r \rselect$ of length~$X^*$, a length-$X$ edge into~$G^0$, and a length-$X$ edge from~$G^0$ to~$t'_j$. Our choice of~$X^*, X$ and the depth threshold~$\lambda$ then ensure that no further length-$X$ edge can be included in the path.
        Hence, the interior \(P\) of the \(\rselect\)-\(t'_j\)-path in \(H_2\) has to be contained in \(G^0\). 
        Such a path necessarily uses at least \(N-1\) vertical edges.
        Let us calculate an upper bound on the length of \(P\) implied by \(\lambda\).
        In particular, its length can be at most \(\lambda - \ell(p(t'_j)t'_j) - \ell(r\rselect) - 2X = (N-1) (\vertlength) + k\).
        This means the total length of all horizontal edges in \(P\) can be at most \(k\).
        In particular, no horizontal edge with length higher than \(k\) can be contained in \(P\) which restricts \(P\) to be entirely in \(C_{j,j''}\) for some \(j'' \in [n]\) by the length choice for column inconsistency blocking edges.
        Within the interior of any \(\rselect\)-\(t'_j\)-path in \(C_{j,j''}\) there must be at least \(k\) horizontal edges by the construction of \(G\): edges from~$\rselect$ to~$G^0$ arrive in~$G^0$ at offset~$+1$, while edges from~$G^0$ to~$t'_j$ exit~$G^0$ from offset~$+(k+1)$. Each of these at least~$k$ horizontal edges has length at least \(1\) (because all edges in our construction have length at least \(1\)), which means that the length~$(N-1) (\vertlength) + k$ of~$P$ can only be attained by a path~$P$ that has exactly \(k\) horizontal and exactly \(N-1\) vertical edges.
    \end{claimproof}

    Similarly, for each \(i \in [k]\), the path from \(\rcons\) to \(p(t_i)\) in \(H_1\) contains \(t_i\).
    The implied paths from \(\rcons\) to \(t_i\) in \(H_1\) can also be shown to have a specific structure.
    \begin{claim}
        \label{claim:planarhardness-consistencypaths}
        For each \(i \in [k]\), the interior of the \(\rcons\)-\(t_i\)-path in \(H_1\) is one row of \(R_i\).
    \end{claim}
    \begin{claimproof}
        We have argued in \cref{claim:treeroots} that \(r\rcons\) is necessarily included in \(H_1\).
        Further, the \(\rcons\)-\(t_i\)-path in \(H_1\) necessarily includes an edge of length~$X$ from \(t_i\) to~$V(G^0)$ and an edge of length~$X$ from \(\rcons\) to~\(V(G^0)\). 
        This means that any further length-\(X\) edge that could potentially be used to leave and reenter \(G^0\),
        is too long to be included in each \(r\)-\(p(t_i)\)-path in \(H_1\).
        Hence the interior \(P\) of each \(\rcons\)-\(t_i\)-path in \(H_1\) has to be contained in \(G^0\).
        
        Let us calculate an upper bound on the length of \(P\) implied by \(\lambda\).
        In particular, its length can be at most \(\lambda - \ell(p(t_j)t_j) - \ell(r\rcons) - 2X = (M-1) (\blocklength)\).
        
       \(P\) must contain \(kn-1\) column inconsistency blocking edges which means that the other edges of \(P\) have cumulative length at most \((M-1 - (kn-1)) (\blocklength) = k^2n(\blocklength)\).
       This is strictly smaller than \(\vertlength\) which means that \(P\) cannot contain a vertical edge and hence must be a row of \(G^0\).
    \end{claimproof}

    Our choice of \(\beta\) and \(\dig\) prohibits horizontal edges that are not in \(H_1\) to be dug for \(H_2\).
         \begin{claim}    
        \label{claim:planarhardness-selectionhedges}
        Every horizontal edge in \(H_2\) is also in \(H_1\).
    \end{claim}
    \begin{claimproof}
        We consider the contribution of~$\ell(H_1)$ and~$\ell(H_2)$ to the cost of solution~$\{H_1, H_2\}$, using the structure of~$H_1, H_2$ inferred in the previous claims. 

        By \Cref{claim:planarhardness-consistencypaths}, for each row consistency terminal~$j \in [k]$ the path from~$\rcons$ to~$p(t_j)$ in~$H_1$ contains an entire row of~$G^0$. As~$t_j$ is only adjacent to the rows of~$R_j$, these paths and the length-$X$ edges connecting them to~$\rcons$ and~$t_j$ are distinct for all~$j \in [k]$. Let us give a lower bound on the total length of any row of~$G^0$, as follows. There are~$M-1$ edges in a row. Of these edges, all but~$kn$ are \emph{blocking} edges of length~$\blocklength$ and the remaining edges have length at least~$1$. Hence the length of any row is at least~$(M-1-kn)(\blocklength)+kn$. We use this information to lower-bound~$\ell(H_1) + \ell(H_2)$. 
        
        By \Cref{claim:treeroots}, \Cref{claim:planarhardness-selectionpaths}, \Cref{claim:planarhardness-consistencypaths} and the definition of \(\ell\):
        \begin{itemize}
            \item the total length~$\ell(H_1)$ of edges in~$H_1$ is at least \(\lambda + k(2X+\lambda-X^*-2X - (M-1)(\blocklength) + (M - 1-nk)(\blocklength) + nk) = \lambda + k(\lambda - X^*-nk (\blocklength) + nk)=\lambda+k(\lambda-X^*-nk^2)\).
            \item the total length~$\ell(H_2)$ of edges in~$H_2$ is at least \(\lambda + k(\lambda - X^*)\).
        \end{itemize}
        Further, \(\dig(H_1)\) is already \(k \cdot (M - 1) (\hdigcost)\).
        Compared to the total cost threshold~$\beta$, this leaves at most \(nk^3\) of slack for \(\dig(E(H_2) \setminus E(H_1))\). This is smaller than the digging cost of any horizontal edge. Hence every horizontal edge in~$H_2$ is also in~$H_1$.
    \end{claimproof}

    We can use the above analysis of arbitrary solutions for \(\mathcal{I}\) to show the reverse implication to \Cref{claim:planarhardness-forward}.
    \begin{claim}
    If \(\mathcal{I}\) is a yes-instance then \(\left(S_{i,j}\right)_{(i,j) \in [k]^2}\) is a yes-instance.
    \end{claim}
    \begin{claimproof}
    Let \(\{H_1,H_2\}\) be a solution set for \(\mathcal{I}\).
    We show that selecting for each \((i,j) \in [k]^2\), the pair \((a,b)\) such that the \(i\)-th horizontal edge of the \(\rselect\)-\(t'_j\)-path in \(H_2\) is in the \(a\)-th row of \(R_i\) and in \(C_{j,b}\), yields a solution to the \GT instance from which we started the construction.
        
        By \Cref{claim:planarhardness-selectionpaths}, the horizontal edges in \(\rselect\)-\(t'_j\)-paths in \(H_2\) have length \(1\).
        Fix \(j \in [k]\) and let \(P\) be the \(\rselect\)-\(t'_j\)-path in \(H_2\).
        Because of which horizontal edges receive length \(1\), all horizontal edges of \(P\) are contained in pairwise distinct \(R_i\) meaning that we indeed select a pair for each \((i,j) \in [k]^2\).
        Also by the definition of \(\ell\), the \(i\)-th horizontal edge of \(P\) (i.e.\ the one between the \(i\)-th and \((i+1)\)-th column of \(C_{j,b}\)) having length one implies that it is in the \(b\)-th row in \(R_i\), and \((a,b) \in S_{i,j}\).
        
        Moreover, as by \Cref{claim:planarhardness-selectionpaths}, \(P\) is entirely contained in \(C_{j,b}\), this choice implies that all pairs selected for cells in column \(j\) have the same second entry.
        Lastly, horizontal edges of \(H_2\) have to be contained in \(H_1\) by \Cref{claim:planarhardness-selectionhedges}.
        By \Cref{claim:planarhardness-consistencypaths}, there is precisely one row of each \(R_i\) contained in \(H_1\), implying that all selected pairs in the same row share their first entry.
    \end{claimproof}

    Overall, we have shown that a \GT-solution for \((S_{i,j})_{(i,j) \in [k]^2}\) implies a \BLVshort-solution for \(\mathcal{I}\).
    This concludes the proof of correctness of our reduction.
    Notice that the number of terminals in the \BLVshort-instance is linear in \(k\) from the \GT instance, that \(t=2\) is constant, and that~$|V(G)|, \lambda \in n^{\Oh(1)}$. 
    Hence if there were \(f,f'\) such that \BLVshort can be solved in time \(f(|T|,t)\cdot (|V(G)| \cdot \lambda^{f'(t)})^{o(|T|)}\) for any instance \(\mathcal{I}\) then we could solve the original \GT instance in time \(\tilde{f}(k) \cdot n^{o(k)}\), contradicting the ETH.
\end{proof}

%% file: figs/planarhardnessconstr.tex
\begin{tikzpicture}[scale=0.85, every node/.style={scale=0.85}]
 \tikzset{terminal/.style={fill=gray!30, draw=gray!50,  thick,minimum size=0pt, inner sep=2pt}}
        \draw[blue,draw opacity=.5,,pattern={Lines[angle=30,yshift=-5pt],opacity=.2},pattern color = blue!10] (0,0) rectangle ++(3.75,2.5);
		\fill[blue,draw opacity=.5,pattern={Lines[angle=30,yshift=0],opacity=.2},pattern color = blue!10] (3.75,0) rectangle ++(3.75,2.5);
		\fill[blue,draw opacity=.5,pattern={Lines[angle=30,yshift=0],opacity=.2},pattern color = blue!10] (9,0) rectangle (12.75,2.5);
		\node at (.375,1.25) {\textcolor{blue}{\(C_{1,1}\)}};
		\draw[blue,draw opacity=.3] (0,0) rectangle ++(0.75,2.5);
		\node at (1.125,1.25) {\textcolor{blue}{\(C_{1,2}\)}};
		\draw[blue,draw opacity=.3] (1.5,0) -- (1.5,2.5);
		\node at (2.25,1.25) {\textcolor{blue}{\dots}};
		\draw[blue,draw opacity=.3] (3,0) -- (3,2.5);
		\node at (3.375,1.25) {\textcolor{blue}{\(C_{1,n}\)}};
		
		%\draw[blue,draw opacity=.3] (1.5,0) -- (1.5,2.5);
		%\draw[blue,draw opacity=.3] (1.75,0) -- (1.75,2.5);
		%\draw[blue,draw opacity=.3] (2.25,0) -- (2.25,2.5);
		\node at (4.125,1.25) {\textcolor{blue}{\(C_{2,1}\)}};
		\draw[blue,draw opacity=.3] (4.5,0) -- (4.5,2.5);
		\node at (5.625,1.25) {\textcolor{blue}{\dots}};
		\node at (7.125,1.25) {\textcolor{blue}{\(C_{2,n}\)}};
		\draw[blue,draw opacity=.3] (6.75,0) -- (6.75,2.5);
		\draw[blue,draw opacity=.5] (7.5,0) -- (7.5,2.5);
		
		\draw [gray,decorate,decoration={brace,amplitude=5pt}] (9,0) -- (9,2.5) node[midway,xshift=-1.75em]{\(n \cdot k\)};
		\draw [gray,decorate,decoration={brace,amplitude=5pt}] (9,2.5) -- (9.75,2.5) node[midway,yshift=1.2em]{\(k + 1\)};
		
		%\node at (8.25,1.25) {\textcolor{blue}{\dots}};
		\draw[blue,draw opacity=.5] (9,0) -- (9,2.5);
		\node at (9.375,1.25) {\textcolor{blue}{\(C_{k,1}\)}};
		\draw[blue,draw opacity=.3] (9.75,0) -- (9.75,2.5);
		
		\node at (10.875,1.25) {\textcolor{blue}{\dots}};
		\draw[blue,draw opacity=.3] (12,0) -- (12,2.5);
		\node at (12.375,1.25) {\textcolor{blue}{\(C_{k,n}\)}};
		
		\draw[red,draw opacity=.5,,pattern={Lines[angle=120,yshift=-5pt],opacity=.2},pattern color = red!10] (0,0) rectangle ++(12.75,.5);
		\node at (6.375,0.25) {\textcolor{red}{\(R_{1}\)}};
		\fill[red,draw opacity=.5,,pattern={Lines[angle=120,yshift=0],opacity=.2},pattern color = red!10] (0,2) rectangle ++(12.75,.5);
		\draw[red,draw opacity=.5] (0,2) -- (12.75,2);
		\node at (6.375,1.3) {\textcolor{red}{\vdots}};
		\node at (6.375,2.25) {\textcolor{red}{\(R_{k}\)}};
		
		\node[vertex,label=left:\(r_{\text{cons}}\)] (rcons) at (-1,1.25) {};
		\node[vertex,label=below:\(r_{\text{select}}\)] (rselect) at (6.375,-1) {};
		\node[vertex,label=below left:\(r\)] (r) at (-1,-1) {};
		
		\node[terminal,above of = rcons] (prcons) {};
		\node[terminal,right of = rselect] (prselect) {};
		
		\draw (r) -- (rcons);
		\draw (r) -- (rselect);
		\draw (rcons) -- (prcons);
		\draw (rselect) -- (prselect);
		
		\fill[fill=gray,fill opacity=.2] (rcons) -- (0.1,0.05) -- (0.1,2.45) -- (rcons);
		\draw (rcons) -- (0.1,0.05);
		\draw (rcons) --(0.1,2.45);
		
		\node[vertex,label=right:\(t_1\)] (t1) at (13.75,0.25) {};
		\fill[fill=gray,fill opacity=.2] (t1) -- (12.65,0.05) -- (12.65,.45) -- (t1);
		\draw (t1) -- (12.65,0.05);
		\draw (t1) -- (12.65,0.45);
		\node[vertex,label=right:\(t_k\)] (tk) at (13.75,2.25) {};
		\fill[fill=gray,fill opacity=.2] (tk) -- (12.65,2.05) -- (12.65,2.45) -- (tk);
		\draw (tk) -- (12.65,2.05);
		\draw (tk) -- (12.65,2.45);
		\node[orange!60!black,vertex,minimum size=0.8pt,inner sep=0.8pt,label=above right:\textcolor{orange!60!black}{\(v[N,M]\)}] at (12.65,2.45) {};
		
		\node[terminal,below of = t1] (pt1) {};
		\node[terminal,below of = tk] (ptk) {};
		
		\draw (t1) --(pt1);
		\draw (tk) --(ptk);
		
		\draw (rselect) -- (0.1,0.05);
		\draw (rselect) -- (0.85,0.05);
		\draw (rselect) -- (3.1,0.05);
		\draw (rselect) -- (3.85,0.05);
		\draw (rselect) -- (6.85,0.05);
		\draw (rselect) -- (9.1,0.05);
		\draw (rselect) -- (12.1,0.05);
		
		\node[green!60!black,vertex,minimum size=0.8pt,inner sep=0.8pt,label={[fill=white,fill opacity=.8]above:\textcolor{green!60!black}{\(v[1,n (k + 1) + 1]\)}}] at (3.85,0.05) {};
		
		\node[vertex,label=above:\(t'_1\)] (t'1) at (2.25,3.5) {};
		\draw (t'1) -- (0.6,2.45);
		\draw (t'1) -- (1.4,2.45);
		\draw (t'1) -- (3.65,2.45);
		\node[vertex,label=above:\(t'_2\)] (t'2) at (5.625,3.5) {};
		\draw (t'2) -- (4.4,2.45);
		\draw (t'2) -- (7.4,2.45);
		
		\node[yellow!60!black,vertex,minimum size=0.8pt,inner sep=0.8pt,label={[fill=white,fill opacity =.8]below:\textcolor{yellow!60!black}{\(v[N,(n + 1) (k + 1)]\)}}] at (4.4,2.45) {};
		
		\node[vertex,label=above:\(t'_k\)] (t'k) at (10.875,3.5) {};
		\draw (t'k) -- (9.65,2.45);
		\draw (t'k) -- (12.65,2.45);
		
		\node[terminal,right of = t'1] (pt'1) {};
		\node[terminal,right of = t'2] (pt'2) {};
		\node[terminal,right of = t'k] (pt'k) {};
		\draw (t'1) --(pt'1);
		\draw (t'2) --(pt'2);
		\draw (t'k) --(pt'k);
        
		\draw (0,0) rectangle ++(12.75,2.5);
	\end{tikzpicture}
	

%% file: figs/BLV_Instance.tex
\begin{tikzpicture}[scale=0.7]
\tikzstyle{vertex}=[circle, fill=black, inner sep=1.5pt, scale=0.85]
\tikzset{terminal/.style={fill=gray!30, draw=gray!100,  thick,minimum size=0pt, inner sep=2pt}}

% Parameters
\def\k{2}
\def\n{3}
\pgfmathsetmacro{\rowsVal}{int(\k*\n)}
\pgfmathsetmacro{\colsVal}{int(\k*(\k+1)*\n)}
\def\rows{\rowsVal}
\def\cols{\colsVal}
\def\xspacing{0.7}
\def\yspacing{0.6}

% Bordering
% rows
\pgfmathsetmacro{\hrecwidth}{(\cols-1)*\xspacing+0.5}
\pgfmathsetmacro{\hrecheight}{(\n-1)*\yspacing+0.25+\yspacing/2}
\pgfmathsetmacro{\startYCord}{-6*\yspacing-0.25}
\pgfmathsetmacro{\startXCord}{\xspacing-0.25}
\pgfmathsetmacro{\endYCord}{\startYCord+\hrecheight}
\pgfmathsetmacro{\endXCord}{\startXCord+\hrecwidth}
\pgfmathsetmacro{\SecondYCord}{\endYCord+\hrecheight}
\draw[red,draw opacity=.2] (\startXCord,\startYCord) rectangle (\endXCord,\endYCord);
\draw[red,draw opacity=.2,,pattern={Lines[angle=120,yshift=-5pt],opacity=.2},pattern color = violet!15] (\startXCord,\endYCord) rectangle (\endXCord,\SecondYCord);

% columns
\pgfmathsetmacro{\vrecwidth}{(\n-1)*\xspacing+\xspacing}
\pgfmathsetmacro{\vrecheight}{(\rows-1)*\yspacing+0.5}
\pgfmathsetmacro{\endVXCord}{\startXCord+\vrecwidth-\xspacing/2+0.25}
\pgfmathsetmacro{\endVYCord}{\startYCord+\vrecheight}
\pgfmathsetmacro{\secondVXCord}{\endVXCord+\vrecwidth}
\pgfmathsetmacro{\thirdVXCord}{\secondVXCord+\vrecwidth}
\pgfmathsetmacro{\fourthVXCord}{\thirdVXCord+\vrecwidth}
\pgfmathsetmacro{\fifthVXCord}{\fourthVXCord+\vrecwidth}
\pgfmathsetmacro{\sixthVXCord}{\fifthVXCord+\vrecwidth-\xspacing/2+0.25}
\draw[blue,draw opacity=.2,,pattern={Lines[angle=30,yshift=-5pt],opacity=.25},pattern color = red!15] (\startXCord,\startYCord) rectangle (\endVXCord,\endVYCord);
\draw[blue,draw opacity=.2,,pattern={Lines[angle=30,yshift=-5pt],opacity=.25},pattern color = red!15] (\endVXCord,\startYCord) rectangle (\secondVXCord,\endVYCord);
\draw[blue,draw opacity=.2,,pattern={Lines[angle=30,yshift=-5pt],opacity=.25},pattern color = red!15] (\secondVXCord,\startYCord) rectangle (\thirdVXCord,\endVYCord);
\draw[blue,draw opacity=.2,,pattern={Lines[angle=30,yshift=-5pt],opacity=.25},pattern color = blue!15] (\thirdVXCord,\startYCord) rectangle (\fourthVXCord,\endVYCord);
\draw[blue,draw opacity=.2,,pattern={Lines[angle=30,yshift=-5pt],opacity=.25},pattern color = blue!15] (\fourthVXCord,\startYCord) rectangle (\fifthVXCord,\endVYCord);
\draw[blue,draw opacity=.2,,pattern={Lines[angle=30,yshift=-5pt],opacity=.25},pattern color = blue!15] (\fifthVXCord,\startYCord) rectangle (\sixthVXCord,\endVYCord);

% ----------------- grid -----------------
\foreach \i in {1,...,\rows}{
    \foreach \j in {1,...,\cols}{
        \pgfmathsetmacro{\x}{\j*\xspacing}
        \pgfmathsetmacro{\y}{-\i*\yspacing}
        \node[vertex] (T\i\j) at (\x,\y) {};
    }
}

% rcons
\pgfmathsetmacro{\rconsx}{-\xspacing/2}
\pgfmathsetmacro{\rconsy}{-(\rows/2)*\yspacing-\yspacing/2}
\pgfmathsetmacro{\rconstermy}{\rconsy+2*\yspacing}
\node[vertex,label=left:\(r_{\text{cons}}\)] (rcons) at (\rconsx,\rconsy) {};
\node[terminal] (rconsterm) at (\rconsx,\rconstermy) {};
\draw[orange] (rcons) -- (rconsterm);

% rselect
\pgfmathsetmacro{\rselectx}{(\cols/2)*\xspacing+\xspacing/2}
\pgfmathsetmacro{\rselecttermx}{\rselectx+2*\xspacing}
\pgfmathsetmacro{\rselecty}{-\rows*\yspacing-1.5*\yspacing}
\node[vertex,label=below:\(r_{\text{select}}\)] (rselect) at (\rselectx,\rselecty) {};
\node[terminal] (rselectterm) at (\rselecttermx,\rselecty) {};
\draw[orange] (rselect) -- (rselectterm);

% r
\node[vertex,label=below left:\(r\)] (r) at (\rconsx,\rselecty) {};
\draw[green!50!black] (r) -- (rselect);
\draw[green!50!black] (r) -- (rcons);

% right terminals
\foreach \i in {1,...,\rows}{
    \pgfmathsetmacro{\r}{int(mod(\rows-\i,\n))}
    \pgfmathsetmacro{\index}{int((\rows-\i)/\n+1)}
    \ifnum\r=0
        \pgfmathsetmacro{\ptermy}{-(\rows-\i+((\n-1)/2)+1)*\yspacing}
        \pgfmathsetmacro{\termy}{\ptermy-\yspacing}
        \pgfmathsetmacro{\termx}{(\cols+1)*\xspacing}
        \node[vertex,label=right:\(t_{\index}\)] (prter\index) at (\termx,\ptermy) {};
        \node[terminal] (rter\index) at (\termx,\termy) {};
        \draw[blue] (prter\index) -- (rter\index);
        \foreach \addi in {1,...,\n}{
            \pgfmathsetmacro{\vindex}{int(\rows-\i+\addi)}
            \ifnum\vindex=3
                \draw[cyan] (prter\index)-- (T\vindex\cols);
            \else
                \ifnum\vindex=5
                    \draw[cyan] (prter\index)-- (T\vindex\cols);
                \else
                    \draw[cyan] (prter\index)-- (T\vindex\cols);
                \fi
            \fi
        }
    \fi
}

%  top terminals
\foreach \j in {1,...,\cols}{
    \pgfmathsetmacro{\r}{int(mod(\j,(\k+1)*\n))}
    \pgfmathsetmacro{\index}{int(\j/((\k+1)*\n)+1)}
    \ifnum\r=1
        \pgfmathsetmacro{\ptermx}{(\j+((((\k+1)*\n)-1)/2)+1)*\xspacing}
        \pgfmathsetmacro{\termx}{\ptermx+2*\xspacing}
        \pgfmathsetmacro{\termy}{0}
        \node[vertex,label=above:\(t'_{\index}\)] (ptter\index) at (\ptermx,\termy) {};
        \node[terminal] (tter\index) at (\termx,\termy) {};
        \draw[brown] (ptter\index) -- (tter\index);
        \foreach \addi in {1,...,\n}{
            \pgfmathsetmacro{\vindex}{int((\index-1)*((\k+1)*\n)+\addi*(\k+1))}
            \ifnum\vindex=3
                \draw[cyan] (ptter\index)-- (T1\vindex);
            \else
                \ifnum\vindex=18
                    \draw[cyan] (ptter\index)-- (T1\vindex);
                \else
                    \draw[cyan] (ptter\index)-- (T1\vindex);
                \fi
            \fi
        }
    \fi
}

% horizontal edges
\pgfmathsetmacro{\colsLess}{int(\cols-1)}
\foreach \i in {1,...,\rows}{
    \foreach \j in {1,...,\colsLess}{
        \pgfmathtruncatemacro{\jn}{\j+1}
        % Check if this edge should be red!60!white
        \ifnum\i=6
            \pgfmathsetmacro{\r}{int(mod(\j,3))}
            \ifnum\r=0
                \draw[gray!100!black, dotted, thick] (T\i\j) -- (T\i\jn);
            \else
                    \draw[gray!60!black,dashed, thick] (T\i\j) -- (T\i\jn);
            \fi
        \else
            \ifnum\i=2
                \ifnum\j=14
                            \draw[violet!100!white, very thick] (T\i\j) -- (T\i\jn);
                \else
                    \pgfmathsetmacro{\r}{int(mod(\j,3))}
                    \ifnum\r=0
                        \draw[gray!100!black, dotted, thick] (T\i\j) -- (T\i\jn);
                    \else
                            \draw[gray!60!black,dashed, thick] (T\i\j) -- (T\i\jn);
                    \fi
                \fi
            \else
                \ifnum\i=1
                    \pgfmathsetmacro{\r}{int(mod(\j,3))}
                    \ifnum\r=0
                        \draw[gray!100!black, dotted, thick] (T\i\j) -- (T\i\jn);
                    \else
                            \draw[gray!60!black,dashed, thick] (T\i\j) -- (T\i\jn);
                    \fi
                \else
                    \ifnum\i=5
                        \ifnum\j=1
                            \draw[violet!100!white, very thick] (T\i\j) -- (T\i\jn);
                        \else
                            \ifnum\j=16
                                \draw[violet!100!white, very thick] (T\i\j) -- (T\i\jn);
                            \else
                                \pgfmathsetmacro{\r}{int(mod(\j,3))}
                                \ifnum\r=0
                                    \draw[gray!100!black, dotted, thick] (T\i\j) -- (T\i\jn);
                                \else
                                    \draw[gray!100!black,dashed, thick] (T\i\j) -- (T\i\jn);
                                \fi
                            \fi
                        \fi
                    \else   
                        \ifnum\i=4
                            \ifnum\j=4
                                \draw[violet!100!white, very thick] (T\i\j) -- (T\i\jn);
                            \else
                                \ifnum\j=10
                                    \draw[violet!100!white, very thick] (T\i\j) -- (T\i\jn);
                                \else
                                    \pgfmathsetmacro{\r}{int(mod(\j,3))}
                                    \ifnum\r=0
                                        \draw[gray!100!black, dotted, thick] (T\i\j) -- (T\i\jn);
                                    \else
                                        \draw[gray!60!black,dashed, thick] (T\i\j) -- (T\i\jn);
                                    \fi
                                \fi
                            \fi
                        \else
                            \ifnum\i=3
                                \ifnum\j=2
                                    \draw[violet!100!white, very thick] (T\i\j) -- (T\i\jn);
                                \else
                                    \ifnum\j=17
                                        \draw[violet!100!white, very thick] (T\i\j) -- (T\i\jn);
                                    \else
                                        \pgfmathsetmacro{\r}{int(mod(\j,3))}
                                        \ifnum\r=0
                                            \draw[gray!100!black, dotted, thick] (T\i\j) -- (T\i\jn);
                                        \else
                                            \draw[gray!60!black,dashed, thick] (T\i\j) -- (T\i\jn);
                                        \fi
                                    \fi
                                \fi
                            \else
                                \pgfmathsetmacro{\r}{int(mod(\j,3))}
                                \ifnum\r=0
                                    \draw[gray!100!black, dotted, thick] (T\i\j) -- (T\i\jn);
                                \else
                                    \draw[gray!100!black] (T\i\j) -- (T\i\jn);
                                \fi
                            \fi
                        \fi
                    \fi
                \fi
            \fi
        \fi
    }
}

% Vertical edges (all black)
\pgfmathsetmacro{\rowsLess}{int(\rows-1)}
\foreach \i in {1,...,\rowsLess}{
    \foreach \j in {1,...,\cols}{
        \pgfmathtruncatemacro{\inext}{int(\i+1)}
        \ifnum\j=1
            \ifnum\i=5
                \draw[gray!60!black] (T\i\j) -- (T\inext\j);
            \else
                \draw[gray!60!black] (T\i\j) -- (T\inext\j);
            \fi
        \else
            \ifnum\j=2
                \ifnum\i>2
                    \ifnum\i=5
                        \draw[gray!60!black] (T\i\j) -- (T\inext\j);
                    \else
                        \draw[gray!60!black] (T\i\j) -- (T\inext\j);
                    \fi
                \else
                    \draw[gray!60!black] (T\i\j) -- (T\inext\j);
                \fi
            \else
                \ifnum\j=3
                    \ifnum\i<3
                        \draw[gray!60!black] (T\i\j) -- (T\inext\j);
                    \else
                        \draw[gray!60!black] (T\i\j) -- (T\inext\j);
                    \fi
                \else
                    \ifnum\j=16
                        \ifnum\i=5
                            \draw[gray!60!black] (T\i\j) -- (T\inext\j);
                        \else
                            \draw[gray!60!black] (T\i\j) -- (T\inext\j);
                        \fi
                    \else
                        \ifnum\j=17
                            \ifnum\i>2
                                \ifnum\i=5
                                    \draw[gray!60!black] (T\i\j) -- (T\inext\j);
                                \else
                                    \draw[gray!60!black] (T\i\j) -- (T\inext\j);
                                \fi
                            \else
                                \draw[gray!60!black] (T\i\j) -- (T\inext\j);
                            \fi
                        \else
                            \ifnum\j=18
                                \ifnum\i<3
                                        \draw[gray!60!black] (T\i\j) -- (T\inext\j);
                                \else
                                    \draw[gray!60!black] (T\i\j) -- (T\inext\j);
                                \fi
    
                            \else
                                \draw[gray!60!black] (T\i\j) -- (T\inext\j);
                            \fi
                        \fi
                    \fi
                \fi
            \fi
        \fi
    }
}

% connection to rcons
\foreach \i in {1,...,\rows}{
    \ifnum\i=3
        \draw[cyan] (T\i1) -- (rcons);
    \else
        \ifnum\i=6
            \draw[cyan] (T\i1) -- (rcons);
        \else
            \draw[cyan] (T\i1) -- (rcons);
        \fi
    \fi
}

% connection to rselect
\foreach \j in {1,...,\cols}{
    \pgfmathtruncatemacro{\r}{int(mod(\j,\k+1))}
    \ifnum\r=1
        \ifnum\j=1
            \draw[cyan] (rselect) -- (T\rows\j);
        \else
            \ifnum\j=16
                \draw[cyan] (rselect) -- (T\rows\j);
            \else
                \draw[cyan] (rselect) -- (T\rows\j);
            \fi
        \fi
    \fi
}

\end{tikzpicture}

%% file: figs/BLVSolutionInstance.tex
\begin{tikzpicture}[scale=0.7]
\tikzstyle{vertex}=[circle, fill=black, inner sep=1.5pt, scale=0.85]
\tikzset{terminal/.style={fill=gray!30, draw=gray!50,  thick,minimum size=0pt, inner sep=2pt}}

% Parameters
\def\k{2}
\def\n{3}
\pgfmathsetmacro{\rowsVal}{int(\k*\n)}
\pgfmathsetmacro{\colsVal}{int(\k*(\k+1)*\n)}
\def\rows{\rowsVal}
\def\cols{\colsVal}
\def\xspacing{0.7}
\def\yspacing{0.6}

% Bordering
% rows
\pgfmathsetmacro{\hrecwidth}{(\cols-1)*\xspacing+0.5}
\pgfmathsetmacro{\hrecheight}{(\n-1)*\yspacing+0.25+\yspacing/2}
\pgfmathsetmacro{\startYCord}{-6*\yspacing-0.25}
\pgfmathsetmacro{\startXCord}{\xspacing-0.25}
\pgfmathsetmacro{\endYCord}{\startYCord+\hrecheight}
\pgfmathsetmacro{\endXCord}{\startXCord+\hrecwidth}
\pgfmathsetmacro{\SecondYCord}{\endYCord+\hrecheight}
\draw[red,draw opacity=.2] (\startXCord,\startYCord) rectangle (\endXCord,\endYCord);
\draw[red,draw opacity=.2,,pattern={Lines[angle=120,yshift=-5pt],opacity=.2},pattern color = violet!15] (\startXCord,\endYCord) rectangle (\endXCord,\SecondYCord);

% columns
\pgfmathsetmacro{\vrecwidth}{(\n-1)*\xspacing+\xspacing}
\pgfmathsetmacro{\vrecheight}{(\rows-1)*\yspacing+0.5}
\pgfmathsetmacro{\endVXCord}{\startXCord+\vrecwidth-\xspacing/2+0.25}
\pgfmathsetmacro{\endVYCord}{\startYCord+\vrecheight}
\pgfmathsetmacro{\secondVXCord}{\endVXCord+\vrecwidth}
\pgfmathsetmacro{\thirdVXCord}{\secondVXCord+\vrecwidth}
\pgfmathsetmacro{\fourthVXCord}{\thirdVXCord+\vrecwidth}
\pgfmathsetmacro{\fifthVXCord}{\fourthVXCord+\vrecwidth}
\pgfmathsetmacro{\sixthVXCord}{\fifthVXCord+\vrecwidth-\xspacing/2+0.25}
\draw[blue,draw opacity=.2,,pattern={Lines[angle=30,yshift=-5pt],opacity=.25},pattern color = red!15] (\startXCord,\startYCord) rectangle (\endVXCord,\endVYCord);
\draw[blue,draw opacity=.2,,pattern={Lines[angle=30,yshift=-5pt],opacity=.25},pattern color = red!15] (\endVXCord,\startYCord) rectangle (\secondVXCord,\endVYCord);
\draw[blue,draw opacity=.2,,pattern={Lines[angle=30,yshift=-5pt],opacity=.25},pattern color = red!15] (\secondVXCord,\startYCord) rectangle (\thirdVXCord,\endVYCord);
\draw[blue,draw opacity=.2,,pattern={Lines[angle=30,yshift=-5pt],opacity=.25},pattern color = blue!15] (\thirdVXCord,\startYCord) rectangle (\fourthVXCord,\endVYCord);
\draw[blue,draw opacity=.2,,pattern={Lines[angle=30,yshift=-5pt],opacity=.25},pattern color = blue!15] (\fourthVXCord,\startYCord) rectangle (\fifthVXCord,\endVYCord);
\draw[blue,draw opacity=.2,,pattern={Lines[angle=30,yshift=-5pt],opacity=.25},pattern color = blue!15] (\fifthVXCord,\startYCord) rectangle (\sixthVXCord,\endVYCord);

% ----------------- grid -----------------
\foreach \i in {1,...,\rows}{
    \foreach \j in {1,...,\cols}{
        \pgfmathsetmacro{\x}{\j*\xspacing}
        \pgfmathsetmacro{\y}{-\i*\yspacing}
        \node[vertex] (T\i\j) at (\x,\y) {};
    }
}

% rcons
\pgfmathsetmacro{\rconsx}{-\xspacing/2}
\pgfmathsetmacro{\rconsy}{-(\rows/2)*\yspacing-\yspacing/2}
\pgfmathsetmacro{\rconstermy}{\rconsy+2*\yspacing}
\node[vertex,label=left:\(r_{\text{cons}}\)] (rcons) at (\rconsx,\rconsy) {};
\node[terminal] (rconsterm) at (\rconsx,\rconstermy) {};
\draw[red!60!white,  very thick] (rcons) -- (rconsterm);

% rselect
\pgfmathsetmacro{\rselectx}{(\cols/2)*\xspacing+\xspacing/2}
\pgfmathsetmacro{\rselecttermx}{\rselectx+2*\xspacing}
\pgfmathsetmacro{\rselecty}{-\rows*\yspacing-1.5*\yspacing}
\node[vertex,label=below:\(r_{\text{select}}\)] (rselect) at (\rselectx,\rselecty) {};
\node[terminal] (rselectterm) at (\rselecttermx,\rselecty) {};
\draw[blue!60!white,  very thick] (rselect) -- (rselectterm);

% r
\node[vertex,label=below left:\(r\)] (r) at (\rconsx,\rselecty) {};
\draw[blue!60!white,  very thick] (r) -- (rselect);
\draw[red!60!white,  very thick] (r) -- (rcons);

% right terminals
\foreach \i in {1,...,\rows}{
    \pgfmathsetmacro{\r}{int(mod(\rows-\i,\n))}
    \pgfmathsetmacro{\index}{int((\rows-\i)/\n+1)}
    \ifnum\r=0
        \pgfmathsetmacro{\ptermy}{-(\rows-\i+((\n-1)/2)+1)*\yspacing}
        \pgfmathsetmacro{\termy}{\ptermy-\yspacing}
        \pgfmathsetmacro{\termx}{(\cols+1)*\xspacing}
        \node[vertex,label=right:\(t_{\index}\)] (prter\index) at (\termx,\ptermy) {};
        \node[terminal] (rter\index) at (\termx,\termy) {};
        \draw[red!60!white,  very thick] (prter\index) -- (rter\index);
        \foreach \addi in {1,...,\n}{
            \pgfmathsetmacro{\vindex}{int(\rows-\i+\addi)}
            \ifnum\vindex=3
                \draw[red!60!white,  very thick] (prter\index)-- (T\vindex\cols);
            \else
                \ifnum\vindex=5
                    \draw[red!60!white,  very thick] (prter\index)-- (T\vindex\cols);
                \else
                    \draw[gray] (prter\index)-- (T\vindex\cols);
                \fi
            \fi
        }
    \fi
}

%  top terminals
\foreach \j in {1,...,\cols}{
    \pgfmathsetmacro{\r}{int(mod(\j,(\k+1)*\n))}
    \pgfmathsetmacro{\index}{int(\j/((\k+1)*\n)+1)}
    \ifnum\r=1
        \pgfmathsetmacro{\ptermx}{(\j+((((\k+1)*\n)-1)/2)+1)*\xspacing}
        \pgfmathsetmacro{\termx}{\ptermx+2*\xspacing}
        \pgfmathsetmacro{\termy}{0}
        \node[vertex,label=above:\(t'_{\index}\)] (ptter\index) at (\ptermx,\termy) {};
        \node[terminal] (tter\index) at (\termx,\termy) {};
        \draw[blue!60!white,  very thick] (ptter\index) -- (tter\index);
        \foreach \addi in {1,...,\n}{
            \pgfmathsetmacro{\vindex}{int((\index-1)*((\k+1)*\n)+\addi*(\k+1))}
            \ifnum\vindex=3
                \draw[blue!60!white,  very thick] (ptter\index)-- (T1\vindex);
            \else
                \ifnum\vindex=18
                    \draw[blue!60!white,  very thick] (ptter\index)-- (T1\vindex);
                \else
                    \draw[gray] (ptter\index)-- (T1\vindex);
                \fi
            \fi
        }
    \fi
}

% horizontal edges
\pgfmathsetmacro{\colsLess}{int(\cols-1)}
\foreach \i in {1,...,\rows}{
    \foreach \j in {1,...,\colsLess}{
        \pgfmathtruncatemacro{\jn}{\j+1}
        % Check if this edge should be red!60!white
        \ifnum\i=5
            \ifnum\j=1
                \draw[violet!80!white, very thick] (T\i\j) -- (T\i\jn);
            \else
                \ifnum\j=16
                    \draw[violet!80!white, very thick] (T\i\j) -- (T\i\jn);
                \else
                    \pgfmathsetmacro{\r}{int(mod(\j,3))}
                    \ifnum\r=0
                        \draw[red!60!white,  very thick, dotted] (T\i\j) -- (T\i\jn);
                    \else
                        \draw[red!60!white,  very thick] (T\i\j) -- (T\i\jn);
                    \fi
                \fi
            \fi
        \else
            \ifnum\i=3
                \ifnum\j=2
                    \draw[violet!80!white, very thick] (T\i\j) -- (T\i\jn);
                \else
                    \ifnum\j=17
                        \draw[violet!80!white, very thick] (T\i\j) -- (T\i\jn);
                    \else
                        \pgfmathsetmacro{\r}{int(mod(\j,3))}
                        \ifnum\r=0
                            \draw[red!60!white,  very thick, dotted] (T\i\j) -- (T\i\jn);
                        \else
                            \draw[red!60!white,  very thick] (T\i\j) -- (T\i\jn);
                        \fi
                    \fi
                \fi
            \else
                \pgfmathsetmacro{\r}{int(mod(\j,3))}
                \ifnum\r=0
                    \draw[gray!50!black, dotted, thick] (T\i\j) -- (T\i\jn);
                \else
                    \draw[gray!50!black] (T\i\j) -- (T\i\jn);
                \fi
            \fi
        \fi
    }
}

% Vertical edges (all black)
\pgfmathsetmacro{\rowsLess}{int(\rows-1)}
\foreach \i in {1,...,\rowsLess}{
    \foreach \j in {1,...,\cols}{
        \pgfmathtruncatemacro{\inext}{int(\i+1)}
        \ifnum\j=1
            \ifnum\i=5
                \draw[blue!60!white, very thick] (T\i\j) -- (T\inext\j);
            \else
                \draw[gray] (T\i\j) -- (T\inext\j);
            \fi
        \else
            \ifnum\j=2
                \ifnum\i>2
                    \ifnum\i=5
                        \draw[gray] (T\i\j) -- (T\inext\j);
                    \else
                        \draw[blue!60!white, very thick] (T\i\j) -- (T\inext\j);
                    \fi
                \else
                    \draw[gray] (T\i\j) -- (T\inext\j);
                \fi
            \else
                \ifnum\j=3
                    \ifnum\i<3
                        \draw[blue!60!white, very thick] (T\i\j) -- (T\inext\j);
                    \else
                        \draw[gray] (T\i\j) -- (T\inext\j);
                    \fi
                \else
                    \ifnum\j=16
                        \ifnum\i=5
                            \draw[blue!60!white, very thick] (T\i\j) -- (T\inext\j);
                        \else
                            \draw[gray] (T\i\j) -- (T\inext\j);
                        \fi
                    \else
                        \ifnum\j=17
                            \ifnum\i>2
                                \ifnum\i=5
                                    \draw[gray] (T\i\j) -- (T\inext\j);
                                \else
                                    \draw[blue!60!white,  very thick] (T\i\j) -- (T\inext\j);
                                \fi
                            \else
                                \draw[gray] (T\i\j) -- (T\inext\j);
                            \fi
                        \else
                            \ifnum\j=18
                                \ifnum\i<3
                                        \draw[blue!60!white, very thick] (T\i\j) -- (T\inext\j);
                                \else
                                    \draw[gray] (T\i\j) -- (T\inext\j);
                                \fi
    
                            \else
                                \draw[gray] (T\i\j) -- (T\inext\j);
                            \fi
                        \fi
                    \fi
                \fi
            \fi
        \fi
    }
}

% connection to rcons
\foreach \i in {1,...,\rows}{
    \ifnum\i=3
        \draw[red!60!white,  very thick] (T\i1) -- (rcons);
    \else
        \ifnum\i=5
            \draw[red!60!white,  very thick] (T\i1) -- (rcons);
        \else
            \draw[gray] (T\i1) -- (rcons);
        \fi
    \fi
}

% connection to rselect
\foreach \j in {1,...,\cols}{
    \pgfmathtruncatemacro{\r}{int(mod(\j,\k+1))}
    \ifnum\r=1
        \ifnum\j=1
            \draw[blue!60!white,  very thick] (rselect) -- (T\rows\j);
        \else
            \ifnum\j=16
                \draw[blue!60!white,  very thick] (rselect) -- (T\rows\j);
            \else
                \draw[gray] (rselect) -- (T\rows\j);
            \fi
        \fi
    \fi
}

\end{tikzpicture}

%% file: figs/ULV_Solution_In_BLV_Structure.tex
\begin{tikzpicture}[scale=0.7]
\tikzstyle{vertex}=[circle, fill=black, inner sep=1.5pt, scale=0.85]
\tikzset{terminal/.style={fill=gray!30, draw=gray!50,  thick,minimum size=0pt, inner sep=2pt}}

% Parameters
\def\k{2}
\def\n{3}
\pgfmathsetmacro{\rowsVal}{int(\k*\n)}
\pgfmathsetmacro{\colsVal}{int(\k*(\k+1)*\n)}
\def\rows{\rowsVal}
\def\cols{\colsVal}
\def\xspacing{0.7}
\def\yspacing{0.6}

% Bordering
% rows
\pgfmathsetmacro{\hrecwidth}{(\cols-1)*\xspacing+0.5}
\pgfmathsetmacro{\hrecheight}{(\n-1)*\yspacing+0.25+\yspacing/2}
\pgfmathsetmacro{\startYCord}{-6*\yspacing-0.25}
\pgfmathsetmacro{\startXCord}{\xspacing-0.25}
\pgfmathsetmacro{\endYCord}{\startYCord+\hrecheight}
\pgfmathsetmacro{\endXCord}{\startXCord+\hrecwidth}
\pgfmathsetmacro{\SecondYCord}{\endYCord+\hrecheight}
\draw[red,draw opacity=.2] (\startXCord,\startYCord) rectangle (\endXCord,\endYCord);
\draw[red,draw opacity=.2,,pattern={Lines[angle=120,yshift=-5pt],opacity=.2},pattern color = violet!15] (\startXCord,\endYCord) rectangle (\endXCord,\SecondYCord);

% columns
\pgfmathsetmacro{\vrecwidth}{(\n-1)*\xspacing+\xspacing}
\pgfmathsetmacro{\vrecheight}{(\rows-1)*\yspacing+0.5}
\pgfmathsetmacro{\endVXCord}{\startXCord+\vrecwidth-\xspacing/2+0.25}
\pgfmathsetmacro{\endVYCord}{\startYCord+\vrecheight}
\pgfmathsetmacro{\secondVXCord}{\endVXCord+\vrecwidth}
\pgfmathsetmacro{\thirdVXCord}{\secondVXCord+\vrecwidth}
\pgfmathsetmacro{\fourthVXCord}{\thirdVXCord+\vrecwidth}
\pgfmathsetmacro{\fifthVXCord}{\fourthVXCord+\vrecwidth}
\pgfmathsetmacro{\sixthVXCord}{\fifthVXCord+\vrecwidth-\xspacing/2+0.25}
\draw[blue,draw opacity=.2,,pattern={Lines[angle=30,yshift=-5pt],opacity=.25},pattern color = red!15] (\startXCord,\startYCord) rectangle (\endVXCord,\endVYCord);
\draw[blue,draw opacity=.2,,pattern={Lines[angle=30,yshift=-5pt],opacity=.25},pattern color = red!15] (\endVXCord,\startYCord) rectangle (\secondVXCord,\endVYCord);
\draw[blue,draw opacity=.2,,pattern={Lines[angle=30,yshift=-5pt],opacity=.25},pattern color = red!15] (\secondVXCord,\startYCord) rectangle (\thirdVXCord,\endVYCord);
\draw[blue,draw opacity=.2,,pattern={Lines[angle=30,yshift=-5pt],opacity=.25},pattern color = blue!15] (\thirdVXCord,\startYCord) rectangle (\fourthVXCord,\endVYCord);
\draw[blue,draw opacity=.2,,pattern={Lines[angle=30,yshift=-5pt],opacity=.25},pattern color = blue!15] (\fourthVXCord,\startYCord) rectangle (\fifthVXCord,\endVYCord);
\draw[blue,draw opacity=.2,,pattern={Lines[angle=30,yshift=-5pt],opacity=.25},pattern color = blue!15] (\fifthVXCord,\startYCord) rectangle (\sixthVXCord,\endVYCord);

% ----------------- grid -----------------
\foreach \i in {1,...,\rows}{
    \foreach \j in {1,...,\cols}{
        \pgfmathsetmacro{\x}{\j*\xspacing}
        \pgfmathsetmacro{\y}{-\i*\yspacing}
        \node[vertex] (T\i\j) at (\x,\y) {};
    }
}

% rcons
\pgfmathsetmacro{\rconsy}{-(\rows/2)*\yspacing-\yspacing/2}
\pgfmathsetmacro{\rconstermy}{\rconsy+2*\yspacing}
\node[vertex,label=left:\(r_{\text{cons}}\)] (rcons) at (0,\rconsy) {};
\node[terminal] (rconsterm) at (0,\rconstermy) {};
\draw[red!60!white,  very thick] (rcons) -- (rconsterm);

% rselect
\pgfmathsetmacro{\rselectx}{(\cols/2)*\xspacing+\xspacing/2}
\pgfmathsetmacro{\rselecttermx}{\rselectx+2*\xspacing}
\pgfmathsetmacro{\rselecty}{-\rows*\yspacing-1.5*\yspacing}
\node[vertex,label=below:\(r_{\text{select}}\)] (rselect) at (\rselectx,\rselecty) {};
\node[terminal] (rselectterm) at (\rselecttermx,\rselecty) {};
\draw[blue!60!white,  very thick] (rselect) -- (rselectterm);

% r
\node[vertex,label=below left:\(r\)] (r) at (0,\rselecty) {};
\draw[blue!60!white,  very thick] (r) -- (rselect);
\draw[red!60!white,  very thick] (r) -- (rcons);

% right terminals
\foreach \i in {1,...,\rows}{
    \pgfmathsetmacro{\r}{int(mod(\rows-\i,\n))}
    \pgfmathsetmacro{\index}{int((\rows-\i)/\n+1)}
    \ifnum\r=0
        \pgfmathsetmacro{\ptermy}{-(\rows-\i+((\n-1)/2)+1)*\yspacing}
        \pgfmathsetmacro{\termy}{\ptermy-\yspacing}
        \pgfmathsetmacro{\termx}{(\cols+1)*\xspacing}
        \node[vertex,label=right:\(t_{\index}\)] (prter\index) at (\termx,\ptermy) {};
        \node[terminal] (rter\index) at (\termx,\termy) {};
        \draw[red!60!white,  very thick] (prter\index) -- (rter\index);
        \foreach \addi in {1,...,\n}{
            \pgfmathsetmacro{\vindex}{int(\rows-\i+\addi)}
            \ifnum\vindex=3
                \draw[red!60!white,  very thick] (prter\index)-- (T\vindex\cols);
            \else
                \ifnum\vindex=5
                    \draw[red!60!white,  very thick] (prter\index)-- (T\vindex\cols);
                \else
                    \draw[gray] (prter\index)-- (T\vindex\cols);
                \fi
            \fi
        }
    \fi
}

%  top terminals
\foreach \j in {1,...,\cols}{
    \pgfmathsetmacro{\r}{int(mod(\j,(\k+1)*\n))}
    \pgfmathsetmacro{\index}{int(\j/((\k+1)*\n)+1)}
    \ifnum\r=1
        \pgfmathsetmacro{\ptermx}{(\j+((((\k+1)*\n)-1)/2)+1)*\xspacing}
        \pgfmathsetmacro{\termx}{\ptermx+2*\xspacing}
        \pgfmathsetmacro{\termy}{0}
        \node[vertex,label=above:\(t'_{\index}\)] (ptter\index) at (\ptermx,\termy) {};
        \node[terminal] (tter\index) at (\termx,\termy) {};
        \draw[blue!60!white,  very thick] (ptter\index) -- (tter\index);
        \foreach \addi in {1,...,\n}{
            \pgfmathsetmacro{\vindex}{int((\index-1)*((\k+1)*\n)+\addi*(\k+1))}
            \ifnum\vindex=3
                \draw[blue!60!white,  very thick] (ptter\index)-- (T1\vindex);
            \else
                \ifnum\vindex=18
                    \draw[blue!60!white,  very thick] (ptter\index)-- (T1\vindex);
                \else
                    \draw[gray] (ptter\index)-- (T1\vindex);
                \fi
            \fi
        }
    \fi
}

% horizontal edges
\pgfmathsetmacro{\colsLess}{int(\cols-1)}
\foreach \i in {1,...,\rows}{
    \foreach \j in {1,...,\colsLess}{
        \pgfmathtruncatemacro{\jn}{\j+1}
        % Check if this edge should be red!60!white
        \ifnum\i=1
            \pgfmathsetmacro{\r}{int(mod(\j,3))}
            \ifnum\r=0
                \draw[violet!80!white, dotted, very thick] (T\i\j) -- (T\i\jn);
            \else
                \draw[violet!80!white, very thick] (T\i\j) -- (T\i\jn);
            \fi
        \else
            \pgfmathsetmacro{\r}{int(mod(\j,3))}
            \ifnum\r=0
                \draw[gray!50!black, dotted, thick] (T\i\j) -- (T\i\jn);
            \else
                \draw[gray!50!black] (T\i\j) -- (T\i\jn);
            \fi
        \fi
    }
}

% Vertical edges (all black)
\pgfmathsetmacro{\rowsLess}{int(\rows-1)}
\foreach \i in {1,...,\rowsLess}{
    \foreach \j in {1,...,\cols}{
        \pgfmathtruncatemacro{\inext}{int(\i+1)}
        \ifnum\j=16
            \draw[blue!60!white, very thick] (T\i\j) -- (T\inext\j);
        \else
            \ifnum\j=18
                \ifnum\i=5
                    \draw[gray] (T\i\j) -- (T\inext\j);
                \else
                    \draw[red!60!white, very thick] (T\i\j) -- (T\inext\j);
                \fi
            \else
                \draw[gray] (T\i\j) -- (T\inext\j);
            \fi
        \fi
    }
}

% connection to rcons
\foreach \i in {1,...,\rows}{
    \ifnum\i=1
        \draw[red!60!white,  very thick] (T\i1) -- (rcons);
    \else
        \draw[gray] (T\i1) -- (rcons);
    \fi
}

% connection to rselect
\foreach \j in {1,...,\cols}{
    \pgfmathtruncatemacro{\r}{int(mod(\j,\k+1))}
    \ifnum\r=1
        \ifnum\j=16
            \draw[blue!60!white,  very thick] (rselect) -- (T\rows\j);
        \else
            \draw[gray] (rselect) -- (T\rows\j);
        \fi
    \fi
}

\end{tikzpicture}

%% file: figs/planarhardnesspartexample.tex
	\begin{tikzpicture}
		\draw[red,draw opacity=.5,,pattern={Lines[angle=120,yshift=-5pt],opacity=.2},pattern color = red!10] (-.25,-0.25) rectangle ++(5.5,1.75);
		\draw[red,draw opacity=.5,,pattern={Lines[angle=120,yshift=0pt],opacity=.2},pattern color = red!10] (-.25,1.5) rectangle ++(5.5,1.75);
		
		\draw[blue,draw opacity=.5,,pattern={Lines[angle=30,yshift=-5pt],opacity=.2},pattern color = blue!10] (-.25,-.25) rectangle ++(1.75,3.5);
		\draw[blue,draw opacity=.5,,pattern={Lines[angle=30,yshift=-0pt],opacity=.2},pattern color = blue!10] (1.5,-.25) rectangle ++(2,3.5);
		\draw[blue,draw opacity=.5,,pattern={Lines[angle=30,yshift=-5pt],opacity=.2},pattern color = blue!10] (3.5,-.25) rectangle ++(1.75,3.5);
		
        \foreach \s in {0,1}{
			\foreach \t in {0,1,2}{
				\foreach \x in {0,1,2} {
					\foreach \y in {0,1} {
						\draw[dashed] ($(2*\t,2*\s) + (0.5*\y,0.5*\x)$) -- ($(2*\t,2*\s) + (0.5*\y,0.5*\x) + (0.5,0)$);
						\draw[green!50!black] ($(2*\t,2*\s) + (0.5*\x,0.5*\y)$) -- ($(2*\t,2*\s) + (0.5*\x,0.5*\y) + (0,0.5)$);
						\draw[green!50!black] ($(2*\t,1) + (0.5*\x,0)$)--($(2*\t,2) + (0.5*\x,0)$);
						\draw[gray,dotted] ($(1,2*\s) + (0,0.5*\x)$)--($(2,2*\s) + (0,0.5*\x)$);
						\draw[gray,dotted] ($(1,2*\s) + (2,0.5*\x)$)--($(2,2*\s) + (2,0.5*\x)$);
						\draw[cyan] ($(1,3) + (\t*2,0)$)--(2.5,3.5);
						\draw[cyan] ($(0,0) + (\t*2,0)$)--(2.5,-0.5);
					}
				}
			}
		}
		
		\draw[very thick,purple] (0,0)--(0.5,0);
		\draw[very thick,purple] (0.5,2)--(1,2);
		\draw[very thick,purple] (0.5,2.5)--(1,2.5);
		\draw[very thick,purple] (4,0.5)--(4.5,0.5);
		
		\node[vertex,label=below:\(r_{\text{select}}\)] (rselect) at (2.5,-0.5) {};
		\node[vertex,label=above:\(t'_1\)] (t'1) at (2.5,3.5) {};
		
		\foreach \s in {0,1}{
		\foreach \t in {0,1,2}{
		\foreach \x in {0,1,2} {
			\foreach \y in {0,1,2} {
				\node[vertex] (\t\s\x\y) at ($(2*\t,2*\s) + (0.5*\x,0.5*\y)$) {};
			}
		}}}
	\end{tikzpicture}

%% file: ULV_reduction.tex
\label{subsec:W1HardULV}
Our hardness result from the previous section (\Cref{thm:bddepth-ETHlowerbound}) crucially relies on the depth constraint in \BLVshort.
In particular, the depth constraint allowed us to severely restrict which paths could be used to connect terminals to \(r\) in the solution. 
However, to prove that \ULVshort is also W[1]-hard, if we use the same reduction structure and set \(\lambda=+\infty\), then instead of having a path from \(\rcons\) (\(\rselect\)) to \(t_i\) (\(t'_i\)) for each \(i \in [k]\), a solution of lower cost has a fork-like structure consisting of a single row (column) that then branches out on the last column (row) to reach all terminals on the other side of the grid (see \Cref{fig:ULVinBLVStructure}).
I.e.\ parts of the solution that should correspond to choices in separate rows or columns collapse into one.
To prevent this issue and show that \ULVshort is W[1]-hard, we introduce gadgets, called \emph{connectors}, that contain further terminals throughout the inner parts of the grid. Concretely, the grid is divided into \(k\times k\) subgrids of size \(\rowNum \times \colNum\), called \emph{cell gadgets}. Whenever two cell gadgets are adjacent in the original grid, their corresponding boundary vertices are no longer directly connected; instead, they are each linked via a connector which contains terminals that force the part of the solution that must cover them to connect through it and prevent it from collapsing.

For our reduction it will also be more convenient for each valid row choice to have the same cost.
To achieve this, we give a reduction from a variant of \GT called \GTRegular where for each \(x\in [n]\) and each \(i,j \in [k]\), the number of pairs of the form \((x,y)\) in \(S_{i,j}\) is equal. We refer to this number by the \emph{regularity parameter} of a \GTRegular instance.
\begin{theorem}
\label{thm:GTregularW[1]}
    The \GTRegular problem is W[1]-hard parameterized by \(k\), and under the ETH admits
no \(f(k)\cdot n^{o(k)}\)-time algorithm for any computable function \(f\).
\end{theorem}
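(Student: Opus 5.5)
We give a parameter-preserving polynomial-time reduction from \GT to \GTRegular in which $k$ is unchanged and the new $n'$ is polynomial in $n$. Given an instance $(S_{i,j})_{(i,j)\in[k]^2}$ over $[n]\times[n]$, the aim is that, in every cell, each first coordinate $x$ occurs in the same number $\regpar$ of pairs, without creating new solutions. A first attempt is to pad each cell with dummy pairs $(x,y)$ whose second entry $y$ is a fresh value. Such padding pairs could only be used if every cell in the column chose that same fresh second entry, so we make the fresh second entries distinct across cells: for cell $(i,j)$ we reserve a private block of second-coordinate values $Y_{i,j}\subseteq [n']\setminus[n]$. If $s_{i,j}$ has second entry in $Y_{i,j}$, then $s_{i',j}$ for $i'\neq i$ must have the same second entry. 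This is impossible for $k\ge 2$, since $Y_{i',j}\cap Y_{i,j}=\emptyset$ and all original pairs use second entries in $[n]$. For $k=1$ the problem is trivial, so we may assume $k\ge2$.

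Concretely, set $\regpar:=n$, which is at least the maximum number of pairs with a fixed first entry in any cell. For each cell $(i,j)$ and each $x\in[n]$, let $c_{i,j}(x)=|\{y:(x,y)\in S_{i,j}\}|$ and add $n-c_{i,j}(x)$ pairs $(x,y)$ with distinct $y\in Y_{i,j}$. Here $Y_{i,j}$ has size $n$ and is disjoint from all other blocks, so $n'=n+k^2n$. First entries never exceed $n$, and first entries in $[n']\setminus[n]$ simply occur $0$ times in every cell, which still counts as equal. (If the definition is read as requiring the count to be equal only within each cell, this already suffices. If it also requires equality across cells, we have achieved that too, since the count is $n$ for every $x\in[n]$ and $0$ otherwise.)

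Correctness then splits into two directions. The forward direction is immediate, since $S_{i,j}\subseteq S'_{i,j}$. For the backward direction, take a solution $(s'_{i,j})$ and suppose some $s'_{i,j}$ is a padding pair, with second entry $y\in Y_{i,j}$. Column consistency forces $s'_{i',j}$ to have second entry $y$ for some $i'\neq i$, but no pair in $S'_{i',j}$ has a second entry in $Y_{i,j}$, which is a contradiction. So every selected pair is original, and the selection is a solution of the original instance.

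W[1]-hardness and the ETH lower bound then transfer directly, since $k$ is preserved and $n'=\Oh(k^2 n)$: an $f(k)\cdot n'^{o(k)}$-time algorithm for \GTRegular would solve \GT in time $f(k)\cdot (k^2n)^{o(k)}=f'(k)\cdot n^{o(k)}$. The only delicate point is the subtle interpretation of ``regular''. Choosing a uniform $\regpar$ for all cells and all $x$, and noting that the reduction is polynomial, handles either reading.
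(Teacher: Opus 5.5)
\textbf{The gap: regularity fails for the new first entries.} The padding idea works, but your regularity check is false as written. The padded instance lives on $[n']\times[n']$ with $n'=n+k^2n$. So the definition of \GTRegular quantifies over \emph{all} first entries $x\in[n']$ and requires one common count, the regularity parameter, for every such $x$ in every cell. In your instance each $x\in[n]$ occurs $n$ times per cell, but each $x\in[n']\setminus[n]$ occurs $0$ times, and $0\neq n$. Hence the instance is not regular under either reading you consider, not even within a single cell, and your claim that the zero counts ``still count as equal'' does not hold.

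\textbf{The repair.} It stays inside your construction: for every cell $(i,j)$ and every $x\in[n']\setminus[n]$, also add the $n$ pairs $(x,y)$ with $y\in Y_{i,j}$.
\begin{itemize}
\item A block of size $n$ still suffices, because a second entry may be reused with different first entries.
\item Every $x\in[n']$ then occurs exactly $n$ times in every cell.
\item Your backward argument is unchanged. Every padding pair still has its second entry in the private block $Y_{i,j}$, so it cannot be column-consistent once $k\ge 2$.
\end{itemize}

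\textbf{Comparison with the paper.} With this fix your proof is correct, and it takes a different route from the paper. The paper does not reduce from \GT directly. It starts from $k$-\textsc{Clique} on regular graphs, which is W[1]-hard with the same ETH bound. It then observes that Marx's standard reduction from \textsc{Clique} to \GT inherits regularity from the graph, since in an off-diagonal cell the first entry $a$ occurs $\deg(a)$ times.

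That route reuses known results without new gadgets. However, it depends on the source graph being regular and on the exact shape of Marx's cells. In particular, the diagonal cells $S_{i,i}=\{(a,a) : a\in[n]\}$ of that reduction have one pair per first entry rather than $d$. So getting a single regularity parameter across all cells needs an extra adjustment there.

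Your padding argument is self-contained and applies to arbitrary \GT instances. It gives the uniform value $n$ in every cell by construction. Its only price is enlarging the value range to $n'=(k^2+1)n$, which is harmless for both W[1]-hardness and the $f(k)\cdot n^{o(k)}$ lower bound.
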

\begin{proof}
    The \(k\)-\textsc{clique} problem is W[1]-hard even when restricted to regular graphs \cite{BlueBook}. Using the reduction provided by Marx as Theorem 14.27 in \cite{BlueBook} for reducing \(k\)-\textsc{clique} to \GT, any regular input of the \(k\)-\textsc{clique} problem can be reduced to an instance \(\left(S_{i,j}\right)_{(i,j) \in [k]^2}\) of \GT such that for each \(a,a' \in [n]\) and each \(i,i',j,j'\in [k]\), the number of pairs with first entry equal to \(a\) in \(S_{i,j}\) is equal to the number of pairs with first entry equal to \(a'\) in \(S_{i',j'}\). Since \(k\)-\textsc{clique} is W[1]-hard and cannot be solved in time \(f(k)\cdot n^{o(k)}\) for any computable function \(f\), \GTRegular is also W[1]-hard and there is no \(f(k)\cdot n^{o(k)}\)-time algorithm that solves this problem for any computable function \(f\).
\end{proof}
Knowing that \GTRegular is W[1]-hard, we provide a reduction from this problem to prove that \ULV is W[1]-hard in the following theorem.

\begin{theorem}\label{thm:lowerbound:unbounded}
The \ULV problem is W[1]-hard even for \(t=4\) and under the ETH admits no \(f(|T|,t)\cdot |V(G)|^{o(\sqrt{|T|})\cdot f'(t)}\)-time algorithm for any computable functions \(f\) and \(f'\), even when restricted to planar input graphs.
\end{theorem}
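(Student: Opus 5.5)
We reduce from \GTRegular (\Cref{thm:GTregularW[1]}) with regularity parameter~$\regpar$. The target is a planar \HVshort-instance with $t=4$ and $|T|=\Oh(k^2)$. Once we have it, an $f(|T|,t)\cdot|V(G)|^{o(\sqrt{|T|})f'(t)}$ algorithm would solve \GTRegular in time $\tilde f(k)n^{o(k)}$, contradicting the ETH. W[1]-hardness follows because the reduction is a parameterized reduction.

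\textbf{Construction.} We follow the outline given before the theorem. The grid is replaced by $k\times k$ cell gadgets, each an $\rowNum\times\colNum$ subgrid. Cell $(i,j)$ encodes its pairs: a horizontal route through the gadget selects the first entry $a$ (a row index), and a vertical route selects the second entry $b$ (a column index). Horizontally and vertically adjacent cell gadgets are not joined directly. Instead they are joined through connectors $\hconnector$ and $\vconnector$, which are thin planar strips containing terminals. These terminals force a solution tree to route a row-path (respectively column-path) through every connector, so the intended parts cannot collapse into a fork as in \Cref{fig:ULVinBLVStructure}.

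We use four trees. Two trees, for odd and even rows, cover the row connectors, and two trees, for odd and even columns, cover the column connectors. The split into two trees per direction lets connector terminals of consecutive rows (columns) be embedded on alternating sides while keeping planarity. \Cref{lemma:UniqueTerminalPartitioning} fixes the partition of all terminals into these four classes via demands and $\alpha$.

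Lengths and digging costs are set with a separated hierarchy of scales, as in \Cref{thm:bddepth-ETHlowerbound}:
\begin{itemize}
\item huge lengths on edges leaving a gadget other than through connectors, so trees cannot shortcut;
\item a large digging cost on every edge inside cell gadgets;
\item within cell $(i,j)$, a cheap (or length-discounted) edge at the crossing of row $a$ and column $b$ exactly when $(a,b)\in S_{i,j}$.
\end{itemize}
Regularity ensures that every admissible row choice contributes the same total cost, so no row value is favoured. $\beta$ is set to the exact cost of an intended solution in which each row-tree segment and each column-tree segment shares its crossing edge in every cell.

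\textbf{Forward direction.} This is straightforward. Given a \GTRegular solution, route row trees along row $a_i$ in every cell of grid row $i$, and route column trees along column $b_j$. Wherever they meet at a cell with $(a_i,b_j)\in S_{i,j}$ they share a cheap trench, and the computed cost matches $\beta$.

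\textbf{Backward direction.} This is the main obstacle. We must show that any solution of cost at most $\beta$ has the intended form:
\begin{enumerate}
\item each tree's restriction to each cell gadget is a single straight row or column segment;
\item segments are consistent across connectors, i.e.\ the same row is entered and exited;
\item every crossing shares a digging edge.
\end{enumerate}
I would proceed by a chain of counting claims, as in the previous proof. First, the huge lengths and the connector terminals force each connector to be traversed exactly once by the correct tree. Second, a lower bound on the total length forces straight traversals. Connectors are designed so that the entry row must equal the exit row; otherwise extra vertical length is incurred that exceeds the slack. Third, the remaining slack is smaller than one unshared digging cost, so every row/column crossing must use a shared, cheap edge, which exists only for pairs in $S_{i,j}$. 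The delicate point is to design the connector gadget so that, with $\lambda=\infty$, a tree cannot reuse connector paths to branch within a cell (yielding multiple row choices). I would enforce this by placing one terminal per connector and giving connector-internal edges lengths that make any second crossing strictly more expensive than the available slack.
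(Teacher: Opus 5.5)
You have a genuine gap, and it is in the connector, which is the crux of the unbounded-depth case. In your design each row connector is served by a single tree (odd or even grid rows) and contains one terminal. In a planar strip of parallel linking paths, a terminal can only be reached from linking path $x$ along a route that crosses the linking paths lying between $x$ and the terminal. One tree per connector then leaves you with two bad options.
\begin{itemize}
\item If the terminal sits at one end of such a crossing route, the cost of covering it depends on the chosen linking path. So some row values are favoured regardless of regularity. Worse, a tree entering on linking path $x$ can leave on any linking path between $x$ and the terminal at no extra cost. The crossing segment it switches along is paid anyway to reach the terminal, and the left part of one linking path plus the right part of another costs exactly as much as one whole linking path.
\item If you add a second terminal at the other end to remove the bias, the same tree covers the whole crossing route. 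Switching then becomes free in both directions, which is the situation of \Cref{fig:4TreesVs2Trees}(a).
\end{itemize}
So your claim that ``the entry row must equal the exit row; otherwise extra vertical length is incurred'' cannot be realised by any one-tree-per-connector gadget. Consistency of the first entries along a grid row therefore does not follow. Making a second crossing expensive addresses branching, not switching.

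The missing idea is what the four trees are for. In the paper, both row trees traverse \emph{every} row connector. Each connector has $c$ supporting paths crossing all linking paths, with a terminal at each end. \Cref{lemma:UniqueTerminalPartitioning} assigns all terminals at one end to $H_{R,\RstartIndex}$ and all terminals at the other end to $H_{R,\RendIndex}$, and symmetrically for columns. If the two trees share a single linking path, every supporting edge is used exactly once, so the connector's cost is a fixed value independent of the chosen row. Any switch puts a supporting edge into both trees, and its length is paid twice.

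Making this rigorous in the paper takes several steps:
\begin{itemize}
\item the level/cut argument of \Cref{claim:ConnectorConnectionToSides};
\item the reconnection and exchange arguments of \Cref{claim:MakeConnectedSubgraph,claim:EmptyConnectors,claim:EmptyRootPart}, which show row trees avoid column connectors and the column root part;
\item \Cref{claim:SingleEdgeLevelConnector,claim:connectorstructure}, which give exactly one edge per level and a single shared linking path at tight cost.
\end{itemize}
Your odd/even split is not needed for planarity: the connector terminals lie inside the connector strips, and the construction stays planar with every row connector shared by the same pair of trees.
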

\begin{proof}
    By \Cref{thm:GTregularW[1]}, \GTRegular is W[1]-hard parameterized by \(k\) and there is
no \(f(k)\cdot n^{o(k)}\)-time algorithm for any computable function \(f\) that solves it. We show that \ULVshort is also W[1]-hard and cannot be solved in time \(f(|T|,t)\cdot |V(G)|^{o(\sqrt{|T|})\cdot f'(t)}\) for any computable functions \(f\) and \(f'\)  by providing a reduction from \GTRegular. Since the input graph constructed for the input instance through this reduction is planar, this shows that this statement persists even when restricted to planar input graphs. Let \(\mathcal{I}=\left(S_{i,j}\right)_{(i,j) \in [k]^2}\) be an instance of \GTRegular with a regularity parameter \(\regpar\) whose pairs are drawn from \([n]\times [n]\) where \(n \geq 2\).
    Let \(c=6\), \(\connectLength =40(c+1)(n+1)(n(k+1)+1)(\ell_H+d_H+\ell_V)\), \(\rootLength=2\connectLength\), \(\srootLength=5\connectLength\), and \(\termLength=40\connectLength^3\) where the values of \(d_H\), \(\ell_H\), and \(\ell_V\) will be defined later. We construct an equivalent instance \((G,\ell,\dig,r,T,d,\alpha,\beta,t)\) of \ULVshort as follows.
    
    \paragraph*{Instance structure} 
    The main part of the graph \(G\) consists of two types of gadgets, namely \emph{cell} gadgets and \emph{connectors}.

    A connector for a sequence of pairs of vertices \(\mathcal{P}=((p_{1,1}, p_{1,2}), (p_{2,1}, p_{2,2}), \ldots, (p_{|\mathcal{P}|,1}, p_{(|\mathcal{P}|,2)}))\) consists of a set of paths of length \(c+1\), one for each \(x\in[|\mathcal{P}|]\), from \(p_{x,1}\) to \(p_{x,2}\), which are called \emph{linking} paths, and another set of \(c\) paths, referred to as \emph{supporting} paths, where the \(q\)-th path connects two new terminals by passing through the \(q\)-th vertex of the \(x\)-th linking path for \(x \in [|\mathcal{P}|]\) respectively. Each of the edges of the linking paths has length \(\connectLength\) and digging cost \(\frac{\connectLength}{4(c+1)}\) and each of the edges of the supporting paths has length \(\frac{\connectLength}{4(|\mathcal{P}|+1)(c+1)}\) and digging cost \(0\). Note that in all connectors \(\mathcal{P}\) is chosen such that \(\connectLength\) is always divisible by \((|\mathcal{P}|+1)(c+1)\), so the resulting length value is an integer. For \(q\in [c+1]\), we call the set of edges consisting of the \(q\)-th edge of each linking path the \(q\)-th \emph{level} of the connector. In other words, if the \(q\)-th linking path consists of vertices \(p_{q,1}=u^{q}_1,u^{q}_2,\ldots,u^{q}_{c+1}=p_{q,2}\), respectively, then the edge set \(\{u^{x}_qu^{x}_{q+1}\mid x \in [|\mathcal{P}|]\}\) is the \(q\)-th level of the connector. Moreover, we call all vertices of a connector except the endpoints of the linking paths the \emph{internal} vertices of this connector. A schematic view of a connector is shown in \Cref{fig:connector}. A schematic overview of this graph is depicted in \Cref{fig:ULVGraph}. Using these definitions, we construct the graph as follows.

    \begin{figure}[!ht]
        \centering
        \begin{subfigure}[b]{\textwidth}
            \centering
            \input{figs/connector}
            \subcaption{Schematic view of a connector. The third level of the connector consists of the orange edges.}
            \label{fig:connector}
        \end{subfigure}
        \hfill
        \begin{subfigure}[b]{\textwidth}
            \centering
            \input{figs/hconnector_instance}
            \subcaption{An instance of a row connector \(\hconnector^{i,j}\) and an indication of the part of a minimum solution in this connector. The purple part is shared between two trees that intersect this connector, and the red and blue edges distinguish between these two trees. In this solution, \(n-1\) is chosen as the first value of pairs of \(i\)-th row cells.}
            \label{fig:connector-instance}
        \end{subfigure}
    
        \caption{Schematic view of a connector and an instance of it. The terminals are colored gray.  The vertical edges have digging cost \(0\) and are of length \(\lceil\frac{\connectLength}{4(|\mathcal{P}|+1)(c+1)}\rceil\) where \(|\mathcal{P}|=\rowNum\) in the provided instance in \((b)\). The horizontal edges are of length \(\connectLength\) and have digging cost \(\lceil\frac{\connectLength}{c+1}\rceil\). The linking paths are made up of horizontal edges and the supporting paths consist of vertical edges.}
        \label{fig:twofigs}
    \end{figure}

    \begin{itemize}
        \item \textbf{Cell gadgets.}  There are \(k \times k\) cell gadgets where cell gadget \(G^{i,j}\) corresponds to the cell \(S_{i,j}\) from the \GTRegular input instance. Each cell gadget is a grid graph with \(\rowNum=n\) rows and \(\colNum=n(k+1)\) columns. For ease of notation, for \(x \in [\rowNum]\) and \(y \in [\colNum]\), we denote the vertex in the \(x\)-th row and \(y\)-th column of a cell gadget \(G^{i,j}\) by \(u^{i,j}_{x,y}\). The lengths and digging costs of the edges of the grid are defined as follows.
        \begin{itemize}
            \item \textbf{Horizontal edges.} For each choice of \(x \in [\rowNum]\), \(y\in [k]\), and \(y'\in[n]\) there is a unique horizontal edge \(u^{i,j}_{x,(k+1)(y'-1)+y}u^{i,j}_{x,(k+1)(y'-1)+y+1}\) in the grid.  The digging cost of this edge is \(d_H=n(k+1)\) and the length of this edge depends on whether \((x,y') \in S_{i,j}\) or not, which is defined as follows.
            \begin{itemize}
                \item If \((x,y')\in S_{i,j}\) and \(i=y\) then the length of this edge is \(1\).
                \item Otherwise, it has length \(\ell_H=k+1\).
            \end{itemize}
            So, by definition of the \GTRegular problem, each row contains exactly \(\regpar\) number of edges of length \(1\) and the rest are of length \(\ell_H\).
            \item \textbf{Vertical edges.} For each vertical edge, we set its length to \(\ell_V=n(k+1)\) and its digging cost to \(0\).
        \end{itemize} 
        
        For each \(y \in [\colNum]\), we call the path that passes through vertices \(u^{i,j}_{1,y},u^{i,j}_{2,y},\ldots,u^{i,j}_{N,y}\) the \(y\)-th \emph{column path} of this gadget. Similarly, for each \(x\in [\rowNum]\), we call the path that passes through vertices \(u^{i,j}_{x,1},u^{i,j}_{x,2},\ldots,u^{i,j}_{x,\colNum}\) the \(x\)-th \emph{row path} of this gadget. An example of this gadget is shown in \Cref{fig:cellGadget:a}. Note that our choice of \(\connectLength\) ensures that that the total cost (including diggin cost) of the edges in cell gadget is at most \(\frac{\connectLength}{40}\). Thus, the total cost of a column path is at most \(\frac{\connectLength}{40\colNum}<\frac{\connectLength}{\colNum}\) and the total cost of a row path is at most \(\frac{\connectLength}{40\rowNum}<\frac{\connectLength}{\rowNum}\).
        \item \textbf{Row connectors.} For each \(i\in[k]\) and \(j\in[k-1]\),  there is a connector between cell gadgets \(G^{i,j}\) and \(G^{i,j+1}\) on the sequence of pairs \(((u^{i,j}_{1,\colNum},u^{i,j+1}_{1,1}),\ldots,(u^{i,j}_{\rowNum,\colNum},u^{i,j+1}_{\rowNum,1}))\), which is called a \emph{row connector} and is denoted \(\hconnector^{i,j}\). Note that by definition of \(\connectLength\) and \(\rowNum\), we have \(\lceil\frac{\connectLength}{4(\rowNum+1)(c+1)}\rceil=\frac{\connectLength}{4(\rowNum+1)(c+1)}\) which is the length of the edges of the supporting paths in these connectors.
        For each \(q\in[c]\), let \(h^{i,j}_{q,1},\ldots,h^{i,j}_{q,\rowNum}\) denote the interior of the \(q\)-th supporting path, ordered according to the linking paths: \(h^{i,j}_{q,1}\) lies on the first linking path, \(h^{i,j}_{q,2}\) lies on the second linking path, and so on, with \(h^{i,j}_{q,\rowNum}\) lying on the last linking path.
        We denote by \(h^{i,j}_{q,\RstartIndex}\) the terminal adjacent to \(h^{i,j}_{q,1}\), and by \(h^{i,j}_{q,\RendIndex}\) the terminal adjacent to \(h^{i,j}_{q,\rowNum}\). A schematic view of an instance of these connectors is shown in \Cref{fig:connector-instance}.
        \item \textbf{Column connectors.} For each \(i\in[k-1]\) and \(j \in [k]\), there is a connector between cell gadgets \(G^{i,j}\) and \(G^{i+1,j}\) on the sequence pairs \(((u^{i,j}_{\rowNum,1},u^{i,j+1}_{1,1}),\ldots,(u^{i,j}_{\rowNum,\colNum},u^{i,j+1}_{1,\colNum}))\), which is called a \emph{column connector} and is denoted \(\vconnector^{i,j}\). Note that by definition of \(\connectLength\) and \(\colNum\), we have \(\lceil\frac{\connectLength}{4(\colNum+1)(c+1)}\rceil=\frac{\connectLength}{4(\colNum+1)(c+1)}\) which is the length of the edges of the supporting paths in these connectors. 
        For each \(q\in[c]\), let \(v^{i,j}_{q,1},\ldots,v^{i,j}_{q,\rowNum}\) denote the interior of the \(q\)-th supporting path, ordered according to the linking paths: \(v^{i,j}_{q,1}\) lies on the first linking path, \(v^{i,j}_{q,2}\) lies on the second linking path, and so on, with \(v^{i,j}_{q,\rowNum}\) lying on the last linking path.
        We denote by \(v^{i,j}_{q,\RstartIndex}\) the terminal adjacent to \(v^{i,j}_{q,1}\), and by \(v^{i,j}_{q,\RendIndex}\) the terminal adjacent to \(v^{i,j}_{q,\rowNum}\).
    \end{itemize}
    \begin{figure}[!ht]
    \centering

    \begin{subfigure}{\textwidth}
        \centering
        \input{figs/cellGadget_a}
        \subcaption{The cell gadget \(G^{2,1}\) which corresponds to \(S_{2,1}\), where red edges are of length \(1\) and the other horizontal edges are of length \(\ell_H\). The vertical edges are of length \(\ell_V\). The digging cost of horizontal edges is \(d_H\) and the digging cost of vertical edges is \(0\).}
        \label{fig:cellGadget:a}
    \end{subfigure}

    \vspace{1em}

    \begin{subfigure}{\textwidth}
        \centering
        \input{figs/cellGadget_b} 
        \subcaption{The part of a minimum solution  in cell gadget \(G^{2,1}\) which can be transformed to a solution for \GTRegular. The first value of all chosen pairs from the second row of \GTRegular instance is \(2\) and the second value of all chosen pairs of the first column of this instance is \(3\). The yellow edges belong to \(H_{R,\RstartIndex}\) and \(H_{R,\RendIndex}\), blue edges belong to \(H_{C,\CstartIndex}\) and \(H_{C,\CendIndex}\), and green edges belong to all trees.}
        \label{fig:cellGadget:b}
    \end{subfigure}

    \caption{Let \(n=4\), \(k=2\), and \(S_{2,1}=\{(1,2),(1,4),(2,1),(2,3),(3,2),(3,4),(4,1),(4,3)\}\).}
    \label{fig:cellGadget}
\end{figure}
    \textbf{Root part.} To complete the description of the graph, additional sets of edges, vertices, and terminals are defined as follows. We refer to this additional part of the graph as its \emph{root part}. The root part can be divided into three parts, which are defined as follows. 
    
    \begin{itemize}
        \item \textbf{Row root part.} 
        \begin{itemize}
            \item There is a set of vertices \(b^{i,z}\) for \(i \in [k]\) and \(z \in \{\rightarrow,\leftarrow\}\), each with a pair of terminals \(p_\RstartIndex(b^{i,z})\) and \(p_\RendIndex(b^{i,z})\) as pendants, where \(b^{i,\rightarrow}\) is connected to vertices \(u^{i,1}_{x,1}\) for all \(x \in [\rowNum]\) and \(b^{i,\leftarrow}\) is connected to vertices \(u^{i,k}_{x,M}\) for all \(x \in [\rowNum]\). The length of each of the edges between these vertices and their pendant terminals is \(\termLength\), and the length of each of the rest of these edges is \(\rootLength\). The digging cost of all of them is \(0\). 
            \item There is also a vertex called \(\rcons\) connected to the root \(r\) with an edge of length \(\termLength\), which is connected to the vertices \(b^{i,\rightarrow}\) for all \(i \in [k]\) with edges of length \(\srootLength\) and digging cost \(0\) and is connected to two new terminals \(p_\RstartIndex(\rcons)\) and \(p_\RendIndex(\rcons)\) as pendants. Similarly, the length of the edges between these vertices and their pendant terminals is \(\termLength\), and their digging cost is \(0\).
        \end{itemize}
        We refer to the subgraph consisting of the above as \emph{row root part}.
        \item \textbf{Column root part.} 
        \begin{itemize}
            \item There is another set of vertices \(b^{z,j}\) for \(z \in \{\uparrow,\downarrow\}\) and \(j \in [k]\), referred to as \emph{column root part}, each with a pair of terminals \(p_\CstartIndex(b^{z,j})\) and \(p_\CendIndex(b^{z,j})\) as pendants, where \(b^{\uparrow,j}\) is connected to vertices \(u^{i,j}_{1,y}\) with \(y = y'\cdot(k+1)+1\) for all \(0\leq y'<n\), and \(b^{\downarrow,j}\) is connected to vertices \(u^{i,j}_{\rowNum,y}\) with \(y = y'\cdot(k+1)\) for all \(1\leq y'\leq n\). The length of the edges between these vertices and their pendant terminals is \(\termLength\), and the length of the rest of these edges is \(\rootLength\). The digging cost of all of them is \(0\). 
            \item  There is also a vertex called \(\rselect\) connected to the root \(r\) with an edge of length \(\termLength\), which is connected to the vertices \(b^{\uparrow,j}\) for all \(j \in [k]\) with edges of length \(\srootLength\) and digging cost \(0\) and is connected to two new terminals \(p_\CstartIndex(\rselect)\) and \(p_\CendIndex(\rselect)\) as pendants. The length of the edges between these vertices and their pendant terminals is \(\termLength\), and their digging cost is \(0\).
        \end{itemize}
        We refer to the subgraph consisting of the above as \emph{column root part}.
    \end{itemize}

    \begin{figure}[!ht]
        \centering
        \input{figs/ULVGraphSmall}
        \caption{Schematic view of the reduction in \Cref{thm:lowerbound:unbounded}. Connectors are indicated in yellow and cell gadgets in gray. The orange edges are of weight \(\rootLength\),the green edges are of weight \(\srootLength\), and the blue edges are of length \(\termLength\). An instance of a connector is shown for \(\hconnector^{k,1}\).} 
        \label{fig:ULVGraph}
    \end{figure}
    
    \paragraph*{Reduction overview.}
Before finalizing the unspecified parts of the 
\ULVshort-instance, we provide some intuition for the structural design we aim to achieve.

Similarly to the reduction provided for \Cref{thm:bddepth-ETHlowerbound}, to encode a choice of pairs of numbers for each cell of the \GTRegular instance, our \ULVshort-solution should consist of two parts, one selecting $k$ rows and one selecting $k$ columns, such that the intersection of a selected row and a selected column consists of a single shared horizontal edge, which needs to be dug only once.

We will use \Cref{lemma:UniqueTerminalPartitioning} to find suitable demand values on the terminals that ensure the following distribution of terminals: the row-selection trees cover the terminals belonging to the row connectors and the terminals of the row root part, while the column selection trees only contain the terminals belonging to the column connectors and the terminals of the column root part. 

The digging costs and the edge lengths within the connector gadgets are defined so that these connectors are used only when necessary to establish connections between cell gadgets and cover the terminals. More specifically, in a minimum-cost solution, this prevents a tree encoding row choices to intersect column connectors and a tree encoding column choices to intersect row connectors.

To make the selection of linking paths within each connector independent of their distance from terminals, which affects the selection of row and column values, while preserving planarity, each connector includes supporting paths each with two terminals as its endpoints. 
In a cost-optimal solution, two trees should intersect each connector, each covering terminals on one side of all supporting paths, but together covering all supporting paths and their terminals; see \Cref{fig:connector-instance}.  

To ensure that within a row (column) connector the solution does not switch between rows (columns), our construction enforces a solution consisting of four (two for row and two for column choices) trees rather than two (one for row and one for column choices); see \Cref{fig:4TreesVs2Trees}.

\input{2TreesVs4Trees}
 
The edge lengths, particularly those between \(r\) and \(\rcons\) and between \(r\) and \(\rselect\), and \(\beta\) are chosen so that every root-terminal-path in the row-selection part passes through \(\rcons\), while every root-terminal-path in the column selection part passes through \(\rselect\). 

By construction and the cost budget, we will make, all edges of the two trees encoding row (column) choices are shared, except for those corresponding to the supporting paths inside the connectors and those that are only for the connection to the terminals of that tree.
This structure also constrains each \(\rcons\)-\(b^{i,\leftarrow}\) path to pass through \(b^{i,\rightarrow}\) and the interior of the path from \(b^{i,\leftarrow}\) to \(b^{i,\rightarrow}\) lies entirely within the subgraph
\(
\bigcup_{j=1}^{k} G^{i,j} \cup \bigcup_{j=1}^{k-1} \hconnector^{i,j},
\)
and using vertical edges from the cell gadgets is impossible since it would incur a prohibitively large cost. This design allows us to encode the first entry of the pair selected for the \(i\)-th row of the \(\GT\)-instance through the index of the row whose vertices appear on the \(\rcons\)-\(b^{i,\leftarrow}\) path in the trees encoding row choices.

Similarly, for each \(j \in [k]\), the \(\rselect\)-\(b^{\downarrow,j}\) path must pass through \(b^{\uparrow,j}\) and the interior of the path from \(b^{\downarrow,j}\) to \(b^{\uparrow,j}\) lies within exactly one of the \(n\) subgraphs of the form
\(
\bigcup_{i=1}^{k} G^{i,j}_y \cup \bigcup_{i=1}^{k-1} \vconnector^{i,j},
\)
for some \(y \in [n]\), where \(G^{i,j}_y\) denotes the union of the \((y-1)(k+1)\)-th through \((y(k+1)-1)\)-th column paths of \(G^{i,j}\) together with all horizontal edges connecting their vertices. Each such path contains exactly \(k\) horizontal edges in total from cell gadgets. This allows us to encode the second entry of the selected pair for the \(j\)-th column of the \(\GT\)-instance.

Finally, we must ensure that the encoded pair \((a,b) \in [n] \times [n]\) for each cell \((i,j) \in [k] \times [k]\) satisfies \((a,b) \in S_{i,j}\). To enforce this, we assign digging costs so that, for each \(j \in [k]\), every \(\rcons\)-\(b^{i,\leftarrow}\) path in row-selection trees necessarily shares a (horizontal) edge with each \(\rselect\)-\(b^{\downarrow,j}\) path in column-selection trees. We then make the horizontal edges prohibitively long whenever they connect the \(y(k+1)\)-th and \((y(k+1)+1)\)-th column paths on the \(x\)-th row path of \(G^{i,j}\) if either \((x,y) \notin S_{i,j}\) -- thereby excluding invalid pairs -- or \(y \neq i\) -- thereby ensuring that the selected column entry is compatible with the selected row entry for each \(i \in [k]\). 

We now proceed with the formal construction of the \ULVshort-instance to realize the intended behavior described above.

\paragraph*{\(d\), \(t\), \(\alpha\), and \(\beta\)}
    We set \(t = 4\); so a solution consists of four trees \(H_{R,\RstartIndex},H_{R,\RendIndex},H_{C,\CstartIndex},H_{C,\CendIndex}\). Additionally, we set  
    \begin{align*}
    \beta=&\underbrace{(8+8k)\termLength+8k\cdot \rootLength+4k\cdot \srootLength}_{\text{root part}}\\
    &\underbrace{+k^2(2(\colNum-1-\regpar)\ell_H+2\regpar+2(\rowNum-1)\ell_V+2+(\colNum-1)d_H)}_{\text{cell gadgets}}\\
    &\underbrace{+2k(k-1)\cdot(2c+3+\frac{c}{4(c+1)} )\cdot \connectLength}_{\text{connectors}},
    \end{align*}
    so that the existence of a solution for \ULVshort is guaranteed if and only if the given \GTRegular instance with the regularity parameter \(\regpar\) has a solution. 
    Consider a partition \(\mathcal{P}\) of terminals into four parts as follows. 
    \begin{align*}
    &T_{R,\RstartIndex}=\{p_\RstartIndex(\rcons)\}\cup\{p_\RstartIndex(b^{i,j})\mid i\in[k],j\in\{\rightarrow,\leftarrow\}\} \cup \{h^{i,j}_{q,\RstartIndex}\mid i\in[k],j\in[k-1],q\in[c]\},\\ 
    &T_{R,\RendIndex}=\{p_\RendIndex(\rcons)\}\cup\{p_\RendIndex(b^{i,j})\mid i\in[k],j\in\{\rightarrow,\leftarrow\}\} \cup \{h^{i,j}_{q,\RendIndex}\mid i\in[k],j\in[k-1],q\in[c]\},\\ 
    &T_{C,\CstartIndex}=\{p_\CstartIndex(\rselect)\}\cup\{p_\CstartIndex(b^{i,j})\mid i\in \{\uparrow,\downarrow\},j\in[k]\} \cup \{v^{i,j}_{q,\CstartIndex}\mid i\in[k-1],j\in[k],q\in[c]\},\\
    &T_{C,\CendIndex}=\{p_\CendIndex(\rselect)\cup\{p_\CendIndex(b^{i,j})\mid i\in \{\uparrow,\downarrow\},j\in[k]\} \cup \{v^{i,j}_{q,\CendIndex}\mid i\in[k-1],j\in[k],q\in[c]\}.
    \end{align*} 
    The first two parts cover all terminals in the row connectors together with the terminals of the row root part. The main difference of these two sets is in the group of terminals that they contain from each row connector. In more detail, one covers the terminals on the top of each row connector and the other one covers the ones on the bottom (see \Cref{fig:connector-instance}). The two other parts cover all terminals in the column connector as well as the terminals of the column root part. Similarly, the main difference is in covering the terminals of each side of the column connectors: right and left.
    By \Cref{lemma:UniqueTerminalPartitioning}, there exists a function \(f: T \to [|T|^{6}]\) such that \(\mathcal{P}\) is the only way of partitioning the terminals into at most four parts where the summation of \(f\)-values of each part is at most \(\sum\limits_{u \in T}\frac{f(u)}{4}\). Since the demand values are integers, by setting the demand function \(d=f\) and \(\alpha=\lfloor\sum\limits_{u \in T}\frac{d(u)}{4}\rfloor\), the goal becomes finding four trees that each covers exactly one part of \(\mathcal{P}\). 
    
    So, from now on, we can assume that a solution to this \ULVshort-instance consists of four trees where \(H_{a,a'}\)  covers \(T_{a,a'}\) for \(a\in \{R,C\}\) and \(a' \in \{\RstartIndex,\RendIndex\}\). Note that this guarantees the validity of the covering constraint and the demand constraint.

This concludes the description of the \ULVshort-instance. In the remainder of the proof, we establish that it is equivalent to the instance of \GTRegular that we started from.
    
\paragraph*{If \(\left(S_{i,j}\right)_{(i,j) \in [k]^2}\) is a yes-instance then \((G,\ell,\dig,r,T,d,\alpha,\beta,t)\) is a yes-instance.} Fix a solution \(((a_i,b_j))_{(i,j) \in [k] \times [k]}\) for \(\left(S_{i,j}\right)_{(i,j) \in [k]^2}\).
We transform this into a solution for \ULVshort on the input instance \((G,\ell,\dig,r,T,d,\alpha,\beta,\lambda,t)\) by specifying the four trees \(H_{R,\RstartIndex}\), \(H_{R,\RendIndex}\), \(H_{C,\CstartIndex}\) and \(H_{C,\CendIndex}\), two passing horizontally through rows and two passing vertically through columns. Since each pair of trees with the same orientation has a large overlap, we define their shared parts first. For each part of the solution, we bound sum of the length values occurring in that subgraph. Afterwards, we bound the digging costs for the entire solution.

We define \(H_{R,\RstartIndex}\cap H_{R,\RendIndex}\) as the union of the following parts.
\begin{enumerate}
\item The \(a_i\)-th row path from each cell gadget \(G^{i,j}\)  for each \(i,j \in [k]\)  where each has length \((M-1-\regpar)\ell_H+\regpar\).
\item The \(a_i\)-th linking path of the row connector \(\hconnector^{i,j}\) for each \(i \in[k]\) and \(j \in [k-1]\), where each has length \((c+1 )\cdot \connectLength\).
\item The edge \(r\rcons\) which has length \(\termLength\) along with the following edges. For each \(i \in [k]\),
\begin{itemize}
    \item the edges \(\rcons b^{i,\rightarrow}\) where each has length \(\srootLength\),
    \item the edges \(b^{i,\rightarrow}u^{i,1}_{a_i,1}\), and \(u^{i,k}_{a_i,\colNum}b^{i,\leftarrow}\) where each has length \(\rootLength\).
\end{itemize}
 These edges have a total length \(\termLength+k\cdot \srootLength+2k\cdot \rootLength\).
\end{enumerate}
So, the total length of \(H_{R,\RstartIndex}\cap H_{R,\RendIndex}\)  is \[k^2((\colNum-1-\regpar)\ell_H+\regpar)+k(k-1)(c+1 )\cdot \connectLength+\termLength+k\cdot \srootLength+2k\cdot \rootLength.\] Additionally, for each \(a\in \{\RstartIndex,\RendIndex\}\), the tree \(H_{R,a}\) contains the following parts. 
\begin{enumerate}
    \item For each \(i \in [k]\), the edges connecting \(\rcons\), \(b^{i,\rightarrow}\), and \(b^{i,\leftarrow}\) to their corresponding pendant terminals \(p_a(\rcons)\), \(p_a(b^{i,\rightarrow})\), and \(p_a(b^{i,\leftarrow})\); recall that each of these edges has length \(\termLength\).
    \item For each \(i \in [k]\) and each \(j \in [k-1]\), for all \(q \in [c]\),
    \begin{itemize}
        \item the \(h^{i,j}_{q,\RstartIndex}\)-\(h^{i,j}_{q,a_i}\)-subpath from the \(q\)-th supporting path of the row connector \(\hconnector^{i,j}\) if \(a=\RstartIndex\),
        \item the \(h^{i,j}_{q,\RendIndex}\)-\(h^{i,j}_{q,a_i}\)-subpath from the \(q\)-th supporting path of the row connector \(\hconnector^{i,j}\) if \(a=\RendIndex\).
    \end{itemize}
    Note that these subpaths does not share any edges and their union consists of all supporting paths of the corresponding row connector. Since each supporting path has length \(\frac{\connectLength}{4(c+1)}\), the total length of all supporting paths in a connector is \(\frac{\connectLength}{4(c+1)} \cdot c\). 
\end{enumerate}
So, the total length of the non-shared parts of trees \(H_{R,\RstartIndex}\) and \(H_{R,\RendIndex}\) is
\[(4+4k)\termLength+k(k-1)(\frac{c}{c+1})\cdot \connectLength.
\]

For each \(j \in [k]\), there should be a path \(P_j\) appearing in both \(H_{C,\CstartIndex}\) and \(H_{C,\CendIndex}\) that corresponds to the chosen value \(b_j\). This path from \(u^{1,j}_{1,(b_j-1)(k+1)+1}\) to \(u^{k,j}_{\rowNum,b_j(k+1)}\) which lies within \(G^{1,j}\cup \vconnector^{1,j} \cup G^{2,j} \cup \ldots \cup \vconnector^{k-1,j} \cup G^{k,j}\). The path \(P_j\) consists of exactly \(k\) horizontal edges, one in each of the cell gadgets \(G^{1,j},\ldots,G^{k,j}\) where the one in \(G^{i,j}\) belongs to the \(x\)-th row path if \(s_{i,j}=(x,y)\) and is from \(u^{i,j}_{(b_j-1)(k+1)+i}\) to \(u^{i,j}_{(b_j-1)(k+1)+(i+1)}\) (see \Cref{fig:cellGadget:b}). The rest of the edges of \(P_j\) are vertical edges and the edges of linking paths. 
We formally define \(H_{C,\CstartIndex}\cap H_{C,\CendIndex}\) as the union of the following parts. The path \(P_j\) can be achieved through \ref{VerticalPathInCellGadgets} and \ref{VerticalLinkingPaths}.
\begin{enumerate}
\item For each \(i,j \in [k]\), a path from \(u^{i,j}_{1,(b_j-1)(k+1)+i}\) which lies on the first row path of \(G^{i,j}\) to \(u^{i,j}_{\rowNum,(b_j-1)(k+1)+i+1}\) which belongs to the last row path of \(G^{i,j}\) consisting of the following parts.
\begin{itemize}
    \item The subpath of the \(((b_j-1)(k+1)+i)\)-th column path of each cell gadget \(G^{i,j}\) that is from \(u^{i,j}_{1,(b_j-1)(k+1)+i}\) to \(u^{i,j}_{a_i,(b_j-1)(k+1)+i}\).
    \item The subpath of the \(((b_j-1)(k+1)+i+1)\)-th column path of each cell gadget \(G^{i,j}\) that is from \(u^{i,j}_{a_i,(b_j-1)(k+1)+i+1}\) to \(u^{i,j}_{\rowNum,(b_j-1)(k+1)+i+1}\).
    \item The edge \(u^{i,j}_{a_i,(b_j-1)(k+1)+i}u^{i,j}_{a_i,(b_j-1)(k+1)+i+1}\)
\end{itemize}
These subpaths have \(\rowNum-1\) vertical edges of length \(\ell_V\). The horizontal edge corresponds to the choice of \((a_i,b_j)\) which has length \(1\). So, each of these paths has a total length \((\rowNum-1)\ell_V+1\).\label{VerticalPathInCellGadgets}
\item The \(b_j\)-th linking path of the column connector \(\vconnector^{i,j}\) for each \(i \in[k-1]\) and \(j \in [k]\). Each of these paths has length \((c+1 )\cdot \connectLength\). \label{VerticalLinkingPaths}
\item  The edge \(r\rselect\), which has length \(\termLength\), along with the following edges. For all \(j\in[k]\),
\begin{itemize}
    \item \(\rselect b^{\uparrow,j}\) where each has length \(\srootLength\),
    \item the edges \(b^{\uparrow,j}u^{1,j}_{1,(b_j-1)(k+1)+1}\) and \(u^{k,j}_{n,b_j\cdot(k+1)}b^{\downarrow,j}\) where each has length \(\rootLength\).
\end{itemize} 
These edges have a total length \(\termLength+k\cdot \srootLength+2k\cdot \rootLength\).
\end{enumerate}
Note that the total length of edges other than horizontal edges in each mentioned path \(P_j\) is the fixed value \(k(\rowNum-1)\ell_V+(k-1)(c+1 )\cdot \connectLength\) regardless of where the side steps are taken. Since the sidesteps are taken in edges with minimum length (the ones that correspond to the choice of \(s_{i,j}\) which has length \(1\)), each of these paths has length \(k((\rowNum-1)\ell_V+1)+(k-1)(c+1 )\cdot \connectLength\) which is the minimum possible value for such a path.
Thus, the total length of \(H_{C,\CstartIndex}\cap H_{C,\CendIndex}\) is 
\[k^2((\rowNum-1)\ell_V+1)+k(k-1)(c+1 )\cdot \connectLength+\termLength+k\cdot \srootLength+2k\cdot \rootLength.\]

Additionally, for each \(a\in \{\CstartIndex,\CendIndex\}\), the tree \(H_{C,a}\) contains the following parts.
\begin{enumerate}
    \item The edge \(r\rselect\) along with the edges connecting \(\rselect\), \(b^{\uparrow,j}\), and \(b^{\downarrow,j}\) to their corresponding pendant terminals \(p_a(\rselect)\), \(p_a(b^{\uparrow,j})\), and \(p_a(b^{\downarrow,j})\) for each \(j \in [k]\); recall that each of these edges has length \(\termLength\) and so, they have a total length\((2+2k)\termLength\).
    \item For each \(i \in [k-1]\) and \(j \in [k]\), for all \(q \in [c]\),
    \begin{itemize}
        \item the \(v^{i,j}_{q,\CstartIndex}\)-\(h^{i,j}_{q,b_j}\)-subpath from \(q\)-th supporting paths of the column connector \(\vconnector^{i,j}\) if \(a=\CstartIndex\),
        \item the \(v^{i,j}_{q,\CendIndex}\)-\(h^{i,j}_{q,b_j}\)-subpath from \(q\)-th supporting path of the column connector \(\vconnector^{i,j}\) if \(a=\CendIndex\).
    \end{itemize}
    Note that these subpaths does not share any edges and their union consists of all supporting paths of the corresponding column connector. Since each supporting path has length \(\frac{\connectLength}{4(c+1)}\), the total length of all supporting paths in a connector is \(\frac{\connectLength}{4(c+1)} \cdot c\).
\end{enumerate} 
Hence, the total length of the non-shared parts of trees \(H_{C,\CstartIndex}\) and \(H_{C,\CendIndex}\) is
\[(4+4k)\termLength+k(k-1)\cdot \frac{c}{4(c+1)}\cdot \connectLength.
\]

Since the edge with a non-zero digging cost only belong to cell gadgets and connectors, the digging cost can be computed as follows.
\begin{enumerate}
    \item The digging cost arising within each connector is \(\connectLength\) since the edges on supporting paths have digging cost \(0\), while the solution uses all \(c+1\) edges of exactly one linking path and each such edge has a digging cost \(\frac{\connectLength}{c+1}\).
    \item The digging cost arising within each cell gadget is \((\colNum-1)d_H\) since the vertical edges have digging cost \(0\) and the solution uses all \(\colNum-1\) edges of exactly one row path and each such edge has a digging cost \(d_H\).
\end{enumerate}
Hence, the total digging cost is \[2k(k-1)\cdot \connectLength+k^2(\colNum-1)d_H.\] 

Therefore, we can conclude that the total cost of this solution \(\text{cost}(H_{C_1,}, H_{C_2}, H_{R_1}, H_{R_2})\) is at most
\begin{align*}
&\underbrace{2k^2((\colNum-1-\regpar)\ell_H+\regpar)+2k(k-1)(c+1 )\cdot \connectLength+2\termLength+2k\cdot \srootLength+4k\cdot \rootLength}_{\text{length of two copies of \(H_{R,\RstartIndex}\cap H_{R,\RendIndex}\)}}\\
&+\underbrace{(4+4k)\termLength+k(k-1)(\frac{c}{4(c+1)})\cdot \connectLength}_{\text{length of non-shared parts of \(H_{R,\RstartIndex}\) and \(H_{R,\RendIndex}\)}}\\
&\underbrace{2k^2((\rowNum-1)\ell_V+1)+2k(k-1)(c+1 )\cdot \connectLength+2\termLength+2k\cdot \srootLength+4k\cdot \rootLength}_{\text{length of two copies of \(H_{C,\CstartIndex}\cap H_{C,\CendIndex}\)}}\\
&+\underbrace{(4+4k)\termLength+k(k-1)(\frac{c}{4(c+1)})\cdot \connectLength}_{\text{length of non-shared parts of \(H_{C,\CstartIndex}\) and \(H_{C,\CendIndex}\)}}\\
&+\underbrace{2k(k-1)\cdot \connectLength+k^2(\colNum-1)d_H}_{\text{total digging cost}}\\
&=\beta,
\end{align*}
and so, the cost constraint holds. 

One can easily see that, by construction, the shared part of \(H_{R,\RstartIndex}\) and \(H_{R,\RendIndex}\) in \(
\bigcup_{j=1}^{k} G^{i,j} \cup \bigcup_{j=1}^{k-1} \hconnector^{i,j},
\) for each \(i \in [k]\) is a path where these paths are disjoint and become connected through a subtree that is defined in the row root part while non-shared parts are paths that has exactly one endpoint in the shared part which keeps the subgraph connected and acyclic. A similar argument holds for \(H_{C,\CstartIndex}\) and \(H_{C,\CendIndex}\). Therefore, these subgraphs are trees. Moreover,
since \(H_{R,\RstartIndex}\), \(H_{R,\RendIndex}\), \(H_{C,\CstartIndex}\) and \(H_{C,\CendIndex}\) cover \(T_{R,\RstartIndex}\), \(T_{R,\RendIndex}\), \(T_{C,\CstartIndex}\), and \(T_{C,\CendIndex}\), respectively, the rest of constraints also hold for this solution, thus, this is a valid solution for this \ULVshort-instance.

    \paragraph*{If \((G,\ell,\dig,r,T,d,\alpha,\beta,t)\) is a yes-instance then \(\left(S_{i,j}\right)_{(i,j) \in [k]^2}\) is a yes-instance.} 
    
    Let \(\mathcal{S}=\{H_{R,\RstartIndex},H_{R,\RendIndex},H_{C,\CstartIndex},H_{C,\CendIndex}\}\) be the trees defined by a solution with minimum cost that witnesses that \((G,\ell,\dig,r,T,d,\alpha,\beta,t)\) is a yes-instance. Without loss of generality, assume that \(H_{R,\RstartIndex}\) and \(H_{R,\RendIndex}\) cover \(T_{R,\RstartIndex}\) and \(T_{R,\RendIndex}\) respectively and \(H_{C,\CstartIndex}\) and \(H_{C,\CendIndex}\) cover \(T_{C,\CstartIndex}\) and \(T_{C,\CendIndex}\) respectively. 
    
    First, we provide some necessary structural properties of a minimum solution in \Cref{claim:EmptyConnectors}, \Cref{claim:EmptyRootPart}, and \Cref{claim:SingleEdgeLevelConnector}. Then, considering these properties, we provide a lower bound for the cost of different parts of a minimum solution, namely the parts that lie within each cell gadget and the parts that lie within each connector, respectively in \Cref{claim:cellgadgetstructure} and \Cref{claim:connectorstructure}. Finally, we show that these lower bounds are tight in \Cref{claim:tightcosts} and discuss how we can extract a solution to the \GTRegular instance from this minimum solution \(\mathcal{S}\) in the rest of the proof. 

    We start with two properties regarding the locations of the trees of a minimum-cost solution are that each tree only intersects: a specific subset of connectors (\cref{claim:EmptyConnectors}) and a specific subset of vertices in the root part of \(G\) (\cref{claim:EmptyRootPart}). 
    To establish these properties, we show that if this is not true, we can provide a new solution \(\mathcal{S}'\) to the considered \ULV-instance, by replacing a part of \(\mathcal{S}\) by another subgraph where \(\text{cost}(\mathcal{S}')<\text{cost}(\mathcal{S})\), contradicting the cost-minimality assumption on \(\mathcal{S}\). \Cref{claim:MakeConnectedSubgraph} provides a tool for this replacement that will be used in the proofs of \Cref{claim:EmptyConnectors} and \Cref{claim:EmptyRootPart}. The following claim facilitates the discussion on the effect of a replacement that will be done in the proof of \Cref{claim:MakeConnectedSubgraph} .

    \begin{claim}
    \label{claim:ConnectorConnectionToSides}
        If \(H\) is a subgraph of \(G\) that, for some \(a \in \{R,C\}\) and \(a'\in \{\RstartIndex,\RendIndex\}\), 
        \begin{itemize}
            \item \(T_{a,a'} \cup \{r\} \subseteq V(H)\), and
            \item has at most two connected components, each of which contains at least two vertices,
        \end{itemize}
        then for each connector \(\connector\) of \(G\) that is defined on a sequence of pairs of vertices \(\mathcal{P}\) where \(\connector\) is a row connector if \(a=R\) and is a column connector if \(a=C\), we have \(\ell(H\cap \connector) \geq \connectLength\cdot (c-1)+\frac{\connectLength}{4(|\mathcal{P}|+1)(c+1)} \cdot c\).
    \end{claim}
    \begin{proof}
    We assume that \(\connector\) is a row connector \(\hconnector^{i,j}\). The proof for the column connectors is symmetric, since we only use this assumption to identify the indices.

      We now show that from at least \(c-1\) out of the \(c+1\) levels of \(\hconnector^{i,j}\), at least one edge appears in \(H\). Suppose, for the sake of contradiction, that this is not the case. Then, there exist at least three levels of \(\hconnector^{i,j}\) whose edges do not appear in \(H\). Let \(q_1<q_2<q_3\) be the indices of these levels. 
      Because a connector has a path-like structure with respect to the linear order of the levels (see~\Cref{fig:connector-instance}), removing the edges of levels \(q_1, q_2, q_3\) from a connector splits the connector into at least four connected components, just as removing three edges from a path yields four components \(F_1, F_2, F_3, F_4\). The two middle components, \(F_2\) and \(F_3\), do not contain endpoints of linking paths. When connectors are placed into the graph \(G\), the only connections from the outside to the inside of a connector are through the endpoints of the linking paths. Hence there can be a path in \(G\) between vertices of \(F_1\) and \(F_4\) that avoids levels \(q_1, q_2, q_3\), but no vertex of \(F_2\) or \(F_3\) can connect through the rest of the graph without using an edge of layers \(q_1, q_2\), or \(q_3\). Consequently, in the graph obtained from \(H\) by removing the edges of levels \(q_1, q_2, q_3\) (which is identical to \(H\) itself, by assumption), there are at least three connected components: two corresponding to \(F_2\) and \(F_3\), and one additional component containing the root. This contradicts the assumption about the number of connected components of \(H\). Thus, \(H\) contains an edge from at least \(c-1\) levels of \(\hconnector^{i,j}\) and hence the total length of the edges of \(H\) that lie on the linking paths is at least \((c-1)\cdot \connectLength\).
    
    On the other hand, the only way to connect to each terminal is via the unique edge incident to it. Since each connected component of \(H\) contains at least two vertices, all terminals of \(H\) are connected to some other vertices, and so all these edges are in \(H\). Recalling that \(T_{a,a'}\) contains all \(c\) terminals of \(\hconnector^{i,j}\) that are on one side of it (bottom or top) and each edge that is incident on those terminals has length \(\frac{\connectLength}{4(|\mathcal{P}|+1)(c+1)}\), the total length of these \(c\) edges is \(\frac{\connectLength}{4(|\mathcal{P}|+1)(c+1)} \cdot c\). This is a lower bound for the total length of the edges of \(H\) that are on the supporting paths of \(\hconnector^{i,j}\). This implies that \(\ell(H \cap \hconnector^{i,j})\) is at least \((c-1) \cdot \connectLength + \frac{\connectLength}{4(|\mathcal{P}|+1)(c+1)} \cdot c\).
    \end{proof}

    Next, we present the tool required to prove \Cref{claim:EmptyConnectors} and \Cref{claim:EmptyRootPart}. This tool shows that any disconnected subgraph that contains a subset of the terminals \(T_{a,a'}\) for some \(a\in\{R,C\}\) and \(a' \in \{\RstartIndex,\RendIndex\}\) can be transformed into a connected one while increasing the cost by only a bounded amount. We can consider the operation of replacing \(H\) by \(H'\) as a two-step process: first removing \(H' \setminus H\), which decreases the cost by at least \(\ell(H\setminus H')\), and then inserting \(H \ H'\), which increases the cost by at most \(\ell(H'\setminus H)+\dig(H'\setminus H)\).
    
    \begin{claim}
    \label{claim:MakeConnectedSubgraph}
        If \(H\) is a subgraph of \(G\) that, for some \(a\in\{R,C\}\) and \(a' \in \{\RstartIndex,\RendIndex\}\),
        \begin{itemize}
            \item contains all vertices in \(T_{a,a'} \cup \{r\}\), and
            \item has exactly two connected components, each of which contains at least two vertices,
        \end{itemize}
       then there exists a tree \(H' \subseteq G\) such that \(T_{a,a'}\cup \{r\} \subseteq V(H')\) and \(\dig(H'\setminus H)+\ell(H'\setminus H)-\ell(H\setminus H') \leq 5\connectLength\).
    \end{claim}
    \begin{proof}
        We assume that \(a=R\); the proof for the other case is symmetric. Let \(C_r\) be the connected component of \(H\) that contains \(r\), and let \(\bar{C}_r\) denote the other connected component.
        For any two subgraphs \(H_x\) and \(H_y\) of \(G\), we define \(\Delta(H_x,H_y)=\dig(H_y\setminus H_x)+\ell(H_y\setminus H_x)-\ell(H_x\setminus H_y)\). Since every terminal in \(V(H) \cap T\) belongs to either \(C_r\) or \(\bar{C_r}\), we now perform a case distinction based on the structure of \(H\). For each case, we construct a connected subgraph \(H'\) that contains \(T_{a,a'}\cup \{r\}\) and satisfies \(\dig(H'\setminus H)+\ell(H'\setminus H)-\ell(H\setminus H') \leq 5\connectLength\). Finally, in each case, if the constructed subgraph is not acyclic, we replace it with a spanning tree of it. Clearly, this replacement does not increase the value of \(\Delta(H,H')\) and yields a tree that satisfies all required conditions.
        \begin{enumerate}
            \renewcommand{\labelenumi}{\Roman{enumi}.}
            \renewcommand{\theenumi}{\Roman{enumi}}
            \item \textbf{For some \(i\in [k]\) and \(j\in[k-1]\), there exists at least one terminal in \(\hconnector^{i,j}\) that belongs to \(C_r\) and one terminal in \(\hconnector^{i,j}\) that belongs to \(\bar{C_r}\).} In this case, \(H'\) will be obtained from a copy of \(H\) via the following modification steps:
            \begin{enumerate}
                \renewcommand{\labelenumii}{\arabic{enumii}.}
                \renewcommand{\theenumii}{\arabic{enumii}}
                \item We remove all edges of \(H \cap \hconnector^{i,j}\) along with all internal vertices of \( \hconnector^{i,j}\) that are in \(H\); Note that the total length of the removed edges is at least \(\connectLength\cdot (c-1)+\frac{\connectLength}{4(\rowNum+1)(c+1)} \cdot c\) by \Cref{claim:ConnectorConnectionToSides} since \(a=R\) and \(\hconnector^{i,j}\) is a row connector.\label{Step:MakeConnectedSubgraphCase1:Removal}
                \item We insert the last column path of \(G^{i,j}\), the first column path of \(G^{i,j+1}\), the linking path that contains all unique neighbors of terminals in \(T_{a,a'}\cap \hconnector^{i,j}\), and all edges that are incident on these terminals; recall that each column path has length \((\rowNum-1)\cdot \ell_V\) and digging cost \(0\), each linking path has length \( \connectLength\cdot(c+1)\) and digging cost \(\frac{\connectLength}{c+1}\cdot(c+1)=\connectLength\), and each unique edge that is connected to a terminal of \(\hconnector^{i,j}\) has length \(\frac{\connectLength}{4(\rowNum+1)(c+1)}\) and digging cost \(0\).\label{Step:MakeConnectedSubgraphCase1:Add}
            \end{enumerate}
            The set of edges that are removed in step~\ref{Step:MakeConnectedSubgraphCase1:Removal} consists of the set \(E(H\setminus H')\) and \(E(H\cap H' \cap \hconnector^{i,j})\), and so the total length of the removed edges in step~\ref{Step:MakeConnectedSubgraphCase1:Removal} is \(\ell(H\setminus H')+\ell(H\cap H' \cap \hconnector^{i,j})\). Furthermore, the set of edges that are added in step~\ref{Step:MakeConnectedSubgraphCase1:Add} consists of the set \(E(H'\setminus H)\) and \(E(H\cap H' \cap \hconnector^{i,j})\). So, the total length of added edges in step~\ref{Step:MakeConnectedSubgraphCase1:Add} is \(\ell(H'\setminus H)+\ell(H\cap H' \cap \hconnector^{i,j})\). Therefore, recalling that there are \(c\) terminals in \(V(\hconnector^{i,j})\cap T_{a,a'}\), we have 
            \begin{align*}
                \ell(H'\setminus H)-\ell(H\setminus H')&=\ell(H'\setminus H)+\ell(H\cap H' \cap \hconnector^{i,j})-\ell(H\setminus H')-\ell(H\cap H' \cap \hconnector^{i,j})\\
                &\leq
                \begin{aligned}[t]
                &2(\rowNum-1)\cdot \ell_V+\connectLength\cdot(c+1)+\frac{\connectLength}{4(\rowNum+1)(c+1)}\cdot c\\
                &- \connectLength\cdot (c-1)-\frac{\connectLength}{4(\rowNum+1)(c+1)} \cdot c
                \end{aligned}\\
                &\leq 2(\rowNum-1)\cdot \ell_V+2\connectLength 
            \end{align*}
            Since we have \((\rowNum-1)\cdot \ell_V<\frac{\colNum\cdot \rowNum\cdot \ell_V}{\colNum}<\frac{\connectLength}{\colNum}\), we can conclude that \(\ell(H'\setminus H)-\ell(H\setminus H')\leq 2\frac{\connectLength}{\colNum}+2\connectLength\). Therefore, we have \(\Delta(H,H')=\dig(H'\setminus H)+\ell(H'\setminus H)-\ell(H\setminus H') \leq \connectLength + 2\frac{\connectLength}{\colNum}+2\connectLength\leq 5\connectLength\).

            We now show \(H'\) is connected and contains all terminals in \(T_{a,a'}\) and \(r\). Let \(V_C\) be the vertices in the last column path of \(G^{i,j}\) together with the vertices in the first column path of \(G^{i,j+1}\). All connected components that remain after step~\ref{Step:MakeConnectedSubgraphCase1:Removal} should contain at least one vertex from \(V_C\) since they do not belong to the internal vertices of \(\hconnector^{i,j}\) and there is a path from their vertices to a terminal in \(\hconnector^{i,j}\) in the corresponding connected component of \(H\). Noting that the edges and vertices of these connected components are also in \(H'\) and the added part in step~\ref{Step:MakeConnectedSubgraphCase1:Add} connects all vertices of \(V_C\), hence \(H'\) is connected. Since \(r\) belongs to one the remaining connected components after step~\ref{Step:MakeConnectedSubgraphCase1:Removal} and each terminal of \(T_{a,a'}\) is either in the added part or in one of these connected components, therefore, \(H'\) covers \(T_{a,a'}\cup \{r\}\).
            
            \item \textbf{For each row connector \(\hconnector\), all terminals of \(T_{a,a'}\) contained in \(\hconnector\) lie in the same connected component of \(H\).}
            We have following subcases for this case.
            \begin{enumerate}
                \item \textbf{For some \(i\in [k],j\in [k-1]\), the terminals of \(T_{a,a'}\) that lie in \(\hconnector^{i,j}\) belong to a different connected component of \(H\) than those in \(\hconnector^{i,j+1}\).} Let \(u\) be the unique neighbor of the terminal \(h^{i,j}_{c,a'}\) and \(v\) be the unique neighbor of \(h^{i,j+1}_{1,a'}\). Let \(x=1\) if \(a=\RstartIndex\) and \(x=\rowNum\) if \(a=\RendIndex\). Then, we know that \(u\) lies on the \(x\)-th linking path of \(\hconnector^{i,j}\) and \(v\) lies on the \(x\)-th linking path of 
            \(\hconnector^{i,j+1}\). We set \(H'\) to be a copy of \(H\) union the following parts:
            \begin{itemize}
                    \item The first and last column paths and the first row path of \(G^{i,j}\). Recall that each column path has a total cost at most \(\frac{\connectLength}{\colNum}\) and each row path has a total cost at most \(\frac{\connectLength}{\rowNum}\).
                    \item Edges \(u^{i,j}_{x,\colNum}u\) and \(u^{i,j+1}_{x,1}v\), i.e.\ the edges connecting \(u\) and \(v\) to their neighbor in the cell gadget \(G^{i,j}\). Each of these edges has length \(\connectLength\) and digging cost \(\frac{\connectLength}{c+1}\).
            \end{itemize}
            Noting that all edges of \(H\) are still in \(H'\), we have \(\ell(H\setminus H')=0\). Therefore, we have  \(\Delta(H,H')=\ell(H'\setminus H)+\dig(H'\setminus H)\) and so, we can rewrite \(\Delta(H,H')\) as 
            \begin{align*}
                \Delta(H,H')\leq&2\cdot \frac{\connectLength}{\colNum}+\frac{\connectLength}{\rowNum}+2\cdot (\connectLength+\frac{\connectLength}{c+1})\\
                \leq& 3\cdot \frac{\connectLength}{\rowNum}+2\cdot (\connectLength+\frac{\connectLength}{c+1})\tag*{\text{\((\colNum>\rowNum)\)}}\\
                \leq& 5\connectLength\tag*{\text{\((\rowNum \geq 2 \land c=6)\)}}.
            \end{align*}
            Noting that \(V(H)\subseteq V(H')\) and \(T_{a,a'}\cup \{r\} \subseteq V(H)\), we have \(T_{a,a'}\cup \{r\}\subseteq V(H')\) as well. Now we show that \(H'\) is connected. Since no connected component of \(H\) consists of a single vertex by the precondition, both \(h^{i,j}_{c,a'}\) and  \(h^{i,j+1}_{1,a'}\) are connected to at least one other vertex in \(H\). Since \(u\) and \(v\) are their unique neighbors, the edges \(h^{i,j}_{c,a'}u\) and \(h^{i,j+1}_{1,a'}v\) are in \(E(H)\) and so in \(E(H')\). Since \(h^{i,j}_{c,a'}\) and  \(h^{i,j+1}_{1,a'}\) belong to different connected components of \(H\), \(u\) and \(v\) are also in different connected components of \(H\). Noting that \(u\) and \(v\) are connected through the added part \(H'\setminus H\), the connected components of \(H\), which are also in \(H'\) before adding the new part \(H'\setminus H\), must become connected through the added part in \(H'\).

            \item \textbf{For some \(i \in [k]\), the terminals of \(\hconnector^{i,1}\) belong to a different connected component of \(H\) than \(p_{a'}(b^{i,\rightarrow})\).} Let \(u\) be the unique neighbor of \(h^{i,1}_{1,a'}\) and let \(x\) be the index of the linking path of \(\hconnector^{i,1}\) that contains \(u\). We set
            \(H'\) to be a copy of \(H\) union the following parts:
            \begin{itemize}
                \item the last column path and the first row path of the cell gadget \(G^{i,1}\). Recall that each column path has a total cost at most \(\frac{\connectLength}{\colNum}\) and each row path has a total cost at most \(\frac{\connectLength}{\rowNum}\).
                \item the edges \(b^{i,\rightarrow}u^{i,1}_{1,1}\) and \(uu^{i,1}_{x,\colNum}\), i.e.\ the edge connecting \(b^{i,\rightarrow}\) to one of its neighbors in the cell gadget \(G^{i,j}\) and the edge that connects \(u\) to its unique neighbor in the cell gadget \(G^{i,j}\). These edges have lengths \(\rootLength\) and \(\connectLength\) and digging costs \(0\) and \(\frac{\connectLength}{c+1}\), respectively.
            \end{itemize}
            Since \(E(H)\subseteq E(H')\), we have \(\ell(H\setminus H')=0\). Thus, we can conclude that  \(\Delta(H,H')=\ell(H'\setminus H)+\dig(H'\setminus H)\) and so, we can rewrite \(\Delta(H,H')\) as 
            \begin{align*}
                \Delta(H,H')\leq&\frac{\connectLength}{\colNum}+\frac{\connectLength}{\rowNum}+\rootLength+\connectLength+\frac{\connectLength}{c+1}\\
                \leq& 2\cdot\frac{\connectLength}{\rowNum}+\rootLength+\connectLength+\frac{\connectLength}{c+1}\tag*{\text{\((\colNum>\rowNum)\)}}\\
                \leq& 5\connectLength\tag*{\text{\((\rowNum\geq 2\land \rootLength=2\connectLength \land c=6)\).}}
            \end{align*}
            We now show that \(H'\) is connected. Since  \(p_{a'}(b^{i,\rightarrow})\) and \(h^{i,1}_{1,a'}\) are connected to some other vertices in \(H\), the edges \(b^{i,\rightarrow}p_{a'}(b^{i,\rightarrow})\) and \(uh^{i,1}_{1,a'}\) should be in \(H\) as well as these are the only edges that are incident on \(p_{a'}(b^{i,\rightarrow})\) and \(h^{i,1}_{1,a'}\) respectively. Since \(p_{a'}(b^{i,\rightarrow})\) and \(h^{i,1}_{1,a'}\) are in different connected components of \(H\), the vertices \(u\) and \(b^{i,\rightarrow}\) also must belong to different connected components of \(H\). These connected components are in \(H'\) as well before adding the new part. Then, since the added part connects \(u\) and \(b^{i,\rightarrow}\), these connected components become connected and form a single connected component in \(H'\).
            Moreover, noting that \(V(H)\subseteq V(H')\), we can conclude that \(H'\) covers \(T_{a,a'}\cup\{r\}\).
            
            \item \textbf{For some \(i \in [k]\), the vertex \(p_{a'}(b^{i,\leftarrow})\) belongs to a different connected component than the terminals in \(\hconnector^{i,k}\).} This is symmetric to the previous case:
                \begin{itemize}
                    \item Instead of the unique neighbor of \(h^{i,1}_{1,a'}\), the unique neighbor of \(h^{i,k-1}_{c,a'}\) is considered;
                    \item Instead of the first row path and the last column path of the cell gadget \(G^{i,1}\), the first row path and the first column path of \(G^{i,k}\) are added;
                    \item Instead of the edge \(b^{i,\rightarrow}u^{i,1}_{1,1}\) and the edge between the unique neighbor of \(h^{i,1}_{1,a'}\) and its neighbor in the last column path of \(G^{i,1}\), the edge \(b^{i,\leftarrow}u^{i,1}_{1,\colNum}\) and the edge between the unique neighbor of \(h^{i,k-1}_{c,a'}\) and its neighbor in the first column path of \(G^{i,k}\) are added.
                \end{itemize}
            \item \textbf{None of the previous subcases hold.} In this case, for each \(i\in [k]\), we know that the terminals \(T_{a,a'}\cap \left(\bigcup\limits_{j\in[k-1]}\hconnector^{i,j}\right)\) belong to the same connected component of \(H\) as \(p_{a'}(b^{i,\rightarrow})\) and \(p_{a'}(b^{i,\leftarrow})\). Since \(\rcons\) is the unique neighbor of \(p_{a'}(\rcons)\), it is either in \(C_r\) or \(\bar{C_r}\). We have three subcases for this scenario.
            \begin{enumerate}
                \item \textbf{For some \(i\in[k]\),\(p_{a'}(b^{i,\rightarrow})\) belongs to a connected component other than the one that \(\rcons\) belongs to.} Let \(H'\) be a copy of \(H\), together with the edge \(b^{i,\rightarrow}\rcons\). Since this edge has length \(\srootLength\) and digging cost \(0\), we have \(\Delta(H,H')=\srootLength\) where \(\srootLength =5\connectLength\). We prove that the constructed \(H'\) is connected. Since each connected component has at least two vertices, the vertex \(b^{i,\rightarrow}\) is in the same connected component as \(p_{a'}(b^{i,\rightarrow})\) in \(H\). Hence the inserted edge \(b^{i,\rightarrow}\rcons\) connects the two connected components of \(H\), resulting in a connected subgraph. Furthermore, since \(V(H)\subseteq V(H')\), \(H'\) includes \(T_{a,a'}\cup\{r\}\).
                \item \textbf{All terminals of \(H\) belong to \(C_r\).} In this case, \(C_r\) contains all terminals of \(H\) as well as \(r\). We consider \(H'\) as a copy of \(C_r\). Clearly, it is a connected graph that covers \(T_{a,a'}\cup \{r\}\) where \(\Delta(H,H')\leq 0\) since \(E(H')\subseteq E(H)\) and so \(\ell(H'\setminus H)+\dig(H'\setminus H)=0\). 
                
                \item \textbf{All terminals of \(H\) belong to \(\bar{C_r}\).} Let \(H'\) be a copy of \(\bar{C_r}\) along with the edge \(r\rcons\). Recall that this edge has length \(\termLength\) and digging cost \(0\). Hence \(\ell(H'\setminus H)+\dig(H'\setminus H)\leq \termLength\). On the other hand, considering that \(\rcons\) is the unique neighbor of \(p_{a'}(\rcons) \in T_{a,a'}\), it should belong to \(\bar{C_r}\), and so, the edge \(r\rcons\) is not in \(H\). Since \(C_r\) contains at least two vertices, the edge \(r\rselect\) must be in \(C_r\) since \(r\) must pass through this edge to reach other vertices. Thus, we have \(\ell(H\setminus H')\geq \termLength\) as the edges of \(C_r\) are in \(H\) but not in \(H'\). Therefore, we can conclude that \(\Delta(H,H')\leq \termLength-\termLength=0\). Clearly, \(H'\) is connected as \(\bar{C_r}\) is connected and contains \(\rcons\), and since \(\bar{C_r}\) contain all terminals and \(r\) is also added to \(H'\) through the edge \(r\rcons\), it contains \(T_{a,a'}\cup \{r\}\).
                \end{enumerate}
        \end{enumerate}
        \end{enumerate}
        
        Since the case distinction is exhaustive, this completes the proof for \Cref{claim:MakeConnectedSubgraph} in the \(a=R\) case. We can show a similar argument for \(a=C\) as follows. Firstly, we swap the role of row and column connectors, the role of the row and columns of each cell gadget, and the role of \(\rcons\) and \(\rselect\) in the proof. We also note the following facts: 
        \begin{itemize}
            \item The cost of adding a row or a column path of a cell gadget is at most \(\frac{\connectLength}{\rowNum}\);
            \item The cost of adding and removing an edge from a linking or supporting path from a connector, and the cost of connections to \(\rselect\), \(b^{\uparrow,1},\ldots,b^{\uparrow,k}\) and \(b^{\downarrow,1},\ldots,b^{\downarrow,k}\) compared to the cost of connections to \(\rcons\), \(b^{1,\rightarrow},\ldots,b^{k,\rightarrow}\) and \(b^{1,\leftarrow},\ldots,b^{k,\leftarrow}\) would remain the same.
        \end{itemize}
        This completes the proof of this claim.
    \end{proof}

    We are not ready to prove \Cref{claim:EmptyConnectors} -- that each tree only intersects a specific subset of connectors. In more detail, we show that \(H_{R,\RstartIndex}\cup H_{R,\RendIndex}\) does not intersect column connectors and \(H_{C,\CstartIndex}\cup H_{C,\CendIndex}\) does not intersect row connectors.
    
    \begin{claim}
        \label{claim:EmptyConnectors}
        For each \(a \in \{\RstartIndex,\RendIndex\}\), we have
        \begin{enumerate}
            \item \(E(H_{R,a}) \cap E(\vconnector^{i,j}) = \emptyset\) for each \(i \in [k], j\in[k-1]\), and
            \item \(E(H_{C,a}) \cap E(\hconnector^{i,j}) = \emptyset\)  for each \(i \in [k-1], j\in[k]\).
        \end{enumerate}  
    \end{claim}
    \begin{proof}
        Assume that for some \(i,j\in[k]\) and \(a \in \{\RstartIndex,\RendIndex\}\), we have \(E(H_{R,a}) \cap E(\vconnector^{i,j}) \neq  \emptyset\).
        We construct \(H'_{R,a}\) as follows. Initially, let \(H'_{R,a}=H_{R,a}\). 
        \begin{enumerate}
            \item \textbf{If \(H_{R,a}\) contains a path from one of the vertices of the last row path of \(G^{i,j}\) to one of the vertices of the first row path of \(G^{i+1,j}\) that lies completely in \(\vconnector^{i,j}\):} 
            \begin{enumerate}
                \item We remove all edges of \(\vconnector^{i,j}\) from \(H'_{R,a}\). Since there is at least one path from one of the vertices the vertices of the last row path of \(G^{i,j}\) to one of the vertices of the first row path of \(G^{i+1,j}\) that lies completely in \(\vconnector^{i,j}\), after removing these edges, we reduce the cost by at least \(\connectLength\cdot (c+1)\).
                \item Add the last row path of \(G^{i,j}\) and the first row path of \(G^{i+1,j}\) to \(H'_{R,a}\). 
            Recall that each of these paths has a total cost \( \frac{\connectLength}{\rowNum}\) as shown before.
            \end{enumerate}
            After this, \(H'_{R,a}\) has at most two connected components, since we only lost connection between the vertices of the last row path of \(G^{i,j}\) and the vertices of the first row path of \(G^{i+1,j}\) and the vertices within each path are still connected.
            \begin{itemize}
                \item \textbf{If \(H'_{R,a}\) has exactly one connected component:} In this case,  the cost increases by at most \(\frac{2\connectLength}{\rowNum}-2\cdot \connectLength < 0\), which contradicts the cost-minimality of the solution.
                \item \textbf{If \(H'_{R,a}\) contains exactly two connected components.} Since \(H'_{R,a}\) is a subgraph that contains all vertices \(T_{R,a} \cup \{r\}\) and has exactly two connected components, each of which has at least two vertices, by \Cref{claim:MakeConnectedSubgraph}, there exists a tree \(H''_{R,a}\) where \( \dig(H''_{R,a}\setminus H'_{R,a})+\ell(H''_{R,a}\setminus H'_{R,a})-\ell(H'_{R,a}\setminus H''_{R,a}) \leq 5\connectLength\). Therefore, by replacing \(H_{R,a}\) in the considered minimum solution with \(H''_{R,a}\), we can construct a solution where the cost increases by at most \(\frac{2\connectLength}{\rowNum}+5\connectLength-\connectLength\cdot(c+1)<0\). This contradicts the cost-minimality assumption on the considered solution. 
            \end{itemize}
            \item \textbf{If \(H_{R,a}\) does not contain any paths from the vertices of the last row path of \(G^{i,j}\) to the vertices of the first row path of \(G^{i+1,j}\) that lies completely in \(\vconnector^{i,j}\):}
            We modify \(H'_{R,a}\) as follows. 
            \begin{enumerate}
                \item We remove all edges of \(\vconnector^{i,j}\) from \(H'_{R,a}\). \label{mod:EmptyConnectRemove}
                \item If there is a path between two vertices among the vertices of the last row path of \(G^{i,j}\) in \(H_{R,a}\), we add the the last row path of \(G^{i,j}\) to \(H'_{R,a}\).\label{mod:EmptyConnectLastRowPathAdd}
                \item Similarly, if there is a path between two vertices among the vertices of the first row path of \(G^{i+1,j}\) in \(H_{R,a}\), we add the the first row path of \(G^{i+1,j}\) to \(H'_{R,a}\).\label{mod:EmptyConnectFirstRowPathAdd}
            \end{enumerate}
            In this case, the part of paths in \(H_{R,a}\) that intersect the interior of \(\vconnector^{i,j}\) should be either between two vertices among the vertices of the last row path of \(G^{i,j}\) or between two vertices among the vertices of the first row path of \(G^{i+1,j}\). Thus, \(H'_{R,a}\) must be still connected after these modification steps. On the other hand, since no terminal in \(\vconnector^{i,j}\) belongs to \(T_{R,a}\) and we are considering the minimum-cost solution, there is at least one path in \(H_{R,a}\) that is either between two vertices among the vertices of the last row path of \(G^{i,j}\) or between two vertices among the vertices of the first row path of \(G^{i+1,j}\) in \(H_{R,a}\) and lies completely in \(\vconnector^{i,j}\). Thus, there are at least two edges of length \(\connectLength\) in \(\vconnector^{i,j}\) that belong to \(H_{R,a}\) and so, after step~\ref{mod:EmptyConnectRemove}, the cost decreases by at least \(2\cdot \connectLength\).
            Recall that each row path has a total cost \( \frac{\connectLength}{\rowNum}\). Hence, after these modifications, the cost increases by at most \(2\cdot\frac{\connectLength}{\rowNum}-2\cdot \connectLength < 0\), which contradicts the cost-minimality of the solution.
        \end{enumerate}
        This completes the proof for the first case of \Cref{claim:EmptyConnectors}. We can show similarly that \(E(H_{C,a}) \cap E(\hconnector^{i,j}) = \emptyset\) by swapping the role of row and column connectors, the role of the rows and columns of each cell gadget, and the role of \(\rcons\) and \(\rselect\) in the proof and noting that 
        \begin{itemize}
            \item the cost of adding a row or a column path of a cell gadget costs at most \(\frac{\connectLength}{\rowNum}\), and
            \item the cost of adding or removing the supporting paths from connectors is at most \(\frac{\connectLength}{4(\rowNum+1)(c+1)}\), and
            \item the cost of adding and removing an edge from a linking or supporting path from a connector, and the cost of connections to \(\rselect\), \(b^{\uparrow,1},\ldots,b^{\uparrow,k}\) and \(b^{\downarrow,1},\ldots,b^{\downarrow,k}\) compared to the cost of connections to \(\rcons\), \(b^{1,\rightarrow},\ldots,b^{k,\rightarrow}\) and \(b^{1,\leftarrow},\ldots,b^{k,\leftarrow}\) would remain the same.
        \end{itemize}
        This proves the claim.
    \end{proof}

    The other property that we mentioned for a minimum solution is that each tree only intersects a specific subset of the vertices of the root part of \(G\). More elaborately, \(H_{R,\RstartIndex} \cup H_{R,\RendIndex}\) does not contain any vertices from the column root part, and \(H_{C,\CstartIndex} \cup H_{C,\CendIndex}\) does not contain any vertices from the row root part of \(G\). 
    
    \begin{claim}
        \label{claim:EmptyRootPart}
        For \(a \in \{\RstartIndex,\RendIndex\}\), we have
        \begin{enumerate}
            \item \(V(H_{R,a}) \cap \{b^{i,j}\mid i \in \{\uparrow,\downarrow\} \land j \in [k]\} = \emptyset\), and
            \item \(V(H_{C,a}) \cap \{b^{i,j}\mid i \in [k] \land j \in \{\rightarrow,\leftarrow\}\} = \emptyset\).
        \end{enumerate}
    \end{claim}
    \begin{proof}
        Firstly, we show that \(V(H_{R,a}) \cap \{b^{i,j}\mid i \in \{\uparrow,\downarrow\} \land j \in [k]\} = \emptyset\). The proof for showing that \(V(H_{C,a}) \cap \{b^{i,j}\mid i \in [k] \land j \in \{\rightarrow,\leftarrow\}\} = \emptyset\) can be obtained similarly, where instead of considering the cell gadgets of the first and last rows, we consider the cell gadgets of the first and last columns.
        
        Assume towards a contradiction that for some \(j \in [k]\), the vertex \(b^{\uparrow,j}\) is in \(H_{R,a}\); we will treat ~$b^{\downarrow,j}$ later. We prove the claim by making a case distinction on whether \(r\rselect\) belongs to \(H_{R,a}\) or not.
        \begin{itemize}
            \item \textbf{\(r\rselect \in E(H_{R,a})\).} We construct a subgraph \(H'_{R,a}\) from \(H_{R,a}\) by modifying it as follows.
            \begin{itemize}
                \item We remove all edges that are incident on the vertices \(\{b^{\uparrow,j'}\mid j' \in [k]\}\) as well as the edge \(r\rselect\). Recall that the edge \(r\rselect\) has length \(\termLength\) and hence, the removed part has length at least \(\termLength\).\label{mod:RemoveRselectEdges}
                \item For each \(j' \in [k]\), we add the first row path of \(G^{1,j'}\). Recall that the cost of each row path is at most \(\frac{\connectLength}{\rowNum}\).
                \item For each \(j' \in [k-1]\), we add the first linking path of \(\hconnector^{1,j'}\) along with all of its supporting paths. Recall that the cost of a linking path of a connector is \((c+2)\cdot\connectLength\), and the cost of each of \(c\) supporting paths of a connector is \(\frac{\connectLength}{4(c+1)}\).
                \item We add the edges 
                \begin{itemize}
                    \item \(\rcons b^{1,\rightarrow}\) which has cost \(\srootLength\),
                    \item \(b^{1,\rightarrow}u^{1,1}_{1,1}\) and \(u^{1,k}_{1,\colNum}b^{1,\leftarrow}\) where each costs \(\rootLength\).
                \end{itemize}
            \end{itemize}
             This procedure increases the cost by at most \[\srootLength+2\rootLength+k\cdot\frac{\connectLength}{\rowNum}+(k-1)\cdot((c+2)\cdot\connectLength+\frac{\connectLength}{4(c+1)}\cdot c) - X<0.\]
             
             We claim that this subgraph is connected and includes all terminals covered by \(H_{R,a}\) as well as \(r\). Since we did not remove any edges that are incident on a terminal of \(H_{R,a}\), it is clear that \(H'_{R,a}\) contains all those terminals. Furthermore, by \Cref{claim:EmptyConnectors}, 
            we have \(E(H_{R,a}) \cap E(\vconnector^{1,j'}) = \emptyset\) for all \(j' \in [k-1]\) which shows that removal edges in step~\ref{mod:RemoveRselectEdges} can only disconnect a part of the terminals of \(T_{R,a}\) that lies within the first row, i.e.\ the terminals of the row connectors \(\hconnector^{1,1},\ldots,\hconnector^{1,k-1}\) along with \(p_a(b^{1,\rightarrow})\) and \(p_a(b^{1,\leftarrow})\), from the rest of the terminals. The added part connects all the terminals in the first row to \(r\). On the other hand, \(\rcons\) should be in the same connected component as \(r\) and so the same connected component as the terminals that does not lie within the first row. Thus, all terminals should be connected to \(r\) in \(H'_{R,a}\) which means that \(H'_{R,a}\) is connected and contains all the terminals in \(H_{R,a}\) and \(r\). So, replacing \(H_{R,a}\) with \(H'_{R,a}\) in the solution results in a smaller cost, which contradicts the cost-minimality assumption on the considered solution. 
            \item \textbf{\(r\rselect \notin E(H_{R,a})\).} Since the pendant vertices of \(b^{\uparrow,j}\) are not in \(H_{R,a}\), neither \(b^{\uparrow,j}\) nor its pendant vertices can be a leaf of \(H_{R,a}\) due to cost-minimality. Therefore, there should be some \(y\in [\colNum]\) such that the edge \(b^{\uparrow,j}u^{1,j}_{1,y}\) is in \(H_{R,a}\). We construct a subgraph \(H'_{R,a}\) as follows. Let \(H'_{R,a}\) initially be the same as \(H_{R,a}\) which we modify it as follows, step by step. 
            \begin{enumerate}
                \item All edges between \(b^{\uparrow,j}\) and the vertices of \(G^{1,j}\) are removed. Recall that each of these edges has length \(\rootLength\).
                \item All edges of the first row path of \(G^{1,j}\) are added. Recall a row path has length \(\frac{\connectLength}{\rowNum}\).
                \item If the connected component that contains \(b^{\uparrow,j}\) does not intersect \(T_{R,a} \cup \{r\}\), remove it from \(H'_{R,a}\).
            \end{enumerate}
            This increases the cost by at most \(\frac{\connectLength}{\rowNum}-\rootLength\). 
            
            If \(H'_{R,a}\) is connected after this modification, then replacing \(H_{R,a}\) by \(H'_{R,a}\) results in a solution with smaller cost since \(\rootLength=2\connectLength\).
            This contradicts the cost-minimality assumption on the solution. So, there should be exactly two connected components in \(H'_{R,a}\) that arise from losing connectivity of the vertices of the first row path in \(G^{1,j}\) and the vertex \(b^{\uparrow,j}\). In this case, we continue the modification as follows. 
             Noting that \(b^{\uparrow,j}\) is connected to at least one terminal in \(T_{R,a}\) or \(r\) in \(H'_{R,a}\), the connected component that contains \(b^{\uparrow,j}\) must cover the edge \(\rselect b^{\uparrow,j}\) since this is the only edge that is incident on \(b^{\uparrow,j}\) that is not removed from \(H'_{R,a}\). This shows that \(b^{\uparrow,j}\) is a leaf of \(H'_{R,a}\). Since this vertex is not a terminal, we remove it from \(H'_{R,a}\) along with its edge \(\rselect b^{\uparrow,j}\), which reduces the cost by at least \(\srootLength\). Note that the connected component that contains \(\rselect\) still has a terminal of \(H_{R,a}\) or \(r\) as well. Thus, this procedure results in a subgraph that still covers \(T_{R,a} \cup \{r\}\) and has exactly two connected components, each having at least two vertices. Hence by \Cref{claim:MakeConnectedSubgraph}, there exists a tree \(H''_{R,a}\) where \(\dig(H''_{R,a}\setminus H'_{R,a})+\ell(H''_{R,a}\setminus H'_{R,a})-\ell(H'_{R,a}\setminus H''_{R,a}) \leq 5\connectLength\). Therefore, by replacing \(H_{R,a}\) in the minimum solution considered with \(H''_{R,a}\), we can construct a solution where the cost increases by at most \(\frac{\connectLength}{\rowNum}+5\connectLength-\rootLength-\srootLength<0\) recalling that \(\rootLength=2\connectLength\) and \(\srootLength=5\connectLength\). This contradicts the cost-minimality assumption on the considered solution.
        \end{itemize}

        Now we show that for all \(j \in [k]\), the vertex \(b^{\downarrow,j}\) is not in \(H_{R,a}\); the argument for \(b^{\uparrow,j}\) is symmetric. Assume towards a contradiction that this is not the case and so, for some \(j\in[k]\), \(b^{\downarrow,j}\) is in \(H_{R,a}\). Then, since no terminal of \(T_{R,a}\) is connected to \(b^{\downarrow,j}\) and \(b^{\downarrow,j}\) is only connected to the vertices of the last row path of \(G^{k,j}\), we can remove \(b^{\downarrow,j}\) along with all incident edges of \(H_{R,a}\) and add the last row path of \(G^{k,j}\) to construct a subgraph \(H'_{R,a}\). Clearly, this new subgraph is connected and still covers all the terminals in \(T_{R,a}\). However, through this modification, the cost is increased by at most \(\frac{\connectLength}{\rowNum}-\rootLength<0\), contradicting the cost-minimality of the solution.
    \end{proof}

    The following claim restricts the number of edges that appear in the solution from each level of a connector. This restriction is useful in providing a lower bound for the cost that arises from each connector. In more detail, this restriction not only provides a precise value for the cost arising from the linking paths of the connectors, but forces the solution to include all supporting paths, resulting in the lower bound presented in \Cref{claim:connectorstructure}. From now on, we refer to the union of the solution trees \(H_{R,\RstartIndex}\cup H_{R,\RendIndex}\cup H_{C,\CstartIndex} \cup H_{C,\CendIndex}\) by \(H_{\mathcal{S}}\).
    
    \begin{claim}
        \label{claim:SingleEdgeLevelConnector}
        For each \(q \in [c+1]\) and each connector \(\connector\), the union of solution trees \(H_{\mathcal{S}}\) contains exactly one edge from the \(q\)-th level of \(\connector\).
    \end{claim}
    \begin{proof}
        By \Cref{claim:EmptyConnectors}, there is no edge of \(H_{R,\RstartIndex}\) or \(H_{R,\RendIndex}\) in a column connector, and there is no edge of \(H_{C,\CstartIndex}\) or \(H_{C,\CendIndex}\) in a row connector. Therefore, to explore the edges of \(H_{\mathcal{S}}=H_{R,\RstartIndex}\cup H_{R,\RendIndex}\cup H_{C,\CstartIndex} \cup H_{C,\CendIndex}\) in a column connector, it suffices to consider the edges of \(H_{R,\RstartIndex}\) and \(H_{R,\RendIndex}\).
        Similarly, to explore the edges of \(H_{\mathcal{S}}\) in a row connector, it suffices to consider the edges of \(H_{C,\CstartIndex}\) and \(H_{C,\CendIndex}\). On the other hand, considering both \Cref{claim:EmptyConnectors} and \Cref{claim:EmptyRootPart} and noting that there should be a path from \(b^{i,\rightarrow}\) to \(b^{i,\leftarrow}\) for each \(i \in [k]\) in each of the trees \(H_{R,\RstartIndex}\) and \(H_{R,\RendIndex}\), each of these trees must contain at least one edge in each level of each row connector, and noting that there should be a path from \(b^{\uparrow,j}\) to \(b^{\downarrow,j}\) for each \(j \in [k]\) in each of trees \(H_{C,\CstartIndex}\) and \(H_{C,\CendIndex}\), each of these trees must contain at least one edge from each level of each column connector.
        
        To complete the proof, We demonstrate that there can be no more than one edge from each level of each connector. Assume for contradiction that for a connector \(\connector\) and a value of \(q \in [c+1]\), the subgraph \(H_{\mathcal{S}}\) contains more than one edge of the \(q\)-th level of this connector. First, we discuss the case where \(\connector\) is a row connector. I.e.\ let \(\connector = \hconnector^{i,j}\) for some \(i \in [k]\) and \(j\in [k-1]\).  We prove the claim considering three subcases.
        
        \begin{enumerate}
            \item \textbf{\(q\in\{2,\ldots,c\}\).} Construct two subgraphs \(H'_{R,\RstartIndex}\) and \(H'_{R,\RendIndex}\) as follows. Let them initially be the same as \(H_{R,\RstartIndex}\) and \(H_{R,\RendIndex}\), respectively. We modify them as follows.
            \begin{enumerate}
                \item We remove all edges of the \(q\)-th level  of \(\connector\) from each of these subgraphs. Noting that only \(H_{R,\RstartIndex}\) and \(H_{R,\RendIndex}\) intersect this connector and recall that each of these edges has a cost \(\connectLength+\frac{\connectLength}{c+1}\), this step decreases the cost by at least \(2(\connectLength+\frac{\connectLength}{c+1})\).
                \item We add one of the edges of the \(q\)-th level to both of subgraphs. This edge is of length \(\connectLength\) and digging cost \(\frac{\connectLength}{c+1}\).
                \item We add the interior of the \(q\)-th and the interior of the \((q+1)\)-th supporting paths of \(\connector\) to both \(H'_{R,\RstartIndex}\) and \(H'_{R,\RendIndex}\). Recall that each edge of a supporting path is of cost \(\frac{\connectLength}{4(c+1)(\rowNum+1)}\) and the interior contains \(\rowNum+1-2\) such edges.
            \end{enumerate}
            Since each edge in the \(q\)-th level of \(\connector\) makes a connection between a vertex of the interior of the \(q\)-th supporting path and a vertex of the interior of the \((q+1)\)-th supporting path, both \(H'_{R,\RstartIndex}\) and \(H'_{R,\RendIndex}\) are connected after this procedure. If either of \(H'_{R,\RstartIndex}\) and \(H'_{R,\RendIndex}\) contains a cycle, we consider a spanning tree of that subgraph. Now, if we replace \(H_{R,\RstartIndex}\) and \(H_{R,\RendIndex}\) by \(H'_{R,\RstartIndex}\) and \(H'_{R,\RendIndex}\) in the initial minimum-cost solution, the cost increases by at most \(2\connectLength+\frac{\connectLength}{c+1}+4(\rowNum-1)\cdot \frac{\connectLength}{4(c+1)(\rowNum+1)}-2(\connectLength+\frac{\connectLength}{c+1})<0\) which contradicts the cost-minimality assumption on the initial solution. 
            \item \textbf{\(q=1\).}  From the previous case, we know that for each \(q \in \{2,\ldots,k-1\}\), subgraph \(H_{R,\RstartIndex} \cup H_{R,\RendIndex}\) contains at most one edge from the \(q\)-th level of \(\connector\). Let \(h^{i,j}_{1,x}h^{i,j}_{2,x}\) be the edge from the second level that is in \(H_{R,\RstartIndex} \cup H_{R,\RendIndex}\) where \(x\in [\rowNum]\) which we know to be unique from the previous case. We construct \(H'_{R,\RstartIndex}\) and \(H'_{R,\RendIndex}\) from \(H_{R,\RstartIndex}\) and \(H_{R,\RendIndex}\) with following modifications.
            \begin{enumerate}
                \item  We remove all edges of the first level of \(\connector\) from both of these subgraphs, where each of these edges has length \(\connectLength\) and digging cost \(\frac{\connectLength}{c+1}\) and
                \item We add \(u^{i,j}_{y,\colNum}h^{i,j}_{1,x}\) which is of length \(\connectLength\) and digging cost \(\frac{\connectLength}{c+1}\),
                \item We add the last column path of \(G^{i,j}\) and the interior of the first supporting path of the connector to both; recall that the length of each column path is at most \(\frac{\connectLength}{40\colNum}\), each edge of a supporting path is of cost \(\frac{\connectLength}{4(c+1)(\rowNum+1)}\), and the interior of a supporting path contains \(\rowNum+1-2\) such edges.
            \end{enumerate}
            Since the edges in the first level make a connection between the vertices of the last column path of \(G^{i,j}\) and the vertices of the first supporting path of \(\connector\), and considering that the first supporting path is already in \(H_{R,\RstartIndex} \cup H_{R,\RendIndex}\) and hence also in \(H'_{R,\RstartIndex} \cup H'_{R,\RendIndex}\), the added edges connect all the vertices of the last column path of \(G^{i,j}\) to \(h^{i,j}_{1,x}\) in both \(H'_{R,\RstartIndex}\) and \(H'_{R,\RendIndex}\).
            This implies the connectivity of both of these subgraphs. Moreover, clearly, \(H'_{R,\RstartIndex}\) still covers all terminals covered  by \(H_{R,\RstartIndex}\) and \(H'_{R,\RendIndex}\) still covers all terminals covered by \(H_{R,\RendIndex}\). If we replace \(H_{R,\RstartIndex}\) and \(H_{R,\RendIndex}\) by \(H'_{R,\RstartIndex}\) and \(H'_{R,\RendIndex}\), this modification increases the cost of the solution by at most \(2\connectLength+\frac{\connectLength}{c+1}+2\cdot\frac{\connectLength}{40\colNum}+2(\rowNum-1)\frac{\connectLength}{4(\rowNum+1)(c+1)}-2(\connectLength+\frac{\connectLength}{c+1}) <  0\) which contradicts the cost-minimality assumption on the considered solution.
            \item \textbf{\(q=c+1\).} This is symmetric to the previous case. Using a similar argument considering the last supporting path of \(\connector\) and the first column path of \(G^{i,j+1}\), we can show that in the last level of \(\connector\), there cannot be more than one edge of \(H_{R,\RstartIndex}\cup H_{R,\RendIndex}\).
        \end{enumerate}
        In each of these three cases, we have proved that there is at most one edge of \(H_{R,\RstartIndex}\cup H_{R,\RendIndex}\) in each level of \(\connector\).
        
        The proof to show that \(H_{C,\CstartIndex}\cup H_{C,\CendIndex}\) contains at most one edge from each level of a column connector is similar to the analogue statement for row connectors;
        instead of column paths, we consider the row paths of adjacent cell gadgets. Recalling that \(H_{R,\RstartIndex}\) and \(H_{R,\RendIndex}\) contain at least one edge from each level of each row connector and \(H_{C,\CstartIndex}\) and \(H_{C,\CendIndex}\) contain at least one edge from each level of each column connector, we can conclude that \(H_{\mathcal{S}}\) contains exactly one edge from each level of each connector.
    \end{proof}

    Knowing which connectors and root parts each tree in a minimum-cost solution intersects, and that exactly one edge from each level of a connector appears in \(H_{\mathcal{S}}\), we can now obtain a lower bound for the cost that arises from the part of a solution within each connector. \Cref{claim:connectorstructure} presents this lower bound and a property that must hold in the case that this lower bound is tight, i.e.\ the cost that arises from the part of a solution in a connector is precisely the lower bound.
    
    \begin{claim} 
        \label{claim:connectorstructure}
        For each connector \(\connector\), the value of \(\text{cost}_{\connector}(\mathcal{S})\) is at least \((2c+3+\frac{c}{4(c+1)} )\cdot \connectLength\).
        If \(\text{cost}_{\connector}(\mathcal{S}) = (2c+3+\frac{c}{4(c+1)} )\cdot \connectLength\), then
        \begin{enumerate}
            \item \(\connector\) is a row connector \(\hconnector^{i,j}\) and \(H_{R,\RstartIndex} \cap H_{R,\RendIndex} \cap \hconnector^{i,j}\) is a single linking path of \(\hconnector^{i,j}\), or
            \item \(\connector\) is a column connector \(\vconnector^{i,j}\) and \(H_{C,\CstartIndex} \cap H_{C,\CendIndex} \cap \vconnector^{i,j}\) is a single linking path of \(\vconnector^{i,j}\).
        \end{enumerate}
    \end{claim}
    \begin{proof}
 Recall that by \Cref{claim:EmptyConnectors} there is no edge of \(H_{C,\RstartIndex}\) and  \(H_{C,\CendIndex}\) in any row connector and there is no edge of \(H_{R,\RstartIndex}\) and  \(H_{R,\RendIndex}\) in any column connector.
 By \Cref{claim:SingleEdgeLevelConnector}, there is exactly one edge from each level of a row connector in \(H_{R,\RstartIndex}\cup H_{R,\RendIndex}\) and exactly one edge from each level of a column connector in \(H_{C,\CstartIndex}\cup H_{C,\CendIndex}\).
 Because, by construction, there must be a path from \(b^{i,\rightarrow}\) to \(b^{i,\leftarrow}\) for each \(i \in [k]\) in each of the trees \(H_{R,\RstartIndex}\) and \(H_{R,\RendIndex}\), both trees must contain the single edge of \(H_{\mathcal{S}}\) from each level of a row connector. With a similar argument, we can conclude that the single edge of \(H_{\mathcal{S}}\) in each level of a column connector must appear in both \(H_{C,\CstartIndex}\) and \(H_{C,\CendIndex}\). Hence, to cover all the terminals of \(\connector\), all supporting paths of \(\connector\) should be in \(H_{\mathcal{S}} \cap \connector\). Therefore, the total length cost arising from \(\mathcal{S}\) in \(\connector\) is at least \(2(c+1)\cdot \connectLength+c\cdot\frac{\connectLength}{4(c+1)}=(2c+2+\frac{c}{4(c+1)} )\cdot \connectLength\) and the total digging cost is at least \(\connectLength\) arising from single edges of each level of \(\connector\). This implies that \((2c+3+\frac{c}{4(c+1)} )\cdot \connectLength\) is a lower bound for \(\text{cost}_{\connector}(\mathcal{S})\).
        
          This lower bound is tight if each edge of the supporting paths is used exactly once, which implies that the single edges of different levels must be from the same linking path of \(\connector\) and thus, if \(\connector\) is a row connector \(\hconnector^{i,j}\), \(H_{R,\RstartIndex} \cap H_{R,\RendIndex} \cap \hconnector^{i,j}\) is a single linking path of \(\hconnector^{i,j}\), and if \(\connector\) is a column connector \(\hconnector^{i,j}\), \(H_{C,\CstartIndex} \cap H_{C,\CendIndex} \cap \vconnector^{i,j}\) is a single linking path of the connector \(\vconnector^{i,j}\).
    \end{proof}

    We now turn our attention from the behavior of minimum-cost solutions in connectors to that in cell gadgets.
    In our analysis the following notation will be useful:
    For a subgraph \(H\) of \(G\), we denote the set of horizontal edges of \(G^{i,j}\) that are in \(H\) by \(\hedgesetCell[i]{j}{H}\) and the set of vertical edges of \(G^{i,j}\) that are also in this subgraph by \(\vedgesetCell[i]{j}{H}\). In addition, we define \(\hedgeset[H]=\bigcup_{i,j\in[k]}\hedgesetCell[i]{j}{H}\) and \(\vedgeset[H]=\bigcup_{i,j\in[k]}\vedgesetCell[i]{j}{H}\) as the horizontal and vertical edges of \(H\).

    Knowing that \(H_{R,\RstartIndex}\) and \(H_{R,\RendIndex}\) only intersect row connectors and \(H_{C,\CstartIndex}\) and \(H_{C,\CendIndex}\) only intersect column connectors, we can observe that by the graph structure, cell gadgets play an important role in the connectivity of each tree in \(\mathcal{S}\). With this information, the following claim demonstrates a lower bound for the cost of the part of \(H_{\mathcal{S}}\) in each cell gadget. Additionally, if this lower bound is tight, some structural properties are presented in this claim for the part of each tree that lies within a cell gadget.
    
    \begin{claim}
        \label{claim:cellgadgetstructure}
        For each \(i,j\in[k]\), let \(L_{i,j}=\ell(\hedgesetCell[i]{j}{H_{C,\CstartIndex}})+\ell(\hedgesetCell[i]{j}{H_{C,\CendIndex}})\). The value of \(\text{cost}_{G^{i,j}}(\mathcal{S})  \) is at least
            \[\underbrace{2(\colNum-1-\regpar)\ell_H+2\regpar+L_{i,j}}_{\text{length of horizontal edges}}+\underbrace{(\colNum-1)d_H}_{\text{digging cost of horizontal edges}}+\underbrace{2(\rowNum-1)\ell_V}_{\text{length of vertical edges}}.\]
        If equality is attained, then, 
        \begin{enumerate}
            \item \(H_{R,\RstartIndex}\cap G^{i,j}\) is equal to \(H_{R,\RendIndex}\cap G^{i,j}\) and is a row path of \(G^{i,j}\), and \label{cellgadgetstructureCond1}
            \item \(H_{C,\CendIndex}\cap G^{i,j}\) is equal to \(H_{C,\CendIndex} \cap G^{i,j}\), \label{cellgadgetstructureCond2}
            \item all the edges in \(\hedgesetCell[i]{j}{H_{C,\CstartIndex}}\) are also in \(H_{R,\RstartIndex}\), and \label{cellgadgetstructureCond3}
            \item there are exactly \(n-1\) edges in \(\vedgesetCell[i]{j}{H_{C,\CstartIndex}}\) where for each \(x \in [n-1]\), there is exactly one edge of \(\vedgesetCell[i]{j}{H_{C,\CstartIndex}}\) that connects a vertex of \(x\)-th row path to a vertex of \((x+1)\)-th row path. \label{cellgadgetstructureCond4}
        \end{enumerate}
    \end{claim}
    \begin{proof}
        Considering \Cref{claim:EmptyConnectors} and \Cref{claim:EmptyRootPart} and noting that there should be a path from \(b^{i,\rightarrow}\) to \(b^{i,\leftarrow}\) for each \(i \in [k]\) in each of trees \(H_{R,\RstartIndex}\) and \(H_{R,\RendIndex}\), each of these trees must contain a path \(P\) from one of the vertices of the first column path of \(G^{i,j}\) to a vertex of the last column path of \(G^{i,j}\). So, for each \(y\in[\colNum-1]\),  the path \(P\) contains at least one edge among the horizontal edges between the vertices of \(y\)-th and \((y+1)\)-th column paths of \(G^{i,j}\). For each \(y\in[\colNum-1]\) if the edge of \(P\) between the vertices of \(y\)-th and \((y+1)\)-th column paths and the edge of \(P\) between the vertices of \((y+1)\)-th and \((y+2)\)-th column paths of \(G^{i,j}\) are from different row paths, there should be at least one edge of \((y+1)\)-th column path of \(G^{i,j}\) in \(P\). Noting that  by construction, each row path contains \(\regpar\) edges of length \(1\), if \(P\) contains horizontal edges from \(x\) different row paths, \(\ell(P)\) is at least 
        \begin{align*}
        &x\cdot \regpar+(\colNum-1-x\cdot \regpar)\ell_H+(x-1)\ell_V\\
        &=(\colNum-1)\ell_H+x\cdot (\regpar-\regpar\cdot \ell_H+\ell_V)-\ell_V\\
        & \geq (\colNum-1)\ell_H+ (\regpar-\regpar\cdot \ell_H+\ell_V)-\ell_V \\
        &=(\colNum-1-\regpar)\ell_H+\regpar
        \end{align*}
        since 
        \begin{align*}
            \regpar+\ell_V-\regpar\cdot \ell_H&\geq \regpar+\ell_V-n\cdot \ell_H\\&=\regpar+\colNum-\colNum=\regpar >0.
        \end{align*}
        As there are two trees, namely, \(H_{R,\RstartIndex}\) and \(H_{R,\RendIndex}\) that contain such a path, this adds a value of at least \(2(\colNum-1-\regpar)\ell_H+2\regpar\) to the cost.
        Again, considering \Cref{claim:EmptyConnectors} and \Cref{claim:EmptyRootPart} and noting that there should be a path from \(b^{\uparrow,j}\) to \(b^{\downarrow,j}\) for each \(j \in [k]\) in each of trees \(H_{C,\CstartIndex}\) and \(H_{C,\CendIndex}\), each of these trees must contain a path \(P'\) from one of the vertices of the first row path of \(G^{i,j}\) to a vertex of the last row path of \(G^{i,j}\). Thus, for each \(y\in[\colNum-1]\), the path \(P'\) contains at least one edge among the vertical edges between the vertices of \(y\)-th and \((y+1)\)-th row paths of \(G^{i,j}\) and hence, \(\ell(\vedgesetCell[i]{j}{P'})\) is at least \((\rowNum-1)\ell_V\). Since such a path appears in both \(H_{C,\CstartIndex}\) and \(H_{C,\CendIndex}\), this adds a value of at least \(2(\rowNum-1)\ell_V+L_{i,j}\) to the cost.

        To compute the total digging cost of \(\mathcal{S}\) in \(G^{i,j}\), assume that we first pay for the digging cost of \((H_{R,\RstartIndex}\cup  H_{R,\RendIndex})\) and then pay the additional cost for \(H_{C,\CstartIndex} \cup H_{C,\CendIndex}\). Since \(P\) contains at least \(\colNum-1\) horizontal edges, the total digging cost of the edges of \(P\) is at least \((\colNum-1) d_H\). So, the total digging cost of of \(\mathcal{S}\) in \(G^{i,j}\) is also at least \((\colNum-1) d_H\). Therefore, the value of \(\text{cost}_{G^{i,j}}(\mathcal{S})\), which is the summation of total length and total digging cost of \(\mathcal{S}\) in \(G^{i,j}\) is at least \[\underbrace{2(\colNum-1-\regpar)\ell_H+2\regpar}_{{\text{length of two copies of } P}}+\underbrace{2(\rowNum-1)\ell_V+L_{i,j}}_{{\text{length of two copies of } P'}}+\underbrace{(\colNum-1)d_H}_{\text{digging cost}}.\]
        
        Clearly, if \(H_{\mathcal{S}} \cap G^{i,j}\) contains any edges other than the edges of \(P\) and \(P'\), the cost must be more than \(2(\colNum-1-\regpar)\ell_H+2\regpar+2(\rowNum-1)\ell_V+L_{i,j}+(\colNum-1)d_H\). So, we can conclude that if the total cost is \(2(\colNum-1-\regpar)\ell_H+2\regpar+2(\rowNum-1)\ell_V+L_{i,j}+(\colNum-1)d_H\), we must have \(H_{R,\RstartIndex}\cap G^{i,j}=H_{R,\RendIndex}\cap G^{i,j}=P\) \(H_{C,\CstartIndex}\cap G^{i,j}=H_{C,\CendIndex}\cap G^{i,j}=P'\) which implies a part of property~\ref{cellgadgetstructureCond1}, property~\ref{cellgadgetstructureCond2}, and property~\ref{cellgadgetstructureCond4} of the claim. On the other hand, \(\ell(P)\) is at least \(x\cdot \regpar+(\colNum-1-x\cdot \regpar)\ell_H+(x-1)\ell_V\) which must be exactly \((\colNum-1-\regpar)\ell_H+\regpar\) and so \(x=1\). This means that all edges of \(P\) are from a single row path of \(\connector\) and property~\ref{cellgadgetstructureCond1} of the claim holds. Moreover, since we only computed the digging cost of \(P\), to achieve this total cost, the digging cost of \(P'\) should be \(0\). Since the digging cost of none of the horizontal edges is  \(0\), all horizontal edges of \(P'\) must be shared with \(P\) and thus, property~\ref{cellgadgetstructureCond3} holds as well. So, if the value \(2(\colNum-1-\regpar)\ell_H+2\regpar+2(\rowNum-1)\ell_V+L_{i,j}+(\colNum-1)d_H\) is tight, all three mentioned properties hold.
    \end{proof}

    To cover terminals connected to \(b^{\downarrow,j}\) and considering \Cref{claim:EmptyConnectors} and \Cref{claim:EmptyRootPart}, for each \(j\in[k]\), there must be a path \(P_j\) from \(b^{\uparrow,j}\) to \(b^{\downarrow,j}\) in trees \(H_{C,\CstartIndex}\) and \(H_{C,\CendIndex}\). Taking this into account, \Cref{claim:tightcosts} explores a minimum solution and proves that the lower bounds provided in \Cref{claim:connectorstructure} and \Cref{claim:cellgadgetstructure} must be tight in all connectors and cell gadgets in a minimum-cost solution. Then, using these claims, it provides some additional properties of a minimum solution that help us to construct a solution for the given \GTRegular instance. Specifically, they allow us to fix the second value of the chosen pair from each cell \(S_{i,j}\).
    
    \begin{claim}
        \label{claim:tightcosts}
        If \(\mathcal{S}\) is a solution with minimum cost, then we have following facts.
        \begin{enumerate}
            \item \label{tightcostsCond1} For each \(i,j \in[k]\), let \(L_{i,j}=\ell(\hedgesetCell[i]{j}{H_{C,\CstartIndex}})+\ell(\hedgesetCell[i]{j}{H_{C,\CendIndex}})\). The value of \(\text{cost}_{G^{i,j}}(\mathcal{S})\) is \(2(\colNum-1-\regpar)\ell_H+2\regpar+2(\rowNum-1)\ell_V+(\colNum-1)d_H+L_{i,j},\)
            \item \label{tightcostsCond2} For each connector \(\connector\), the value of \(\text{cost}_{\connector}(\mathcal{S})\) is \((2c+3+\frac{c}{4(c+1)} )\cdot \connectLength\), 
            \item \label{tightcostsCond3} For each \(j \in [k]\), there exists some value \(y_j\) such that the edges \(b^{\uparrow,j}u^{1,j}_{1,(y_j-1)(k+1)+1}\) and \(u^{k,j}_{n,y_j(k+1)}b^{\downarrow,j}\) are in both \(H_{C,\CstartIndex}\) and \(H_{C,\CendIndex}\), and
            \item \label{tightcostsCond4} For each \(j\in [k]\), \(\bigcup_{i=1}^{k}\hedgesetCell[i]{j}{H_{C,\CstartIndex}}= \bigcup_{i=1}^{k}\hedgesetCell[i]{j}{H_{C,\CendIndex}}\) is a set of \(k\) edges of length \(1\).
        \end{enumerate}
    \end{claim}
    \begin{proof}
    If we exclude the first and last edges of \(P_j\), what remains is a path from \(u^{1,j}_{1,y\cdot(k+1)+1}\) to \(u^{k,j}_{n,y'\cdot(k+1)}\) for some \(0 \leq y < n\) and \(1 \leq y' \leq n\). If \(y \neq y'\), then \(P_j\) must have at least \(k\) horizontal edges within cell gadgets and the edge sets of supporting paths in column connectors.
    This implies that the total length of horizontal edges is at least \(k\) since all such edges have length at least \(1\). If \(y=y'\), it must contain at least one edge among the horizontal edges between the vertices of \((y(k+1))\)-th column path and the vertices of \((y(k+1)+1)\)-th column path of a cell gadget and the edges of supporting paths in column connectors. Such an edge has length at least \(\ell_H\) since the horizontal edges between the vertices of \((y(k+1))\)-th column path and the vertices of \((y(k+1)+1)\)-th column path of a cell gadget are of length \(\ell_H\) and the edges of supporting paths are of length \(\frac{\connectLength}{4(c+1)(\colNum+1)}>\ell_H\). Therefore, the total length of horizontal edges in each of \(H_{C,\CstartIndex}\) and \(H_{C,\CendIndex}\) is at least \(k\cdot k\). Therefore, by \Cref{claim:cellgadgetstructure}, the total cost arising from all cell gadgets is at least 
    \begin{align*}
    C_{\text{cells}}=&2k^2((\colNum-1-\regpar)\ell_H+\regpar+(\rowNum-1)\ell_V+(\colNum-1)d_H)+\sum\limits_{i,j\in[k]}L_{i,j}\\
    &=2k^2((\colNum-1-\regpar)\ell_H+\regpar+(\rowNum-1)\ell_V+(\colNum-1)d_H)+2k^2\\
    &=2k^2((\colNum-1-\regpar)\ell_H+\regpar+(\rowNum-1)\ell_V+(\colNum-1)d_H+1).
    \end{align*} 
    
    Furthermore, by \Cref{claim:EmptyRootPart}, for each \(i \in [k]\), the only way to connect the root \(r\) to the terminal \(b^{i,\leftarrow}\) is through \(\rcons b^{i,\rightarrow}\), an edge between \(b^{i,\rightarrow}\) and the vertices of the first column path of \(G^{i,1}\), and an edge between the vertices of the last column path of \(G^{i,k}\) and \(b^{i,\leftarrow}\). Also, for each \(j \in [k]\), the only way to connect the root \(r\) to the terminal \(b^{\downarrow,j}\) is through the edge \(\rselect b^{\uparrow,j}\), an edge between \(b^{\uparrow,j}\) and the vertices of the first row path of \(G^{1,j}\), and an edge between the vertices of the last row path of \(G^{k,j}\) and \(b^{\downarrow,j}\). Since each of these edges must appear in two trees, this costs in total \(8k\cdot \rootLength+4k\cdot \srootLength\). Moreover, for these connections we need to have \(r\rcons\) in both \(H_{R,\RstartIndex}\) and \(H_{R,\RendIndex}\) and \(r\rselect\) in both \(H_{C,\CstartIndex}\) and \(H_{C,\CendIndex}\). In addition, for connection to the pendant terminals of \(\rcons\), \(b^{i,\rightarrow}\), \(b^{i,\leftarrow}\), \(\rselect\), \(b^{\uparrow,j}\), and \(b^{\downarrow,j}\) we need a length edge \(\termLength\) in each tree. So, the total cost arising from the root part is at least \(C_{\text{root part}}=8k\cdot \rootLength+4k\cdot \srootLength+(8+8k)\termLength\).
    
     So, the total cost within the connectors must be at most 
    \(
    \beta-C_{\text{cells}}-C_{\text{root part}}=2k(k-1)\cdot(2c+3+\frac{c}{4(c+1)} )\cdot \connectLength.
    \)
    Since we have \(2k(k-1)\) connectors and considering \Cref{claim:connectorstructure}, the cost within each connector must be exactly \((2c+3+\frac{c}{4(c+1)} )\cdot \connectLength\). This shows that the total cost within the connectors must be exactly \(2k(k-1)\cdot(2c+3+\frac{c}{4(c+1)} )\cdot \connectLength\). 
     As a result, the lower bound \(C_{\text{cells}}\) for the total cost of parts in the cell gadgets is also tight, which implies that the total cost of the horizontal edges in each of \(H_{C,\CstartIndex}\) and \(H_{C,\CendIndex}\) must be exactly \(k^2\) and so the total cost of the edges in \(\bigcup_{i=1}^{k}\hedgesetCell[i]{j}{H_{C,\CstartIndex}}=\bigcup_{i=1}^{k}\hedgesetCell[i]{j}{H_{C,\CendIndex}}\) for each \(j\in [k]\) must be precisely \(k\). As we explained, this is possible only if there are exactly \(k\) edges of length \(1\) and so \(y_j=y'=y+1\) since if \(y+1<y'\), by graph construction, there would be at least one edge between the vertices of the \(((y+1)(k+1))\)-th column path and the vertices of the \(((y+1)(k+1)+1)\)-th column path of a cell gadget that appears in the horizontal edges of \(P_j\) which has length \(k+1\).
\end{proof}

Note that the value provided in property~\ref{tightcostsCond1} of \Cref{claim:tightcosts} is equal to the minimum cost of a cell gadget presented in \Cref{claim:cellgadgetstructure} which shows that this cost is tight for each cell gadget in a minimum-cost solution. Moreover, the value provided in property~\ref{tightcostsCond2} is equal to the minimum cost of a connector presented in \Cref{claim:connectorstructure} yielding that this cost is tight for each connector in a minimum-cost solution.

To cover the terminals connected to \(b^{i,\leftarrow}\) and considering \Cref{claim:EmptyConnectors} and \Cref{claim:EmptyRootPart}, for each \(i\in[k]\), there must be a path \(P'_i\) from \(b^{i,\rightarrow}\) to \(b^{i,\leftarrow}\) in trees \(H_{R,\RstartIndex}\) and \(H_{R,\RendIndex}\). Taking into account this and knowing that the lower bounds provided by \Cref{claim:connectorstructure} and \Cref{claim:cellgadgetstructure} are tight, we introduce some new properties for the parts of the solution that lie within a cell gadget and the parts of the solution that lie within a connector, which guide us to fix the first value of the pairs chosen from each cell \(S_{i,j}\) for a solution to the given \GTRegular instance.

\begin{claim}
    \label{claim:rowNumber}
    For each \(i\in[k]\), there exists a value \(x_i\) such that 
    \begin{enumerate}
        \item for each \(j\in[k]\), the subgraph \(H_{R,\RstartIndex}\cap G^{i,j}\) is the \(x_i\)-th row path of \(G^{i,j}\), and
        \item for each \(j\in[k-1]\), the subgraph \(H_{R,\RstartIndex}\cap H_{R,\RendIndex} \cap \hconnector^{i,j}\) is the \(x_i\)-th linking path of \(\hconnector^{i,j}\).
    \end{enumerate} 
\end{claim}
     \begin{proof}
    Since by \Cref{claim:tightcosts}, the lower bound for the cost arising from each cell gadget must be tight to achieve the total cost \(\beta\) and considering \Cref{claim:cellgadgetstructure}, \(H_{R,\RstartIndex}\cap G^{i,j}=H_{R,\RendIndex}\cap G^{i,j}\) must be a single row path which is part of \(P'_i\). Furthermore, for each row connector \(\hconnector{i,j}\), as we explained, the cost must be exactly \((2c+3+\frac{c}{4(c+1)} )\cdot \connectLength\), so by \Cref{claim:connectorstructure}, \(H_{R,\RstartIndex}\cap H_{R,\RendIndex} \cap G^{i,j}\) must be a single linking path which is part of \(P'_i\). Recalling that the cost of solution within each cell gadget and the cost solution within each row connector is tight, the union of these single paths from the cell gadgets and connectors must be equal to \(P'_i\) without its first and last edge. Thus, the union of these single paths from the cell gadgets and the connectors must be equal to the interior of \(P'_i\). Thus, for each cell gadget \(G^{i,j}\), if this single path is the \(x_{i,j}\)-th row path, due to the connectivity of \(P'_i\), the single path from \(\hconnector^{i,j}\) and the single path from \(\hconnector^{i,j-1}\) (if \(j>1\)) must also be the \(x_{i,j}\)-th linking path of them. Therefore, we have \(x_{i,j}=x_{i,j'}\) for all \(j,j'\in[k]\), and so, for \(x_i=x_{i,1}\), the conditions of the claim hold.
     \end{proof}

    \Cref{claim:singleHorizontalEdge} states the last necessary property of a minimum solution, which we need to prove the existence of a pair with the first and second chosen values in each \(S_{i,j}\) of the given \GTRegular instance.
    
     \begin{claim}
     \label{claim:singleHorizontalEdge}
         For each \(i,j\in [k]\), there is exactly one horizontal edge in \((H_{C,\CstartIndex} \cup H_{C,\CendIndex}) \cap G^{i,j}\) and this edge has length \(1\).
     \end{claim}
     \begin{proof}
         By property~\ref{tightcostsCond3} of \Cref{claim:tightcosts}, we have \( H_{C,\CstartIndex}\cap G^{i,j} = H_{C,\CendIndex}\cap G^{i,j}\) and this is equal to \((H_{C,\CstartIndex} \cup H_{C,\CendIndex}) \cap G^{i,j}\). By property~\ref{tightcostsCond4} of the same Claim, \(\bigcup_{i=1}^{k}\hedgesetCell[i]{j}{H_{C,\CstartIndex}}=\bigcup_{i=1}^{k}\hedgesetCell[i]{j}{H_{C,\CendIndex}}\) is a set of edges of length \(1\). By \Cref{claim:cellgadgetstructure}, these edges must be shared with \(H_{R,\RstartIndex}\). Taking into account \Cref{claim:rowNumber}, for each \(i,j\in[k]\), subgraph \(H_{R,\RstartIndex}\cap G^{i,j}\) is exactly one row path, the \(x_i\)-th row path of \(G^{i,j}\). So, \(\hedgesetCell[i]{j}{H_{C,\CstartIndex}}\) must also belong to this row path. Taking into account property~\ref{cellgadgetstructureCond4} of \Cref{claim:cellgadgetstructure}, there are exactly \(n-1\) vertical edges of \(H_{C,\CstartIndex}\cap G^{i,j}\) where for each \(x' \in [n-1]\), there is exactly one edge of \(H_{C,\CstartIndex}\cap G^{i,j}\) that connects a vertex of the \(x'\)-th row path to a vertex of the \((x'+1)\)-th row path. Thus, we can conclude that \(\hedgesetCell[i]{j}{H_{C,\CstartIndex}}\) should be a subset of the edges of the \(x_i\)-th row path that appear consecutively in this path. Since by graph construction, the length of the edge appears after each edge of length \(1\) is \(\ell_H=k+1\) and we know that all the edges in \(\hedgesetCell[i]{j}{H_{C,\CstartIndex}}\) have length \(1\), there cannot be more than one horizontal edge in \(H_{C,\CstartIndex}\cap G^{i,j}=H_{C,\CendIndex}\cap G^{i,j}=(H_{C,\CstartIndex} \cup H_{C,\CendIndex}) \cap G^{i,j}\). On the other hand, since \(\bigcup_{i=1}^{k}\hedgesetCell[i]{j}{H_{C,\CstartIndex}}\) is a set of \(k\) edges and noting that there cannot be more than one of them in each \(G^{i,j}\), there must be exactly one of them in each of these cell gadgets and hence the Claim.
     \end{proof}

    Now, with the properties shown for a solution to the \ULVshort-instance, we construct the solution to the \GTRegular instance as follows. For each \(i,j \in [k]\), we choose the pair \((x_i,y_j)\) from \(S_{i,j}\) where \(x_i\) is the value found by \Cref{claim:rowNumber} and \(y_j\) is the value found by property~\ref{tightcostsCond3} of \Cref{claim:tightcosts}. We prove that such a pair exists in \(S_{i,j}\) by showing that the edge \(u^{i,j}_{x_i,(k+1)(y_j-1)+i}u^{i,j}_{x_i,(k+1)(y_j-1)+i+1}\) must have length \(1\). Considering that costs are tight due to \Cref{claim:tightcosts}, by \Cref{claim:cellgadgetstructure} and \Cref{claim:connectorstructure}, \(H_{C,\CstartIndex}\) and \(H_{C,\CendIndex}\) share all edges in connectors and cell gadgets except for the edges of the supporting paths in the connectors which are completely disjoint, \(P_j \cap G^{i,j}\) must be equal to \(H_{C,\CstartIndex}\cap G^{i,j}=H_{C,\CendIndex}\cap G^{i,j}\) and \(P_j \cap \vconnector^{i,j}\) must be equal to \(H_{C,\CstartIndex}\cap H_{C,\CendIndex}\cap \vconnector^{i,j}\).  By property~\ref{tightcostsCond3} of \Cref{claim:tightcosts}, if we exclude the first and last edges of \(P_j\), it is a path from \(u^{1,j}_{1,(y_j-1)(k+1)+1}\) to \(u^{k,j}_{n,y_j(k+1)}\). If we go through this path, the first vertex of the cell gadget \(G^{i,j}\) that we reach should be the vertex \(u^{i,j}_{1,(y_j-1)(k+1)+i}\). This is because we pass through exactly \(i-1\) horizontal edges in total in the previous cell gadgets, and by \Cref{claim:connectorstructure}, the parts in the column connectors are straight paths. Since by \Cref{claim:singleHorizontalEdge}, there is exactly one horizontal edge in \(P_j \cap G^{i,j}\) that by \Cref{claim:connectorstructure} is shared with \(H_{R,\RstartIndex}\) and so, by \Cref{claim:rowNumber}, it belongs to the \(x_i\)-th row path of \(G^{i,j}\), this edge must be \(u^{i,j}_{x_i,(y_j-1)(k+1)+i}u^{i,j}_{x_i,(y_j-1)(k+1)+i+1}\). Finally, by \Cref{claim:tightcosts}, we know that this edge must have length \(1\) and so, by graph construction, \((x_i,y_j)\) must be in \(S_{i,j}\). The part of the solution in a cell gadget is shown in \Cref{fig:cellGadget:b} for an instance of \(S_{i,j}\).

    We have proven that a \GTRegular-solution for \((S_{i,j})_{(i,j) \in [k]^2}\) induces a \ULVshort-solution for \(\mathcal{I}\).
    This concludes the proof of the correctness of our reduction.
    Recall that the number of terminals in our constructed \ULVshort-instance is \(\mathcal{O}(k^2)\) and that \(t=4\) is constant.
    Hence if there were \(f,f'\) such that \ULVshort can be solved in time  \(f(|T|,t)\cdot |V(G)|^{o(\sqrt{|T|})\cdot f'(t)}\) for any instance \(\mathcal{I}\) then we could solve the original \GTRegular instance in time \(\tilde{f}(k) \cdot n^{o(k)}\) which by \Cref{thm:GTregularW[1]} contradicts the ETH.
\end{proof}

%% file: figs/connector.tex
\begin{tikzpicture}[scale=0.6,every node/.style={ scale=0.85}]
\tikzstyle{vertex}=[circle, fill=black, inner sep=1.5pt, scale=0.85]
\tikzset{terminal/.style={fill=gray!30, draw=gray!50,  thick,minimum size=0pt, inner sep=2pt}}
% Parameters
\def\rows{6}
\def\cols{8}

% Draw vertices (no corners)
\foreach \i in {1,...,\rows}{
  \foreach \j in {1,...,\cols}{
    % Skip corners
    \ifnum\i=1
      \ifnum\j=1
      \else\ifnum\j=\cols
      \else
        \node[terminal] (v\i\j) at (\j,\rows-\i) {};
      \fi\fi
    \else
        \ifnum\i=\rows
            \ifnum\j=1
            \else
                \ifnum\j=\cols
                \else
                    \node[terminal] (v\i\j) at (\j,\rows-\i) {};
                \fi
            \fi
        \else
            \ifnum\j=1
                \ifnum\i>3
                    \pgfmathtruncatemacro{\newi}{6-\i}
                    \node[vertex,label=left:$p_{\newi,1}$] (v\i\j) at (\j,\rows-\i) {};
                \else
                    \ifnum\i=3
                        \node[vertex,label=left:$p_{|P|-1,1}$] (v\i\j) at (\j,\rows-\i) {};
                    \else
                        \node[vertex,label=left:$p_{|P|,1}$] (v\i\j) at (\j,\rows-\i) {};
                    \fi
                \fi
            \else
                \ifnum\j=8
                    \ifnum\i>3
                        \pgfmathtruncatemacro{\newi}{6-\i}
                        \node[vertex,label=right:$p_{\newi,2}$] (v\i\j) at (\j,\rows-\i) {};
                    \else
                        \ifnum\i=3
                            \node[vertex,label=right:$p_{|P|-1,2}$] (v\i\j) at (\j,\rows-\i) {};
                        \else
                            \node[vertex,label=right:$p_{|P|,2}$] (v\i\j) at (\j,\rows-\i) {};
                        \fi
                    \fi
                \else
                    \node[vertex] (v\i\j) at (\j,\rows-\i) {};
                \fi
            \fi
        \fi
    \fi
  }
}

% Draw horizontal edges (skip first/last column)
\foreach \i in {2,...,\numexpr\rows-1\relax}{
  \foreach \j in {1,...,\numexpr\cols-1\relax}{
    \pgfmathtruncatemacro{\jnext}{\j+1}
    \ifnum\j=3
        \draw[color=orange] (v\i\j) -- (v\i\jnext);
    \else
        \draw (v\i\j) -- (v\i\jnext);
    \fi
  }
}

% Draw vertical edges (skip first/last row)
\foreach \i in {1,...,\numexpr\rows-1\relax}{
  \foreach \j in {2,...,\numexpr\cols-1\relax}{
    \pgfmathtruncatemacro{\inext}{\i+1}
    \draw (v\i\j) -- (v\inext\j);
  }
}

% Add \vdots at specified positions
\node[draw=none, fill=none, inner sep=0pt, yshift=3pt] at (1, \rows-3.5) {$\vdots$};
\node[draw=none, fill=none, inner sep=0pt, yshift=3pt] at (\cols, \rows-3.5) {$\vdots$};

\end{tikzpicture}

%% file: figs/hconnector_instance.tex
\begin{tikzpicture}[scale=0.6,every node/.style={scale=0.85}]
\tikzstyle{vertex}=[circle, fill=black, inner sep=1.5pt, scale=0.85]
\tikzset{terminal/.style={fill=gray!30, draw=gray!50, thick, minimum size=0pt, inner sep=2pt}}

% Parameters
\def\rows{6}
\def\cols{8}

% Draw vertices (no corners) and add terminal labels
\foreach \i in {1,...,\rows}{
  \pgfmathtruncatemacro{\halfcols}{\cols-2} % number of inner terminals per row
  \foreach \j in {1,...,\cols}{
    % Skip corners
    \ifnum\i=1
      \ifnum\j=1
      \else\ifnum\j=\cols
      \else
        % Upper terminals (top row)
        \pgfmathtruncatemacro{\q}{\j-1} % left to right 1..6
        \node[terminal] (v\i\j) at (\j,\rows-\i) {};
        \node[above=2pt of v\i\j] {$h^{i,j}_{\q,\triangleleft}$};
      \fi\fi
    \else
        \ifnum\i=\rows
            \ifnum\j=1
            \else
                \ifnum\j=\cols
                \else
                    % Bottom terminals (bottom row)
                    \pgfmathtruncatemacro{\q}{\j-1} % left to right 1..6
                    \node[terminal] (v\i\j) at (\j,\rows-\i) {};
                    \node[below=2pt of v\i\j] {$h^{i,j}_{\q,\triangleright}$};
                \fi
            \fi
        \else
            % Internal vertices (same as before)
            \ifnum\j=1
                \ifnum\i=2
                    \node[vertex,label=left:$u^{i,j}_{\rowNum,\colNum}$] (v\i\j) at (\j,\rows-\i) {};
                \else
                    \ifnum\i=3
                        \node[vertex,label=left:$u^{i,j}_{\rowNum-1,\colNum}$] (v\i\j) at (\j,\rows-\i) {};
                    \else
                        \ifnum\i=4
                            \node[vertex,label=left:$u^{i,j}_{2,\colNum}$] (v\i\j) at (\j,\rows-\i) {};
                        \else
                            \node[vertex,label=left:$u^{i,j}_{1,\colNum}$] (v\i\j) at (\j,\rows-\i) {};
                        \fi
                    \fi
                \fi
            \else
                \ifnum\j=8
                    \ifnum\i=2
                        \node[vertex,label=right:$u^{i,j+1}_{\rowNum,1}$] (v\i\j) at (\j,\rows-\i) {};
                    \else
                        \ifnum\i=3
                            \node[vertex,label=right:$u^{i,j+1}_{\rowNum-1,1}$] (v\i\j) at (\j,\rows-\i) {};
                        \else   
                            \ifnum\i=4
                                \node[vertex,label=right:$u^{i,j+1}_{2,1}$] (v\i\j) at (\j,\rows-\i) {};
                            \else
                                \node[vertex,label=right:$u^{i,j+1}_{1,1}$] (v\i\j) at (\j,\rows-\i) {};
                            \fi
                        \fi
                    \fi
                \else
                    \node[vertex] (v\i\j) at (\j,\rows-\i) {};
                \fi
            \fi
        \fi
    \fi
  }
}

% Draw horizontal edges (skip first/last column)
\foreach \i in {2,...,\numexpr\rows-1\relax}{
  \foreach \j in {1,...,\numexpr\cols-1\relax}{
    \pgfmathtruncatemacro{\jnext}{\j+1}
    \ifnum\i=3
        \draw[very thick,draw=violet!60!white] (v\i\j) -- (v\i\jnext);
    \else
        \draw (v\i\j) -- (v\i\jnext);
    \fi
  }
}

% Draw vertical edges (skip first/last row)
\foreach \i in {1,...,\numexpr\rows-1\relax}{
  \foreach \j in {2,...,\numexpr\cols-1\relax}{
    \pgfmathtruncatemacro{\inext}{\i+1}
    \ifnum\i<3
        \draw[very thick, draw=red!60!white] (v\i\j) -- (v\inext\j);
    \else
        \draw[very thick, draw=blue!60!white] (v\i\j) -- (v\inext\j);
    \fi
  }
}

% Add \vdots at specified positions
\node[draw=none, fill=none, inner sep=0pt, yshift=3pt] at (1, \rows-3.5) {$\vdots$};
\node[draw=none, fill=none, inner sep=0pt, yshift=3pt] at (\cols, \rows-3.5) {$\vdots$};

\end{tikzpicture}

%% file: figs/cellGadget_a.tex
\begin{tikzpicture}[scale=0.8]
\tikzstyle{vertex}=[circle, fill=black, inner sep=1.5pt, scale=0.85]

% Parameters
\def\rows{4}
\def\cols{12}
\def\xspacing{1.2}
\def\yspacing{1.0}

% ----------------- Top grid -----------------
\foreach \i in {1,...,\rows}{
    \foreach \j in {1,...,\cols}{
        \pgfmathsetmacro{\x}{\j*\xspacing}
        \pgfmathsetmacro{\y}{-\i*\yspacing}
        \node[vertex] (T\i\j) at (\x,\y) {};
    }
}

\foreach \i in {1,...,\rows}{
    \foreach \j in {1,...,\numexpr\cols-1\relax}{
        \pgfmathtruncatemacro{\jn}{\j+1}

        % Check if this edge should be red
        \ifnum\i=4
            \ifnum\j=5 \draw[red, very thick] (T\i\j) -- (T\i\jn);
            \else\ifnum\j=11 \draw[red, very thick] (T\i\j) -- (T\i\jn);
            \else \draw (T\i\j) -- (T\i\jn); \fi\fi
        \else\ifnum\i=3
            \ifnum\j=2 \draw[red, very thick] (T\i\j) -- (T\i\jn);
            \else\ifnum\j=8 \draw[red, very thick] (T\i\j) -- (T\i\jn);
            \else \draw (T\i\j) -- (T\i\jn); \fi\fi
        \else\ifnum\i=2
            \ifnum\j=5 \draw[red, very thick] (T\i\j) -- (T\i\jn);
            \else\ifnum\j=11 \draw[red, very thick] (T\i\j) -- (T\i\jn);
            \else \draw (T\i\j) -- (T\i\jn); \fi\fi
        \else\ifnum\i=1
            \ifnum\j=2 \draw[red, very thick] (T\i\j) -- (T\i\jn);
            \else\ifnum\j=8 \draw[red, very thick] (T\i\j) -- (T\i\jn);
            \else \draw (T\i\j) -- (T\i\jn); \fi\fi
        \fi\fi\fi\fi
    }
}

% Vertical edges (all black)
\foreach \i in {1,...,\numexpr\rows-1\relax}{
    \foreach \j in {1,...,\cols}{
        \pgfmathtruncatemacro{\inext}{\i+1}
        \draw (T\i\j) -- (T\inext\j);
    }
}

% Leftmost column labels u^{i,1}
\foreach \i in {1,...,\rows}{
    \pgfmathtruncatemacro{\index}{5-\i}
    \node[left=3pt of T\i1, fill=none] {$u^{2,1}_{\index,1}$};
}

% Bottom row labels u^{1,j} with j'*3 + r
\foreach \j in {1,...,\cols}{
    \pgfmathtruncatemacro{\r}{mod(\j,3)}
    \ifnum\r=0
        \pgfmathtruncatemacro{\jr}{3}
    \else
        \pgfmathtruncatemacro{\jr}{\r}
    \fi
    \pgfmathtruncatemacro{\jprime}{\j-\jr}
    \node[below=3pt of T\rows\j, fill=none] {$u^{2,1}_{1,\jprime+\jr}$};
}

\end{tikzpicture}

%% file: figs/cellGadget_b.tex
\begin{tikzpicture}[scale=0.8]
\tikzstyle{vertex}=[circle, fill=black, inner sep=1.5pt, scale=0.85]

% Parameters
\def\rows{4}
\def\cols{12}
\def\xspacing{1.2}
\def\yspacing{1.0}

\def\yoffset{-\rows*\yspacing-2.25} % space between grids
\foreach \i in {1,...,\rows}{
    \foreach \j in {1,...,\cols}{
        \pgfmathsetmacro{\x}{\j*\xspacing}
        \pgfmathsetmacro{\y}{-\i*\yspacing + \yoffset}
        \node[vertex] (B\i\j) at (\x,\y) {};
    }
}

% ----- Horizontal edges -----
\foreach \i in {1,...,\rows}{
  \foreach \j in {1,...,\numexpr\cols-1\relax}{
    \pgfmathtruncatemacro{\jn}{\j+1}

    % Special coloring for row 2
    \ifnum\i=3
      \ifnum\j=8
        \draw[green!50!black, very thick] (B\i\j) -- (B\i\jn);
      \else
        \draw[yellow!90!black, very thick] (B\i\j) -- (B\i\jn);
      \fi
    \else
      % Default black for other rows
      \draw (B\i\j) -- (B\i\jn);
    \fi
  }
}

% ----- Vertical edges -----
\foreach \i in {1,...,\numexpr\rows-1\relax}{
  \foreach \j in {1,...,\cols}{
    \pgfmathtruncatemacro{\inext}{\i+1}

    % Blue segments in columns 8 and 9
    \ifnum\j=8
      % Only up to (2,8): edges (4,8)-(3,8) and (3,8)-(2,8)
        \ifnum\i=3
            \draw[blue!60!white, very thick] (B\i\j) -- (B\inext\j);
        \else
            \draw (B\i\j) -- (B\inext\j); % above (2,8) is black
        \fi

    \else\ifnum\j=9
      % Only from (2,9) up: edges (2,9)-(1,9)
      \ifnum\i<3
        \draw[blue!70!white, very thick] (B\i\j) -- (B\inext\j);
      \else
        \draw (B\i\j) -- (B\inext\j);
      \fi
    \else
      % Default black
      \draw (B\i\j) -- (B\inext\j);
    \fi\fi
  }
}

% Leftmost column labels v^{i,1}
\foreach \i in {1,...,\rows}{
    \pgfmathtruncatemacro{\index}{5-\i}
    \node[left=3pt of B\i1, fill=none] {$v^{2,1}_{\index,1}$};
}

% Bottom row labels v^{1,j} with j'*3 + r
\foreach \j in {1,...,\cols}{
    \pgfmathtruncatemacro{\r}{mod(\j,3)}
    \ifnum\r=0
        \pgfmathtruncatemacro{\jr}{3}
    \else
        \pgfmathtruncatemacro{\jr}{\r}
    \fi
    \pgfmathtruncatemacro{\jprime}{\j-\jr}
    \node[below=3pt of B\rows\j, fill=none] {$v^{2,1}_{1,\jprime+\jr}$};
}

% Label (b) below bottom grid
% \node[fill=none] at (\cols*\xspacing/2, \yoffset - \rows*\yspacing - 1.75) {(b)};

\end{tikzpicture}

%% file: figs/ULVGraphSmall.tex
\begin{tikzpicture}[yscale=0.9, xscale=1.6, every node/.style={font=\small, align=center, scale=0.85}]
\tikzset{terminal/.style={fill=gray!30, draw=gray!50,  thick,minimum size=0pt, inner sep=2pt}}

% Size parameters
\def\squaresize{1}        % square cells
\def\hrectwidth{1}      % horizontal rectangle width
\def\hrectheight{0.8}     % horizontal rectangle height
\def\vrectwidth{0.9}      % vertical rectangle width
\def\vrectheight{1}     % vertical rectangle height
\def\narrowcolwidth{0.5}

\newcommand{\labelfun}[2]{%
    \edef\tempI{#1}%
    \edef\tempJ{#2}%
    \ifnum#1=0 \edef\tempI{\uparrow}\fi
    \ifnum#1=6 \edef\tempI{\downarrow}\fi
    \ifnum#2=0 \edef\tempJ{\rightarrow}\fi
    \ifnum#2=8 \edef\tempJ{\leftarrow}\fi
    \ifnum#1=5 \edef\tempI{1}\fi
    \ifnum#1=3 \edef\tempI{k-1}\fi
    \ifnum#1=2 \edef\tempI{k-1}\fi
    \ifnum#1=1 \edef\tempI{k}\fi
    \ifnum#2=1 \edef\tempJ{1}\fi
    \ifnum#2=2 \edef\tempJ{1}\fi
    \ifnum#2=3 \edef\tempJ{2}\fi
    \ifnum#2=5 \edef\tempJ{k-1}\fi
    \ifnum#2=6 \edef\tempJ{k-1}\fi
    \ifnum#2=7 \edef\tempJ{k}\fi
    \tempI,\tempJ%
}

% Gray squares
\foreach \i/\j in {1/1,1/3,1/5,1/7,
                   3/1,3/3,3/5,3/7,
                   5/1,5/3,5/5,5/7} {
    \ifnum\j<5
        \pgfmathsetmacro{\xstart}{(\j-1)*\squaresize}
    \else
        \pgfmathsetmacro{\xstart}{3*\squaresize + \narrowcolwidth + (\squaresize*(\j-5))}
    \fi
    \ifnum\j=4
        \draw[fill=gray!50] (\xstart,5-\i) rectangle ++(\narrowcolwidth,\squaresize);
        \node at (\xstart+0.5*\narrowcolwidth, 5-\i+0.5*\squaresize) { $G^{\labelfun{\i}{\j}}$};
    \else
        \draw[fill=gray!50] (\xstart,5-\i) rectangle ++(\squaresize,\squaresize);
        \node at (\xstart+0.5*\squaresize, 5-\i+0.5*\squaresize) { $G^{\labelfun{\i}{\j}}$};
    \fi
}

% 2) C^{i,j}_H
\foreach \i/\j in {1/2,1/6,3/2,3/6,5/2,5/6} {
    \ifnum\j<5
        \pgfmathsetmacro{\xstart}{(\j-1)*\squaresize}
    \else
        \pgfmathsetmacro{\xstart}{3*\squaresize + \narrowcolwidth + (\squaresize*(\j-5))}
    \fi
    \ifnum\i=1
        \ifnum\j=2
            \draw[fill=yellow!30,draw=none] 
    (\xstart+0.5*\squaresize-0.5*\hrectwidth,
     5-\i+0.5*\squaresize-0.5*\hrectheight-0.08)
    rectangle ++(\hrectwidth,\hrectheight+0.16);
            \node  at (\xstart+0.5*\vrectwidth+0.07,5-\i+0.5*\vrectheight) {
                \begin{tikzpicture}[yscale=0.24,xscale=0.28]
                    \tikzstyle{vertex}=[circle, fill=black, inner sep=1.5pt, scale=0.85]
                    \tikzset{terminal/.style={fill=gray!30, draw=gray!50, thick, minimum size=0pt, inner sep=2pt}}

                    % Parameters
                    \def\rows{6}
                    \def\cols{8}

                    % Draw vertices (no corners)
                    \foreach \ii in {1,...,\rows}{
                      \foreach \jj in {1,...,\cols}{
                        % Skip corners
                        \ifnum\ii=1
                          \ifnum\jj=1
                          \else\ifnum\jj=\cols
                          \else
                            \node[terminal] (v\ii\jj) at (\jj,\rows-\ii) {};
                          \fi\fi
                        \else
                            \ifnum\ii=\rows
                                \ifnum\jj=1
                                \else
                                    \ifnum\jj=\cols
                                    \else
                                        \node[terminal] (v\ii\jj) at (\jj,\rows-\ii) {};
                                    \fi
                                \fi
                            \else
                                \ifnum\jj=1
                                    \node[vertex] (v\ii\jj) at (\jj,\rows-\ii) {};
                                \else
                                    \ifnum\jj=8
                                        \node[vertex] (v\ii\jj) at (\jj,\rows-\ii) {};
                                    \else
                                        \node[vertex] (v\ii\jj) at (\jj,\rows-\ii) {};
                                    \fi
                                \fi
                            \fi
                        \fi
                      }
                    }

                    % Horizontal edges
                    \foreach \ii in {2,...,\numexpr\rows-1\relax}{
                      \foreach \jj in {1,...,\numexpr\cols-1\relax}{
                        \pgfmathtruncatemacro{\jnext}{\jj+1}
                        \draw (v\ii\jj) -- (v\ii\jnext);
                      }
                    }

                    % Vertical edges
                    \foreach \ii in {1,...,\numexpr\rows-1\relax}{
                      \foreach \jj in {2,...,\numexpr\cols-1\relax}{
                        \pgfmathtruncatemacro{\inext}{\ii+1}
                        \draw (v\ii\jj) -- (v\inext\jj);
                      }
                    }

                \end{tikzpicture}
            };
        \else
            % Normal cell
            \draw[fill=yellow!30] (\xstart+0.5*\squaresize-0.5*\hrectwidth,
                                   5-\i+0.5*\squaresize-0.5*\hrectheight)
                  rectangle ++(\hrectwidth,\hrectheight);
            \node at (\xstart+0.5*\squaresize,5-\i+0.5*\squaresize) { $C^{\labelfun{\i}{\j}}_H$};
        \fi
    \else
        % Normal cells
        \draw[fill=yellow!30] (\xstart+0.5*\squaresize-0.5*\hrectwidth,
                               5-\i+0.5*\squaresize-0.5*\hrectheight)
              rectangle ++(\hrectwidth,\hrectheight);
        \node at (\xstart+0.5*\squaresize,5-\i+0.5*\squaresize) { $\hconnector^{\labelfun{\i}{\j}}$};
    \fi
}

% 3) C^{i,j}_V
\foreach \i/\j in {2/1,2/3,2/5,2/7} {
    \ifnum\j<5
        \pgfmathsetmacro{\xstart}{(\j-1)*\squaresize}
    \else
        \pgfmathsetmacro{\xstart}{3*\squaresize + \narrowcolwidth + (\squaresize*(\j-5))}
    \fi
    \draw[fill=yellow!30] (\xstart+0.5*\squaresize-0.5*\vrectwidth,
                           5-\i+0.5*\squaresize-0.5*\vrectheight)
          rectangle ++(\vrectwidth,\vrectheight);
    \node at (\xstart+0.5*\squaresize,5-\i+0.5*\squaresize) { $\vconnector^{\labelfun{\i}{\j}}$};
}

% 4) Vertical 3 dots
\foreach \j in {1,3,5,7} {
    \ifnum\j<5
        \pgfmathsetmacro{\xstart}{(\j-1)*\squaresize}
    \else
        \pgfmathsetmacro{\xstart}{3*\squaresize + \narrowcolwidth + (\squaresize*(\j-5))}
    \fi
    \draw[fill=yellow!15,draw=none] (\xstart+0.5*\squaresize-0.5*\vrectwidth,
                           5-4+0.5*\squaresize-0.5*\vrectheight)
          rectangle ++(\vrectwidth,\vrectheight);
    \node at (\xstart+0.5*\squaresize, 5-4+0.5*\squaresize) {$\vdots$};
}

% 5) Horizontal 3 dots
\foreach \i in {1,3,5} {
    \draw[fill=yellow!15,draw=none] (4-1, 5-\i)
          rectangle ++(\narrowcolwidth,\vrectheight);
    \node at (4-1+0.5*\narrowcolwidth, 5-\i+0.5*\squaresize) {$\cdots$};
}

% 6) External vertices and connections on the left with labels b^{i,\rightarrow}
\def\xoffset{-0.4}  % horizontal distance to the left
\foreach \i in {1,3,5} {
    % position of the gray square (column 1)
    \pgfmathsetmacro{\xleft}{0} % left x of gray square
    \pgfmathsetmacro{\ytop}{5-\i+\squaresize}
    \pgfmathsetmacro{\ybottom}{5-\i}
    \pgfmathsetmacro{\ymid}{5-\i+0.5*\squaresize}

    % draw vertex
    \node[circle, fill=black, inner sep=2pt, scale=0.85] (v\i) at (\xoffset,\ymid) {};

    % label above vertex
    \node[above=4pt of v\i] {$b^{\labelfun{\i}{0}}$};

    % terminals
    \coordinate (t1\i) at (\xoffset+0.1, \ymid-0.5);
    \coordinate (t2\i) at (\xoffset-0.1, \ymid-0.5);
    \node[terminal] (vt1\i) at (t1\i) {};
    \node[terminal] (vt2\i) at (t2\i) {};
    
    \draw[thick, blue!70!black] (v\i) -- (vt1\i);
    \draw[thick, blue!70!black] (v\i) -- (vt2\i);

    % connect vertex to top-left and bottom-left corners of gray square
    
    \draw[thick,color=orange!100!black] (v\i) -- (\xleft,\ytop-0.1);
    \draw[thick,color=orange!100!black] (v\i) -- (\xleft,\ybottom+0.1);

    \node at (\xleft-0.12,\ytop-0.4) {$\vdots$};
}

% 7) Right external vertices and connections (b^{i,\leftarrow})
\def\xrightoffset{6*\squaresize+\narrowcolwidth+0.4} % horizontal offset to the right
\foreach \i in {1,3,5} {
    \pgfmathsetmacro{\xright}{6*\squaresize+\narrowcolwidth} % rightmost cell's x start
    \pgfmathsetmacro{\ytop}{5-\i+\squaresize}
    \pgfmathsetmacro{\ybottom}{5-\i}
    \pgfmathsetmacro{\ymid}{5-\i+0.5*\squaresize}

    % vertex
    \node[circle, fill=black, inner sep=2pt, scale=0.85] (vR\i) at (\xrightoffset,\ymid) {};
    \node[above=4pt of vR\i] {$b^{\labelfun{\i}{8}}$};

    % terminals
    \coordinate (t1R\i) at (\xrightoffset+0.1, \ymid-0.5);
    \coordinate (t2R\i) at (\xrightoffset-0.1, \ymid-0.5);
    \node[terminal] (vt1R\i) at (t1R\i) {};
    \node[terminal] (vt2R\i) at (t2R\i) {};
    
    \draw[thick, blue!70!black] (vR\i) -- (vt1R\i);
    \draw[thick, blue!70!black] (vR\i) -- (vt2R\i);

    % connect to top-right and bottom-right corners
    
    \draw[thick,color=orange!100!black] (vR\i) -- (\xright, \ytop-0.1);
    \draw[thick,color=orange!100!black] (vR\i) -- (\xright, \ybottom+0.1);
    \node at (\xright+0.12,\ytop-0.4) {$\vdots$};
}

% 8) rcons
\def\xrcons{-1.1} % further left than \xoffset (-0.7)
\def\yrcons{5-4+0.5*\squaresize} % roughly centered vertically
\node[circle, fill=black, inner sep=2pt, scale=0.85] (vrcons) at (\xrcons,\yrcons) {};
\node[left=2pt of vrcons] {$\rcons$};

% terminals
\coordinate (t1rcons) at (\xrcons-0.3, \yrcons-0.5);
\coordinate (t2rcons) at (\xrcons-0.3, \yrcons+0.5);
\node[terminal] (term1rcons) at (t1rcons) {};
\node[terminal] (term2rcons) at (t2rcons) {};

\draw[thick, blue!70!black] (vrcons) -- (term1rcons);
\draw[thick, blue!70!black] (vrcons) -- (term2rcons);

% connect rcons to all left-side vertices
\foreach \i in {1,3,5} {
    \ifnum\i<4
        
        \draw[thick, color=green!50!black, bend left=20] (vrcons) to (v\i);
    \else
        
        \draw[thick, color=green!50!black, bend right=20] (vrcons) to (v\i);
    \fi
}

\node at (\xrcons+0.3,\yrcons+0.1) {$\vdots$};

% 9) Bottom vertices \(b^{\uparrow,\j}\)
\def\yoffset{-0.7} % vertical distance below the table

\foreach \j in {1,3,5,7} {
    % compute bottom midpoint of gray cell
    \pgfmathsetmacro{\ybottom}{0}
    \ifnum\j<5
        \pgfmathsetmacro{\xleft}{\j-1}
        \pgfmathsetmacro{\xright}{(\j-1)*\squaresize+\squaresize}
        \pgfmathsetmacro{\xbottommid}{\j-1+0.5*\squaresize}
    \else
        \pgfmathsetmacro{\xbottommid}{3*\squaresize+\narrowcolwidth+(\j-5)*\squaresize+0.5*\squaresize}
        \pgfmathsetmacro{\xleft}{3*\squaresize+\narrowcolwidth+(\j-5)*\squaresize}
        \pgfmathsetmacro{\xright}{3*\squaresize+\narrowcolwidth+(\j-5)*\squaresize+\squaresize}
    \fi

    % vertex below
    \node[circle, fill=black, inner sep=2pt, scale=0.85] (vB\j) at (\xbottommid,\ybottom+\yoffset) {};

    % label below vertex
    \ifnum\j<4
        \node[left=2pt of vB\j] {$b^{\labelfun{0}{\j}}$};
    \else
        \node[right=2pt of vB\j] {$b^{\labelfun{0}{\j}}$};
    \fi

    % terminals
    \coordinate (t1B\j) at (\xbottommid+0.1, \ybottom+\yoffset-0.5);
    \coordinate (t2B\j) at (\xbottommid-0.1, \ybottom+\yoffset-0.5);
    \node[terminal] (vt1B\j) at (t1B\j) {};
    \node[terminal] (vt2B\j) at (t2B\j) {};
    
    \draw[thick, blue!70!black] (vB\j) -- (vt1B\j);
    \draw[thick, blue!70!black] (vB\j) -- (vt2B\j);

    % connect vertex to bottom-left and bottom-right corners of gray cell
    
    \draw[thick,color=orange!100!black] (vB\j) -- (\xleft+0.05,\ybottom);
    \draw[thick,color=orange!100!black] (vB\j) -- (\xright-0.15,\ybottom);
    \draw[thick,color=orange!100!black] (vB\j) -- (\xleft+0.25,\ybottom);
    \node at (\xleft+0.55,\ybottom-0.2) {$\ldots$};
}

% 10) rselect
\pgfmathsetmacro{\xrselect}{(6*\squaresize + \narrowcolwidth)/2} % horizontal center of table
\pgfmathsetmacro{\yrselect}{-1.7}

\node[circle, fill=black, inner sep=2pt, scale=0.85] (vrselect) at (\xrselect,\yrselect) {};
\node[below=2pt of vrselect] {$\rselect$}; 

% terminals
\coordinate (t1rselect) at (\xrselect-0.5, \yrselect-0.5);
\coordinate (t2rselect) at (\xrselect+0.5, \yrselect-0.5);
\node[terminal] (term1rselect) at (t1rselect) {};
\node[terminal] (term2rselect) at (t2rselect) {};

\draw[thick, blue!70!black] (vrselect) -- (term1rselect);
\draw[thick, blue!70!black] (vrselect) -- (term2rselect);

% Connect to all bottom-row vertices (row i=7)
\foreach \j in {1,3,5,7} {
    \ifnum\j<4
        
        \draw[thick, color=green!50!black, bend left=10] (vrselect) to (vB\j);
    \else
        
        \draw[thick, color=green!50!black, bend right=10] (vrselect) to (vB\j);
    \fi
}

\node at (\xrselect,\yrselect+0.4) {$\ldots$};

% 11) b^{\downarrow}{j}
\def\yTopVertexOffset{0.7} % vertical distance above top row
\foreach \j in {1,3,5,7} {
    \ifnum\j<5
        \pgfmathsetmacro{\xleft}{\j-1}
        \pgfmathsetmacro{\xright}{(\j-1)*\squaresize+\squaresize}
        \pgfmathsetmacro{\xbottommid}{\j-1+0.5*\squaresize}
    \else
        \pgfmathsetmacro{\xbottommid}{3*\squaresize+\narrowcolwidth+(\j-5)*\squaresize+0.5*\squaresize}
        \pgfmathsetmacro{\xleft}{3*\squaresize+\narrowcolwidth+(\j-5)*\squaresize}
        \pgfmathsetmacro{\xright}{3*\squaresize+\narrowcolwidth+(\j-5)*\squaresize+\squaresize}
    \fi
    \pgfmathsetmacro{\ytopcell}{5-1+\squaresize}
    \pgfmathsetmacro{\ytopv}{\ytopcell+\yTopVertexOffset}

    % top vertex
    \node[circle, fill=black, inner sep=2pt, scale=0.85] (vT1\j) at (\xbottommid,\ytopv) {};

    % label above
    \node[left=2pt of vT1\j] {$b^{\labelfun{6}{\j}}$};

    % terminals
    \coordinate (t1T1\j) at (\xbottommid+0.1, \ytopv+0.5);
    \coordinate (t2T1\j) at (\xbottommid-0.1, \ytopv+0.5);
    \node[terminal] (vt1T1\j) at (t1T1\j) {};
    \node[terminal] (vt2T1\j) at (t2T1\j) {};
    
    \draw[thick, blue!70!black] (vT1\j) -- (vt1T1\j);
    \draw[thick, blue!70!black] (vT1\j) -- (vt2T1\j);

    % connect vertex to top-left and top-right corners of corresponding gray square
    
    \draw[thick,color=orange!100!black] (vT1\j) -- (\xleft+0.16,\ytopcell);
    \draw[thick,color=orange!100!black] (vT1\j) -- (\xright-0.05,\ytopcell);
    \draw[thick,color=orange!100!black] (vT1\j) -- (\xleft+0.36,\ytopcell);
    \node at (\xleft+0.55,\ytopcell+0.2) {$\ldots$};
}

% 12) Add vertex r (same y as rselect, same x as rcons)
\pgfmathsetmacro{\xr}{\xrcons+0.3}   % same x as rcons
\pgfmathsetmacro{\yr}{\yrselect+0.3}   % same y as rselect

% node itself
\node[circle, fill=black, inner sep=2pt, scale=0.85] (vr) at (\xr,\yr) {};
\node[left=2pt of vr] {$r$};

% connect r to rcons and rselect

\draw[thick, color=blue!70!black,bend left=15] (vr) to (vrcons);
\draw[thick, color=blue!70!black,bend right=10] (vr) to (vrselect);

\end{tikzpicture}

%% file: 2TreesVs4Trees.tex
\begin{figure}[!t]
    \begin{subfigure}[b]{\textwidth}
    \centering
    \begin{minipage}{\textwidth}
    \centering
    \begin{minipage}{0.48\textwidth}
      \centering
      \input{figs/hconnector-instances-2trees-samerow}
    \end{minipage}\hfill
    \begin{minipage}{0.48\textwidth}
      \centering
      \input{figs/hconnector-instances-2trees-2rows}
    \end{minipage}
    \subcaption{Purple edges are of trees, each passing through a connector and covering all terminals. Left: the tree intersects only one row. Right: the tree intersects/switches between two rows.
    Both trees have equal cost.}
  \end{minipage}
    \end{subfigure}
    
  \vspace{1em}

  % ----- Row 2 -----
  \begin{subfigure}[b]{\textwidth}
  \centering
  \begin{minipage}{\textwidth}
    \centering
    \begin{minipage}{0.48\textwidth}
      \centering
      \input{figs/hconnector-instances-4trees-samerow}
    \end{minipage}\hfill
    \begin{minipage}{0.48\textwidth}
      \centering
      \input{figs/hconnector-instances-4trees-2rows}
    \end{minipage}
    \subcaption{Purple edges are shared by pairs of trees with red resp.\ blue private edges, each passing through a connector and covering all terminals. Left: trees intersects one row. Right: the pair of trees switches between two rows and has higher cost due to counting the vertical purple edge twice.}
    % The two pairs have different costs: right includes the length of the vertical purple edge twice, so switching rows increases the cost.}
  \end{minipage}
  \end{subfigure}
  \caption{}
  \label{fig:4TreesVs2Trees}
\end{figure}

%% file: figs/hconnector-instances-2trees-samerow.tex
\begin{tikzpicture}[scale=0.6,every node/.style={scale=0.85}]
\tikzstyle{vertex}=[circle, fill=black, inner sep=1.5pt, scale=0.85]
\tikzset{terminal/.style={fill=gray!30, draw=gray!50, thick, minimum size=0pt, inner sep=2pt}}

% Parameters
\def\rows{6}
\def\cols{8}

% Draw vertices (no corners) and add terminal labels
\foreach \i in {1,...,\rows}{
  \pgfmathtruncatemacro{\halfcols}{\cols-2} % number of inner terminals per row
  \foreach \j in {1,...,\cols}{
    % Skip corners
    \ifnum\i=1
      \ifnum\j=1
      \else\ifnum\j=\cols
      \else
        % Upper terminals (top row)
        \pgfmathtruncatemacro{\q}{\j-1} % left to right 1..6
        \node[terminal] (v\i\j) at (\j,\rows-\i) {};
        % \node[above=2pt of v\i\j] {$h^{i,j}_{\q,\triangleleft}$};
      \fi\fi
    \else
        \ifnum\i=\rows
            \ifnum\j=1
            \else
                \ifnum\j=\cols
                \else
                    % Bottom terminals (bottom row)
                    \pgfmathtruncatemacro{\q}{\j-1} % left to right 1..6
                    \node[terminal] (v\i\j) at (\j,\rows-\i) {};
                    % \node[below=2pt of v\i\j] {$h^{i,j}_{\q,\triangleright}$};
                \fi
            \fi
        \else
            % Internal vertices (same as before)
            \ifnum\j=1
                \ifnum\i=2
                    \node[vertex] (v\i\j) at (\j,\rows-\i) {};
                \else
                    \ifnum\i=3
                        \node[vertex] (v\i\j) at (\j,\rows-\i) {};
                    \else
                        \ifnum\i=4
                            \node[vertex] (v\i\j) at (\j,\rows-\i) {};
                        \else
                            \node[vertex] (v\i\j) at (\j,\rows-\i) {};
                        \fi
                    \fi
                \fi
            \else
                \ifnum\j=8
                    \ifnum\i=2
                        \node[vertex] (v\i\j) at (\j,\rows-\i) {};
                    \else
                        \ifnum\i=3
                            \node[vertex] (v\i\j) at (\j,\rows-\i) {};
                        \else   
                            \ifnum\i=4
                                \node[vertex] (v\i\j) at (\j,\rows-\i) {};
                            \else
                                \node[vertex] (v\i\j) at (\j,\rows-\i) {};
                            \fi
                        \fi
                    \fi
                \else
                    \node[vertex] (v\i\j) at (\j,\rows-\i) {};
                \fi
            \fi
        \fi
    \fi
  }
}

% Draw horizontal edges (skip first/last column)
\foreach \i in {2,...,\numexpr\rows-1\relax}{
  \foreach \j in {1,...,\numexpr\cols-1\relax}{
    \pgfmathtruncatemacro{\jnext}{\j+1}
    \ifnum\i=3
        \draw[very thick,draw=violet!80!white] (v\i\j) -- (v\i\jnext);
    \else
        \draw (v\i\j) -- (v\i\jnext);
    \fi
  }
}

% Draw vertical edges (skip first/last row)
\foreach \i in {1,...,\numexpr\rows-1\relax}{
  \foreach \j in {2,...,\numexpr\cols-1\relax}{
    \pgfmathtruncatemacro{\inext}{\i+1}
    % \ifnum\i<3
    %     \draw[very thick, draw=violet!80!white] (v\i\j) -- (v\inext\j);
    % \else
        \draw[thick,violet!60!white] (v\i\j) -- (v\inext\j);
    % \fi
  }
}

% Add \vdots at specified positions
\node[draw=none, fill=none, inner sep=0pt, yshift=3pt] at (1, \rows-3.5) {$\vdots$};
\node[draw=none, fill=none, inner sep=0pt, yshift=3pt] at (\cols, \rows-3.5) {$\vdots$};

\end{tikzpicture}

%% file: figs/hconnector-instances-2trees-2rows.tex
\begin{tikzpicture}[scale=0.6,every node/.style={scale=0.85}]
\tikzstyle{vertex}=[circle, fill=black, inner sep=1.5pt, scale=0.85]
\tikzset{terminal/.style={fill=gray!30, draw=gray!50, thick, minimum size=0pt, inner sep=2pt}}

% Parameters
\def\rows{6}
\def\cols{8}

% Draw vertices (no corners) and add terminal labels
\foreach \i in {1,...,\rows}{
  \pgfmathtruncatemacro{\halfcols}{\cols-2} % number of inner terminals per row
  \foreach \j in {1,...,\cols}{
    % Skip corners
    \ifnum\i=1
      \ifnum\j=1
      \else\ifnum\j=\cols
      \else
        % Upper terminals (top row)
        \pgfmathtruncatemacro{\q}{\j-1} % left to right 1..6
        \node[terminal] (v\i\j) at (\j,\rows-\i) {};
        % \node[above=2pt of v\i\j] {$h^{i,j}_{\q,\triangleleft}$};
      \fi\fi
    \else
        \ifnum\i=\rows
            \ifnum\j=1
            \else
                \ifnum\j=\cols
                \else
                    % Bottom terminals (bottom row)
                    \pgfmathtruncatemacro{\q}{\j-1} % left to right 1..6
                    \node[terminal] (v\i\j) at (\j,\rows-\i) {};
                    % \node[below=2pt of v\i\j] {$h^{i,j}_{\q,\triangleright}$};
                \fi
            \fi
        \else
            % Internal vertices (same as before)
            \ifnum\j=1
                \ifnum\i=2
                    \node[vertex] (v\i\j) at (\j,\rows-\i) {};
                \else
                    \ifnum\i=3
                        \node[vertex] (v\i\j) at (\j,\rows-\i) {};
                    \else
                        \ifnum\i=4
                            \node[vertex] (v\i\j) at (\j,\rows-\i) {};
                        \else
                            \node[vertex] (v\i\j) at (\j,\rows-\i) {};
                        \fi
                    \fi
                \fi
            \else
                \ifnum\j=8
                    \ifnum\i=2
                        \node[vertex] (v\i\j) at (\j,\rows-\i) {};
                    \else
                        \ifnum\i=3
                            \node[vertex] (v\i\j) at (\j,\rows-\i) {};
                        \else   
                            \ifnum\i=4
                                \node[vertex] (v\i\j) at (\j,\rows-\i) {};
                            \else
                                \node[vertex] (v\i\j) at (\j,\rows-\i) {};
                            \fi
                        \fi
                    \fi
                \else
                    \node[vertex] (v\i\j) at (\j,\rows-\i) {};
                \fi
            \fi
        \fi
    \fi
  }
}

% Draw horizontal edges (skip first/last column)
\foreach \i in {2,...,\numexpr\rows-1\relax}{
  \foreach \j in {1,...,\numexpr\cols-1\relax}{
    \pgfmathtruncatemacro{\jnext}{\j+1}
    \ifnum\i=3
        \ifnum\j<4
            \draw[very thick,draw=violet!80!white] (v\i\j) -- (v\i\jnext);
        \else
            \draw (v\i\j) -- (v\i\jnext);
        \fi
    \else
        \ifnum\i=4
            \ifnum\j>3
                \draw[very thick,draw=violet!80!white] (v\i\j) -- (v\i\jnext);
            \else
                \draw (v\i\j) -- (v\i\jnext);
            \fi
        \else
            \draw (v\i\j) -- (v\i\jnext);
        \fi
    \fi
  }
}

% Draw vertical edges (skip first/last row)
\foreach \i in {1,...,\numexpr\rows-1\relax}{
  \foreach \j in {2,...,\numexpr\cols-1\relax}{
    \pgfmathtruncatemacro{\inext}{\i+1}
    \ifnum\i=3
        \ifnum\j=4
            \draw[very thick, draw=violet!80!white] (v\i\j) -- (v\inext\j);
        \else
            \draw[thick,violet!60!white] (v\i\j) -- (v\inext\j);
        \fi
    \else
        \draw[thick, violet!60!white] (v\i\j) -- (v\inext\j);
    \fi
  }
}

% Add \vdots at specified positions
\node[draw=none, fill=none, inner sep=0pt, yshift=3pt] at (1, \rows-3.5) {$\vdots$};
\node[draw=none, fill=none, inner sep=0pt, yshift=3pt] at (\cols, \rows-3.5) {$\vdots$};

\end{tikzpicture}

%% file: figs/hconnector-instances-4trees-samerow.tex
\begin{tikzpicture}[scale=0.6,every node/.style={scale=0.85}]
\tikzstyle{vertex}=[circle, fill=black, inner sep=1.5pt, scale=0.85]
\tikzset{terminal/.style={fill=gray!30, draw=gray!50, thick, minimum size=0pt, inner sep=2pt}}

% Parameters
\def\rows{6}
\def\cols{8}

% Draw vertices (no corners) and add terminal labels
\foreach \i in {1,...,\rows}{
  \pgfmathtruncatemacro{\halfcols}{\cols-2} % number of inner terminals per row
  \foreach \j in {1,...,\cols}{
    % Skip corners
    \ifnum\i=1
      \ifnum\j=1
      \else\ifnum\j=\cols
      \else
        % Upper terminals (top row)
        \pgfmathtruncatemacro{\q}{\j-1} % left to right 1..6
        \node[terminal] (v\i\j) at (\j,\rows-\i) {};
        % \node[above=2pt of v\i\j] {$h^{i,j}_{\q,\triangleleft}$};
      \fi\fi
    \else
        \ifnum\i=\rows
            \ifnum\j=1
            \else
                \ifnum\j=\cols
                \else
                    % Bottom terminals (bottom row)
                    \pgfmathtruncatemacro{\q}{\j-1} % left to right 1..6
                    \node[terminal] (v\i\j) at (\j,\rows-\i) {};
                    % \node[below=2pt of v\i\j] {$h^{i,j}_{\q,\triangleright}$};
                \fi
            \fi
        \else
            % Internal vertices (same as before)
            \ifnum\j=1
                \ifnum\i=2
                    \node[vertex] (v\i\j) at (\j,\rows-\i) {};
                \else
                    \ifnum\i=3
                        \node[vertex] (v\i\j) at (\j,\rows-\i) {};
                    \else
                        \ifnum\i=4
                            \node[vertex] (v\i\j) at (\j,\rows-\i) {};
                        \else
                            \node[vertex] (v\i\j) at (\j,\rows-\i) {};
                        \fi
                    \fi
                \fi
            \else
                \ifnum\j=8
                    \ifnum\i=2
                        \node[vertex] (v\i\j) at (\j,\rows-\i) {};
                    \else
                        \ifnum\i=3
                            \node[vertex] (v\i\j) at (\j,\rows-\i) {};
                        \else   
                            \ifnum\i=4
                                \node[vertex] (v\i\j) at (\j,\rows-\i) {};
                            \else
                                \node[vertex] (v\i\j) at (\j,\rows-\i) {};
                            \fi
                        \fi
                    \fi
                \else
                    \node[vertex] (v\i\j) at (\j,\rows-\i) {};
                \fi
            \fi
        \fi
    \fi
  }
}

% Draw horizontal edges (skip first/last column)
\foreach \i in {2,...,\numexpr\rows-1\relax}{
  \foreach \j in {1,...,\numexpr\cols-1\relax}{
    \pgfmathtruncatemacro{\jnext}{\j+1}
    \ifnum\i=3
        \draw[very thick,draw=violet!80!white] (v\i\j) -- (v\i\jnext);
    \else
        \draw (v\i\j) -- (v\i\jnext);
    \fi
  }
}

% Draw vertical edges (skip first/last row)
\foreach \i in {1,...,\numexpr\rows-1\relax}{
  \foreach \j in {2,...,\numexpr\cols-1\relax}{
    \pgfmathtruncatemacro{\inext}{\i+1}
    \ifnum\i<3
        \draw[thick, draw=red!40!white] (v\i\j) -- (v\inext\j);
    \else
        \draw[thick, draw=blue!40!white] (v\i\j) -- (v\inext\j);
    \fi
  }
}

% Add \vdots at specified positions
\node[draw=none, fill=none, inner sep=0pt, yshift=3pt] at (1, \rows-3.5) {$\vdots$};
\node[draw=none, fill=none, inner sep=0pt, yshift=3pt] at (\cols, \rows-3.5) {$\vdots$};

\end{tikzpicture}

%% file: figs/hconnector-instances-4trees-2rows.tex
\begin{tikzpicture}[scale=0.6,every node/.style={scale=0.85}]
\tikzstyle{vertex}=[circle, fill=black, inner sep=1.5pt, scale=0.85]
\tikzset{terminal/.style={fill=gray!30, draw=gray!50, thick, minimum size=0pt, inner sep=2pt}}

% Parameters
\def\rows{6}
\def\cols{8}

% Draw vertices (no corners) and add terminal labels
\foreach \i in {1,...,\rows}{
  \pgfmathtruncatemacro{\halfcols}{\cols-2} % number of inner terminals per row
  \foreach \j in {1,...,\cols}{
    % Skip corners
    \ifnum\i=1
      \ifnum\j=1
      \else\ifnum\j=\cols
      \else
        % Upper terminals (top row)
        \pgfmathtruncatemacro{\q}{\j-1} % left to right 1..6
        \node[terminal] (v\i\j) at (\j,\rows-\i) {};
        % \node[above=2pt of v\i\j] {$h^{i,j}_{\q,\triangleleft}$};
      \fi\fi
    \else
        \ifnum\i=\rows
            \ifnum\j=1
            \else
                \ifnum\j=\cols
                \else
                    % Bottom terminals (bottom row)
                    \pgfmathtruncatemacro{\q}{\j-1} % left to right 1..6
                    \node[terminal] (v\i\j) at (\j,\rows-\i) {};
                    % \node[below=2pt of v\i\j] {$h^{i,j}_{\q,\triangleright}$};
                \fi
            \fi
        \else
            % Internal vertices (same as before)
            \ifnum\j=1
                \ifnum\i=2
                    \node[vertex] (v\i\j) at (\j,\rows-\i) {};
                \else
                    \ifnum\i=3
                        \node[vertex] (v\i\j) at (\j,\rows-\i) {};
                    \else
                        \ifnum\i=4
                            \node[vertex] (v\i\j) at (\j,\rows-\i) {};
                        \else
                            \node[vertex] (v\i\j) at (\j,\rows-\i) {};
                        \fi
                    \fi
                \fi
            \else
                \ifnum\j=8
                    \ifnum\i=2
                        \node[vertex] (v\i\j) at (\j,\rows-\i) {};
                    \else
                        \ifnum\i=3
                            \node[vertex] (v\i\j) at (\j,\rows-\i) {};
                        \else   
                            \ifnum\i=4
                                \node[vertex] (v\i\j) at (\j,\rows-\i) {};
                            \else
                                \node[vertex] (v\i\j) at (\j,\rows-\i) {};
                            \fi
                        \fi
                    \fi
                \else
                    \node[vertex] (v\i\j) at (\j,\rows-\i) {};
                \fi
            \fi
        \fi
    \fi
  }
}

% Draw horizontal edges (skip first/last column)
\foreach \i in {2,...,\numexpr\rows-1\relax}{
  \foreach \j in {1,...,\numexpr\cols-1\relax}{
    \pgfmathtruncatemacro{\jnext}{\j+1}
    \ifnum\i=3
        \ifnum\j<4
            \draw[very thick,draw=violet!80!white] (v\i\j) -- (v\i\jnext);
        \else
            \draw (v\i\j) -- (v\i\jnext);
        \fi
    \else
        \ifnum\i=4
            \ifnum\j>3
                \draw[very thick,draw=violet!80!white] (v\i\j) -- (v\i\jnext);
            \else
                \draw (v\i\j) -- (v\i\jnext);
            \fi
        \else
            \draw (v\i\j) -- (v\i\jnext);
        \fi
    \fi
  }
}

% Draw vertical edges (skip first/last row)
\foreach \i in {1,...,\numexpr\rows-1\relax}{
  \foreach \j in {2,...,\numexpr\cols-1\relax}{
    \pgfmathtruncatemacro{\inext}{\i+1}
    \ifnum\j<5
        \ifnum\j=4
            \ifnum\i=3
                \draw[very thick, draw=violet!80!white] (v\i\j) -- (v\inext\j);
            \else
                \ifnum\i<3
                    \draw[thick, draw=red!40!white] (v\i\j) -- (v\inext\j);
                \else
                    \draw[ thick, draw=blue!40!white] (v\i\j) -- (v\inext\j);
                \fi
            \fi
        \else
            \ifnum\i<3
                \draw[ thick, draw=red!40!white] (v\i\j) -- (v\inext\j);
            \else
                \draw[ thick, draw=blue!40!white] (v\i\j) -- (v\inext\j);
            \fi
        \fi
    \else
        \ifnum\i<4
            \draw[very thick, draw=red!40!white] (v\i\j) -- (v\inext\j);
        \else
            \draw[very thick, draw=blue!40!white] (v\i\j) -- (v\inext\j);
        \fi
    \fi
  }
}

% Add \vdots at specified positions
\node[draw=none, fill=none, inner sep=0pt, yshift=3pt] at (1, \rows-3.5) {$\vdots$};
\node[draw=none, fill=none, inner sep=0pt, yshift=3pt] at (\cols, \rows-3.5) {$\vdots$};

\end{tikzpicture}

%% file: Conclusion.tex
\label{sec:Conclusion}
In this paper, we introduced several power network design problems that arose from real-world applications. Despite their similarities to \textsc{Steiner Tree}, we showed that the parameterization by the number of terminals and solution trees is W[1]-hard. We leave the exploration of other parameterizations, such as treewidth, to future work. 

A distinguishing feature of our power network design problems, which are shared by \textsc{Minimum-Cost Shared Arborescence}~\cite{Alvarez-Miranda17}, is that a solution consists of multiple parts and a suitable sharing of resources among the parts leads to a lower overall cost. It would be interesting to introduce this setting to other problems. For example, consider the \textsc{Subset Traveling Salesperson Problem} in which the goal is to find a minimum-cost closed walk that visits all terminals. One could investigate the version in which we look for a set of closed walks that together visit all terminals, with the number of terminals visited by a single walk upper-bounded by a given threshold, while minimizing a weighted sum of the total length of the individual walks and the \emph{visiting costs} of the edges in the union of the walks. Which algorithmic approaches for solving \textsc{Subset TSP} carry over to this setting?